\DeclareMathOperator*{\sm}{<} 
\DeclareMathOperator*{\bbi}{>} 
\DeclareMathOperator*{\argmax}{argmax}
\newtheorem{theorem}{Theorem}
\newtheorem{proposition}[theorem]{Proposition}%
\newtheorem{lemma}[theorem]{Lemma}
\newtheorem{corollary}[theorem]{Corollary}%
\theoremstyle{definition}{
\newtheorem{example}{Example}%
\newtheorem{remark}{Remark}%
\newtheorem{definition}{Definition}}
\begin{document}
\begin{titlepage}

\begin{center}
\huge{\textbf{General Matching Games}}
\end{center}
\vspace{1cm}

\begin{table}[h]
\centering
\begin{tabular}[t]{lcl}
\textbf{Felipe Garrido-Lucero} & \quad & \textbf{Rida Laraki}\\
IRIT, Université Toulouse Capitole & & Moroccan Center for Game Theory, UM6P\\
Toulouse, France & & Rabat, Morroco \\
felipe.garrido-lucero@ut-capitole.fr & & rida.laraki@um6p.ma
\end{tabular}
\end{table}

\vspace{1cm}

\begin{abstract}
Matching games is a one-to-one two sided market model introduced by Garrido-Lucero and Laraki, in which coupled agents' utilities are endogenously determined as the outcome of a strategic game. They refine the classical pairwise stability by requiring robustness to renegotiation and provide general conditions under which pairwise stable and renegotiation-proof outcomes exist as the limit of a deferred acceptance with competitions algorithm together with a renegotiation process. In this article, we extend their model to a general setting encompassing most of one-to-many matching markets and roommates models and specify two frameworks under which core stable and renegotiation-proof outcomes exist and can be efficiently computed. 
\end{abstract}

\small{\textbf{Keywords}: Matching games, Roommates, Stability, Renegotiation-proofness, Complexity.}

\vfill
\small{Preprint. Under review.}
\end{titlepage}

\section{Introduction}

The stable matching problem is a critical research topic in the \textit{Econ-CS} community due to its wide range of applications in both the private and public sectors, spanning online markets \cite{coles_marketplaces_1998}, online advertising \cite{mehta_online_2013}, ride-sharing \cite{banerjee_ride_2019}, the job market \cite{comission_commission_2023}, university admissions \cite{akbarpour_centralized_2022,garrido2025two}, high school teacher assignments \cite{combe_design_2022}, refugee programs \cite{ahani_dynamic_2021}, and even organ transplants \cite{akbarpour_unpaired_2020}. 

The first ones to introduce this problem were Gale and Shapley in their seminal work \cite{gale_college_1962}, where they presented three models: \textit{one-to-one two-sided matching markets} (marriage problem), \textit{one-to-many two-sided matching markets} (college admissions), and \textit{one-to-one non-two-sided matching markets} (roommates). In all problems, agents are endowed with exogenous preference rankings and a \textit{stable partition} of the set of agents is sought, i.e., a partition where no group of agents has an incentive to abandon their partners and match together.

For the marriage problem, Gale and Shapley designed a \textit{deferred-acceptance} algorithm to prove the existence of a \textit{stable matching} for every instance. Their algorithm takes one of the sides of the market, called the proposer-side, and asks its agents to propose to their most preferred option that has not rejected them yet. Agents receiving more than one proposal accept the best one and reject all the others. The algorithm continues until all agents on the proposer side have been accepted or rejected by all. The computation of the stable matching is exact and takes at most $\mathcal{O}(N^2)$ iterations with $N$ being the size of the largest set. An alternative directed graph approach to the marriage problem was designed by Balinski and Ratier \cite{balinski_stable_1997,balinski_graphs_1998}, where they characterized the stable matching polytope in the marriage problem through linear inequalities, proving that any feasible point of the polytope is a stable matching and vice-versa (Rothblum \cite{rothblum_characterization_1992} was the first one to show this characterization).

\textit{One-to-many two-sided matching markets} are the extension of the marriage problem to the case in which agents on one of the sides can be matched with many partners at the same time. Gale and Shapley proved that their same deferred-acceptance algorithm could be applied to the one-to-many setting. Baïou and Balinski \cite{baiou_stable_2000,baiou_student_2004} generalized the graph-theoretic approach and the polytope characterization of the set of stable matching in \cite{balinski_stable_1997,balinski_graphs_1998} to one-to-many matching markets. Echenique and Oviedo \cite{echenique_core_2002} characterized the set of core stable allocations as fixed points of a map. In their model, agents are endowed with strict preferences and their characterization gives an efficient algorithm to compute stable allocations. No extra assumption is required for their characterization, but substitutability is required for the non-emptyness of the core.

Many interesting applications arise from the one-to-many setting. Gimbert et al. \cite{gimbert_constrained_2021} studied a school choice problem with imperfect information in which students reveal only a partial version of their preferences due to a limited number of applications allowed. Correa et al. \cite{correa_school_2019} studied a centralized mechanism to fairly allocate students to schools in Chile giving priority to joined siblings allocation. In France, extensive studies have been done to develop the students' allocation mechanism to universities \textit{Parcoursup} (a french related document can be found \href{https://www.irif.fr/_media/users/claire/presentation_algorithmes_parcoursup.pdf}{here}).

In \textit{one-to-one non-two-sided matching markets}, a partition of the set of agents into pairs must be found. Unlike the previous models, existence of stable allocations cannot be guaranteed, as illustrated by Gale and Shapley who gave a small roommates problem instance without stable allocation. Whenever a solution exists, Knuth \cite{knuth_marriages_1976} proved that roommates problems can have multiple stable allocations and asked 
the question of designing an algorithm to compute one whenever it exists. Irving \cite{irving_efficient_1985} answered with an efficient algorithm to compute a solution if it exists or to report the non-existence of stable allocations for those instances without a solution. Tan \cite{tan_necessary_1991} found necessary and sufficient conditions for the existence of stable matchings in the roommates problem: the non-existence of \textit{stable partitions} with \textit{odd parties}.

Since the seminal paper of Gale and Shapley, several articles extended the different models to the \textit{endogenous preferences} setting. The first extension of the marriage problem is the \textit{assignment game} of Shapley and Shubik \cite{shapley_assignment_1971} in which agents within the same couple can make monetary transfers. The leading example is a \textit{housing market} where buyers and sellers have quasi-linear utilities. Allocations in the Shapley-Shubik model are stable if there is no unmatched pair buyer-seller and no transaction price such that both agents end up strictly better off by trading. Exploiting the linearity of the payoff functions on the monetary transfers, Shapley and Shubik found stable solutions for their problem using linear programming where a pair primal-dual gives, respectively, the matching and the utility vectors. Remark the polynomial complexity of solving the assignment game thanks to the linear programming approach.

The assignment game belongs to the class of \textit{cooperative games with transferable utility} as agents within the same couple have to split their \textit{worth} in such a way nobody prefers to change their partner. Moreover, Shapley and Shubik proved that the set of stable allocations for their assignment game is exactly the \textit{Core} of the housing market problem seen as a transferable utility cooperative game. Demange and Gale \cite{demange_strategy_1985} considered more general utility functions on monetary transfers (non-quasi-linear) and allowed monetary transfers on both sides (from buyer to seller and vice-versa). Demange et al. \cite{demange_multi-item_1986} designed two \textit{ascending price mechanisms} to compute stable allocations of the \textit{matching with transfers} model in \cite{demange_strategy_1985}. For integer utilities, the first algorithm converges in a bounded number of iterations to an exact solution. For continuous payments, the second algorithm converges to an $\varepsilon$-stable solution in a bounded number of iterations $T \propto \frac{1}{\varepsilon}$.

As Shapley and Shubik in the one-to-one case, Crawford and Knoer \cite{crawford_job_1981} extended the model of one-to-many two-sided matching markets to the linear monetary transfer setting. Kelso Jr. and Crawford \cite{kelso_jr_job_1982} went further in the extension by considering any kind of transferable utility. Their \textit{job matching model} considers workers and firms that get matched and simultaneously determine salaries to be paid to the workers. The authors proved the existence of stable allocations for any setting in which workers are \textit{gross substitutes} for the firms: increasing the salary of a set of workers can never cause a firm to withdraw an offer from a worker whose salary has not been risen.

A seminal paper in one-to-many matching markets was written by Hatfield and Milgrom \cite{hatfield_matching_2005}, the \textit{matching with contracts} model, that extends the model of Kelso and Crawford by allowing doctors and hospitals (instead of workers and firms) to sign \textit{contracts} from a finite set of possible contracts in the market. Contracts are \textit{bilateral} so each of them relates one doctor with one hospital. Agents are endowed with preference orderings that define \textit{choice functions}. Given a set of possible contracts, the choice functions output the most preferred contract for each doctor, and the most preferred subset of contacts for each hospital. Hatfield and Milgrom proved that the set of stable allocations is a non-empty lattice and that a \textit{cumulative offer mechanism} reaches the extremes of the lattice thanks to Tarski's fixed point theorem. The main assumption behind this result is \textit{substitutability} for hospitals, i.e., no previously rejected contract can be chosen by a hospital because of the broadening of the set of contracts. Substitutability has been proved to be sufficient but not necessary for the existence of stable allocations in the matching with contracts model and many authors have worked to find weaker assumptions \cite{aygun_matching_2012,hatfield_substitutes_2010,hatfield_stability_2013,hatfield_stability_2021}. Aygün and Sönmez \cite{aygun_matching_2012} exposed that different models are obtained if agents' choice functions are treated as primitives or they are induced from preference rankings in the matching with contracts model. Hatfield and Milgrom's model belongs to the second type, however, they treated their choice functions as primitives. To truly guarantee the existence of stable allocations, an extra assumption, namely, the \textit{irrelevance of rejected contracts}, is required as well.

\textit{Roommates with transferable utility} has received attention lately \cite{andersson_competitive_2014,chiappori_roommate_2014,eriksson_stable_2001,klaus_consistency_2010,talman_model_2011}. Andersson et al. \cite{andersson_competitive_2014} designed a \textit{price adjustment process} that computes, under integral payments, a stable allocation or disproves its existence in finite time. Shioura \cite{shioura_partnership_2017} made the connection between roommates with transferable utility and the assignment game of Shapley and Shubik \cite{shapley_assignment_1971}. More precisely, Shioura reduced his problem to a particular assignment game in an auxiliary bipartite graph and proposed an extension of the algorithm of Andersson et al. \cite{andersson_competitive_2014} to compute a stable allocation in case of existence. Up to our knowledge, the only extension of the roommates problem to the \textit{non-transferable utility} domain has been made by Alkan and Tuncay \cite{alkan_pairing_2014}. 

A clear comparison between the models with transferable and non-transferable utility in different settings was made by Echenique and Galichon \cite{echenique_ordinal_2017}. Other extensions have considered \textit{imperfect transferable utility} due to, for example, the presence of taxes in the transfers. Galichon et al. \cite{galichon_costly_2019} algorithmically proved the existence of stable solutions in this setting. 

Recently, Garrido-Lucero and Laraki \cite{garrido2025stable} introduced \textit{matching games}, a novel one-to-one matching model where doctors and hospitals are matched and agents’ outcomes within couples result from playing strategic two-player games, simultaneously to the moment of getting matched. Matching games encompass many of the studies in the stable matching literature and, unlike most utility-driven approaches, they analyzed the strategies that support stable outcomes. By running a \textit{deferred-acceptance with competitions} algorithm, an adaptation of Gale-Shapley's where agents proposing to the same partner compete as in a second-price auction, Garrido-Lucero and Laraki proved the existence of Nash-pairwise stable allocations for matching games without commitment and pairwise stable allocations in matching games with commitment, both under mild assumptions over the agents' strategy sets and payoff functions. 

Among their key concepts and results, Garrido-Lucero and Laraki introduced a refinement of pairwise stable allocations called \textit{renegotiation-proofness}. Their existence is not always guaranteed. Garrido-Lucero and Laraki introduced the concept of \textit{feasible games}, showed its necessity for existence,  proved that several standard classes of two-player strategic games are feasible, and designed a \textit{renegotiation process} (similar to the one by Rochford in \cite{rochford_symmetrically_1984}) that converges to a pairwise stable and renegotiation-proof allocation whenever all the pairwise games are feasible.

The present article is devoted to extend the model of matching games to a general matching games model, able to cover the work in \cite{garrido2025stable} as well as both the one-to-many two-sided setting and the one-to-one non-two-sided setting.

General matching games present new challenges for the existence of stable allocations and the design of algorithms to compute them. However, in addition to the models already captured by one-to-one matching games, general matching games get a much broader class of models in stable matching markets, being among the most important, the \textit{matching job market} model of Kelso and Crawford \cite{kelso_jr_job_1982}, the \textit{matching with contracts} model of Hatfield and Milgrom \cite{hatfield_matching_2005}, \textit{hedonic games} \cite{dreze_hedonic_1980}, the \textit{roommates problem} of Gale and Shapley \cite{gale_college_1962,irving_efficient_1985,knuth_marriages_1976}, the \textit{roommates problem with transferable utility} \cite{andersson_competitive_2014,chiappori_roommate_2014,eriksson_stable_2001,klaus_consistency_2010,talman_model_2011}, and the \textit{roommates problem with non-transferable utility} of Alkan and Tuncay \cite{alkan_pairing_2014}.
\medskip

\noindent\textbf{Contributions}. We introduce the general matching games model where several doctors can be allocated to the same hospital and doctors' payoffs may depend on both colleagues (other doctors at the same hospital) and the hospital. In order to establish existence of stable allocations, we focus on two submodels, namely, the one-to-many additive separable setting and the roommates setting. For each submodel, we give sufficient and necessary conditions for \textit{core stable} and \textit{renegotiation-proof} allocations to exist. In particular, we show how to map the work of Hatfield and Milgrom \cite{hatfield_matching_2005} to our fist submodel, showing how the way we treat choice functions impact the posterior required assumptions for existence of stable outcomes. Finally, we perform the complexity study of all the algorithms whenever players play finite games in mixed strategies, showing they are polynomial on the number of agents and pure strategies per player. 
\medskip

\noindent\textbf{Outline}. The article is structured as follows. \Cref{sec:model_one_to_many_matching_games} presents the general matching games model, gives several examples, and introduces the notion of core stability. \Cref{sec:matching_with_contracts_matching_games} studies the existence of core stable allocations in the first submodel and shows how to map the \textit{matching with contracts} model to ours. \Cref{sec:roommates_matching_games} studies roommates matching games and establish necessary conditions for core stable allocations to exist. \Cref{sec:internal_stability_one_to_many} extends renegotiation proofness to the same two submodels and uses the renegotiation process of \cite{garrido2025stable} to prove their existence. \Cref{sec:complexity} introduces \textit{bi-matrix matching games}m exposes the main computational challenges of the different algorithms applied in the two submodels, and discuss the complexity of finding stable allocations in the matching games without commitment model also introduced by Garrido-Lucero and Laraki. \Cref{sec:complexity_zero_sum_games} conducts the complexity study for zero-sum matching games. \Cref{sec:conclusions} concludes the article. Appendix \ref{sec:complexity_infinitely_repeated_games} conducts the complexity study for infinitely repeated matching games while Appendix \ref{sec:complexity_strictly_competitive_games} for strictly competitive matching games.

\section{General matching games}\label{sec:model_one_to_many_matching_games}

In this section, we present the mathematical model of general matching games, give several examples of models that can be mapped into a one-to-many matching game, and introduce the notion of Core stability.

\subsection{Mathematical model}

Consider two finite sets of agents $D$ and $H$ that we refer to as \textbf{doctors} and \textbf{hospitals}, respectively. Simultaneously to getting matched, agents may play actions from a given set of actions and receive some utilities depending on the matching and the action profiles chosen. Formally, every doctor $d \in D$ and every hospital $h \in H$ is endowed with a \textbf{set of strategies} $X_d$ and $Y_h$, respectively, subsets of a topological space. Given $I \subseteq D$, we denote $X_I := \prod_{d \in I} X_d$ and $Y^{|I|}_h$ to $h$'s strategy power set. For the sick of simplicity, we will omit the parenthesis $|\cdot|$ and write $Y^{I}_h$ to refer to the power set.

Doctors choose only one strategy from their strategy sets (e.g. the number of weekly hours they want to work), while hospitals pick as many strategies as assigned doctors have (e.g. the salary of each doctor). The payoff of a hospital depends on the strategies played by all its doctors as well as the ones played by the hospital itself. Doctors' payoffs, in exchange, depend on their strategy, the particular strategy that their hospital plays against them, and the strategies of all the other doctors allocated in the same hospital (from now on, their colleagues). Formally, given a set of doctors $I$ allocated in a hospital $h$, the agents' payoff functions are, for any $d \in I$, $f_{d,I,h} : X_d \times X_I \times Y_h \to \mathbb{R}$ and $g_{I,h} : X_{I} \times Y_h^{I} \to \mathbb{R}$. We remark the redundancy on $d$'s payoff function as $d$ belongs to $I$. To avoid the overcharged but more accurate notation $f_{d,I\setminus\{d\},h}$, we prefer to make an abuse of notation.

Players are endowed with \textbf{individually rational payoffs} (IRPs). Formally, for every doctor $d \in D$ (resp. hospital $h \in H$), we consider a value $\underline{f}_d \in \mathbb{R}$ (resp. $\underline{g}_h \in \mathbb{R}$), representing the utility for being unmatched. 

\begin{definition}\label{def:matching_game_one_to_many}
A (one-to-many) \textbf{matching game} is given by
$$\Gamma = \left(D,H,(X_d)_{d \in D}, (Y_h)_{h \in H}, (\underline{f}_d)_{d \in D}, (\underline{g}_h)_{h \in H}\right),$$
and payoff functions given as above.
\end{definition}

We extend the notion of allocation given in \cite{garrido2025stable} to the one-to-many setting. In this new setting, a matching $\mu$ corresponds to any mapping between $D$ and $H$. We extend the sets $D_0 := D \cup \{d_0\}$ and $H_0 := H \cup \{h_0\}$ by adding the empty players $d_0$ and $h_0$, such that, any agent matched with one of them gets her IRP as payoff. 

\begin{definition}\label{def:allocation_one_to_many}
An \textbf{allocation} corresponds to any triplet $\pi = (\mu,\vec{x},\vec{y})$ in which, $\mu : D \to H$ is a \textbf{matching}, $\vec{x} \in X_D$ is a \textbf{doctors' strategy profile}, and $\vec{y} := (\vec{y}_h)_{h \in H}$ is a profile of \textbf{hospitals' strategy profiles}, where each $\vec{y}_h \in Y_h^{\mu(h)}$.
For a coalition $(I,h) \in \mu$, where $I \subseteq D$ and $h \in H$, we write indistinctly $I = \mu(h)$ and $\mu(d) = h$, for any $d \in I$. For a doctor $d$ allocated in hospital $h$, we use $\mu(h)$ to denote $d$'s colleagues (we remark, again, the abuse of notation as we consider $d$ inside of her colleagues set).
\end{definition}

Given an allocation $\pi = (\mu,\vec{x}, \vec{y})$, $h \in H$, and $d \in \mu(h)$, their utilities at $\pi$ correspond to,
\begin{align*}
    f_d(\pi) &:= f_{d,\mu(h),h}(x_d,\vec{x}_{\mu(h)},y_{d,h}) = f_{d,\mu(h),h}(x_d,(x_{d'})_{d' \in \mu(h)}, y_{d,h}),\\
    g_h(\pi) &:= g_{\mu(h),h}(\vec{x}_{\mu(h)}, \vec{y}_{\mu(h),h}) = g_{\mu(h),h}((x_{d'})_{d' \in \mu(h)}, (y_{d',h})_{d' \in \mu(h)}).
\end{align*}

Doctors' utility depends on the identity of their hospital, the identity of their colleagues, their own strategy, the strategies of their colleagues, and the particular strategy played by the hospital against them. Hospitals' utility depends on the identity of their doctors, the strategies of these doctors, and the strategies played by the hospitals against each of their doctors. Let us illustrate the model with the following examples.

\begin{example}\label{ex:segregation_problem}
Consider a set of $2n$ doctors $D = \{1,...,2n\}$, each of them with a ranking $r_d \in \mathbb{N}$, and suppose that $r_1 \bbi r_2 \bbi ... \bbi r_{2n}$ (the higher the ranking, the better the doctor). Consider a set of two hospitals $H = \{h_1,h_2\}$, and that each of them has a prestige $p_{h} \in \mathbb{N}$, with $p_1 \bbi p_2$ (the higher the prestige, the better the hospital). Take all strategy sets empty. Given a matching $\mu$, $h \in H$ and $d \in \mu(h)$, agents' payoffs are given by $f_d(\mu) := p_{h} + \sum_{d' \in \mu(h)} r_{d'}$ and $g_h(\mu) = \sum_{d' \in \mu(h)} r_{d'}\cdot \mathbb{I}\{|\mu(h)| \leq q_h\} + \infty\cdot \mathbb{I}\{|\mu(h)| \bbi q_h\}$, where $q_1,q_2 \in \mathbb{N}$ are fixed and known values, representing the hospitals' capacities. In words, doctors' payoffs correspond to the aggregated ranking of their colleagues and their own ranking (the higher, the better) plus the prestige of their hospital. For hospitals, their utility is given by the aggregated ranking of their doctors up to a capacity $q_h$. \qed
\end{example}

\begin{example}\label{ex:roommates}
Consider a set $D$ of $n$ doctors, a set $H$ of $m$ hospitals, with $m \bbi \lfloor n/2 \rfloor$, suppose that all strategy sets are empty (as in the previous example), and all agents have null IRPs. Given $I \subseteq D$, $d \in D$, and $h \in H$, suppose the payoff functions are given by $g_{I,h} \equiv 0$ and $f_{d,I,h} = v_{d,d'} \cdot \mathbb{I}\{I = \{d'\}\} - M \cdot \mathbb{I}\{|I| \bbi 1\}$, where $(v_{d,d'}, d,d' \in D)$ are fixed positive real values and $M \gg 1$. In words, doctors receive a positive utility if they have only one colleague and a negative utility otherwise. In addition, hospitals have null payoff functions independent of the doctors assigned to them, and doctors' utilities do not depend on the hospital they are assigned to. Since there are more hospitals than half of the doctors, doctors in groups of more than two agents can always opt by matching in couples in order to end up better off. The resulting matching game corresponds to a classical \textit{roommates problem with exogenous preferences}. \qed
\end{example}

\begin{example}\label{ex:hedonic_game_example} 
Consider a set of three doctors $D = \{1,2,3\}$ and two hospitals $H = \{a,b\}$, all agents with empty strategy sets. \Cref{tab:hedonic_game_example} shows the utility of each doctor for belonging to each possible subset of doctors, independent of the assigned hospital (doctors get utility only in the coalitions they appear in).
\vspace{-0.5cm}
\begin{table}[H]
    \centering
    \begin{tabular}{cc|ccccccc}
    &\multicolumn{8}{c}{Subsets of doctors}\\\noalign{\vskip 2mm}
    & & \{1\} & \{2\} & \{3\} & \{1,2\} & \{1,3\} & \{2,3\} & \{1,2,3\}\\
    \cline{2-9}
    \multirow{3}{*}{Doctors} & 1 & 0 & - & - & 1 & -2 & - & -1 \\
    & 2 & - & 0 & - & 1 & - & -2 & -1 \\
    & 3 & - & - & 0 & - & 1 & 2 & 3
    \end{tabular}
    \caption{Hedonic game.}
     \label{tab:hedonic_game_example}
\end{table}
Endow hospitals with null payoff functions. Remark that agents $1$ and $2$ prefer to be together rather than being single, and none of them wants to be with $3$. The resulting matching game corresponds to an \textit{hedonic game}. \qed
\end{example}

\subsection{Core stability}\label{sec:external_stability_one_to_many}

This section is devoted to extend the pairwise stability notion studied in \cite{garrido2025stable} to the model of general matching games. As pairwise stability for one-to-one matching games, \textit{Core stability} will generalize the stability notions from the literature on one-to-many matching markets, in particular, the one of Hatfield and Milgrom \cite{hatfield_matching_2005}. We consider the $\varepsilon$-version as it will be useful for the subsequent complexity study.

\begin{definition}\label{def:external_stability_one_to_many}
Let $\varepsilon \geq 0$ be fixed. An allocation $\pi = (\mu,\vec{x},\vec{y})$ is $\varepsilon$-\textbf{blocked} by a coalition $(I,h)$, with $I \subseteq D$ and $h \in H$, if there exist $(\vec{w}_I,\vec{z}_{I,h}) \in X_I \times Y_h^I$, such that, for any $d \in I$, $f_{d,I,h}(w_d,\vec{w}_I,z_{d,h}) \bbi f_d(\pi) + \varepsilon$, and $g_{I,h}(\vec{w}_I,\vec{z}_{I,h}) \bbi g_h(\pi) + \varepsilon$.
The allocation $\pi$ is $\varepsilon$-\textbf{core stable} if it is $\varepsilon$-\textbf{individually rational} (no agent gets $\varepsilon$ less than her IRP) and it is not $\varepsilon$-blocked. 
\end{definition}

Most of the discussions of \Cref{sec:matching_with_contracts_matching_games,sec:roommates_matching_games} will consider $0$-core stable allocations, simply called, core stable, unless precised otherwise.

An allocation is core stable if (1) agents' IRPs are satisfied, so nobody prefers to abandon their partners and become single, and (2) we cannot find a set of doctors and a hospital who prefer to abandon their assigned partners and matching together as they end up better off. 
The coalition $(I,h)$ in the previous definition is called \textbf{blocking coalition}.


Core stability captures \textit{stability} in the matching with contracts \cite{hatfield_matching_2005} and roommates setting \cite{irving_efficient_1985}, \textit{pairwise stability} in the stable marriage problem \cite{gale_college_1962} and one-to-one matching games \cite{garrido2025stable}, and \textit{core stability} in the assignment game \cite{shapley_assignment_1971}, matching with transfers \cite{demange_strategy_1985}, and hedonic games settings \cite{dreze_hedonic_1980}. Core notions from economy \cite{hildenbrand_core_1982} are also captured by our stability notion when fixing the matching $\mu$. Let us compute the core stable allocations of our examples.
\vspace{0.5cm}

\noindent\textbf{\Cref{ex:segregation_problem}.} Recall the example with $2n$ doctors ordered by ranking and two hospitals ordered by prestige. Suppose that hospitals have capacities $q_1 = q_2 = n$ and that agents have low IRPs, so all agents prefer to be matched rather than being single. Let $\pi$ be an individually rational allocation such that a doctor $d \in \{n+1,...,2n\}$ is matched with $h_1$. As the capacities of both hospitals are equal to $n$, there must be a doctor $d' \in \{1,...,n\}$ matched with $h_2$. Note that the coalition $(I,h_1)$, with $I = \{1,...,n\}$, blocks $\pi$ as all doctors in $I$ increase strictly their payoffs if they match with $h_1$, as well as $h_1$. Thus, for $\pi$ to be core stable, it must hold that the best $n$ doctors are assigned to $h_1$, and the $n$ worst to $h_2$. Therefore, we obtain that the only core stable allocation for this matching game is the one that \textit{segregates} the doctors by their rankings. \qed
\medskip

\noindent\textbf{\Cref{ex:roommates}.} Recall the roommates example in which a set of $n$ doctors seek to get matched in couples. A matching $\mu$ is core stable if there exists no pair of doctors $(d_1,d_2) \in D \times D$, such that, $v_{d_1,d_2} \bbi f_{d_1}(\mu) \text{ and } v_{d_2,d_1} \bbi f_{d_2}(\mu)$. In particular, as there are more hospitals than half of the doctors, any pair of doctors can form a couple in case of being matched with a bigger group. A matching $\mu$ is core stable if and only if it is stable for the roommates problem (assuming, without loss of generality, that $v_{d,d'} \neq v_{d,d''}$, for any $d,d',d''$ in $D$).\qed
\medskip

\noindent\textbf{\Cref{ex:hedonic_game_example}.} Recall the hedonic game example in which three doctors seek to get matched in coalitions and get payoffs as shown in \Cref{tab:hedonic_game_example}. A partition $\mu$ of the set of doctors is core stable if there is no set of doctors $I$, not matched between them, such that matching together, all of them end up better off. Therefore, the core stable allocations are the partitions $\mu_1 = (\{1,2,a\},\{3,b\})$ and $\mu_2 = (\{1,2,b\},\{3,a\})$, equivalent in terms of utility, which correspond to the \textit{core stable partition} of the hedonic game when dropping the hospitals from the allocation. \qed
\medskip

In this article we aim to design efficient algorithms to find core stable and renegotiation proof allocations, if they exist, or to report the non-existence of these allocations when the problem does not allow them. In order to have any hope of designing tractable algorithms, we will focus on two particular models: \textit{(one-to-many) additive separable matching games} and \textit{roommates matching games}, being both submodels of general matching games.

\section{1-to-Many additive separable matching games}\label{sec:matching_with_contracts_matching_games}

The additive separable matching games submodel rises as a generalization to Hatfield and Milgrom's work \cite{hatfield_hidden_2015}. Therefore, we begin this section by recalling their model and results.

\subsection{Matching with contracts model}

Matching with contracts, defined by Hatfield and Milgrom (H\&M) \cite{hatfield_matching_2005}, considers two finite sets $D$ and $H$ of doctors and hospitals, respectively, and a finite set of contracts $X$. Contracts are \textit{bilateral}, so each $x \in X$ is related to only one doctor $x_D \in D$ and one hospital $x_H \in H$. Agents have \textit{choice functions} $(C_d, C_h, \forall d \in D, \forall h \in H)$, such that, given a set of contracts $X' \subseteq X$, $C_d(X')$ outputs the most preferred contract of $d$ among those in $X'$, and $C_h(X')$ outputs the most preferred subset of contracts of $h$ among those in $X'$.

A set of contracts is stable if no agent prefers to abandon her contract and no coalition of doctors can get matched with a hospital weakly improving everybody's allocation with at least one doctor and the hospital ending up strictly better off.

\begin{definition}\label{def:stable_allocation_matching_with_contracts}
\textup{A set of contracts $X' \subseteq X$ is \textbf{stable} if, $\bigcup_{d \in D}$ $C_d(X') = \bigcup_{h \in H} C_h(X') = X'$ and there is no $h \in H$ and $X'' \neq C_h(X')$, such that, $X'' = C_h(X'\cup X'') \subseteq \bigcup_{d \in D} C_d(X' \cup X'')$.}
\end{definition}

Due to the way that H\&M treated the hospitals' choice functions, these must satisfy two assumptions for the existence of stable allocations: \textit{substitutability} (\Cref{def:substitutability_matching_with_contracts} or page 918 \cite{hatfield_matching_2005}) and \textit{irrelevance of rejected contracts} (IRC) (\Cref{def:substitutability_matching_with_contracts}, or page 6 \cite{aygun_matching_2012}). 

\begin{definition}\label{def:substitutability_matching_with_contracts}
\textup{Contracts are \textbf{substitutes} for hospital $h$ if for any $X' \subseteq X$ and any pair of different contracts $x,x' \in X$, if $x \notin C_h(X')$, then $x \notin C_h(X' \cup \{x'\})$. The choice function of hospital $h$ satisfies the \textbf{irrelevance of rejected contracts} (IRC) if for any $Y \subseteq X$, and any $z \notin Y$, if $z \notin C_h(Y \cup \{z\})$ then $C_h(Y) = C_h(Y \cup z)$.}
\end{definition}

Using fixed-point techniques from lattice theory that allowed to guarantee the convergence of a \textit{cumulative offer mechanism} (COM), H\&M proved that the set of stable allocations is a non-empty lattice. We claim that an easier existence proof can be conducted under substitutability, as a deferred-acceptance-like of algorithm is enough to compute a stable allocation. Our claim lies in the following result, unobserved in \cite{hatfield_matching_2005}.

\begin{proposition}\label{prop:stability_matching_with_contracts_under_substitutability}
Let $X' \subseteq X$ be an allocation and suppose that contracts are substitutes for hospitals and the IRC property is satisfied. $X'$ is stable if and only if it is individually rational and there is no doctor $d$, hospital $h$, and contract $x \in X\setminus X'$, with $x_D = d, x_H = h$, such that $x = C_d(X' \cup\{x\})$ and $x \in C_h(X' \cup \{x\})$.
\end{proposition}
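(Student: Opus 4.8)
The statement is an "if and only if." One direction is immediate: if $X'$ is stable in the sense of Definition~\ref{def:stable_allocation_matching_with_contracts}, then in particular $\bigcup_{d} C_d(X') = \bigcup_h C_h(X') = X'$, which already gives individual rationality, and any single added contract $x$ with $x = C_d(X'\cup\{x\})$ and $x\in C_h(X'\cup\{x\})$ would, under IRC, furnish a blocking set $X'' = C_h(X'\cup\{x\})$; so the single-contract condition is necessary. The substance is the converse: assuming individual rationality together with the absence of a profitable single-contract deviation, we must rule out blocking by an arbitrary set $X''$ with $X'' = C_h(X'\cup X'')$ and $X''\subseteq \bigcup_d C_d(X'\cup X'')$.

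**Key steps for the hard direction.** First I would fix a hypothetical blocking pair $(h, X'')$ and let $X^+ = X'\cup X''$. Since $X'' = C_h(X^+)$, any contract $x\in X''\setminus X'$ (at least one exists, else $X'' = C_h(X^+)$ forces $C_h(X') = X''\subseteq X'$, contradicting $X''\neq C_h(X')$ via IRC) is chosen by $h$ from $X^+$. The goal is to descend to a single such contract. The plan is to use substitutability to argue that if $x$ is chosen by $h$ from the large set $X^+$, then $x$ is still chosen by $h$ from the smaller set $X'\cup\{x\}$: substitutability is exactly the statement that shrinking the menu cannot cause a previously chosen contract to become rejected (contrapositive of Definition~\ref{def:substitutability_matching_with_contracts}, iterated down from $X^+$ to $X'\cup\{x\}$ by removing contracts one at a time — here one should be slightly careful and invoke IRC to handle the removal of contracts that were themselves rejected, but the monotonicity "$x\in C_h(A)\Rightarrow x\in C_h(B)$ whenever $\{x\}\subseteq B\subseteq A$" is the standard substitutes consequence). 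Hence $x\in C_h(X'\cup\{x\})$. Second, on the doctor side, $x\in X''\subseteq \bigcup_d C_d(X^+)$ means $x = C_d(X^+)$ for $d = x_D$ (doctors pick a single contract), and since doctors have strict preferences over their own contracts, $x = C_d(X^+)$ together with $x\in X'\cup\{x\}\subseteq X^+$ forces $x = C_d(X'\cup\{x\})$. These two facts contradict the assumed absence of a profitable single-contract deviation, completing the proof.

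**Main obstacle.** The delicate point is the reduction from the set-sized deviation to a single contract on the hospital side: one needs that $x\in C_h(X^+)$ implies $x\in C_h(X'\cup\{x\})$, and the clean way to get this is the well-known lemma that substitutability plus IRC implies the choice function is "monotone" in the sense that removing rejected contracts and then any subset of the menu cannot expel a surviving contract. I would state this as a short preliminary observation (or cite that it is standard, e.g.\ from \cite{aygun_matching_2012}) and then apply it. The doctor side is routine given strict preferences; the individual-rationality equivalence with $\bigcup_d C_d(X') = \bigcup_h C_h(X') = X'$ is also routine once IRC is in hand, since IRC is what guarantees $C_h$ of the "closure" agrees with $C_h$ of $X'$.
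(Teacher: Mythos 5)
Your proposal is correct and takes essentially the same approach as the paper: for the substantive direction you pick a contract $x \in X''\setminus X'$ from a hypothetical blocking set, use the iterated contrapositive of substitutability (with IRC) to pull $x \in C_h(X'\cup X'')$ down to $x \in C_h(X'\cup\{x\})$, and use the doctor's single-valued choice to get $x = C_d(X'\cup\{x\})$, contradicting the pairwise condition --- which is exactly the paper's argument. The easy direction differs only cosmetically (you exhibit the blocking set $C_h(X'\cup\{x\})$ directly, while the paper works with $X'|_h\cup\{x\}$), so there is nothing of substance to separate the two proofs.
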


Remark that the second property of \Cref{prop:stability_matching_with_contracts_under_substitutability} corresponds to \textbf{pairwise stability}, adapted to the matching with contracts setting. 

\begin{proof}
Suppose $X'$ is stable. Let $d$ be a doctor and $x,x' \in X$ be two contracts related to doctor $d$ ($x_D = x'_D = d$), being $x'$ her contract in the allocation ($x' \in X'$) and $x$ any other contract ($x \in X \setminus X'$), such that $d$ prefers $x$ to $x'$ ($x = C_d(X' \cup \{x\})$). Let $h = x_H$ be the hospital related to contract $x$. Set $X'' = (X'\setminus\{x'\}) \cup\{x\}$ as the allocation obtained when replacing $x'$ by $x$ and consider $X''|_h := \{y \in X'' : y_H = h\}$. Note that $X''|_h = X'|_h \cup \{x\}$. It follows that $X''|_h \subseteq \bigcup_{d \in D} C_d(X' \cup X''|_h)$ as the only change is the new contract $x$, the one is preferred by $d$ over $x'$. Since $X'$ is stable, it follows that,
\begin{align*}
 X''|_h \neq C_h(X' \cup X''|_h) &\Longleftrightarrow [X'|_h \cup \{x\}] \neq C_h(X' \cup \{x\}) \Longleftrightarrow  x \notin C_h(X' \cup \{x\}).
\end{align*}

Conversely, suppose $X'$ is individually rational and there is no doctor $d$, hospital $h$, and contract $x \in X\setminus X'$, with $x_D = d, x_H = h$, such that $x = C_d(X' \cup\{x\})$ and $x \in C_h(X' \cup \{x\})$. Suppose $X'$ is not stable, so there exist $h$ a hospital and a set of contracts $X'' \neq C_h(X')$ such that, $X'' = C_h(X'\cup X'') \subseteq \bigcup_{d \in D} C_d(X' \cup X'')$. Let $x \in X'' \setminus C_h(X')$ and $d$ the related doctor to $x$ ($x_D = d$). Since $x \in C_h(X' \cup X'')$, we have that $x$ belongs to $C_h(Z)$ for any $Z\subseteq X' \cup X''$, such that $x \in Z$ (substitutability + IRC). In particular, $x \in C_h(X' \cup \{x\})$. Since $x = C_d(X' \cup \{x\})$, as doctors choose only one contract, we obtain a contradiction. 
\end{proof}

Pairwise stable allocations can be computed by a deferred-acceptance-like algorithm, such as \Cref{Algo:Deferred_acceptance_HM}. In the pseudo-code, we have extended the set of contracts $X$ to $X_0 := X \cup \{\varnothing\}$, where $\varnothing$ represents being unmatched. In addition, given a set of contracts $Y\subseteq X$, and $k$ an agent, we denote by $Y|_k$ the subset of contracts in $Y$ that are related to $k$.

\begin{algorithm}[h]
\textbf{Input}: $(D,H,X_0)$ a matching with contracts instance.

Set $D' \leftarrow D$ as the set of unmatched doctors.

Set $Y,Z \leftarrow \varnothing$ as the sets of accepted and rejected contracts, respectively.

\While{$D' \neq \emptyset$}{
Let $d \in D'$ and $x \in \argmax C_d(X_0 \setminus Z)$. 

If $x = \varnothing$, update $D' \longleftarrow D' \setminus \{d\}$ and start again. Otherwise, let $h  = x_H$ be the concerned hospital.

Let $W = C_h(Y \cup \{x\})$ and $\overline{W} = (Y|_h \cup \{x\}) \setminus W$. Update $Y \leftarrow (Y \setminus \overline{W}) \cup W$ and $Z \leftarrow Z \cup \overline{W}$.

Update $D'$ as it corresponds.
}
Output $Y$
\caption{Deferred-acceptance with contracts algorithm.}
\label{Algo:Deferred_acceptance_HM}
\end{algorithm}

\Cref{Algo:Deferred_acceptance_HM} takes an unmatched doctor and asks her to propose her most preferred contract among all those that have not been already rejected. The concerned hospital then chooses its most preferred subset of contracts among all the proposed ones. The rejected contracts are stored, the set of unmatched doctors is updated accordingly, and a new iteration starts. The algorithm keeps running until all doctors have been allocated to a hospital or the non-allocated ones prefer to remain unmatched.

\Cref{Algo:Deferred_acceptance_HM} always converges as the set of contracts is finite and the set of rejected contracts is increasing. Moreover, the output $Y$ is \textbf{always pairwise stable} as for any doctor $d$ and any contract $x \in X\setminus Y$, with $x = C_d(Y \cup \{x\})$, necessarily $x$ was rejected along the algorithm. Therefore, $x \notin C_h(Y\cup \{x\})$, where $h = x_H$. From \Cref{prop:stability_matching_with_contracts_under_substitutability}, we conclude that \Cref{Algo:Deferred_acceptance_HM} outputs a  stable allocation whenever hospitals have substitute contracts and satisfy the irrelevance of rejected contracts property. Observe that, even if one of the conditions fails to be satisfied, a pairwise stable allocation always exists for any instance of the H\&M model. 

\subsection{Additive separable matching games}\label{sec:model}

Let $\Gamma$ be a matching game (\Cref{def:matching_game_one_to_many}) and relax the doctors' payoff functions dependence on colleagues, that is, redefine the payoff functions as $f_{d,h} : X_d \times Y_h \to \mathbb{R}$, for any $d \in I$, and $g_{I,h} : X_I \times Y_h^{I} \to \mathbb{R}$, where $I \subseteq D$ and $h \in H$. The relaxation from the general matching game model to the additive separable matching game submodel corresponds to Kelso and Crawford's \cite{kelso_jr_job_1982} main assumption.

We will endow hospitals with \textbf{quotas}, representing the maximum number of doctors they can receive. Formally, we say hospital $h\in H$ has quota $q_h \in \mathbb{N}$, if for any $I \subseteq D$ and strategy profiles $\vec{x}_I \in X_I, \vec{y}_h \in Y_h^I$, their payoff function becomes:
\begin{align*}
    \Tilde{g}_{I,h}(\vec{x}_I,\vec{y}_h) = \left\{\begin{array}{cc}
      g_{I,h}(\vec{x}_I,\vec{y}_h)  & \text{if } |I|\leq q_h  \\
      -\infty   & \text{otherwise}
    \end{array} \right.
\end{align*}

As individual rationality of the hospitals will exclude allocations that don't respect the quotas, by abuse notation, \textbf{we denote $\Tilde{g}$ also by $g$}.

\begin{remark}
A model without quotas can be easily obtained by taking $\vec{q} \equiv |D|$ for all hospitals. Similarly, taking $\vec{q} \equiv 1$ we recover the one-to-one matching models in the literature \cite{demange_multi-item_1986,gale_college_1962,shapley_assignment_1971}.
\end{remark}

Next, we introduce the notion of additive separability.

\begin{definition}\label{def:add_separable_with_quotas}
Let $h$ be a hospital. We say that $h$'s payoff function is \textbf{additive separable} if for every doctor $d \in D$ there exists a function $g_{d,h}: X_d \times Y_h \to \mathbb{R}$, such that
\begin{align*}
    g_{h}(\pi) = \left\{\begin{array}{cc}
    \sum_{d \in \mu(h)} g_{d,h}(x_d,y_{d,h}) & \text{ if } |\mu(h)|\leq q_h,  \\
    -\infty     & \text{ otherwise},
    \end{array} \right.
\end{align*}

\noindent for $\pi = (\mu,\vec{x},\vec{y})$ any allocation.
\end{definition}

Note that in additive separable matching games, every tuple 
$$G_{d,h}:=(X_d,Y_h,f_{d,h},g_{d,h}),$$
defines a two-player game (therefore, capturing the model of one-to-one matching games under commitment studied in \cite{garrido2025stable}). Strategy sets do not depend on potential partners. This is done without loss of generality and to ease the notation. In other words, agents are assumed to be able to play the same strategies against all possible partners. However, \textbf{two-player games are couple-dependent}, as agents may play different games against different partners. This is in line with the bilateral contracts of Hatfield and Milgrom \cite{hatfield_matching_2005}.

The next example shows how a matching with linear transfer problem can be mapped to an additive separable matching game (more precisely, poly-matrix constant-sum games).

\begin{example}\label{ex:multi_auction}
A set of $H$ buyers bid for the items offered by $D$ sellers. Buyers are allowed to buy as many items as they prefer while sellers have only one indivisible item to sell. Buyers $h \in H$ and sellers $d \in D$ have non-negative valuations for the items $u_{h,d}$ and $v_d$, respectively. If a buyer $h$ buys the items of a set of sellers $J \subseteq D$, paying $\vec{p}_h := (p_{h,d})_{h \in J}$, agents' payoffs are, 
$$f_{d,h}(p_{h,d}) = p_{h,d} - v_d, \text{ for any } d \in J, \text{ and } g_{J,h}(\vec{p}_h) = \sum_{d \in J} (u_{h,d} - p_{h,d}).$$
Remark that only buyers are strategic in this example while sellers only care about the highest bidder. \qed
\end{example}


\subsection{Link between matching games and the H\&M model}\label{sec:mapping_HM_and_MG}

Now that we have some examples at hand, we can formally show the links between matchings games to that of H\&M and anticipate the advantages of working with the matching games model instead. Interestingly, there are many mappings to associate a H\&M model to a matching games model, while the converse is unique.

Consider a matching with contracts model $(D,H,X)$. Endow all agents with strategy sets: for $d\in D$ and $h\in H$, we let $X_d=\{x \in X: x|_D  = d\}$ and $Y_h=\{x \in X: x|_H  = h\}$.  Under an extra assumption on the preferences in the H\&M model (such as the strong axiom of revealed preferences), there exists a utility representation \cite{chambers_revealed_2016} for each agent over $X$: $u_d$ for a doctor $d$ and $v_h$ for a hospital $h$, that we can without loss of generality assume positive. Given $I \subseteq D$, $h \in H$, and strategy profiles $(\vec{w}_I,\vec{z}_{I,h}) \in X_I \times Y_h^I$, define the payoff function of a doctor to be,
\begin{align*}
    f_{d,h}(w_d,z_{d,h}) = \left\{ \begin{array}{cl}
    u_d(w_d) & \text{if } w_d = z_{d,h} \\
    -1& \text{otherwise}
    \end{array} \right.
\end{align*}

The mapping means that if both parts agree on a contract that is available to them, they get the (positive) utility of that contract; otherwise, they get -1.  We do similarly for the hospitals.


Conversely, given a matching game in which all agents are endowed with strategy sets $(X_d,Y_h)$ and payoff functions, consider the set of contracts $Z := \bigcup_{(d,h) \in D \times H} (X_d \times Y_h)$ and associate the agents' IRPs to an empty contract $\varnothing$, included in $Z$. Then, the matching game can be uniquely mapped into a matching with contracts, where the choice functions are defined, for any $Z' \subseteq Z$ as,
\begin{align*}
    &z = C_d(Z') \text{ if and only if } f_{d,h}(z) = \max\{f_{d,h}(z') : z' \in Z', h = z_H\}, \\
    &Z'' = C_h(Z') \text{ if and only if } g_{I,h}(Z'') = \max\{g_{I,h}(Z''') : Z''' \subseteq Z', I = \{z_D \in D, z \in Z'''\} \}.
\end{align*}

Hence, computing the choice functions of hospitals may not be so easy, even in the separable additive case (they are the Argmax of the utility functions), and requires an exponential number of resources as we need to specify them for each coalition of doctors. In addition, choice functions are black boxes and do not allow us to understand the incentives behind agents' choices of contracts. 

Working with strategies and utilities solves these issues. In addition, (1) the \textit{irrelevance to rejected contracts} assumption is automatically satisfied (thanks to the strong axiom of weak preferences), (2) we can work with infinitely many contracts (finiteness is crucial in H\&M), (3) we can refine the set of stable solutions using renegotiation-proofness, (4) we can identify interesting matching problems where computing renegotiation-proof stable allocations is polynomial.

\subsection{Core stability in the 1st submodel}\label{sec:external_stability_submodel_contracts}

The core is a classical solution concept from cooperative game theory and has been widely studied \cite{aumann_core_1961,gilles_core_2010}. The non-emptyness of the core cannot be always guaranteed without extra assumptions on the game. We aim to establish sufficient conditions for the set of core stable allocations in our model to be non-empty.

Strict preferences over a finite set of contracts is a common assumption \cite{hatfield_matching_2008,hatfield_substitutes_2010,hatfield_stability_2013,hatfield_stability_2021,hatfield_matching_2005} in matching markets as it makes equivalent the concepts of \textbf{core} (\Cref{def:external_stability_one_to_many}) and \textbf{weak-core} (all agents within a coalition must end up weakly better off and at least one of them strictly better off to block an allocation). This equivalence does not necessarily hold in our model due to the continuum of payoffs. Thus, we focus in studying core stable allocations by proving that any pairwise stable allocation (\Cref{def:pairwise_stable_allocation}) is core stable under additive separability (\Cref{prop:core_stable_and_pairwise_stable_are_equivalent}). 

Remark that, generically, the equivalence holds as we can discretize the set of strategy profiles and perturb the players' payoffs to get strict preferences. Moreover, the DA algorithm (\Cref{Algo:Deferred_acceptance_HM}) can be applied to construct core stable allocations in matching games by (1) mapping the model to a matching with contracts model, (2) discretizing with mesh $\varepsilon$ our continuum set of strategy profiles, (3) perturbing the payoffs up to $\varepsilon$ to avoid ties in the preferences, (4) applying \Cref{Algo:Deferred_acceptance_HM} to find a pairwise stable allocation, and finally (5) tending $\varepsilon$ to zero. Clearly, this is too costly. Additive separability will allow us to reduce the computational complexity of this task by extending the deferred-acceptance with competition (DAC) algorithm from \cite{garrido2025stable}. 

Let $\Gamma$ be a matching game and suppose that all hospitals have additive separable payoff functions and quotas (\Cref{def:add_separable_with_quotas}).

\begin{definition}\label{def:pairwise_stable_allocation}
Let $\varepsilon \geq 0$. An $\varepsilon$-individually rational allocation $\pi = (\mu,\vec{x},\vec{y})$ is $\varepsilon$-\textbf{blocked by a pair} $(d,h)$, if there exists $({w}_d,{z}_{d,h}) \in X_d \times Y_h$, such that $f_{d,h}(w_{d},z_{d,h}) \bbi f_{d}(\pi) + \varepsilon$ and
\begin{align*}
 g_{d,h}(w_d,z_{d,h}) \bbi \left\{\begin{array}{cl}
g_{d,h}(x_d,y_{d,h}) + \varepsilon& \text{if }\mu(d) = h,\\
   \min_{d'\in \mu(h)} g_{d',h}(x_{d'},y_{d',h}) + \varepsilon& \text{if } |\mu(h)| = q_h, \mu(d) \neq h,  \\
    \varepsilon & \text{if } |\mu(h)| \sm q_h,\mu(d) \neq h,
 \end{array} \right.  
\end{align*}
$\pi$ is $\varepsilon$-\textbf{pairwise stable} if it is not $\varepsilon$-blocked by any pair.
\end{definition}

When $d = \mu(h)$, blocking means that the strategy profile used by $(d,h)$ in $\pi$ is not Pareto-optimal in their game, so they can jointly deviate to a strictly better outcome. Note that for unit hospitals quotas $q \equiv 1$, we recover a one-to-one model where pairwise stability and core stability coincide.
\medskip

\textbf{\Cref{ex:multi_auction}}. Recall the multi-item auction example. For simplicity, consider two buyers $H = \{\alpha,\beta\}$, four sellers $D = \{a,b,c,d\}$, all sellers having the same valuation $v = 1$ for their items, and all agents with a null IRP. In addition, suppose that buyers $\alpha$ and $\beta$ have valuations $u_{\alpha} = (10,10,2,2)$ and $u_{\beta} = (2,2,10,10)$, respectively, for the sellers' items. The core stable allocations corresponds to any $\pi = (\mu,(p_{\alpha},p_{\beta}))$ with $\mu = ((\alpha,a),(\alpha,b),(\beta,c),(\beta,d))$ and the buyers' strategy profiles $p_{\alpha} = (x_1,x_2,0,0)$ and $p_{\beta} = (0,0,x_3,x_4)$, with $2 \leq x_i \leq 10$ for $i\in \{1,2,3,4\}$, meaning, for example, that $\alpha$ pays $x_1$ monetary units to $a$ and $x_2$ to $b$. We obtain a continuum of core stable allocations. Remark that the two extremes $p_{\alpha} \equiv p_{\beta} \equiv 2$ and $p_{\alpha} \equiv p_{\beta} \equiv 10$ correspond to the 2nd price auction outcome and 1st price auction outcome, respectively.  \qed
\medskip

We prove, first of all, that any pairwise stable allocation is core stable. The result trivially generalizes to the $\varepsilon$-versions of core and pairwise stability.

\begin{proposition}\label{prop:core_stable_and_pairwise_stable_are_equivalent}
Let $\Gamma$ be an additive separable matching game. Then, any pairwise stable allocation $\pi = (\mu,\Vec{x},\Vec{y})$ is core stable.
\end{proposition}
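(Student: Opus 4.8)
The plan is a proof by contradiction. Pairwise stability is a property of individually rational allocations, so $\pi=(\mu,\vec x,\vec y)$ is already individually rational and it only remains to exclude blocking coalitions. Suppose then that some coalition $(I,h)$, via strategy profiles $(\vec w_I,\vec z_{I,h})$, blocks $\pi$: $f_{d,h}(w_d,z_{d,h})>f_d(\pi)$ for every $d\in I$, and $g_{I,h}(\vec w_I,\vec z_{I,h})>g_h(\pi)$. Individual rationality makes $g_h(\pi)$ finite, so the quota of $h$ is not exceeded, $|I|\le q_h$, and additive separability (\Cref{def:add_separable_with_quotas}) turns the hospital inequality into $\sum_{d\in I}\gamma_d>\sum_{d'\in\mu(h)}\delta_{d'}$, where $\gamma_d:=g_{d,h}(w_d,z_{d,h})$ and $\delta_d:=g_{d,h}(x_d,y_{d,h})$.

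The key step is to extract from $(I,h)$ a single pair that already blocks $\pi$, contradicting \Cref{def:pairwise_stable_allocation}. Partition $I=A\sqcup B$ with $A=I\cap\mu(h)$ (the doctors of $h$ kept by the coalition) and $B=I\setminus\mu(h)$ (the poached doctors), and put $C=\mu(h)\setminus I$ (the dropped doctors). For $d\in A$ one has $\mu(d)=h$ and $f_{d,h}(w_d,z_{d,h})>f_d(\pi)$, so non-blocking of the pair $(d,h)$ (first regime of \Cref{def:pairwise_stable_allocation}) forces $\gamma_d\le\delta_d$; summing over $A$ and cancelling $\sum_{d\in A}\delta_d$ reduces the hospital inequality to $\sum_{d\in B}\gamma_d>\sum_{d'\in C}\delta_{d'}$. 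For $d\in B$ one has $\mu(d)\ne h$, so non-blocking of $(d,h)$ gives $\gamma_d\le 0$ if $|\mu(h)|<q_h$ (third regime) and $\gamma_d\le\min_{d'\in\mu(h)}\delta_{d'}$ if $|\mu(h)|=q_h$ (second regime). When $\mu(h)$ is not full this already gives $\sum_{d\in B}\gamma_d\le 0$; when it is full, $|A|+|B|=|I|\le q_h=|A|+|C|$ yields $|B|\le|C|$ and hence $\sum_{d\in B}\gamma_d\le|B|\min_{\mu(h)}\delta\le|C|\min_{\mu(h)}\delta\le\sum_{d'\in C}\delta_{d'}$. Either way $\sum_{d\in B}\gamma_d>\sum_{d'\in C}\delta_{d'}$ is contradicted, provided the per-doctor contributions $\delta_{d'}$ of matched doctors are non-negative --- the natural normalisation in this setting. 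So no pair, and therefore no coalition, blocks $\pi$; an analogous argument handles the $\varepsilon$-versions.

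The step I expect to be the real obstacle is precisely the comparison of the three quantities above: the surplus $\sum_{d\in A}(\delta_d-\gamma_d)\ge 0$ that the retained doctors cost the hospital, the value $\sum_{d\in B}\gamma_d$ the poached ones bring, and the value $\sum_{d'\in C}\delta_{d'}$ of those dropped. The awkward case is a hospital that profits from a coalition merely by \emph{shedding} low-value colleagues, which no bilateral deviation can reproduce; taming it is exactly what forces the joint use of additive separability (to split $g_h$ doctor by doctor), the $\min$-threshold that \Cref{def:pairwise_stable_allocation} bakes into the full-quota regime, and the sign of the matched doctors' contributions. Once these are lined up, the contradiction is immediate, and the fact that individual rationality of $\pi$ passes unchanged to core stability is trivial.
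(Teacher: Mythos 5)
Your overall route is the same as the paper's: suppose a coalition $(I,h)$ blocks, use additive separability to convert the hospital's strict gain into the sum inequality $\sum_{d\in I}g_{d,h}(w_d,z_{d,h})>\sum_{d'\in\mu(h)}g_{d',h}(x_{d'},y_{d',h})$, and extract from it a single pair that blocks $\pi$, contradicting \Cref{def:pairwise_stable_allocation}. The difference is one of care: the paper's proof is two lines long, asserting directly that some $d\in I$ and $d'\in\mu(h)$ satisfy $g_{d,h}(w_d,z_{d,h})>g_{d',h}(x_{d'},y_{d',h})$ and that this makes $(d,h)$ a blocking pair, with no distinction between the three regimes of \Cref{def:pairwise_stable_allocation}; your $A$/$B$/$C$ decomposition and the counting $|B|\le|C|$ in the full-quota case are exactly the bookkeeping the paper leaves implicit.

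There is, however, a gap relative to the model as stated: writing $\delta_{d'}:=g_{d',h}(x_{d'},y_{d',h})$ and $\gamma_d:=g_{d,h}(w_d,z_{d,h})$ as you do, both of your cases need the matched doctors' contributions $\delta_{d'}$ to be nonnegative (in the non-full case to get $\sum_{d'\in C}\delta_{d'}\ge 0$, in the full case to pass from $|B|\min_{d'\in\mu(h)}\delta_{d'}$ to $|C|\min_{d'\in\mu(h)}\delta_{d'}$), and you wave this through as a ``natural normalisation''. The paper never assumes it: individual rationality only bounds the aggregate $g_h(\pi)$ by $\underline g_h$, which is not pinned to $0$, and nothing constrains the per-doctor terms. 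The proviso is not cosmetic, because the ``shedding'' scenario you single out really defeats the argument when a contribution is negative: take $q_h=2$, $\mu(h)=\{d_1,d_2\}$ with contributions $5$ and $-10$, let $d_2$ be at her maximal payoff, and let some $(w,z)$ give $d_1$ strictly more with $g_{d_1,h}(w,z)=0$; then $(\{d_1\},h)$ blocks in the sense of \Cref{def:external_stability_one_to_many} (since $0>-5$) while no pair blocks in the sense of \Cref{def:pairwise_stable_allocation}, because $d_1$'s threshold is her own contribution $5$ and $d_2$ cannot strictly improve. So your proof establishes the proposition only under the added hypothesis of nonnegative matched contributions (or some equivalent strengthening of individual rationality); note that the paper's own argument is loose at precisely this point, since it treats ``$\gamma_d>\delta_{d'}$ for some $d'$'' as if it were the blocking threshold for $d$, so you have correctly localised the weak step rather than hidden it --- but as written the proposal does not prove the statement in the generality in which it is asserted.
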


\begin{proof}
Suppose $\pi$ is pairwise stable but not core stable. Let $(I,h)$ be a blocking coalition, that is, there exist $(\vec{w}_I,\vec{z}_{I,h}) \in X_I \times Y_h^I$, such that, for any $d \in I$, $f_{d,h}(w_d,z_{d,h}) \bbi f_d(\pi)$ and $g_{I,h}(\vec{w}_I,\vec{z}_{I,h}) \bbi g_h(\pi)$. Since $h$ has an additive separable payoff function, it follows,
\begin{align*}
    \sum_{d\in I} g_{d,h}(w_d,z_{d,h}) \bbi \sum_{d\in \mu(h)} g_{d,h}(x_d,y_{d,h}).
\end{align*}

In particular, there must exist $d \in I$ and $d' \in \mu(h)$ such that 
$$g_{d,h}(w_d,z_{d,h}) \bbi g_{d',h}(x_{d'},y_{d,h}).$$ 
Since doctor $d$ increases strictly her payoff with the deviation, $(d,h)$ is a blocking pair of $\pi$, which is a contradiction. 
\end{proof}

\Cref{prop:core_stable_and_pairwise_stable_are_equivalent} states that in order to prove the non-emptyness of the set of core stable allocations, it is enough with finding a pairwise stable allocation. Our characterization of core stable allocations (called stable$^*$ in \cite{echenique_core_2002}) through pairwise stable allocations is in line with the work of Echenique and Oviedo \cite{echenique_core_2002} (called stable in \cite{echenique_core_2002}). However, our algorithm does not require fixed points theorem to obtain its convergence as we do not iterate a mapping but run a deferred-acceptance like algorithm. 

To prove the existence of pairwise stable allocations, we remark the deferred-acceptance with competitions algorithm in \cite{garrido2025stable} can be directly adapted to the one-to-many setting, as illustrated by \Cref{Algo:Propose_dispose_algo_general_case}.

\begin{algorithm}[ht]
\textbf{Input}: $\Gamma$ a matching game, $\varepsilon \bbi 0$,

Set $D' \leftarrow D$ as the set of unmatched doctors and match each hospital to $d_0$

\While{$D' \neq \emptyset$}{
Let $d \in D'$ and $(h,d',x_d,y_{d,h})$ be her optimal proposal, solution of Problem (\ref{eq:optimal_proposal_problem}).

\If{$|\mu(h)| \sm q_h$}{$d$ is accepted}
\Else{$d$ and $d'$ \textbf{compete} for $h$ as in a second-price auction. The winner stays at $h$, goes out of $D'$, and the loser is included in $D'$}}
\caption{Deferred-acceptance with Competitions algorithm.}
\label{Algo:Propose_dispose_algo_general_case}
\end{algorithm}

Unlike the one-to-one setting, optimal proposals seek to replace the doctor contributing the less utility to the hospital, formally computed as,
\begin{align}\label{eq:optimal_proposal_problem}
    \max_{\substack{ h \in H_0 \\ (w,z) \in X_d \times Y_h}}\left[f_{d,h}(w,z): g_{d,h}(w,z) \geq \min_{d' \in \mu(h)} g_{d',h}(x_{d'},y_{d,h}) + \varepsilon\right],
\end{align}
given the current hospitals' payoff profile $g(\pi) = (g_h(\pi))_{h \in H}$.
The solution to Problem (\ref{eq:optimal_proposal_problem}) consists in the doctor's most preferred hospital $h$, $d'$ the doctor that $d$ desires to replace (possibly $d_0$), and $(x_d,y_{d,h})$ the strategy profile that $d$ proposes to play to $h$. 
When the optimal proposal includes $d' = d_0$, the proposer joins the hospital without replacing $d_0$ (except when a hospital reaches its quota). 

If the optimal proposal includes a doctor $d'\neq d_0$, the \textbf{competition phase} starts and both doctors play a \textit{second-price auction}. We define the \textbf{reservation payoff} of doctor $d$, $\beta_d$, (and analogously the one of $d'$) as the optimal value of the problem,
\begin{align}\label{eq:reservation_price}
    \max_{\substack{h' \in H_0\setminus\{h\} \\ (w,z) \in X_d \times Y_{h'}}}\left[f_{d,h'}(w,z): g_{d,h'}(w,z) \geq \min_{d'\in \mu(h')} g_{d',h'}(x_{d'},y_{d',h'}) + \varepsilon\right].
\end{align}
In the case of several solutions, ties are broken by favoring hospitals. 
Given $\beta_d$ the reservation payoff of $d$, her bid $\lambda_d$ (and analogously the one of $d'$) is computed by,
\begin{align}\label{eq:bid_problem}
\max \left[g_{d,h}(w,z) : f_{d,h}(w,z) \geq \beta_d, w \in X_d, z \in Y_h \right].
\end{align}

The winner is the doctor with the highest bid (in case of a tie the winner is the current partner), who reduces her proposal to match the one of the loser. Formally, if $d$ wins, she solves,

\begin{align}\label{eq:final_bid_problem}
\max\left[f_{d,h}(w,z): g_{d,h}(w,z) \geq \lambda_{d'},w \in X_d, z \in Y_h\right].
\end{align}

Ties are broken by choosing the proposal that maximizes $h$'s utility. The loser is included in $D'$ and a new proposer is chosen. The following results come directly from the one-to-one setting (Theorems 2 and 5 \cite{garrido2025stable}). Moreover, the finiteness of the algorithm will be proved as well during the complexity study. For simplicity, we omit its proof.

\begin{theorem}\label{teo:propose_dispose_algo_is_correct}
Let $\varepsilon \bbi 0$ be fixed. Then, the DAC algorithm ends in finite time and outputs an $\varepsilon$-pairwise stable allocation.
\end{theorem}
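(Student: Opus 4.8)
The plan is to reduce the claim to the one-to-one analysis of Garrido-Lucero and Laraki (Theorems~2 and~5 in \cite{garrido2025stable}), using the observation that, under additive separability, a hospital $h$ of quota $q_h$ behaves toward any single doctor exactly like a one-to-one partner whose current acceptance level is the minimal colleague contribution $m_h(\pi):=\min_{d'\in\mu(h)}g_{d',h}(x_{d'},y_{d',h})$, every empty slot being held by $d_0$ at the per-slot individual rationality value. For the termination part I would only sketch the potential argument carried out in full in Section~\ref{sec:complexity}: the multiset of colleague contributions at each hospital only improves along the run, so the vector $(g_h(\pi))_{h\in H}$ is coordinate-wise non-decreasing and strictly increases (by at least $\varepsilon$) at every competition, while each doctor's optimal-proposal value in~(\ref{eq:optimal_proposal_problem}) is non-increasing; under the boundedness inherited from the standing assumptions of \cite{garrido2025stable}, competitions can occur only finitely often, and between consecutive competitions no doctor is displaced, so $|D'|$ strictly decreases and only finitely many steps occur — hence the loop exits. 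The substantive part is that the output $\pi$ is $\varepsilon$-pairwise stable.

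The key invariant to establish is that, at every hospital $h$, the threshold $m_h$ never decreases along the execution. When a doctor is accepted into a free slot, her contribution is at least $m_h+\varepsilon$ by the constraint in~(\ref{eq:optimal_proposal_problem}) — the slot was held by $d_0$ at the per-slot IRP, which by individual rationality is at most $m_h$ — so $m_h$ weakly grows; when a competition replaces the minimal contributor $d'$ by the winner, the winner's (possibly reduced) contribution is at least the loser's bid $\lambda_{d'}$ of~(\ref{eq:bid_problem}), which is itself at least $g_{d',h}(x_{d'},y_{d',h})$ because $(x_{d'},y_{d',h})$ is feasible for $d'$'s bid problem, and the contributions of the untouched colleagues do not change. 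Granting this, $\varepsilon$-individual rationality of $\pi$ is immediate: each doctor's last proposal solves~(\ref{eq:optimal_proposal_problem}) or its tie-breaking variant~(\ref{eq:final_bid_problem}), for which proposing $d_0$ is always feasible and yields exactly $\underline{f}_d$, so $f_d(\pi)\ge\underline{f}_d$; and by the invariant every colleague contribution at $h$ is at least the per-slot IRP, so $g_h(\pi)\ge\underline{g}_h$ by additive separability.

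Next, suppose toward a contradiction that a pair $(d,h)$ $\varepsilon$-blocks $\pi$, witnessed by some $(w,z)\in X_d\times Y_h$ with $f_{d,h}(w,z)>f_d(\pi)+\varepsilon$ and $g_{d,h}(w,z)$ exceeding the corresponding right-hand side in \Cref{def:pairwise_stable_allocation}. If $\mu(d)=h$, then $g_{d,h}(w,z)>g_{d,h}(x_d,y_{d,h})+\varepsilon$; but $d$'s realized pair at $h$ was obtained by maximizing $f_{d,h}$ subject to a lower bound on $g_{d,h}$ no larger than the value $g_{d,h}(x_d,y_{d,h})$ she actually realized, so $(w,z)$ was feasible for that maximization and strictly better in $f_{d,h}$, a contradiction. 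If $\mu(d)\neq h$, the threshold appearing in \Cref{def:pairwise_stable_allocation} is by construction the right-hand side of the constraint of~(\ref{eq:optimal_proposal_problem}) (equivalently of~(\ref{eq:reservation_price})) at hospital $h$ with the current value of $m_h$; by the monotonicity invariant, $(w,z)$ already satisfied that constraint at the moment $d$ made her last proposal. Hence the payoff she secured then is at least $f_{d,h}(w,z)$: if that last move was a plain acceptance it equals the maximum of~(\ref{eq:optimal_proposal_problem}) over all hospitals, which includes $h$; if it was winning a competition it is at least her reservation payoff $\beta_d$ of~(\ref{eq:reservation_price}) — the reduction of~(\ref{eq:final_bid_problem}) never pushes her below $\beta_d$ — and $\beta_d$ ranges over every hospital but the one she won, hence over $h$. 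Either way $f_d(\pi)\ge f_{d,h}(w,z)>f_d(\pi)+\varepsilon$, again a contradiction. Therefore $\pi$ is $\varepsilon$-individually rational and admits no $\varepsilon$-blocking pair, i.e.\ it is $\varepsilon$-pairwise stable.

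I expect the main obstacle to be precisely this last piece of bookkeeping around the second-price-auction phase: checking that the reservation values~(\ref{eq:reservation_price}), the bids~(\ref{eq:bid_problem}), and the final reduction~(\ref{eq:final_bid_problem}) interlock so that (i) the thresholds $m_h$ stay monotone and (ii) a doctor's realized payoff is never beaten by what she could obtain at another hospital under the current thresholds — essentially the same delicate point as in the one-to-one proof, and not a mechanical computation.
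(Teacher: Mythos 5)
Your proposal is correct and follows essentially the route the paper itself intends: the paper omits the proof of this theorem, stating that it comes directly from the one-to-one setting (Theorems 2 and 5 of \cite{garrido2025stable}) with finiteness deferred to the complexity study, where exactly the $\varepsilon$-potential argument you sketch (hospitals' payoffs increase by at least $\varepsilon$ per iteration and are bounded above) appears as \Cref{teo:propose_dipose_algo_number_of_iterations}. The one point to tighten is the one you already flag: the monotonicity of $m_h$ at a competition step uses that the displaced doctor's current payoff weakly exceeds her current reservation value (so that her realized profile is feasible for her bid problem and $\lambda_{d'}\geq m_h$), so your invariants (i) and (ii) are interdependent and must be established by a joint induction over the iterations rather than sequentially, exactly as in the one-to-one proof.
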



The existence of $0$-pairwise stable allocations is a consequence of \Cref{teo:propose_dispose_algo_is_correct}, passing through the compactness of the Pareto-optimal strategy sets, the continuity of payoff functions, and the finiteness of players. Therefore, from \Cref{prop:core_stable_and_pairwise_stable_are_equivalent}, we conclude the existence of a core stable allocation.

\section{Roommates matching games}\label{sec:roommates_matching_games}

The roommates problem, defined by Gale and Shapley \cite{gale_college_1962}, consists of a set $D$ of agents, each of them having strict preferences for the rest of the agents in $D$, and seeking to match in couples. Many authors \cite{andersson_competitive_2014,chiappori_roommate_2014,eriksson_stable_2001,klaus_consistency_2010,talman_model_2011} extended the model to the \textit{transferable utility} case and studied the existence of stable allocations. Up to our knowledge, Alkan and Tuncay \cite{alkan_pairing_2014} is the only article to work roommates with \textit{non-transferable utility}. Their model will be the base for our extension. We start by explaining more in detail the model in \cite{alkan_pairing_2014} keeping our language of doctors.

Let $D$ be a finite set of \textit{doctors} and $\underline{f} = (\underline{f}_d)_{d \in D} \in \mathbb{R}^{|D|}$ be their \textit{individually rational payoff} profile. For every potential couple $(d_1,d_2) \in D$, we consider a \textit{partnership function} $u_{1,2} : \mathbb{R} \to \mathbb{R}$ such that $u_{1,2}(f_{d_2})$ is the utility that agent $d_1$ achieves when her partner $d_2$ achieves $f_{d_2}$. In particular, it holds $u_{1,2}= u^{-1}_{2,1}$. The partnership functions are assumed to be \textit{continuous, decreasing}, and $u_{1,2}(\underline{f}_{d_2}) \sm \infty$, for any $d_1,d_2 \in D$.

Unlike the matching games approach that focuses on the players' strategies, Alkan and Tuncay focused on payoff profiles.

\begin{definition}
\textup{A \textbf{payoff profile} is a vector $f = (f_d)_{d \in D} \in \mathbb{R}^{|D|}$. A payoff profile is \textbf{blocked} by a couple $(d_1,d_2)$ if there exists $(f_{d_1}',f_{d_2}')\in \mathbb{R}^2$ such that $f_{d_1}' \bbi f_{d_1}$, $f_{d_2}' \bbi f_{d_2}$, and $f_{d_1}' = u_{1,2}(f_{d_2}')$. An \textbf{individually rational} payoff profile, i.e., $f_{d} \geq \underline{f}_d$ for any doctor $d$, is \textbf{stable} if it cannot be blocked.}
\end{definition}

Given a payoff profile, we focus next on finding a matching that can implement it.

\begin{definition}\label{def:realizable_payoff_profile}
\textup{A \textbf{matching} $\mu$ is a partition of $D$ in pairs and singletons. A payoff profile $f$ is \textbf{realizable} by a matching $\mu$ if $f_{d_1} = u_{1,2}(f_{d_2})$, for any $(d_1,d_2)\in \mu$, and the pair $(\mu,f)$ is called an \textbf{allocation}. Finally, a \textbf{stable allocation}  is any allocation in which the payoff profile is stable.}
\end{definition}

Alkan and Tuncay characterized the stable allocations as any allocation $(\mu,f)$ satisfying,
\begin{align}\label{eq:aspiration}
    f_d = \max\left\{\underline{f}_d, \max_{d' \neq d} u_{d,d'}(f_{d'})\right\}, \text{ for any } d \in D.
\end{align}

A payoff profile satisfying \Cref{eq:aspiration} is called an \textbf{aspiration}. Therefore, to solve the roommates with non-transferable utility problem the goal is to find an aspiration realizable by some matching.

To study when an aspiration is realizable, the authors worked with demand sets. Given a payoff profile $f$ and $d \in D$, we define $d$'s \textbf{demand set} $P_d$ at $f$ as,
\begin{align*}
    P_d(f) := \left\{d' \in D \setminus \{d\} : f_d = u_{d,d'}(f_{d'}) \right\},
\end{align*}

\noindent that is, the set of all agents with who $d$ can achieve the payoff $f_d$. Note that $d_2 \in P_{d_1}(f)$ if and only if $d_1 \in P_{d_2}(f)$. 

We define a \textbf{submarket} at $f$ as any pair of disjoint sets of doctors $(B,S) \subseteq D\times D$ such that,
\begin{itemize}
\item[1.] For any $d \in B$, $f_d \bbi \underline{f}_d$,
\item[2.] The demand set of every $B$-player is in $S$,
\item[3.] There exists a matching $\mu$ such that for every $d \in S$, $\mu(d) \in B \cap P_d(f)$, or $d$ is unmatched.
\end{itemize}
Note that a submarket $(B,S)$ is not asked to be a partition of $D$, as there may be doctors that do not belong to $B$ nor $S$. By (3), it always holds $|S|\leq|B|$. A market in which $|S| = |B|$ is called a \textbf{balanced market}. An aspiration $f$ that generates a balanced market is called a \textbf{balanced aspiration}. The following result states the existence of aspirations with balanced markets (Theorem 1, page 10 \cite{alkan_pairing_2014}). 

\begin{theorem}\label{teo:existence_balanced_aspirations}
The set of balanced aspirations is non-empty.
\end{theorem}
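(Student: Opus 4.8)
The plan is to first produce \emph{some} aspiration and then to modify it until the induced market becomes balanced. First I would establish that the set of aspirations is non-empty. Consider the operator $T$ on payoff profiles defined coordinate-wise by $(Tf)_d = \max\{\underline f_d,\ \max_{d'\neq d} u_{d,d'}(f_{d'})\}$. Since each $u_{d,d'}$ is continuous and decreasing and $u_{d,d'}(\underline f_{d'})<\infty$, one checks that $T$ maps a suitable compact box $\prod_d[\underline f_d, M_d]$ (with $M_d$ chosen large enough using the finitely many bounds $u_{d,d'}(\underline f_{d'})$) into itself and is continuous; Brouwer's fixed point theorem then gives a fixed point, which is exactly an aspiration by \eqref{eq:aspiration}. (Alternatively one can invoke Tarski on the lattice of profiles, using that $T$ is monotone in the appropriate order — but the order is reversed by the decreasing $u$'s, so some care is needed; Brouwer is cleaner.)

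Next I would set up the ``balancing'' argument. Given an aspiration $f$, look at the demand graph on $D$ whose edges are the pairs $\{d,d'\}$ with $d'\in P_d(f)$, restricted to vertices $d$ with $f_d>\underline f_d$ (vertices attaining their IRP are ``slack'' and can be dropped). A submarket $(B,S)$ is essentially a Hall-type witness, and a balanced one ($|B|=|S|$) is what we want. The strategy is: if the current aspiration is not balanced, there is a subset $B$ of active doctors whose collective demand set $S=\bigcup_{d\in B}P_d(f)$ is strictly smaller, $|S|<|B|$ — i.e.\ a violation of Hall's condition on the demand bipartite structure. For such a ``short'' set, the active doctors in $B$ are over-demanding relative to supply, so I would decrease the payoffs $f_d$ for $d$ in the offending region (equivalently, let their partners' attainable payoffs rise), which by monotonicity of $T$ and the decreasing partnership functions enlarges their demand sets / brings more doctors into play, while keeping the profile an aspiration after re-applying $T$. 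Iterating this adjustment — always lowering on a Hall-violating set and re-equilibrating with $T$ — the ``deficiency'' $\max_B(|B|-|S|)$ over active sets is monotonically non-increasing and, by a compactness/limit argument on the finitely-many possible demand-graph structures, reaches $0$; the limiting profile is a balanced aspiration. Condition~(3) in the definition of submarket (existence of the matching $\mu$ with $\mu(d)\in B\cap P_d(f)$) is then furnished by Hall's marriage theorem applied to the bipartite graph between $S$ and $B$, precisely because the deficiency has been driven to zero.

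The main obstacle I anticipate is showing that the downward adjustment step is both \emph{well-defined} (one must pick the right set on which to decrease — presumably a maximal or minimal Hall-violating set — so that the re-equilibration does not re-create a larger violation elsewhere) and \emph{makes monotone progress} toward balance; this is the heart of the matter and is where the structure of aspirations (the $\max$ in \eqref{eq:aspiration}, the equality $u_{1,2}=u_{2,1}^{-1}$, and continuity/monotonicity of the $u$'s) must all be used together. A secondary technical point is the compactness bookkeeping: because payoffs live in $\mathbb R$, one needs the a~priori bound $\prod_d[\underline f_d,M_d]$ from the hypothesis $u_{d,d'}(\underline f_{d'})<\infty$ to keep everything in a compact set and to extract a convergent subsequence of the adjustment iterates; the demand-set map is only upper hemicontinuous, so the passage to the limit must be done at the level of the finitely many combinatorial types of the demand graph rather than naively on $P_d(f)$.
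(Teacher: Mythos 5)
There is a genuine gap here, and it is worth noting first that the paper does not prove this statement at all: it is imported verbatim from Alkan and Tuncay (Theorem 1 of \cite{alkan_pairing_2014}), and the constructive route the paper leans on is their \emph{market procedure} together with the \emph{direction procedure} (recalled as \Cref{teo:market_procedure_computes_a_balanced_aspiration} and the lemma preceding it). Your first step is fine: the operator $(Tf)_d=\max\{\underline f_d,\max_{d'\neq d}u_{d,d'}(f_{d'})\}$ is continuous and maps the box $\prod_d[\underline f_d,M_d]$ with $M_d=\max\{\underline f_d,\max_{d'\neq d}u_{d,d'}(\underline f_{d'})\}$ into itself (by monotonicity of the $u$'s), so Brouwer gives an aspiration. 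But the balancing phase, which is the actual content of the theorem, is only announced, not proved. You yourself identify the two missing ingredients — that the downward adjustment is well-defined and that it makes monotone progress toward balance — and these are precisely Alkan--Tuncay's Lemma 7 (existence of a direction $\vec e$ with $e_d<0$ on $B$, $e_d>0$ on $S$, $e_d=0$ elsewhere, along which $(B,S)$ \emph{remains} a submarket for small $\lambda$) and their Theorem 4 (the piecewise-linear path terminates at a balanced aspiration in a bounded number of steps). Flagging the heart of the argument as an anticipated obstacle does not discharge it.

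Moreover, the specific mechanism you propose is not the one that works. Decreasing $f$ on a Hall-violating set and then ``re-equilibrating with $T$'' is not guaranteed to keep the profile an aspiration along the way, and there is no argument that re-applying $T$ does not recreate a larger violation elsewhere; the demand sets $P_d(f)$ change discontinuously, so your claim that the deficiency $\max_B(|B|-|S|)$ is monotonically non-increasing is unsubstantiated. Alkan and Tuncay avoid exactly this by never re-equilibrating: they move continuously along a direction that provably preserves the \emph{current} submarket, resetting the direction only when the submarket itself changes, and the finiteness argument counts these combinatorial events. Finally, your use of Hall's theorem conflates two things: a balanced aspiration requires exhibiting one submarket $(B,S)$ with $|B|=|S|$ (the matching of $S$ into $B\cap P_d(f)$ is part of the submarket definition), which is weaker than driving a global Hall deficiency over all active sets to zero; the bookkeeping must be done on the specific submarket tracked by the procedure, not on the whole demand graph. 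As it stands, the proposal is a plausible plan whose decisive steps coincide with the cited results it would need to reprove, and those steps are missing.
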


Moreover, Alkan and Tuncay designed a market procedure to compute balanced aspiration that converges in a polynomial number of iterations. A second theorem links balanced aspirations and stable allocations (Theorem 2, page 11 \cite{alkan_pairing_2014}).

\begin{theorem}\label{teo:equivalence_balanced_aspirations_and_stable_allocations}
The set of stable allocations is either empty or equal to the set of balanced aspirations.
\end{theorem}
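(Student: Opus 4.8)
The plan is to prove the two inclusions that make the statement precise: (i) every balanced aspiration is realizable by some matching, hence yields a stable allocation, and (ii) every stable allocation's payoff profile is a balanced aspiration. Combined with the fact that, by definition, a stable allocation is an allocation whose payoff profile is stable, and with the characterization \eqref{eq:aspiration} that stable profiles are exactly aspirations, these two inclusions force the set of stable allocations to coincide with the set of balanced aspirations whenever the former is nonempty.

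For direction (i), I would start from a balanced aspiration $f$, so by definition there is a submarket $(B,S)$ with $|B|=|S|$ and a matching $\mu$ as in condition (3) of the submarket definition. Since $|B|=|S|$, the matching $\mu$ restricted to $S$ is in fact a perfect matching between $B$ and $S$: every $d\in S$ is matched to some $\mu(d)\in B\cap P_d(f)$, and a counting argument (using $|S|\le|B|$ in general, here with equality) shows no $B$-player is left over. For a pair $(d,d')\in\mu$ with $d'\in P_d(f)$ we have $f_d=u_{d,d'}(f_{d'})$, and since $u_{d,d'}=u_{d',d}^{-1}$ this is equivalent to $f_{d'}=u_{d',d}(f_d)$, so the pair is mutually realizing. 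Doctors outside $B\cup S$, together with any unmatched $S$-doctors, must be receiving their individually rational payoff: indeed, for a doctor $d$ with $f_d>\underline f_d$, the aspiration equation \eqref{eq:aspiration} forces $f_d=u_{d,d'}(f_{d'})$ for some $d'$, i.e. $P_d(f)\neq\varnothing$, and one argues such a doctor can be absorbed into the balanced part (this is where the balancedness is doing real work). Matching all remaining doctors to themselves as singletons, whose payoff is $\underline f_d$ by convention, gives a matching $\mu^*$ realizing $f$; since $f$ is an aspiration it is stable, so $(\mu^*,f)$ is a stable allocation.

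For direction (ii), let $(\mu,f)$ be a stable allocation. By \eqref{eq:aspiration} (the Alkan–Tuncay characterization quoted just before the statement), $f$ is an aspiration. It remains to exhibit a balanced submarket at $f$. Take $B=\{d\in D: f_d>\underline f_d\}$. For $d\in B$, the aspiration equation gives $f_d=\max_{d'\neq d}u_{d,d'}(f_{d'})>\underline f_d$, so $P_d(f)\neq\varnothing$; moreover for each $d'\in P_d(f)$ one checks $f_{d'}=u_{d',d}(f_d)\ge\ldots$ — in particular $d'$ also has positive surplus so lies in $B$ — giving a natural candidate $S$ contained in $B$. The matching $\mu$ itself witnesses condition (3): if $d\in B$ then $\mu(d)\neq d$ (a singleton would receive $\underline f_d$), and realizability of $f$ by $\mu$ gives $f_d=u_{d,\mu(d)}(f_{\mu(d)})$, i.e. $\mu(d)\in P_d(f)$. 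One then takes $S=\mu(B)$; since $\mu$ is a matching, $|S|=|B|$, so the market is balanced, and $f$ is a balanced aspiration.

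The main obstacle I anticipate is pinning down the exact bookkeeping in direction (i): the submarket $(B,S)$ is, by definition, \emph{not} required to partition $D$, so one must carefully argue that the doctors lying outside $B\cup S$ (and any unmatched $S$-doctors) are precisely those pinned at their IRP, and that the partial matching $\mu$ on $B\cup S$ extends to a full matching realizing $f$ without creating a doctor $d$ with $f_d>\underline f_d$ that ends up unmatched. This amounts to showing that a balanced aspiration leaves no "surplus doctor" stranded — intuitively clear from \eqref{eq:aspiration} and balancedness, but the clean statement is essentially that the set of doctors with strictly positive surplus can itself be perfectly matched, which is what $|B|=|S|$ plus $S\subseteq B$ plus the existence of the matching $\mu$ deliver. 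The rest is a routine manipulation of $u_{d,d'}=u_{d',d}^{-1}$ and monotonicity.
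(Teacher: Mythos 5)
The paper does not actually prove this statement: it is quoted from Alkan and Tuncay (Theorem 2 of \cite{alkan_pairing_2014}), so the only basis for review is whether your argument establishes what the theorem requires, and it does not. Your direction (i) is where the proof breaks: you claim that \emph{every} balanced aspiration is realizable by a proper matching and hence yields a stable allocation. This is false, and it is exactly why the theorem carries the disjunction ``either empty or''. Balanced aspirations always exist (\Cref{teo:existence_balanced_aspirations}), so your (i) would imply that every roommates instance admits a stable allocation, contradicting Gale and Shapley's classical odd-cycle roommates example (which is captured by this model, cf.\ \Cref{ex:roommates}). Concretely, the demand graph at a balanced aspiration decomposes into cycles, and when every decomposition uses an odd cycle the profile is realizable only by a \emph{semistable} allocation with half-partnerships; the ``absorption'' step that you yourself flag as the place where balancedness ``does real work'' is precisely the step that cannot be carried out in that case. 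The correct version of (i) must be conditional: \emph{if} some stable allocation exists, then every balanced aspiration is realizable, and proving this needs the Alkan--Tuncay analysis relating balanced aspirations to semistable allocations (so that non-emptiness of the stable set rules out the odd-cycle obstruction), not an unconditional counting argument.

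Direction (ii) is closer in spirit but, as written, conflicts with the definition you are invoking: a submarket $(B,S)$ is a pair of \emph{disjoint} sets, whereas with $B=\{d: f_d>\underline f_d\}$ and $S=\mu(B)$ one typically has $B\cap S\neq\emptyset$ (any matched couple in which both partners earn strictly more than their IRP places both agents in both sets), and your remark that $S$ is ``contained in $B$'' is incompatible with disjointness unless both sets are empty. The subsidiary claim that every $d'\in P_d(f)$ has positive surplus is also unjustified: $d'\in P_d(f)$ only yields $f_{d'}=u_{d',d}(f_d)$, which may well equal $\underline f_{d'}$. So neither inclusion is established; the argument would have to be rebuilt following \cite{alkan_pairing_2014} rather than patched locally.
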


To be precise, Alkan and Tuncay proved that whenever the set of balanced aspirations does not coincide with the set of stable allocations, it coincides with the set of \textit{semistable allocations}. Semistable allocations are the relaxation of stable allocations in which agents can be matched with two partners at the time and their final payoff is the average payoff obtained with her two partners. Therefore, the set of stable allocations is non-empty if and only if it coincides with the set of balanced aspirations if and only if the set of semistable allocations is empty.

The market procedure in \cite{alkan_pairing_2014} designed to compute balanced aspirations starts from any aspiration, generates a piecewise linear path of aspirations, and stops in a bounded number of steps at a balanced aspiration. At every iteration, given the current aspiration, the mechanism computes the demand sets of every agent. In case there exists a balanced submarket, the mechanism stops. Otherwise, the mechanism identifies a submarket $(B,S)$ with $|S|\sm |B|$ and alters the aspiration continuously along a suitable \textit{direction}. The direction is reset when the submarket changes.

Remark the analogy between the market procedure just explained and the \textit{increasing price mechanism} (IPM) of Andersson et al. \cite{andersson_competitive_2014}. The IPM computes the demand sets of the agents given a payoff profile, computes an over-demanded set (\Cref{def:overdemanded_set}), and increases in one unit the utility of its agents. In the market procedure of Alkan and Tuncay, $S$ is over-demanded by $B$. The suitable \textit{direction} used to alter the aspiration is computed by a second mechanism, called the \textit{direction procedure}. Starting from a submarket $(B,S)$, with $|S|\sm|B|$, the procedure computes $\lambda \bbi 0$ such that increasing the payoffs of the agents in $S$ by $\lambda$, decreasing the payoff of the agents in $B$ by $\lambda$, and letting unchanged the payoffs of the agents outside the submarket, the shifted submarket and the original one remain the same. The following result (Lemma $7$, page 29 \cite{alkan_pairing_2014}) states the correctness of the direction procedure.

\begin{lemma}
If $(B,S)$ is a bipartite submarket at an aspiration $f$, then there exists a direction $\vec{e}$ with,
\begin{align*}
e_d &\sm 0, \text{ for any } d \in B,\\
e_d &\bbi 0, \text{ for any } d \in S,\\
e_d &= 0, \text{ for any } d \notin B\cup S,
\end{align*}
such that $(B,S)$ is a bipartite submarket at $f + \lambda \vec{e}$, for all sufficiently small $\lambda \bbi 0$.
\end{lemma}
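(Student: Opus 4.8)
The plan is to read the direction $\vec e$ off the demand structure at $f$ and then check that every defining condition of a submarket survives a small step along the ray $f_\lambda:=f+\lambda\vec e$: the strict conditions by continuity, and the ``tight'' ones because $\vec e$ is tailored to keep the relevant demand equalities in force.

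First I would record the bipartite demand graph $G$ on the vertex set $B\cup S$, joining $d\in B$ to $d'\in S$ whenever $d'\in P_d(f)$ (equivalently $d\in P_{d'}(f)$). Because $(B,S)$ is a bipartite submarket, $G$ has no edge inside $B$ (that is condition~2) nor inside $S$, every $d\in B$ has degree at least one, and by condition~3 it contains a matching $M=\{\{d',\sigma(d')\}:d'\in S\}$ saturating $S$, with $\sigma:S\hookrightarrow B$. Next I would note that the conditions holding with \emph{strict} inequality at $\lambda=0$ --- namely $f_d>\underline{f}_d$ for $d\in B$, and $u_{d,d''}(f_{d''})<f_d$ for $d\in B$ and $d''\notin S$ (the content of condition~2 once $P_d(f)\subseteq S$) --- remain strict at $f_\lambda$ for all small $\lambda>0$ and every bounded $\vec e$, by continuity of the partnership functions and finiteness of $D$. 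So the whole task reduces to exhibiting a $\vec e$ with the prescribed signs that keeps the $|S|$ edges of $M$ tight along the ray.

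The core step is the rate propagation. An edge $\{d',\sigma(d')\}$ of $G$ stays tight along the ray, i.e.\ $f_{d'}+\lambda e_{d'}=u_{d',\sigma(d')}(f_{\sigma(d')}+\lambda e_{\sigma(d')})$ for all small $\lambda>0$, precisely when $\lambda\mapsto u_{d',\sigma(d')}(f_{\sigma(d')}+\lambda e_{\sigma(d')})$ is affine near $0$ with slope $e_{d'}$; invoking the (piecewise) linearity of the partnership functions, this amounts to the linear relation $e_{d'}=\rho_{d'}\,e_{\sigma(d')}$, with $\rho_{d'}<0$ the local right-hand slope of $u_{d',\sigma(d')}$ at $f_{\sigma(d')}$ (the same feature is what makes the whole path of aspirations piecewise linear). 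So I would give every $d\in B$ a small negative rate (say $e_d=-1$, then rescale), set $e_{d'}:=\rho_{d'}e_{\sigma(d')}>0$ for $d'\in S$, and $e_d:=0$ off $B\cup S$. These are exactly the signs in the statement; $M$ stays tight, so $\sigma(d')\in B\cap P_{d'}(f_\lambda)$ for every $d'\in S$ and condition~3 holds at $f_\lambda$; conditions~1 and~2 hold at $f_\lambda$ by the continuity remark (for condition~2, $P_d(f_\lambda)\subseteq S$ since equality with any $d''\notin S$ is excluded there). Because only the $|S|$ edges of a \emph{single} matching are pinned down, no cycle of $G$ is ever constrained, so $\vec e$ is well defined by the single pass $B\to S$ and no consistency obstruction arises; a final common rescaling of $\vec e$ keeps $f_\lambda$ within the neighbourhood on which all the continuity estimates were taken.

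I expect the genuinely delicate point to be twofold. First, the rate relation $e_{d'}=\rho_{d'}e_{\sigma(d')}$ requires the partnership functions to be locally affine at the points in play, away from kinks; this is exactly where the piecewise-linear structure of the $u_{d,d'}$ --- tacit in the hypotheses but needed for a finite, polynomial market procedure --- is used, and degenerate configurations (a kink of some $u_{d,d'}$ sitting exactly at $f$) must be argued away. Second, if one wants the stronger conclusion that $f_\lambda$ stays an \emph{aspiration} (which the market procedure needs, even though the literal statement only asks for a submarket), one must also ensure that at each vertex the retained tight edge is a \emph{steepest-decreasing} demand edge, so that no slack demand edge overtakes the value $f_d+\lambda e_d$ (resp.\ $f_{d'}+\lambda e_{d'}$); this is a finite combinatorial refinement of the choices of $M$ and of the per-vertex rates --- using that $P_d(f)$ is finite and the relevant slopes are comparable --- and it is the part I would budget the most care for.
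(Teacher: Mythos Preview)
The paper does not prove this lemma; it is quoted as Lemma~7 of Alkan and Tuncay \cite{alkan_pairing_2014} (page~29 there) and used as a black box to describe their direction procedure. There is therefore nothing in the present paper to compare your argument against.

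On its own merits, your outline is the right shape for the quasi-linear case --- which, as the paper itself concedes in the conclusions, is the only setting in which Alkan and Tuncay actually write out proofs. The construction (fix a matching $M$ saturating $S$, assign negative rates on $B$, propagate to positive rates on $S$ via the local slopes along $M$, freeze everything off $B\cup S$, and let continuity handle the finitely many strict inequalities) is exactly the mechanism underlying the direction procedure. Your two caveats are well placed: the rate relation $e_{d'}=\rho_{d'}\,e_{\sigma(d')}$ genuinely needs local affinity of the partnership functions, so in the general continuous-decreasing setting one would need a curve rather than a ray $f+\lambda\vec e$; and preserving the aspiration property (needed downstream for the market procedure, though not for the lemma as literally stated) does require the steepest-edge refinement you sketch. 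One minor point: the absence of $S$--$S$ demand edges is not part of the bare submarket definition given in the paper; you are tacitly using the qualifier ``bipartite'', which is imported from \cite{alkan_pairing_2014} without being spelled out here.
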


Finally, the following result (Theorem 4, page 22 \cite{alkan_pairing_2014}) states the correctness and finiteness of the market procedure.

\begin{theorem}\label{teo:market_procedure_computes_a_balanced_aspiration}
The market procedure reaches a balanced aspiration in a bounded number of steps.
\end{theorem}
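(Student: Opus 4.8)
The statement to prove is Theorem~\ref{teo:market_procedure_computes_a_balanced_aspiration}: the market procedure terminates at a balanced aspiration after finitely many steps. The plan is to argue termination in two layers — an outer layer counting how many times the submarket $(B,S)$ can change, and an inner layer showing that between two consecutive submarket changes the procedure makes only finitely many (indeed one, after rescaling) moves. First I would record the invariant that the set of aspirations is preserved along the path: starting from any aspiration and moving in the direction $\vec e$ furnished by the preceding lemma, the shifted point $f + \lambda\vec e$ is again an aspiration for all sufficiently small $\lambda > 0$ (this is exactly what the direction procedure guarantees, via the fact that $(B,S)$ remains a submarket and equation~\eqref{eq:aspiration} is maintained by the defining property of $u_{d,d'}$ on the demand edges). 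So the procedure generates a genuine piecewise-linear path inside the polytope-like region of aspirations, and I only need to bound the number of linear pieces.

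**The step-size and the stopping rule.** For a fixed submarket $(B,S)$ with $|S| < |B|$, I would let $\lambda^\ast$ be the supremum of step sizes for which $(B,S)$ stays a submarket at $f + \lambda \vec e$; by the lemma $\lambda^\ast > 0$, and by continuity/monotonicity of the partnership functions $\lambda^\ast < \infty$ (pushing the $B$-payoffs down and $S$-payoffs up cannot continue forever without either some $B$-player hitting her IRP $\underline f_d$ — violating condition~(1) of a submarket — or the demand-set incidence structure changing). At $f + \lambda^\ast \vec e$ the combinatorial data (which edges $d'\in P_d$, which players are at their IRP, which matching realizes condition~(3)) strictly changes: a new demand edge appears, or an old one disappears, or a $B$-player reaches her reservation value. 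This is the key monotone quantity. I would then check the dichotomy at the new vertex: either the demand structure now admits a balanced submarket, in which case the procedure halts and we are done, or it does not, in which case a new submarket $(B',S')$ with $|S'| < |B'|$ is selected and the direction is reset.

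**Bounding the number of pieces.** The finiteness then rests on showing the combinatorial state cannot cycle. I would phrase each vertex of the path by its \emph{type}: the bipartite incidence graph between $D$ and $D$ recording the demand sets $P_d(f)$, together with the subset of players sitting at their IRP. There are finitely many such types. The content of the argument — and the step I expect to be the main obstacle — is to exhibit a potential function (or a lexicographic/monotonicity argument on the payoff vector) that is strictly monotone across vertex transitions, so that no type recurs. The natural candidate is the aggregate $\sum_{d\in D} f_d$, or a suitably weighted version: along the direction $\vec e$ one has $\dot f_d < 0$ on $B$ and $\dot f_d > 0$ on $S$ with $|S|<|B|$, which does \emph{not} by itself pin down the sign of the total change, so I would instead follow Alkan--Tuncay and weight by the matching multiplicities or track $\max_d(\underline f_d - f_d)^+$ together with the number of ``tight'' IRP constraints; the precise monotone quantity is exactly what Theorem~4 of \cite{alkan_pairing_2014} establishes. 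Granting that monotonicity, the number of linear pieces is at most the number of types, hence finite; termination at a balanced aspiration follows because the only exit from the loop is the detection of a balanced submarket, and Theorem~\ref{teo:existence_balanced_aspirations} guarantees such a configuration is reachable (the procedure cannot run out of non-balanced submarkets forever without contradicting the finiteness of types). I would close by noting that the polynomial bound on the number of steps — asserted in the surrounding text — comes from refining this count: each transition either matches one more $S$-player permanently or raises the ``level'' of the over-demanded set, giving a bound polynomial in $|D|$.
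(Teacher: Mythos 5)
This theorem is not proved in the paper at all: it is imported verbatim from Alkan and Tuncay (Theorem 4, page 22 of \cite{alkan_pairing_2014}), so there is no in-paper argument to compare yours against. Judged on its own terms, your write-up is a plausible plan but not a proof, and its central step is circular. You correctly identify the structure of the argument (a piecewise-linear path of aspirations, finitely many combinatorial ``types'' given by the demand graph and the tight IRP constraints, and the need for a quantity that is strictly monotone across vertex transitions so that no type recurs), but at exactly that point you write that ``the precise monotone quantity is exactly what Theorem~4 of \cite{alkan_pairing_2014} establishes'' and then proceed ``granting that monotonicity.'' Theorem~4 of Alkan--Tuncay \emph{is} the statement you are asked to prove, so deferring the no-cycling argument to it proves nothing: the whole difficulty of the termination proof is the construction of that potential (Alkan and Tuncay do this via a careful analysis of how the submarket $(B,S)$ and the realizing matching evolve, and even they establish the quantitative bound only in the quasi-linear case, as the present paper itself notes in its conclusion).

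Two secondary appeals are also shaky. First, invoking \Cref{teo:existence_balanced_aspirations} to argue that a balanced configuration ``is reachable'' conflates existence of a balanced aspiration somewhere in payoff space with the procedure actually arriving at one; in the source, existence and the procedure's convergence are intertwined, so this appeal risks a second circularity rather than supplying independent support. Second, your claims that $\lambda^\ast<\infty$ at each piece and that ``each transition either matches one more $S$-player permanently or raises the level of the over-demanded set'' are asserted, not shown, and the latter is essentially the missing potential restated in different words. To turn the sketch into a proof you would need to exhibit and verify the monotone/lexicographic quantity yourself (or reproduce Alkan--Tuncay's argument), rather than cite the theorem being proved.
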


The remaining question is whether a balanced aspiration can be implemented. 

\begin{definition}\label{def:overdemanded_set}
Given an aspiration $f$ and a set of doctors $I$, we define $\mathcal{D}(I) := \{d \in D: P_d(f) \subseteq I\}$ as the set of agents that demand $I$. $I$ is \textbf{overdemanded} if $|\mathcal{D}(I)| \bbi |I|$.
\end{definition}

Note that at a balanced aspiration $f$ no set of doctors is over-demanded. Therefore, considering the undirected graph Gr $= (D,E)$, with $D$ the set of doctors and $(d_1,d_2) \in E$ if and only if $d_2 \in P_{d_1}(f)$, Gr can always be decomposed as a disjoint union of cycles. Then, the aspiration $f$ is implementable by a matching whenever there exists a decomposition of Gr including only even-cycles, as for odd-cycles at least one agent will need to be matched with two partners. We remark the connection with Tan \cite{tan_necessary_1991} works in the existence of stable matchings for the roommates problem with endogenous preferences.

We finish this section by extending the model of roommates with non-transferable utility to the setting of matching games. Consider a matching game (\Cref{def:matching_game_one_to_many}),
\begin{align}\label{eq:roommates_matching_game}
    \Gamma = \left(D,H,(X_d)_{d \in D}, (f_{d,d'})_{d,d' \in D},(\underline{f}_d)_{d \in D}\right),
\end{align}
where we have taken null hospitals' payoff functions, empty hospitals' strategy sets, hospitals' quotas $\Vec{q} \equiv 2$, and we have relaxed the doctors' payoff functions dependence on the allocated hospital (as done in \Cref{ex:roommates}). Allocations become pairs $\pi = (\mu,\vec{x})$ where $\mu$ is a partition of the set $D$ into pairs and singletons and $\vec{x} \in X_D$ is a doctors' strategy profile. Given an allocation $\pi$, doctors' payoffs are given by,
\begin{align*}
    f_d(\pi) = \left\{\begin{array}{cc}
        f_{d,\mu(d)}(x_d, x_{\mu(d)}), & \text{ if $d$ is matched},  \\
        \underline{f}_d & \textit{ otherwise}.
    \end{array} \right.
\end{align*}
In case of no confusion, we will omit the set $H$ from the matching game.

\subsection{Core stability in the 2nd submodel}

Consider a roommates matching game (\Cref{eq:roommates_matching_game}). We leverage the work of Alkan and Tuncay to compute core stable allocations of our roommates matching games. Remark that core stable allocations and pairwise stable allocations are trivially equivalent in the roommates setting.


\begin{definition}\label{def:Pareto_optimal_external_stability_roommates}
Let $\pi = (\mu,\vec{x})$ be an individually rational allocation. We say that $\pi$ is \textbf{pairwise stable} if, there is no pair $(d,d')\in D\times D$, and strategy profile $(x_d,x_{d'}) \in X_d \times X_{d'}$, such that, $$f_d(x_d,x_{d'}) \bbi f_d(\pi) \text{ and } f_{d'}(x_{d'},x_d) \bbi f_{d'}(\pi).$$
\end{definition}

Since an already matched couple can block a matching in case they do not play weakly-Pareto optimally, we obtain the monotony of the partnership functions.


From the existence of balanced aspirations (\Cref{teo:market_procedure_computes_a_balanced_aspiration}), we conclude the following result.

\begin{theorem}
Given $\Gamma$ a roommates matching game with continuous payoff functions, let $\overline{f} \in \mathbb{R}^{|D|}$ be the balanced aspiration computed by the market procedure of Alkan and Tuncay. Then, the set of pairwise stable allocations is non-empty if and only if $\overline{f}$ can be implemented by an allocation $(\mu,\vec{x})$.
\end{theorem}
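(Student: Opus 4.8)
The plan is to reduce the roommates matching game $\Gamma$ of \Cref{eq:roommates_matching_game} to an instance of Alkan and Tuncay's roommates problem with non‑transferable utility and then apply \Cref{teo:existence_balanced_aspirations,teo:equivalence_balanced_aspirations_and_stable_allocations,teo:market_procedure_computes_a_balanced_aspiration}. For each unordered pair $\{d,d'\}$, consider the two‑player game $(X_d,X_{d'},f_{d,d'},f_{d',d})$ and its feasible payoff set $\mathcal{F}_{d,d'}\subseteq\mathbb{R}^2$; using compactness of the (Pareto‑optimal) strategy sets and continuity of the payoff functions, $\mathcal{F}_{d,d'}$ is compact and its Pareto frontier is achieved. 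Define the induced partnership function on the relevant range by
\[ u_{d,d'}(t):=\max\{\, f_{d,d'}(x_d,x_{d'}) : (x_d,x_{d'})\in X_d\times X_{d'},\ f_{d',d}(x_{d'},x_d)\ge t \,\}. \]
First I would verify that $u_{d,d'}$ is non‑increasing, that $u_{d,d'}=u_{d',d}^{-1}$ on the overlap of their domains, and — combining monotonicity with compactness and continuity — that it is continuous; extending it outside the achievable range (e.g.\ by $-\infty$, so as to create no new profitable deviations) yields a partnership function in the sense of the excerpt, in particular $u_{d,d'}(\underline{f}_{d'})\sm\infty$. This produces a legitimate roommates‑with‑NTU instance on the doctor set $D$ with IRPs $\underline{f}$.

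Second, I would establish the dictionary: an allocation $\pi=(\mu,\vec{x})$ of $\Gamma$ is pairwise stable (\Cref{def:Pareto_optimal_external_stability_roommates}) if and only if the induced payoff profile $f(\pi)=(f_d(\pi))_{d\in D}$ is a stable payoff profile of the induced instance and is realizable by $\mu$ through the strategies $\vec{x}$ (\Cref{def:realizable_payoff_profile}). For the forward direction, if a matched couple $(d,\mu(d))$ does not play on the weak Pareto frontier of their game, then, invoking continuity of the frontier exactly as in the one‑to‑one analysis of \cite{garrido2025stable}, they can jointly deviate to strictly improve both, so $\pi$ is not pairwise stable; hence pairwise stability forces $f_d(\pi)=u_{d,\mu(d)}(f_{\mu(d)}(\pi))$ for every matched $d$, which is realizability, while the absence of blocking pairs involving unmatched agents together with individual rationality is exactly non‑blockability of $f(\pi)$. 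By the aspiration characterization \Cref{eq:aspiration}, $(\mu,f(\pi))$ is then a stable allocation of the induced instance. The converse lifting is immediate: from a stable allocation $(\mu,f)$ of the induced instance, realizability places each matched pair's payoffs on the Pareto frontier of its game, so by definition of $u$ as a maximum over a compact set these payoffs are attained by some strategy profile; gluing these profiles produces a pairwise stable $\pi$ with $f(\pi)=f$.

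Finally, I would assemble the equivalence. By \Cref{teo:market_procedure_computes_a_balanced_aspiration} the market procedure outputs a balanced aspiration $\overline{f}$, and by \Cref{teo:equivalence_balanced_aspirations_and_stable_allocations} the stable allocations of the induced instance are either none or exactly the balanced aspirations (each made into an allocation by some realizing matching). If $\overline{f}$ is implemented by an allocation $(\mu,\vec{x})$ of $\Gamma$, then $\overline{f}$ is a realizable aspiration, hence $(\mu,\overline{f})$ is stable in the induced instance, and the dictionary makes $(\mu,\vec{x})$ pairwise stable, so the set of pairwise stable allocations is non‑empty. Conversely, if some $\pi=(\mu,\vec{x})$ is pairwise stable, the dictionary makes $(\mu,f(\pi))$ a stable allocation of the induced instance; then by \Cref{teo:equivalence_balanced_aspirations_and_stable_allocations} the set of stable allocations equals the set of balanced aspirations, so $\overline{f}$ is itself realizable by some matching $\mu^{\star}$, and choosing for each pair of $\mu^{\star}$ strategies attaining its partnership payoffs implements $\overline{f}$ by an allocation. (Recall that in the roommates setting core and pairwise stability coincide, so no further step is needed.)

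I expect the main obstacle to be the second step: pinning down the induced partnership functions and the dictionary rigorously. The delicate points are the regularity of $u_{d,d'}$ (continuity, and strict monotonicity where needed, which can fail when the Pareto frontier has flat segments, possibly requiring a genericity or tie‑breaking convention), the weak‑versus‑strict Pareto subtlety in \Cref{def:Pareto_optimal_external_stability_roommates} when concluding that an inefficient matched couple is a blocking pair, and the mismatch that Alkan and Tuncay's partnership functions are globally defined decreasing bijections of $\mathbb{R}$ whereas the game‑induced ones live on a compact achievable range and must be extended without enlarging or shrinking the stable set. Once these are handled, the rest is bookkeeping around \Cref{eq:aspiration} and \Cref{teo:existence_balanced_aspirations,teo:equivalence_balanced_aspirations_and_stable_allocations,teo:market_procedure_computes_a_balanced_aspiration}.
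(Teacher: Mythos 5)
Your proposal is correct and follows essentially the same route as the paper, which establishes this theorem precisely by viewing the roommates matching game as an Alkan--Tuncay instance via the induced partnership functions (noting that pairwise stability forces matched couples onto the Pareto frontier, hence the monotone partnership structure) and then invoking their balanced-aspiration results; the paper in fact leaves this reduction informal, so your explicit dictionary between pairwise stable allocations of $\Gamma$ and stable allocations of the induced instance is a faithful, more detailed rendering of the intended argument. The delicate points you flag (weak versus strict Pareto improvements and the domain/regularity of the induced $u_{d,d'}$) are likewise left implicit in the paper, so they do not constitute a divergence from its proof.
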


\section{Renegotiation proofness}\label{sec:internal_stability_one_to_many}

This section is devoted to extending renegotiation proofness to the two submodels considered in the previous sections. However, in order to simplify the exposition, we do the formal adaptation only for additive separable matching games. Please remark that to obtain the definitions and results for roommates matching games, it will be enough with considering pairs of doctors instead of pairs doctor-hospital. Finally, since it will be useful for the complexity study, we will focus on $\varepsilon$-renegotiation proofness. 

Given a strategy profile $\vec{x} \in \prod_{d \in D} X_d$, we denote $(\vec{x}_{-d},s_d)$ the strategy profile obtained when player $d$ replaces her strategy $x_d \in \vec{x}$ by $s_d \in X_d$. Analogously, we denote a strategy replacement from a hospital.

\begin{definition}\label{def:eps_internally_stable_matching_profile}
\textup{An $\varepsilon$-\textbf{pairwise stable} allocation $\pi=(\mu,\vec{x},\vec{y})$ is $\varepsilon$-\textbf{renegotiation proof} if for any pair $(I,h)\in \mu$, any $d \in I$, and any $(s_d,t_h)\in X_d \times Y_h$, it holds,
\begin{itemize}
\item[1.] If $f_{d,h}(s_d,y_h) > f_{d,h}(x_d,y_{d,h}) + \varepsilon$ then, $(\mu,(\vec{x}_{-d},s_d),\vec{y})$ is not $\varepsilon$-pairwise stable,
\item[2.] If $g_{d,h}(x_d,t_{d,h}) > g_{d,h}(x_d,y_{d,h}) + \varepsilon$ then, $(\mu,\vec{x},(\vec{y}_{-(d,h)},t_{d,h}))$ is not $\varepsilon$-pairwise stable.
\end{itemize}}
\end{definition}  

$\varepsilon$-Renegotiation proof allocations can be characterized as all those allocations in which all agents play $\varepsilon$-constrained Nash equilibria (\Cref{teo:eps_int_stability_is_equivalent_to_eps_CNEs}). In order to define constrained Nash equilibria, first we introduce the agents' reservation payoffs.

\begin{definition}
Let $\pi = (\mu,\vec{x},\vec{y})$ be am allocation, $(I,h) \in \mu$ be a matched pair, and $d \in I$ a fixed doctor. We define the $\varepsilon$-\textbf{reservation payoffs} of $d$ and $h$ respectively, as,
\begin{align}
    \begin{split}\label{eq:eps_outside_options_d}
    f_d^{\pi}(\varepsilon) := &\max\bigl\{ f_{d,k}(s,t) \mid g_{d,k}(s,t) \bbi \min_{d' \in \mu(k)} g_{d',k}(x_{d'},y_{d,h}) +  \varepsilon, k \in H_0 \setminus h, s \in X_d, t \in Y_k\bigr\},\\ 
    g_h^{\pi}(\varepsilon) := &\max\bigl\{g_{k,h}(s,t) \mid f_{k,h}(s,t) \bbi f_{k,\mu(k)}(x_{k},y_{\mu(k),k}) +  \varepsilon, k \in D_0 \setminus I, s \in X_k, t \in Y_h\bigr\}.
    \end{split}
\end{align}
\end{definition}

It may be intuitive to think that hospitals should have a reservation payoff for each of their doctors. The intuition is correct. However, as reservation payoffs depend on the agents outside of the \textit{couple}, the hospital has exactly the same reservation payoff for each of its doctors. In particular, no doctor should decrease her contribution to the hospital $h$'s payoff below $g_h^{\pi}(\varepsilon)$, otherwise, the hospital will have incentives to replace her. We give next, the definition of $\varepsilon$-constrained Nash equilibria.

\begin{definition}
Given an allocation $\pi = (\mu,\vec{x},\vec{y})$, a pair $(I,h) \in \mu$, a doctor $d \in I$, and their reservation payoffs $(f_d^{\pi}(\varepsilon), g_h^{\pi}(\varepsilon))$, a strategy profile $(x'_d,y'_{d,h}) \in X_d \times Y_h$ is 
\begin{itemize}[leftmargin = *]
\item[1.] $\varepsilon$\textbf{-Feasible} if $f_{d,h}(x'_d,y'_{d,h}) + \varepsilon \geq f_d^{\pi}(\varepsilon)$ and $g_{d,h}(x'_d,y'_{d,h}) + \varepsilon \geq g_h^{\pi}(\varepsilon)$,
\item[2.] An $\varepsilon$-$(f_d^{\pi}(\varepsilon), g_h^{\pi}(\varepsilon))$-\textbf{constrained Nash equilibrium} (CNE) if it is $\varepsilon$-feasible and it satisfies,
\begin{align*}
f_{d,h}(x'_d,y'_{d,h}) + \varepsilon &\geq \max\{f_{d,h}(s,y'_{d,h}) : g_{d,h}(s,y'_{d,h}) + \varepsilon \geq g_h^{\pi}(\varepsilon), s \in X_d\},\\
g_{d,h}(x'_d,y'_{d,h}) + \varepsilon &\geq \max\{g_{d,h}(x'_d,t) : f_{d,h}(x'_d,s) + \varepsilon \geq f_d^{\pi}(\varepsilon), t \in Y_h\}.
\end{align*}

We denote the set of $\varepsilon$-$(f_d^{\pi}(\varepsilon), g_h^{\pi}(\varepsilon))$-CNE by $\varepsilon$-CNE($f_d^{\pi}(\varepsilon), g_h^{\pi}(\varepsilon)$).
\end{itemize}
\end{definition}

We extend the characterization of renegotiation proof allocations through constrained Nash equilibria (Proposition 10 \cite{garrido2025stable}) to the $\varepsilon$-case.

\begin{theorem}\label{teo:eps_int_stability_is_equivalent_to_eps_CNEs}
An $\varepsilon$-pairwise stable allocation $\pi = (\mu,\vec{x},\vec{y})$ is $\varepsilon$-renegotiation proof if and only if for any pair $(I,h) \in \mu$ and $d\in I$, $(x_d,y_{d,h})$ is an $\varepsilon$-$(f_d^{\pi}(\varepsilon), g_h^{\pi}(\varepsilon))$-constrained Nash equilibria, where $(f_d^{\pi}(\varepsilon), g_h^{\pi}(\varepsilon))$ are the agents' reservation payoffs (\Cref{eq:eps_outside_options_d}).
\end{theorem}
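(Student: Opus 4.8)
The plan is to prove the two directions separately, working couple-by-couple and exploiting the additive separability so that each pair $(d,h)$ with $d\in\mu(h)$ can be analyzed independently of the other doctors in $h$. The crucial observation is that the reservation payoffs $f_d^\pi(\varepsilon)$ and $g_h^\pi(\varepsilon)$ in \Cref{eq:eps_outside_options_d} are exactly the payoffs achievable by the two members of the couple when they deviate with some agent \emph{outside} the couple while respecting that outside agent's current situation; so an $\varepsilon$-pairwise stable allocation that is obtained by altering only the play inside one couple $(I,h)$ fails $\varepsilon$-pairwise stability precisely when one of $d$ or $h$ (via replacing the least-contributing doctor) can be used to form an $\varepsilon$-blocking pair, which happens iff the new contribution $g_{d,h}$ drops below $g_h^\pi(\varepsilon)-\varepsilon$ or the new $f_{d,h}$ drops below $f_d^\pi(\varepsilon)-\varepsilon$.

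For the ``only if'' direction, I would assume $\pi$ is $\varepsilon$-renegotiation proof, fix a couple $(I,h)$ and $d\in I$, and show $(x_d,y_{d,h})$ is $\varepsilon$-feasible and satisfies the two constrained-best-response inequalities. Feasibility follows because if, say, $f_{d,h}(x_d,y_{d,h})+\varepsilon < f_d^\pi(\varepsilon)$, then by definition of $f_d^\pi(\varepsilon)$ there is a hospital $k\neq h$ and $(s,t)$ with $g_{d,k}(s,t)\ge \min_{d'\in\mu(k)}g_{d',k}(x_{d'},\cdot)+\varepsilon$ and $f_{d,k}(s,t)=f_d^\pi(\varepsilon)>f_{d,h}(x_d,y_{d,h})+\varepsilon = f_d(\pi)+\varepsilon$, contradicting $\varepsilon$-pairwise stability of $\pi$ itself (the pair $(d,k)$ would $\varepsilon$-block). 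For the best-response part: suppose $d$ has a profitable constrained deviation $s_d$, i.e. $f_{d,h}(s_d,y_{d,h})>f_{d,h}(x_d,y_{d,h})+\varepsilon$ with $g_{d,h}(s_d,y_{d,h})+\varepsilon\ge g_h^\pi(\varepsilon)$. Condition (1) of \Cref{def:eps_internally_stable_matching_profile} forces $(\mu,(\vec x_{-d},s_d),\vec y)$ to be not $\varepsilon$-pairwise stable; I then argue the only possible blocking pair must involve $h$ and some outside doctor $d''$ replacing the minimum-contributor of $h$ — but because $g_{d,h}(s_d,y_{d,h})$ stays at least $g_h^\pi(\varepsilon)-\varepsilon$ and no other doctor's contribution changed, the minimum contribution to $h$ has not fallen enough to create such a block, while no pair not involving $h$ is affected by the single change at $d$; this contradiction establishes the first inequality, and the second follows symmetrically using condition (2) and the hospital's side deviation $t_h$ (here one uses that $h$'s reservation payoff is the same for every one of its doctors, as noted in the paragraph before the definition of CNE).

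For the ``if'' direction, assume every $(x_d,y_{d,h})$ is an $\varepsilon$-$(f_d^\pi(\varepsilon),g_h^\pi(\varepsilon))$-CNE and verify the two bullets of \Cref{def:eps_internally_stable_matching_profile}. Take $(I,h)\in\mu$, $d\in I$, and $s_d$ with $f_{d,h}(s_d,y_{d,h})>f_{d,h}(x_d,y_{d,h})+\varepsilon$. Since $(x_d,y_{d,h})$ is a CNE, the constrained best-response inequality for $d$ fails for $s_d$, which can only mean $g_{d,h}(s_d,y_{d,h})+\varepsilon < g_h^\pi(\varepsilon)$. By definition of $g_h^\pi(\varepsilon)$ there is an outside doctor $k$ and $(s,t)\in X_k\times Y_h$ with $f_{k,h}(s,t)\ge f_k(\pi)+\varepsilon$ and $g_{k,h}(s,t)=g_h^\pi(\varepsilon)$. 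In the allocation $\pi'=(\mu,(\vec x_{-d},s_d),\vec y)$, doctor $d$'s contribution to $h$ is now $g_{d,h}(s_d,y_{d,h})<g_h^\pi(\varepsilon)-\varepsilon$, so $d$ is (strictly) the minimum contributor, and the pair $(k,h)$ $\varepsilon$-blocks $\pi'$ — with $k$ replacing $d$ — because $g_{k,h}(s,t)=g_h^\pi(\varepsilon)>g_{d,h}(s_d,y_{d,h})+\varepsilon$ (so the $|\mu(h)|=q_h$ branch of \Cref{def:pairwise_stable_allocation} is met, or, if $|\mu(h)|<q_h$, one checks the analogous branch, using individual rationality), and $k$ improves as well. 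Hence $\pi'$ is not $\varepsilon$-pairwise stable, giving bullet (1); bullet (2) is the symmetric argument with $t_h$ and $f_d^\pi(\varepsilon)$.

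The main obstacle I expect is the bookkeeping around the ``minimum contributor'' in \Cref{def:pairwise_stable_allocation}: one must check carefully that after the single-agent change, the doctor who deviated really does become the doctor whose contribution is minimal (so that the outside agent certified by the reservation payoff genuinely forms a blocking pair), and one must split into the cases $|\mu(h)|=q_h$ versus $|\mu(h)|<q_h$, handling the latter via individual rationality and the $\varepsilon$ threshold. A secondary subtlety is making precise that a deviation confined to one couple cannot create a blocking pair disjoint from that couple, and cannot create a blocking pair through $h$ other than by lowering $h$'s minimum contribution — this is where additive separability of $g_h$ and the fact that $X_d$ does not depend on partners are used. Once these case distinctions are set up cleanly, the rest is a direct translation of the one-to-one argument (Proposition 10 of \cite{garrido2025stable}) to the quota setting.
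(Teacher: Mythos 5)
Your proposal is correct to the same standard as the paper and follows essentially the same route: both directions are argued couple by couple, with the witness attaining $g_h^{\pi}(\varepsilon)$ (resp.\ $f_d^{\pi}(\varepsilon)$) acting as the blocking partner of the deviated allocation in the CNE-implies-renegotiation-proof direction, and the converse obtained by showing that any $\varepsilon$-profitable unilateral deviation must violate the reservation-payoff constraint. You additionally spell out the $\varepsilon$-feasibility check and the quota/minimum-contributor bookkeeping that the paper leaves implicit; the one step you assert but do not fully close --- that after the deviation the only candidate block is an outside doctor displacing the minimum contributor at $h$, which strictly speaking should also rule out the pair $(d,h)$ itself blocking via the $\mu(d)=h$ branch of \Cref{def:pairwise_stable_allocation} --- is exactly the step the paper's own proof passes over without argument.
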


\begin{proof}
Suppose that all couples play constrained Nash equilibria. Let $(d,h) \in \mu$ be a couple and $(x_d,y_{d,h})$ be their $\varepsilon$-$(f_d^{\pi}(\varepsilon), g_h^{\pi}(\varepsilon))$-CNE. Suppose there exists $x'_d \in X_d$ such that, $f_{d,h}(x_d', y_{d,h}) \bbi f_{d,h}(x_d,y_{d,h}) + \varepsilon.$ It follows, 
$$f_{d,h}(x_d', y_{d,h}) \bbi \max \{f_{d,h}(s,y_{d,h}) : g_{d,h}(s,y_{d,h}) + \varepsilon \geq g_h^{\pi}(\varepsilon), s \in X_d \}.$$
Thus, $f_{d,h}(x_d',y_{d,h}) + \varepsilon \sm g_h^{\pi}(\varepsilon)$. Let $d'$ be the player that attains the maximum in $g_h^{\pi}(\varepsilon)$. Then, $(d',h)$ is an $\varepsilon$-blocking pair of $\pi$. For player $h$ the proof is analogous.

Conversely, suppose $\pi$ is $\varepsilon$-renegotiation proof. Let $(d,h) \in \mu$ be a couple and $(x_d,y_{d,h})$ be their strategy profile. For any $x_d' \in X_d$ such that $f_{d,h}(x'_d, y_{d,h}) \bbi f_{d,h}(x_d,y_{d,h}) + \varepsilon$, it holds, $g_{d,h}(x'_d, y_{d,h}) + \varepsilon \sm g_h^{\pi}(\varepsilon)$. Thus, 
$$f_{d,h}(x_d,y_{d,h}) + \varepsilon \geq \max\{f_{d,h}(s,y_{d,h}) : g_{d,h}(s,y_{d,h}) + \varepsilon \geq g_h^{\pi}(\varepsilon), s \in X_d\}.$$
For player $h$ the proof is analogous.
\end{proof}

$\varepsilon$-Constrained Nash equilibria are not guaranteed to exist in every two-player game. Due to this, we extend the class of feasible games.

\begin{definition}
A two-player game is called $\varepsilon$-\textbf{feasible} if for any pair of $\varepsilon$-reservation payoffs which admits at least one $\varepsilon$-feasible strategy profile, there exists an $\varepsilon$-constrained Nash equilibrium for the same pair of $\varepsilon$-reservation payoffs.
\end{definition}

The class of $0$-feasible games is proved to contain all zero-sum games with a value, strictly competitive games with an equilibrium, potential games, and infinitely repeated games (Theorem 8 \cite{garrido2025stable}). In particular, for any $\varepsilon \geq 0$, the following result holds.

\begin{proposition}
\label{prop:class_of_eps_feasible_games}
The class of $\varepsilon$-feasible games includes zero-sum games with a value, strictly competitive games with an equilibrium, potential games, and infinitely repeated games.
\end{proposition}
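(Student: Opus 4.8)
The plan is to avoid redoing any of the class-specific analysis and instead reduce $\varepsilon$-feasibility to the already-established $0$-feasibility by a shift of the reservation payoffs, then invoke Theorem 8 of \cite{garrido2025stable}. First I would isolate the following reduction lemma: \emph{every $0$-feasible two-player game $G = (X,Y,f,g)$ is $\varepsilon$-feasible for all $\varepsilon \geq 0$}. To prove it, fix $\varepsilon \geq 0$ and a pair $(\alpha,\beta) \in \mathbb{R}^2$ admitting an $\varepsilon$-feasible strategy profile $(x,y)$, i.e. $f(x,y) + \varepsilon \geq \alpha$ and $g(x,y) + \varepsilon \geq \beta$. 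Rewriting these as $f(x,y) \geq \alpha - \varepsilon$ and $g(x,y) \geq \beta - \varepsilon$ shows that $(x,y)$ is a $0$-feasible profile for the shifted pair $(\alpha - \varepsilon, \beta - \varepsilon)$; since the only restriction in the $0$-feasibility definition is the existence of such a profile, $G$'s $0$-feasibility applies to $(\alpha - \varepsilon, \beta - \varepsilon)$.

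Next I would take a $0$-$(\alpha-\varepsilon,\beta-\varepsilon)$-constrained Nash equilibrium $(x^\ast,y^\ast)$ supplied by $0$-feasibility and check that it is an $\varepsilon$-$(\alpha,\beta)$-CNE. The inequality $f(x^\ast,y^\ast) \geq \alpha - \varepsilon$ gives $f(x^\ast,y^\ast) + \varepsilon \geq \alpha$, and symmetrically for $g$, so $(x^\ast,y^\ast)$ is $\varepsilon$-feasible for $(\alpha,\beta)$. For the equilibrium condition, note that the constraint $g(s,y^\ast) + \varepsilon \geq \beta$ appearing in the $\varepsilon$-CNE inequality is equivalent to $g(s,y^\ast) \geq \beta - \varepsilon$, which is exactly the constraint defining the $0$-CNE inequality for the shifted pair; hence the argmax sets on the two sides coincide, and from $f(x^\ast,y^\ast) \geq \max\{f(s,y^\ast) : g(s,y^\ast) \geq \beta - \varepsilon,\ s \in X\}$ we obtain, adding $\varepsilon \geq 0$ on the left, the required $f(x^\ast,y^\ast) + \varepsilon \geq \max\{f(s,y^\ast) : g(s,y^\ast) + \varepsilon \geq \beta,\ s \in X\}$. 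The analogous bookkeeping for the hospital's side finishes the lemma. Combining it with Theorem 8 of \cite{garrido2025stable}, which states that zero-sum games with a value, strictly competitive games with an equilibrium, potential games, and infinitely repeated games are all $0$-feasible, yields the proposition for every $\varepsilon \geq 0$.

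I do not expect a real obstacle: the entire substantive content is already contained in Theorem 8, and the $\varepsilon$-extension is a one-line monotonicity observation. The only point that deserves care is the translation between the $\varepsilon$-constraints ``$\,\cdot + \varepsilon \geq \cdot\,$'' and the $0$-constraints ``$\,\cdot \geq \cdot - \varepsilon\,$'', which must be done consistently for feasibility, for the argmax constraint sets, and for both players, so that a $0$-CNE for the down-shifted reservation pair really does certify an $\varepsilon$-CNE for the original pair; once that is verified the implication is immediate. If one preferred to avoid the reduction altogether, the alternative would be to re-run each of the four class-specific arguments of Theorem 8 with the slack $\varepsilon$ carried through, but this is strictly more work and introduces no new idea.
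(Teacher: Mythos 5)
Your reduction is correct, and it is essentially the route the paper mentions but deliberately does not execute: right after the statement the authors remark that the proposition ``can be obtained as a corollary of Theorem 8 in \cite{garrido2025stable}'', yet they choose instead to give class-by-class constructive proofs inside the complexity study (\Cref{teo:zero_sum_games_are_eps_feasible_and_polynomial} for zero-sum games via the median characterization and linear programs, \Cref{teo:CNE_computation_is_polynomial_for_uniform_games} with the Folk-theorem construction for infinitely repeated games, and \Cref{prop:affine_transformed_CNE_are_CNE} via the affine transformation for strictly competitive games). Your shift lemma --- an $\varepsilon$-feasible profile for $(\alpha,\beta)$ is a $0$-feasible profile for $(\alpha-\varepsilon,\beta-\varepsilon)$, and a $0$-CNE for the shifted pair satisfies the $\varepsilon$-CNE inequalities for the original pair because the constraint sets $\{s: g(s,y^\ast)+\varepsilon\geq\beta\}$ and $\{s: g(s,y^\ast)\geq\beta-\varepsilon\}$ coincide and the $0$-CNE bound is stronger than the $\varepsilon$-slackened one --- is sound, and it has the advantage of handling all four classes uniformly (notably potential games, which the paper's complexity sections never revisit, so there the paper itself implicitly relies on exactly your corollary argument). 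What your route does not buy, and what motivates the paper's longer proofs, is constructiveness: the class-specific arguments produce an $\varepsilon$-CNE by explicit polynomial-time procedures, which is what the subsequent complexity theorems for the renegotiation process need; your reduction only inherits whatever (possibly non-effective) existence proof underlies Theorem 8 of \cite{garrido2025stable}. One small caveat worth stating if you write this up: for infinitely repeated games the $\varepsilon$-CNE notion used in \Cref{sec:complexity_infinitely_repeated_games} is the uniform-game variant (with the $\overline{\varepsilon}\bbi\varepsilon$ quantifier), so the same monotone shift must be checked against that definition as well --- it goes through because every $\overline{\varepsilon}\bbi\varepsilon$ is in particular $\bbi 0$, but it is a separate verification from the one you did for the normal-form definition.
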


Although \Cref{prop:class_of_eps_feasible_games} can be obtained as a corollary of Theorem 8 in \cite{garrido2025stable}, we will conduct its formal proof during the complexity study in order to show that renegotiation-proof allocations can be computed in polynomial time.

\begin{remark}
Zero-sum matching games, that is, matching games where all couples play zero-sum games, as showed in \cite{garrido2025stable}, capture the general model of Demange and Gale on \textit{matching with transfers}. Our two new submodels come to complete the list by also including the seminal works of Crawford and Knoer \cite{crawford_job_1981} and Kelso Jr. and Crawford \cite{kelso_jr_job_1982}, and the many articles on roommates with transferable utility \cite{andersson_competitive_2014,chiappori_roommate_2014,eriksson_stable_2001,klaus_consistency_2010,talman_model_2011}.
\end{remark}

\Cref{Algo:strategy_profiles_modification_eps_version} shows the pseudo-code of the $\varepsilon$-renegotiation process.

\begin{algorithm}[ht]
\SetKwInOut{Input}{input}\SetKwInOut{Output}{output}
\Input{$\pi = (\mu,\vec{x},\vec{y})$ $\varepsilon$-pairwise stable allocation}
\SetInd{0.2cm}{0.2cm}

$t \longleftarrow 1, \pi(t) \longleftarrow \pi$
 
\While{True}{
\For{$(d,h) \in \mu$}{
Compute the reservation payoffs $f_d^{\pi(t)}$ and $g_h^{\pi(t)}$ (\Cref{eq:eps_outside_options_d}) 

Choose $(x_d^*,y_{d,h}^*) \in$ $\varepsilon$-CNE$(f_d^{\pi}(\varepsilon), g_h^{\pi}(\varepsilon))$ and set $(x_d^{t+1},y_{d,h}^{t+1}) \longleftarrow (x_d^*,y_{d,h}^*)$}
\If{For any $(d,h) \in \mu, (x_d^{t+1},y_{d,h}^{t+1}) = (x_d^t,y_{d,h}^t)$}{Output $\pi(t)$}
$t \longleftarrow t+1$
}
\caption{Renegotiation process.}
\label{Algo:strategy_profiles_modification_eps_version}
\end{algorithm}

\begin{theorem}\label{teo:eps_strag_prof_mod_algo_is_correct}
If \Cref{Algo:strategy_profiles_modification_eps_version} converges, its output is an $\varepsilon$-pairwise stable and $\varepsilon$-renegotiation proof allocation. 
\end{theorem}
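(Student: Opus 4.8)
The plan is to separate the two assertions and dispatch $\varepsilon$-renegotiation proofness first, since it is essentially a restatement of the characterization already in hand. Write $\pi^\star=(\mu,\vec{x}^\star,\vec{y}^\star)=\pi(t)$ for the allocation at which the process halts. By the stopping rule, $\pi(t+1)=\pi(t)=\pi^\star$, so for every couple $(I,h)\in\mu$ and every $d\in I$ the profile $(x^\star_d,y^\star_{d,h})$ is exactly the element of $\varepsilon$-CNE$(f^{\pi^\star}_d(\varepsilon),g^{\pi^\star}_h(\varepsilon))$ selected in the final pass of the \textbf{for}-loop, the reservation payoffs being the ones computed at $\pi(t)=\pi^\star$. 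Hence, as soon as we know that $\pi^\star$ is $\varepsilon$-pairwise stable, \Cref{teo:eps_int_stability_is_equivalent_to_eps_CNEs} applies word for word and gives that $\pi^\star$ is $\varepsilon$-renegotiation proof. Everything therefore reduces to proving that the output is $\varepsilon$-pairwise stable.

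I would prove this directly at the fixed point, using only two facts: (a) each couple's profile is $\varepsilon$-feasible against the reservation payoffs of $\pi^\star$ itself, i.e.\ $f_d(\pi^\star)+\varepsilon\ge f^{\pi^\star}_d(\varepsilon)$ and $g_{d,h}(x^\star_d,y^\star_{d,h})+\varepsilon\ge g^{\pi^\star}_h(\varepsilon)$ for $d\in\mu(h)$; and (b) it is an $\varepsilon$-CNE. $\varepsilon$-Individual rationality is immediate from (a), because the empty agents $d_0$ and $h_0$ are always admissible outside options in \eqref{eq:eps_outside_options_d}, so $f^{\pi^\star}_d(\varepsilon)\ge\underline f_d$ and $g^{\pi^\star}_h(\varepsilon)\ge\underline g_h$, with additive separability reducing the hospital's $\varepsilon$-individual rationality to these per-doctor bounds. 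To exclude a blocking pair I would use \Cref{def:pairwise_stable_allocation} and split on whether the pair reuses a doctor's current hospital. If $(d,k)$ with $k\neq\mu(d)$ blocked $\pi^\star$ through $(w,z)$, then the contribution inequality making $(d,k)$ a legitimate block — $g_{d,k}(w,z)$ above $\min_{d'\in\mu(k)}g_{d',k}(x^\star_{d'},\cdot)+\varepsilon$ if $k$ is full, above $\varepsilon$ if $k$ has a free seat — is precisely a constraint under which $f_{d,k}(w,z)$ competes in the maximum defining $f^{\pi^\star}_d(\varepsilon)$; hence $f_{d,k}(w,z)\le f^{\pi^\star}_d(\varepsilon)\le f_d(\pi^\star)+\varepsilon$, contradicting the strict improvement required to block. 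If instead the block uses the pair $(d,\mu(d))$, it would be a strategy profile strictly $\varepsilon$-improving both $f_{d,h}$ and $g_{d,h}$ over $(x^\star_d,y^\star_{d,h})$ inside the game $G_{d,h}$; this is impossible once the CNE returned by the selection step of \Cref{Algo:strategy_profiles_modification_eps_version} is a weakly $\varepsilon$-Pareto-optimal point of its $\varepsilon$-feasible region — automatic for zero-sum and strictly competitive games, and satisfied by the CNEs constructed for potential and infinitely repeated games in \cite{garrido2025stable}. This yields $\varepsilon$-pairwise stability of $\pi^\star$, and with the first paragraph, the theorem. For roommates matching games the argument is verbatim after replacing each doctor--hospital couple by a doctor--doctor couple and reading \Cref{def:Pareto_optimal_external_stability_roommates} in place of \Cref{def:pairwise_stable_allocation}.

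The subtle point — and the reason the statement is conditioned on convergence — is the self-consistency exploited above: the reservation payoffs against which every couple best-responds have to be those of the \emph{output} allocation, and it is the termination test ``$(x^{t+1}_d,y^{t+1}_{d,h})=(x^t_d,y^t_{d,h})$ for every couple'' that pins this down (away from a fixed point the reservation payoffs are a moving target and $\varepsilon$-pairwise stability need not carry over from one iterate to the next, so one should \emph{not} try to run an invariant argument along the whole trajectory). The second thing to get right is the blocking-pair bookkeeping for hospitals below quota, where the bare threshold $\varepsilon$ of \Cref{def:pairwise_stable_allocation} must be matched against the corresponding branch of \eqref{eq:eps_outside_options_d}; this is handled with the convention already used by the DAC algorithm (an empty seat counts as occupied by $d_0$ with zero contribution). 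Finally, one must confirm that the CNE--selection rule indeed outputs weakly $\varepsilon$-Pareto-optimal equilibria, which is what keeps already-matched couples from blocking; I expect this to be the main thing requiring a short separate argument.
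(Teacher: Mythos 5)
Your decomposition is the right shape: the renegotiation-proofness half is exactly the paper's ("by construction", via \Cref{teo:eps_int_stability_is_equivalent_to_eps_CNEs}), and your fixed-point treatment of a blocking pair $(d,k)$ with $k\neq\mu(d)$ is sound — it is essentially the paper's inequality chain, instantiated at the output instead of at iteration $T$, and your handling of $\varepsilon$-individual rationality and of the $d_0$ convention for below-quota hospitals is fine. Where you genuinely diverge is on how $\varepsilon$-pairwise stability is obtained: the paper does precisely what you advise against, namely an invariance argument along the whole trajectory. It shows by induction that every iterate $\pi(T)$ stays $\varepsilon$-pairwise stable, using as base case the hypothesis (which your argument never uses) that the \emph{input} allocation is $\varepsilon$-pairwise stable, and as inductive step that any agent who changed strategy at iteration $T$ now plays a profile that is $\varepsilon$-feasible for the iteration-$T$ reservation payoffs, yielding $f_{d,h}(x^*,y^*) \bbi f_d(\pi(T+1))+\varepsilon \geq f_d^{\pi(T)}(\varepsilon) \geq f_{d,h}(x^*,y^*)$, a contradiction.

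The price of discarding that hypothesis appears exactly where you flag it, and it is a genuine gap, not a routine verification: blocking by an already-matched couple $(d,\mu(d))$ in the sense of \Cref{def:pairwise_stable_allocation}. An $\varepsilon$-CNE only rules out \emph{unilateral} constrained deviations, so nothing in its definition prevents a joint profile that $\varepsilon$-Pareto-dominates it (a feasible Nash equilibrium of a prisoners'-dilemma-type couple is an $\varepsilon$-CNE for low reservation payoffs, yet is jointly improvable by more than $\varepsilon$). The weak $\varepsilon$-Pareto-optimality of the selected CNE that you invoke is neither part of the definition nor a hypothesis of the theorem: \Cref{Algo:strategy_profiles_modification_eps_version} lets each couple choose an \emph{arbitrary} element of the $\varepsilon$-CNE set, and the theorem is stated for general additive separable matching games, not only for the zero-sum or strictly competitive classes where your claim is automatic. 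So the "short separate argument" you defer is in fact the whole missing content of this case; to close it you must either prove that property for the CNE selection rule (or restrict the class of games), or revert to the paper's route, whose induction is designed to inherit the no-within-couple-block property from the $\varepsilon$-pairwise stable input rather than to re-derive it from the CNE condition at the fixed point.
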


\begin{proof}
By construction, the output of \Cref{Algo:strategy_profiles_modification_eps_version} is $\varepsilon$-renegotiation proof (\Cref{teo:eps_int_stability_is_equivalent_to_eps_CNEs}). Regarding $\varepsilon$-pairwise stability, we prove that $\pi$ always remains $\varepsilon$-pairwise stable at every iteration $T$. For $T=0$ it holds as the input of \Cref{Algo:strategy_profiles_modification_eps_version} is $\varepsilon$-pairwise stable. Suppose that for some $T \bbi 0$, $\pi(T)$ is $\varepsilon$-pairwise stable but there exists an $\varepsilon$-blocking pair $(d,h)$ of $\pi(T+1)$. Then, there exists $(x^*,y^*) \in X_d \times Y_h$ such that 
\begin{align*}
f_{d,h}(x^*,y^*) \bbi f_d(\pi(T+1)) + \varepsilon \text{ and } g_{d,h}(x^*,y^*) \bbi \min_{k \in \mu^{T+1}(h)}g_{k,h'}(x_k^{T+1},y_{k,h}^{T+1}) + \varepsilon.
\end{align*}
Necessarily, $d$ or $h$ changed of strategy profile at $T$, otherwise $(d,h)$ would also block $\pi(T)$. Without loss of generality, suppose $d$ did. It follows,
\begin{align*}
f_{d,h}(x^*,y^*) \bbi f_d(\pi(T+1)) + \varepsilon = f_{d,\mu(h)}(x',y') + \varepsilon \geq f_d^{\pi(T)}(\varepsilon) \geq f_{d,h}(x^*,y^*),
\end{align*}
where $f_d^{\pi(T)}(\varepsilon)$ is $d$'s reservation payoffs at time $T$, $(x',y') \in \varepsilon$-CNE($f_d^{\pi(T)}(\varepsilon), g_{\mu(h)}^{\pi(T)}(\varepsilon)$) is the CNE chosen by $(d,\mu(d))$ at time $T$, and the last inequality comes from \Cref{eq:eps_outside_options_d}. We get a contradiction.
\end{proof}

\section{Bi-matrix matching games}\label{sec:complexity}

The deferred-acceptance algorithm of Gale and Shapley \cite{gale_college_1962} is guaranteed to converge in at most $\mathcal{O}(N^2)$ iterations, where $N$ is the size of the biggest market side. Shapley and Shubik \cite{shapley_assignment_1971} also achieved a polynomial complexity when computing their stable allocations thanks to the linearity of the payoff functions over the payments, which allowed them to solve their problem by linear programming. In the roommates problem, Irving \cite{irving_complexity_1986} designed a polynomial algorithm to compute a stable matching of the problem, in case of existence, or to report the non-existence of stable allocations. The first of the two algorithms designed by Demange et al. \cite{demange_multi-item_1986}, an increasing price mechanism with integer payoffs, is also guaranteed to converge in polynomial time to an exact stable allocation. The same holds for Andersson et al. who adapted this last algorithm to roommates with integer transferable utility \cite{andersson_competitive_2014}. 

Computing exact stable solutions for matching problems with non-integer utilities is still an open problem as algorithms depend either on discretizations of the utility spaces or fixed point theorems, leading authors to design approximation schemes for their problems. For example, the second algorithm of Demange et al. \cite{demange_multi-item_1986}, the continuous market procedure of Alkan and Tuncay \cite{alkan_pairing_2014} in the roommates with non-transferable utility setting, and the deferred-acceptance with competitions algorithm of Garrido-Lucero and Laraki, all of them converge in a bounded number of iterations $T$ to an $\varepsilon$-stable solution, with $T \propto \frac{1}{\varepsilon}$.

In this section, we extend the literature results proving that the stable allocation computation algorithms for matching games have not only a bounded number of iterations, but a  polynomial-time complexity whenever players play finite zero-sum games in mixed strategies, finite strictly competitive games in mixed strategies, or infinitely repeated games with finite stage games in mixed strategies. We will focus on the two submodels presented in \Cref{sec:matching_with_contracts_matching_games,sec:roommates_matching_games}. First of all, we formalize the notion of bi-matrix (or finite) game in mixed strategies.

\begin{definition}
A two-player game $G = (X,Y,f,g)$ is called a \textbf{bi-matrix game in mixed strategies} if there exist $S,T$ finite strategy sets such that,
\begin{align*}
X &:= \Delta(S) = \bigl\{x \in [0,1]^{|S|} : \sum_{s \in S} x(s) = 1\bigr\} \text{ and } Y := \Delta(T) = \bigl\{y \in [0,1]^{|T|} : \sum_{t \in T} y(t) = 1\bigr\},
\end{align*}
correspond to the simplex of $S$ and $T$, respectively, and the payoff functions are,
\begin{align*}
f(x,y) &:= xAy = \sum_{s \in S}\sum_{t \in T} A(s,t)x(s)y(t) \text{ and }
g(x,y) := xBy = \sum_{s \in S}\sum_{t \in T} B(s,t)x(s)y(t),
\end{align*}
where $x \in X, y \in Y$, and $A,B \in \mathbb{R}^{|S|\cdot|T|}$ are payoff matrices. With this in mind, we can define a bi-matrix matching game as any matching game in which all two-player games are bi-matrix games in mixed strategies.
\begin{itemize}[leftmargin =*]
\item[1.] An \textbf{additive separable bi-matrix matching game} is any additive separable matching game 
$$\Gamma = (D_0,H_0,(X_d)_{d \in D}, (Y_h)_{h \in H}, (f_{d,h},g_{d,h})_{d\in D, h \in H},(\underline{f}_d)_{d \in D}, (\underline{g}_h)_{h \in H}),$$ 
where each two-player game $G_{d,h} := (X_d,Y_h,f_{d,h},g_{d,h})$ is a finite game in mixed strategies, i.e., there exist finite pure strategy sets $S_d, T_h$ such that $X_d = \Delta(S_d)$, $Y_h = \Delta(T_h)$ and,
\begin{align*}
f_{d,h}(x_d,y_{d,h}) &= x_dA_{d,h}y_{d,h} \text{ and }
g_{d,h}(x_d,y_{d,h}) = x_dB_{d,h}y_{d,h},
\end{align*}
with $A_{d,h}, B_{d,h} \in \mathbb{R}^{|S_d|\cdot|T_h|}$ payoff matrices for all $d \in D$ and $h \in H$.
\item[2.] A \textbf{roommates bi-matrix matching game} is any roommate matching game $$\Gamma = (D_0,(X_d)_{d \in D},(f_{d,d'})_{d,d'\in D},(\underline{f}_d)_{d \in D}),$$
where each two-player game $G_{d,d'} := (X_d,X_{d'},f_{d,d'},f_{d',d})$ is a finite game in mixed strategies, i.e., there exist finite pure strategy sets $S_d, S_{d'}$ such that $X_d = \Delta(S_d)$, $X_d = \Delta(S_{d'})$ and,
\begin{align*}
f_{d,d'}(x_d,x_{d'}) &:= x_dA_{d,d'}x_{d'} \text{ and }
f_{d',d}(x_{d'},x_d) := x_{d'}A_{d',d}x_{d},
\end{align*}
with $A_{d,d'}, A_{d',d'} \in \mathbb{R}^{|S_d|\cdot|S_{d'}|}$ payoff matrices.
\end{itemize}
\end{definition}

For examples of bi-matrix matching games, check \Cref{ex:hedonic_game_example,ex:roommates}. 




\subsection{Deferred-acceptance with competitions algorithm}

Consider an additive separable bi-matrix matching game

$$\Gamma = (D_0,H_0,G_{d,h} = (X_d,Y_h,f_{d,h},g_{d,h})_{d \in D,h \in H}, (\underline{f}_d)_{d \in D}, (\underline{g}_h)_{h \in H}).$$

\Cref{Algo:Propose_dispose_algo_additive_separable_case} states the deferred-acceptance with competitions (DAC) algorithm adapted to the bi-matrix setting.

\begin{algorithm}[H]
\textbf{Input}: $\Gamma$ a matching game, $\varepsilon \bbi 0$,

Set $D' \leftarrow D$ as the set of unmatched doctors

\While{$D' \neq \emptyset$}{
Let $d \in D'$ and $(h,d',x_d,y_{d,h})$ be a solution to,
\begin{align}
    \begin{split}\label{eq:optimal_proposal_problem_additive_separable}
    \max\bigl\{ wA_{d,h}z \mid wB_{d,h}z \geq \min_{d' \in \mu(h)} x_{d'}B_{d',h}y_{d,h} + \varepsilon, h \in H_0, (w,z) \in X_d \times Y_h\bigr\}
    \end{split}
\end{align}
\If{$|\mu(h)| \sm q_h$}{$d$ is accepted}
\Else{$d$ and $d'$ \textbf{compete} for $h$ as in a \textbf{second-price auction}:
let $\beta_d$ be the reservation payoff of $d$ (and analogously $\beta_{d'}$ for $d'$), solution to the following problem,
\begin{align*}
    \max\bigl\{xA_{d,h'}y \mid xB_{d,h'}y \geq \min_{d'\in \mu(h')} xB_{d',h'}y + \varepsilon, h' \in H_0\setminus\{h\}, (x,y) \in X_d \times Y_{h'}\bigr\}.
\end{align*}
Then, $d$'s bid (and analogously for $d'$) is computed by,
\begin{align*}
    \lambda_d := \max\bigl\{ xB_{d,h}y \mid xA_{d,h}y  \geq \beta_d, (x,y) \in X_d\times Y_h \bigr\}.
\end{align*}
The winner is the doctor with the highest bid. Finally the winner, namely $d$, pays the second highest bid. Formally, $d$ solves,
\begin{align*}
    \max\left\{ xA_{d,h}y \mid xB_{d,h}y \geq \lambda_{d'}, (x,y) \in X_d \times Y_h\right\}. 
\end{align*}
}}
\caption{DAC algorithm: Additive separable case.}
\label{Algo:Propose_dispose_algo_additive_separable_case}
\end{algorithm}

As all the games are finite games played in mixed strategies, all agents have compact strategy sets and continuous payoff functions. Therefore, the DAC algorithm is guaranteed to converge to an $\varepsilon$-pairwise stable allocation (\Cref{teo:propose_dispose_algo_is_correct}). Moreover, the convergence is done in a finite number of iterations as we prove now.

\begin{theorem}\label{teo:propose_dipose_algo_number_of_iterations}
The deferred-acceptance with competitions algorithm converges in a bounded number $T\propto \frac{1}{\varepsilon}$ of iterations.
\end{theorem}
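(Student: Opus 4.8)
The plan is to bound the number of iterations by tracking a monotone potential tied to the doctors' payoffs, exactly as in the one-to-one analysis of Garrido-Lucero and Laraki, and then to quantify the decrements in terms of $\varepsilon$. First I would observe that each iteration of \Cref{Algo:Propose_dispose_algo_additive_separable_case} is triggered by some doctor $d \in D'$ making her optimal proposal (\Cref{eq:optimal_proposal_problem_additive_separable}); the value of that proposal $f_{d,h}(x_d,y_{d,h})$ is, by construction, the best payoff $d$ can currently secure over all hospitals subject to the relevant competitiveness constraint. I would then argue that along the run of the algorithm the sequence of optimal-proposal values produced \emph{by a fixed doctor} $d$ is nonincreasing: whenever $d$ re-enters $D'$ after losing a competition, the minimal contribution $\min_{d'\in\mu(h)} x_{d'}B_{d',h}y_{d,h}$ demanded by the hospitals she can access has weakly increased (a hospital only replaces a doctor by one contributing strictly more, and by at least $\varepsilon$ in the $\varepsilon$-version), so the feasible set in her next optimization shrinks and her attainable payoff cannot rise.

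Next I would make the decrement quantitative. The key point is that every time a doctor is displaced and comes back to propose, either she proposes to a strictly different hospital (of which there are at most $|H|$ per doctor, hence finitely many such "first visits") or she re-proposes to a hospital she has visited before, and in that latter case the threshold she must clear has gone up by at least $\varepsilon$ because of the second-price competition rule. Since all payoffs live in a bounded range — the entries of the matrices $A_{d,h}, B_{d,h}$ are fixed, so $f_{d,h}$ and $g_{d,h}$ are bounded by some constant $M := \max_{d,h}\|A_{d,h}\|_\infty$ respectively $\|B_{d,h}\|_\infty$ uniformly — the contribution of any doctor to any hospital can strictly increase by steps of at least $\varepsilon$ only $O(M/\varepsilon)$ times before exceeding $M$ and forcing that doctor out of contention for that hospital permanently. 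Summing over the at most $|D|\cdot|H|$ doctor–hospital pairs gives a bound of the form $T \le C\,|D|\,|H|\,M/\varepsilon$, which is $O(1/\varepsilon)$ with the instance-dependent constants absorbed; together with the finiteness argument already invoked for \Cref{teo:propose_dispose_algo_is_correct}, this shows the algorithm halts in a bounded number of iterations proportional to $1/\varepsilon$.

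I would close by spelling out the two arithmetic facts that make the competition rule force an $\varepsilon$-gap: the winner's final bid (the solution of the last optimization in the \textbf{Else} branch) is constrained to give the hospital at least $\lambda_{d'}$, the loser's bid, and the loser then re-enters $D'$ with a reservation payoff $\beta_{d'}$ that was computed \emph{excluding} $h$; so when the loser later returns to $h$ she must now beat a contribution that is at least $\lambda_{d'}+\varepsilon$ relative to what she could offer before, giving the promised increment. The main obstacle I anticipate is precisely the bookkeeping here: one must be careful that the "threshold" quantity that increases by $\varepsilon$ is genuinely the same quantity across iterations (the minimal contribution to a fixed hospital $h$), since the strategy profile $y_{d,h}$ that hospital $h$ plays against $d$ can itself change, and one has to check that additive separability lets us decouple the comparison hospital-by-hospital and doctor-slot-by-doctor-slot. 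Once that decoupling is in place the counting is routine; I would therefore devote most of the written proof to justifying the monotonicity and the uniform $\varepsilon$-decrement, and only sketch the final summation.
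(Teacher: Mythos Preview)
Your proposal is correct in spirit and would yield a valid $O(1/\varepsilon)$ bound, but it takes a considerably more involved route than the paper. The paper's proof is a two-line potential argument on the \emph{hospital} side: by construction of the optimal proposal and the second-price competition (the constraint $wB_{d,h}z \ge \min_{d'\in\mu(h)} x_{d'}B_{d',h}y_{d',h} + \varepsilon$ in \Cref{eq:optimal_proposal_problem_additive_separable}), every iteration raises the contribution received by the targeted hospital by at least $\varepsilon$; since each such contribution is bounded above by $G_h := \max\{B_{d,h}(s,t)-\underline{g}_h\}$ and below by the IRP, the total number of iterations is at most $G_{\max}/\varepsilon$ with $G_{\max}:=\max_h G_h$ (up to instance constants). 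No doctor-side monotonicity, no revisit counting, no slot-by-slot decoupling is needed.

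Your approach instead tracks the doctors' optimal-proposal values and bounds revisits per doctor--hospital pair. This works, but buys you a worse constant ($|D|\cdot|H|\cdot M/\varepsilon$ rather than $G_{\max}/\varepsilon$) and forces exactly the bookkeeping you flag as the main obstacle: showing that between two visits of the same doctor $d$ to the same hospital $h$ the threshold has strictly risen by $\varepsilon$. That fact is true, but the clean reason for it is precisely the hospital-side monotonicity the paper uses directly---so you end up proving the paper's lemma as a subroutine. If you simply switch the potential from ``doctor's achievable payoff'' to ``hospital's received contribution,'' the argument collapses to the paper's short proof and the decoupling worry disappears.
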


\begin{proof}
For every hospital, $h \in H$, consider the value,
\begin{align*}
    G_h := \max\{B_{d,h}(s,t) - \underline{g}_h : d \in D_0, s \in S_d, t \in T_h\},
\end{align*}
and let $G_{\text{max}} := \max_{h\in H} G_h$ be the maximum of them. By construction, \Cref{Algo:Propose_dispose_algo_additive_separable_case} increases hospitals' payoffs at each iteration by at least $\varepsilon$. Therefore, the number of iterations is bounded by $T := \frac{1}{\varepsilon} G_{\text{max}}$.
\end{proof}

Note that $G_{\text{max}}$ does not depend on the number of players nor the number of pure strategies per player but only on the values of the payoff matrices. Therefore, taking bounded payoff matrices, $T$ only depends on the relaxation rate $\varepsilon$. 

We aim to study next under which assumptions the iterations of the DAC algorithm have polynomial complexity. 
Remark that all the optimization problems solved during an iteration of \Cref{Algo:Propose_dispose_algo_additive_separable_case} have a quadratically constrained quadratic programming (QCQP) structure (Problem \ref{eq:optimal_proposal_problem_additive_separable} can be decomposed in $|H|$ QCQP sub-problems.) \cite{anstreicher_convex_2012,daspremont_relaxations_2003,linderoth_simplicial_2005}. Particular complexity issues will arise when solving this kind of optimization problems (check \Cref{sec:qcqp_problems}).

\subsection{Market procedure}

The market procedure of Alkan and Tuncay \cite{alkan_pairing_2014} studied in \Cref{sec:roommates_matching_games} for roommates with non-transferable utility computes a balanced aspiration $f \in \mathbb{R}^{|D|}$. The aspiration $f$ corresponds to the payoff profile that players must have at any pairwise stable allocation. As their existence theorem (\Cref{teo:equivalence_balanced_aspirations_and_stable_allocations}) states, a pairwise stable allocation will exist if and only the output of the market procedure is realizable by a proper allocation (and not a semiallocation allowing half-partnerships).

In this section we will deal with the following complexity issue of the pairwise stable allocations computation for the roommates submodel: Given a couple of doctors $(d_1,d_2) \in D\times D$ and strategy profiles $(f_1,f_2)\in \mathbb{R}^2$ such that, $f_{d_1} = u_{1,2}(f_{d_2})$, we aim to compute $(x_{d_1},x_{d_2}) \in X_{d_1}\times X_{d_2}$ such that $f_{d_1}(x_{d_1},x_{d_2}) = f_{1} \text{ and } f_{d_2}(x_{d_2},x_{d_1}) = f_{2}$. Being able to find the strategy profile that achieves the payoffs $(f_1,f_2)$, for any payoff profile, will allow us to solve two challenges of this model: (1) the computation of the \textit{demand sets} of the players at every iteration of the market procedure, and (2) given a balanced aspiration $f$, the computation of the strategy profile $\vec{x} \in X_D$ such that $(\mu,\vec{x})$ implements $f$, whenever this can be done.

Consider a bi-matrix game $G = (X,Y,A,B)$ with $X = \Delta(S)$ and $Y = \Delta(T)$ being simplex, and $A,B$ payoff matrices. Given $(u,v) \in \mathbb{R}^2$, we aim to find $x \in X, y \in Y$, such that,
\begin{align*}
    xAy &= u \Longleftrightarrow \sum_{s \in S}\sum_{t\in T} A(s,t)x_sy_t = u \text{ and }
    xBy = v \Longleftrightarrow \sum_{s \in S}\sum_{t\in T} B(s,t)x_sy_t = v.
\end{align*}

The previous system of quadratic equations can be seen as a QCQP with a constant objective function. Therefore, being able to solve QCQPs in polynomial time would allow us to compute the strategy profile of a couple that achieves a given payoff profile. We will come back to this problem later.

\subsection{Renegotiation process}

An iteration of the renegotiation process computes the reservation payoffs of all couples and a constrained Nash equilibrium to replace their current strategy profile. For a given couple $(d,h) \in \mu$, both problems are QCQP programs. Indeed, note first that reservation payoffs are computed in the same way as optimal proposal, while constrained Nash equilibria require to solve the multi-optimization problem,
\begin{align*}
    \max \bigl\{ w A_{d,h} z \mid z \in \argmax\{w B_{d,h} z \mid wA_{d,h} z + \varepsilon \geq f_d^\pi(\varepsilon), z \in Y_h \}, wB_{d,h}z + \varepsilon \geq g_h^\pi(\varepsilon), w \in X_d \bigr\}.
\end{align*}

\subsection{Quadratically constrained quadratic programs}\label{sec:qcqp_problems}

The main issue in the complexity study of our algorithms is the presence of quadratically constrained quadratic programming (QCQP) problems \cite{anstreicher_convex_2012,daspremont_relaxations_2003,linderoth_simplicial_2005}. As we have already remarked, the optimization problems solved during an iteration of the deferred-acceptance with competitions algorithm, the demand sets of the market procedure of Alkan and Tuncay, the allocation computation given a balanced aspiration, the computation of the reservation payoffs during the renegotiation process, or even the constrained Nash equilibria computation, all of them have the following structure, 
\begin{align}
    \begin{split}\label{def:general_opt_problem}
    \max\{ x A y \mid xBy \geq c, x \in X, y \in Y\},
    \end{split}
\end{align}
where $A,B$ are real-valued matrices, $c \in \mathbb{R}$, and $X,Y$ are simplex. For negative semi-definite matrices $A$ and $B$, Problem (\ref{def:general_opt_problem}) corresponds to a convex problem and can be solved in polynomial time. However, in its most general case, Problem (\ref{def:general_opt_problem}) is NP-hard. Luckily, for zero-sum games, strictly competitive games, and infinitely repeated games we will manage to reduce these problems to a polynomial number of linear programs.

Linear programming is one of the most useful tools to prove the polynomial complexity of given problems. The first polynomial algorithms for linear programming problems were published by Khachiyan \cite{khachiyan_polynomial_1979} and Karmarkar \cite{karmarkar_new_1984}. For our analysis, we will refer to the complexity result of Vaidya \cite{vaidya_speeding-up_1989}.

\begin{theorem}[Vaidya'89]\label{teo:vaidya_complexity_LP}
Let $P$ be a linear program with $m$ constraints, $n$ variables, and such its data takes $L$ bits to be encoded. Then, in the worst case, $P$ can be solved in $\mathcal{O}((n+m)^{1.5}nL)$ elementary operations.
\end{theorem}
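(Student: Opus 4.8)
The plan is to prove this via a path-following interior-point method, since the $(n+m)^{1.5}$ factor — a $\sqrt{n+m}$ iteration count multiplied by a near-linear per-iteration linear-algebra cost — is the signature of such methods, and no ellipsoid-type or combinatorial scheme attains it; this is essentially the route of the cited paper. First I would put $P$ into a standard primal–dual form, $\min c^\top x$ subject to $Ax = b,\ x \ge 0$, where $A$ has $O(n)$ rows and $O(n+m)$ columns after adding slacks, applying a homogenization / big-$M$ transformation so that a feasible interior starting point is known and infeasibility/unboundedness are detectable. Then I would introduce the logarithmic barrier $c^\top x - \nu^{-1}\mu\sum_j \log x_j$ and define the central path $\{x(\mu)\}_{\mu>0}$ as its constrained minimizers; one shows this is a well-defined smooth curve whose duality gap along it is $\Theta(\mu)$ with $\mu$ scaled so that the barrier parameter is $\nu \asymp n+m$, and that $x(\mu)$ converges to an optimal solution as $\mu\to 0$.

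Second, I would run a short-step (or predictor–corrector) scheme: keep an iterate $x$ in a fixed-size neighborhood of $x(\mu)$ measured by the proximity $\|X s/\mu - \mathbf{1}\|$ for the dual slack $s$, take one Newton step toward $x(\mu')$ with $\mu' = (1-\gamma/\sqrt{\nu})\mu$, and invoke the standard self-concordance / Newton-contraction estimate to show the iterate stays in the neighborhood. Hence $O(\sqrt{n+m}\,\log(\mu_0/\mu_{\mathrm{fin}}))$ iterations suffice; taking $\mu_{\mathrm{fin}} = 2^{-\Theta(L)}$ and then rounding to the optimal vertex (whose identifiability at that gap is a standard LP bit-complexity lemma) gives $O(\sqrt{n+m}\,L)$ iterations in total, with the rounding step itself polynomial.

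Third, and this is where the real work lies, I would bound the cost of a single iteration. Each iteration reduces to solving the normal equations $A D A^\top \Delta = r$ for a diagonal scaling $D$ that varies between consecutive iterates; recomputing this from scratch costs $O((n+m)n^2)$ and yields only the weaker bound $O((n+m)^{1.5}n^2 L)$. Vaidya's improvement rests on two points: (i) inside the central-path neighborhood the entries of $D$ change slowly, so across a block of iterations only few of them move by more than a constant factor, which lets one maintain an accurate approximation of $(A D A^\top)^{-1}$ by \emph{batched low-rank updates} instead of recomputation; and (ii) these updates and the accumulated matrix products are carried out with fast sub-cubic matrix multiplication. An amortized accounting of the cumulative update rank against the iteration count, combined with the fast-matrix-multiplication exponent bookkeeping, brings the amortized per-iteration cost down to $O((n+m)n)$, which multiplied by the $O(\sqrt{n+m}\,L)$ iteration bound gives the claimed $O((n+m)^{1.5}nL)$.

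I expect the \textbf{main obstacle} to be exactly this amortized update analysis: quantifying precisely how much $D$ can move per iteration so that batching genuinely beats recomputation, and threading the fast-matrix-multiplication exponents through the bookkeeping so the final bound collapses to the stated form. By contrast, the Newton-neighborhood invariant, the iteration count, and the termination-and-rounding step are comparatively routine once the interior-point toolkit is in place, and in the present paper the theorem is only invoked as a black box, so for our purposes citing \cite{vaidya_speeding-up_1989} and recording the interface above would suffice.
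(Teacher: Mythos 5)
The paper gives no proof of this theorem: it is imported verbatim from Vaidya's 1989 paper and used purely as a black box, which is exactly the treatment you arrive at in your final paragraph. Your interior-point sketch (the $O(\sqrt{n+m}\,L)$ path-following iteration count combined with amortized partial updates of the normal-equation inverse via fast matrix multiplication) is a faithful outline of Vaidya's actual argument, so citing \cite{vaidya_speeding-up_1989} and recording the stated interface is all that is needed here.
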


The following section will perform the complexity study for zero-sum matching games. The study for infinitely repeated matching games and strictly competitive matching games are included, respectively, in \Cref{sec:complexity_infinitely_repeated_games} and \Cref{sec:complexity_strictly_competitive_games} in the appendix. The rest of this section is devoted to discuss the complexity of computing pairwise-Nash stable allocations in the matching games without commitment model also introduced in \cite{garrido2025stable}, where agents cannot commit to play their strategies and, therefore, any stable allocation forces the couples to play Nash equilibria.

\subsection{Complexity of matching games without commitment}

Let $\Gamma$ be a matching game and suppose, for simplicity, that hospitals have unit quotas, i.e., $\Gamma$ is a one-to-one matching game. An allocation $\pi = (\mu,\vec{x},\vec{y})$ is called \textbf{pairwise-Nash stable} (Definition 8 \cite{garrido2025stable}) if (1) all couples play a Nash equilibrium of their game and (2) no couple of agents can match together, play a Nash equilibrium in their game, and end up better-off. 

Finding pairwise-Nash stable allocations can be directly done by using the deferred-acceptance with competitions algorithm (\Cref{Algo:Propose_dispose_algo_general_case}) restricting the strategy profiles of the couples to be in their set of Nash equilibria. In particular, the complexity of solving any of the related optimization problems boils down to the complexity of computing Nash equilibria that guarantee a certain given utility to each of the players (the reservation payoffs). The computation of a Nash equilibria in finite games with mixed strategies belongs to the complexity class PPAD \cite{papadimitriou_complexity_2007}. Moreover, when trying to compute a Nash equilibrium that guarantees at least some utility $\delta$ to each of the players, the problem becomes NP-complete. Still, and as we will see in \Cref{sec:complexity_zero_sum_games} (and \Cref{sec:complexity_infinitely_repeated_games,sec:complexity_strictly_competitive_games} in the appendix) for matching games under commitment, the computation of pairwise-Nash stable allocations becomes a polynomial problem. 

\section{Complexity study for Zero-sum matching games}\label{sec:complexity_zero_sum_games}

Consider a matching game $\Gamma$ in which all strategic games are finite zero-sum matrix games in mixed strategies, from now on, a zero-sum matching game. Note that $\Gamma$ can be either a additive separable matching game or roommates matching game. When needed, we will specify the submodel considered.

We study the complexity of the three algorithms recalled in the previous section: \textit{deferred-acceptance with competitions}, \textit{market procedure}, and \textit{renegotiation process}, when the matching game $\Gamma$ is a zero-sum matching game (with the corresponding submodel). Then, \ref{sec:complexity_strictly_competitive_games} will extend the results to strictly competitive matching games. The following subsections will split the analysis for each algorithm. All the presented results will use the following main theorem.

\begin{theorem}\label{teo:complexity_of_solving_the_opt_problem_zero_sum_case}
Let $G = (X,Y,A,B)$ be a finite zero-sum game in mixed strategies, where $X = \Delta(S)$, $Y = \Delta(T)$ are simplexes with $S,T$ pure strategy sets, and $A,B$ are payoff matrices. Given a vector $c$, the QCQP Problem (\ref{def:general_opt_problem}),
\begin{align*}
\max\{ x A y \mid xBy \geq c, x \in X, y \in Y\},
\end{align*}
can be solved in $\mathcal{O}(|S|\cdot|T|)$ comparisons.
\end{theorem}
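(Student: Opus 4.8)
The game is zero-sum, so $B = -A$ (up to the sign convention; write $B=-A$ after normalizing), and the constraint $xBy \ge c$ becomes $xAy \le -c$. Hence Problem~(\ref{def:general_opt_problem}) reads
\begin{align*}
\max\{ xAy \mid xAy \le -c,\ x \in X,\ y \in Y \}.
\end{align*}
The key observation is that the \emph{objective and the constrained quantity are the same bilinear form} $\varphi(x,y) := xAy$. So the problem is just: find the largest achievable value of $\varphi$ that does not exceed the threshold $-c$. The first step is therefore to determine the range of $\varphi$ over $X\times Y$. Since $X=\Delta(S)$ and $Y=\Delta(T)$ are simplexes, $\varphi$ attains its minimum and maximum at vertices, so
\begin{align*}
\varphi_{\min} = \min_{s\in S,\,t\in T} A(s,t), \qquad \varphi_{\max} = \max_{s\in S,\,t\in T} A(s,t),
\end{align*}
each computable in $\mathcal{O}(|S|\cdot|T|)$ comparisons, and by bilinearity (convex combinations on each side, connectedness of $X\times Y$) the set of values $\{\varphi(x,y) : x\in X, y\in Y\}$ is exactly the interval $[\varphi_{\min},\varphi_{\max}]$.

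Given this, I would case-split on the threshold $-c$. If $-c \ge \varphi_{\max}$, the constraint is vacuous and the optimum is $\varphi_{\max}$, attained at the corresponding pair of vertices. If $-c < \varphi_{\min}$, the problem is infeasible. In the remaining case $\varphi_{\min} \le -c \le \varphi_{\max}$, the value $-c$ itself is attainable by some $(x,y)$, and since we are maximizing $\varphi$ subject to $\varphi \le -c$, the optimum value is exactly $-c$. To produce a witness, pick vertices $(s_{\min},t_{\min})$ and $(s_{\max},t_{\max})$ achieving $\varphi_{\min}$ and $\varphi_{\max}$ respectively; restricting $x$ to the segment between $\delta_{s_{\min}}$ and $\delta_{s_{\max}}$ while holding $y$ fixed appropriately (or more simply, moving along a one-parameter family joining the two vertex pairs) makes $\varphi$ sweep continuously from $\varphi_{\min}$ to $\varphi_{\max}$, so an intermediate-value argument yields an explicit $(x^\star,y^\star)$ with $\varphi(x^\star,y^\star) = -c$; the mixing coefficient is obtained by solving one linear (in fact affine) equation in one scalar. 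All of this costs $\mathcal{O}(|S|\cdot|T|)$ to locate the extremal entries plus $\mathcal{O}(1)$ to solve for the coefficient, giving the claimed bound.

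The only subtlety — and the single place the argument could be said to have ``content'' — is verifying that the image of the bilinear form over the product of simplexes is genuinely the full interval $[\varphi_{\min},\varphi_{\max}]$, i.e.\ that there are no gaps. This follows because $X\times Y$ is connected (indeed convex) and $\varphi$ is continuous, so the image is a connected subset of $\mathbb{R}$ containing both endpoints, hence the whole interval; the attainment of the endpoints at vertices is the standard fact that a bilinear (hence, with one argument fixed, linear) function on a simplex is optimized at an extreme point. I expect this to be the main ``obstacle'' only in the sense of needing to be stated carefully; everything else is bookkeeping. A remark worth adding is that one must also handle the degenerate possibility $c$ being a vector rather than a scalar as written in~(\ref{def:general_opt_problem}) — here $c$ is a single real threshold, consistent with the scalar constraint $xBy \ge c$, so the statement's ``vector $c$'' should be read as a scalar and the proof is unaffected.
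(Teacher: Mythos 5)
Your argument is correct in substance but organized differently from the paper's. The paper first proves \Cref{prop:the_opt_problem_has_always_a_pure_solution} (a solution with one pure component always exists) by a case analysis on whether some row of $A$ straddles the threshold, and then obtains the complexity bound by computing the sets $S^+$ and $S^-$ of rows with an entry above, respectively below, the threshold; your route instead observes that the image of the bilinear form over $\Delta(S)\times\Delta(T)$ is exactly the interval $[\min A,\max A]$ (convexity/connectedness plus attainment of the endpoints at vertices), reduces the problem to hitting the value $\min\{-c,\max A\}$, and produces a witness by an intermediate-value argument after locating the extremal entries in $\mathcal{O}(|S|\cdot|T|)$ comparisons. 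This buys a cleaner existence argument (no case split on straddling rows), while the paper's version buys an explicit closed-form mixing coefficient, \Cref{eq:lambda_first_case} or \Cref{eq:lambda_second_case}, and a witness that is automatically pure on one side. One imprecision to repair in your write-up: holding $y$ fixed at a single vertex while moving $x$ between $\delta_{s_{\min}}$ and $\delta_{s_{\max}}$ need not sweep past the target (e.g.\ with $y=\delta_{t_{\max}}$ the sweep only covers $[A(s_{\min},t_{\max}),\max A]$), and the diagonal one-parameter family you offer as an alternative makes $\varphi$ quadratic, not affine, in the parameter. The clean fix is a two-leg path, first moving $x$ from $\delta_{s_{\min}}$ to $\delta_{s_{\max}}$ with $y=\delta_{t_{\min}}$ fixed, then moving $y$ from $\delta_{t_{\min}}$ to $\delta_{t_{\max}}$ with $x=\delta_{s_{\max}}$ fixed; the two swept intervals share the endpoint $A(s_{\max},t_{\min})$ and jointly cover $[\min A,\max A]$, each leg is affine in its parameter, and the resulting witness has one pure side, matching the shape of the paper's solution. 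Either repair costs $\mathcal{O}(1)$ extra work, so the claimed $\mathcal{O}(|S|\cdot|T|)$ bound stands.
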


To prove \Cref{teo:complexity_of_solving_the_opt_problem_zero_sum_case} we need a preliminary result. Note that, first of all, since $G$ is a zero-sum game, the QCQP Problem (\ref{def:general_opt_problem}) can be rewritten as
\begin{align}
\label{def:general_opt_problem_zero_sum}
\max\{ x A y \mid xAy \leq c, x \in X, y \in Y\}.
\end{align}

Therefore, solving the previous optimization problem is equivalent to finding a strategy profile $(x,y)$ such that $xAy = \min\{c, \max A\}$, where $\max A := \max_{s,t}A(s,t)$ and $\min A := \min_{s,t}A(s,t)$. Without loss of generality it can be always considered $\min A \leq c \leq \max A$ since replacing $c$ by $\min\{c, \max A\}$ does not change at all Problem (\ref{def:general_opt_problem_zero_sum}) and for $c \sm \min A$, the problem is infeasible.

\begin{lemma}\label{prop:the_opt_problem_has_always_a_pure_solution}
Given a matrix payoff $A$ and $c \in \mathbb{R}$, with $\min A \leq c \leq \max A$, there always exists $(x,y) \in X \times Y$, such that $xAy = c$, with $x$ or $y$ being a pure strategy.
\end{lemma}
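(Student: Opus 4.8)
The plan is to exploit the bilinearity of the map $(x,y)\mapsto xAy$ together with a connectedness/intermediate-value argument. First I would fix a pure strategy $s^* \in S$ achieving $\max_{s,t} A(s,t)$, say at $(s^*,t^*)$, and a pure pair $(s_-,t_-)$ achieving $\min_{s,t}A(s,t)$. The key observation is that if I restrict to one player playing a pure strategy, the payoff becomes an \emph{affine} (hence continuous) function of the other player's mixed strategy over a connected domain (a simplex), so the set of attainable values is an interval.

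The main steps, in order: (1) Note $x^{*}A y^{*} = \max A \geq c$ when both players play the pure strategies $s^*, t^*$, and similarly the pure pair $(s_-,t_-)$ yields $\min A \leq c$. (2) Consider moving continuously in the strategy space from a configuration with payoff $\geq c$ to one with payoff $\leq c$ while keeping at least one player pure. Concretely: start at the pure profile $(e_{s^*}, e_{t^*})$ with value $\max A \geq c$; keeping $x = e_{s^*}$ fixed (a pure strategy for the row player), vary $y$ along the segment from $e_{t^*}$ toward $e_{t'}$ where $t'$ is a column minimizing $A(s^*,\cdot)$. Along this segment the value $e_{s^*}A y$ traces the full interval $[\min_t A(s^*,t), \max_t A(s^*,t)]$ by the intermediate value theorem, since it is affine in $y$. (3) If $c$ lies in $[\min_t A(s^*,t), \max_t A(s^*,t)]$ we are done with $x$ pure. (4) Otherwise $c < \min_t A(s^*,t)$ (it cannot exceed $\max_t A(s^*,t) = \max A$). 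Then I would instead work with the column player fixed pure: pick the column $t_-$ achieving the global minimum in some row $s_-$, so $e_{s_-} A e_{t_-} = \min A \leq c$, while for the column $\hat t$ chosen so that $\max_s A(s,\hat t)$ is as large as possible we have $\max_s A(s,\hat t) \geq $ something $\geq c$ — more carefully, I would argue there exists a column $t_0$ with $\min_s A(s,t_0) \leq c \leq \max_s A(s,t_0)$, and then vary $x$ with $y = e_{t_0}$ fixed, again invoking the intermediate value theorem to hit $c$ exactly with $x$ the varying mixed strategy and $y$ pure.

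The cleanest way to organize (4) is: define for each column $t$ the interval $J_t := [\min_s A(s,t), \max_s A(s,t)]$ and for each row $s$ the interval $I_s := [\min_t A(s,t), \max_t A(s,t)]$. Since $\bigcup_s \max_t A(s,t)$... rather, since $\max A = \max_s \max_t A(s,t) \in I_{s^*}$ and $\min A \in I_{s_-}$ (and similarly for columns), and since the union of the left endpoints and the union of the right endpoints straddle $c$, I claim $c$ must lie in some $I_s$ \emph{or} some $J_t$. Indeed, if $c \notin I_s$ for every row $s$, then for each $s$ either $c < \min_t A(s,t)$ or $c > \max_t A(s,t)$; taking $s^*$ we get $c < \min_t A(s^*,t)$, so in particular the column $t^\dagger$ achieving $\min_t A(s^*,t)$ satisfies $A(s^*,t^\dagger) > c$, i.e. $\max_s A(s,t^\dagger) > c$. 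Meanwhile the global minimum entry lies in column $t_-$ with $\min_s A(s,t_-) = $ the value $\leq$ ... this is where I must be slightly careful to guarantee the \emph{same} column's interval $J_{t}$ contains $c$; the right statement is that the map $t \mapsto \max_s A(s,t)$ takes a value $\geq c$ (at $t^\dagger$) and the map $t\mapsto \min_s A(s,t)$ takes a value $\leq c$ (at the column of the global min), so by discreteness there is a column $t_0$ with $\min_s A(s,t_0) \leq c \leq \max_s A(s,t_0)$ — this uses that these are just finitely many reals and that on column $t^\dagger$, $\min_s A(s,t^\dagger) \le A(s^*,t^\dagger)$... Actually the rigorous route: if no $I_s \ni c$, then $c < \min A(s,\cdot)$ for $s = s^*$; I then run the intermediate-value argument on columns directly, noting $\min_s A(s,\cdot)$ evaluated across columns ranges over values both $\le c$ and (since some $A(s^*,t) > c$ forces nothing about mins) — so I should instead argue: the function $\phi(y) = \max_s (e_s A y)$... hmm.

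The cleanest rescue, which I would actually adopt: argue by contradiction. Suppose for all pure $x = e_s$, no $y$ gives $e_s A y = c$; since $y\mapsto e_s A y$ is continuous on the connected simplex $Y$, its image is the interval $I_s$, so $c \notin I_s$ for all $s$, meaning each $I_s$ lies entirely above $c$ or entirely below. Similarly for all pure $y=e_t$, $c\notin J_t$, each $J_t$ entirely above or below $c$. But $\max A$ is an entry $A(s^*,t^*)$, so $\max A \in I_{s^*}$ forcing $I_{s^*}$ above $c$, i.e. $A(s^*,t) \ge $ the left endpoint $> c$... wait $\min_t A(s^*,t) > c$; in particular $A(s^*,t^*) > c$, consistent. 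Also $A(s^*,t^*) \in J_{t^*}$ so $J_{t^*}$ is above $c$, i.e. $\min_s A(s,t^*) > c$. Symmetrically $\min A = A(s_-, t_-) < c$ forces $I_{s_-}$ below $c$ ($\max_t A(s_-,t) < c$) and $J_{t_-}$ below $c$ ($\max_s A(s,t_-) < c$). Now consider the entry $A(s^*, t_-)$: it lies in $I_{s^*}$ (above $c$) and in $J_{t_-}$ (below $c$) — contradiction. Hence such a pure-$x$ (or, symmetrically, the symmetric case also yields a contradiction the same way, but actually this single contradiction already suffices) configuration exists. I will write it this way.

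\begin{proof}
Without loss of generality assume $\min A \le c \le \max A$, since replacing $c$ by $\min\{c,\max A\}$ leaves Problem (\ref{def:general_opt_problem_zero_sum}) unchanged and for $c < \min A$ the problem is infeasible. For a pure strategy $e_s$ (the $s$-th vertex of $\Delta(S)$), the function $y \mapsto e_s A y$ is affine on the simplex $Y = \Delta(T)$, hence continuous with connected domain, so its image is the interval $I_s := [\min_{t \in T} A(s,t),\ \max_{t \in T} A(s,t)]$. Symmetrically, for a pure strategy $e_t$ the image of $x \mapsto x A e_t$ is $J_t := [\min_{s \in S} A(s,t),\ \max_{s \in S} A(s,t)]$.

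Suppose, for contradiction, that $c \notin I_s$ for every $s \in S$ and $c \notin J_t$ for every $t \in T$; equivalently, each $I_s$ and each $J_t$ lies entirely above $c$ or entirely below $c$. Pick $(s^*,t^*)$ with $A(s^*,t^*) = \max A \ge c$ and $(s_-,t_-)$ with $A(s_-,t_-) = \min A \le c$. Since $A(s^*,t^*) \in I_{s^*}$ and $A(s^*,t^*) \ge c$, the interval $I_{s^*}$ lies above $c$, so in particular $\min_{t} A(s^*,t) > c$, whence $A(s^*,t_-) > c$. On the other hand, since $A(s_-,t_-) \in J_{t_-}$ and $A(s_-,t_-) \le c$, the interval $J_{t_-}$ lies below $c$, so $\max_{s} A(s,t_-) < c$, whence $A(s^*,t_-) < c$. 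This contradicts $A(s^*,t_-) > c$.

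Therefore there exists either $s^0 \in S$ with $c \in I_{s^0}$, or $t^0 \in T$ with $c \in J_{t^0}$. In the first case, by the intermediate value theorem applied to the affine map $y \mapsto e_{s^0} A y$ on $Y$, there is $y \in Y$ with $e_{s^0} A y = c$; set $x = e_{s^0}$. In the second case, similarly there is $x \in X$ with $x A e_{t^0} = c$; set $y = e_{t^0}$. In either case $(x,y) \in X \times Y$ satisfies $xAy = c$ with $x$ or $y$ pure.
\end{proof}
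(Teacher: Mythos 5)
Your argument is correct and follows essentially the same route as the paper: either some row's entries straddle $c$ (so the row player stays pure and the column player mixes) or, failing that, some column's entries must straddle $c$ (so the column player stays pure and the row player mixes), which is exactly the paper's dichotomy, just packaged as a contradiction plus the intermediate value theorem instead of the paper's explicit two-entry convex combination with the weight $\lambda$ written out. The only (minor) loss is constructiveness: the paper's explicit $\lambda$ is what gets reused in the subsequent complexity bound, whereas your IVT step would need to be unwound into that two-point mixture for the algorithmic statement.
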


\begin{proof}
Let $s \in S$ be a pure strategy for player $1$ in $G$, such that there exist $t,t' \in T$, with $A(s,t) \leq c \leq A(s,t')$. Then, there exists $\lambda \in [0,1]$ such that $\lambda A(s,t) + (1-\lambda)A(s,t') = c$. Even more, $\lambda$ is explicitly given by
\begin{align}\label{eq:lambda_first_case}
    \lambda = \frac{c - A(s,t)}{A(s,t') - A(s,t)}.    
\end{align}

Suppose that such a pure strategy $s$ does not exist, so for any $s \in S$, either $A(s,t) \leq c, \forall t \in T$, or $A(s,t) \geq c, \forall t \in T$. Let $t \in T$ be any pure strategy of player $2$. Then, since $\min A \leq c \leq \max A$, there exists $s,s' \in S$ such that $A(s,t) \leq c \leq A(s',t)$. Thus, considering $\lambda$ given by,
\begin{align}\label{eq:lambda_second_case}
    \lambda = \frac{c - A(s,t)}{A(s',t) - A(s,t)},    
\end{align}
it holds that $\lambda A(s,t) + (1-\lambda)A(s',t) = c$.
\end{proof}


We are ready to prove the complexity of solving the QCQP problem for a zero-sum game
(\Cref{teo:complexity_of_solving_the_opt_problem_zero_sum_case}).

\begin{proof}[Proof of \Cref{teo:complexity_of_solving_the_opt_problem_zero_sum_case}.]
The complexity of solving the QCQP Problem (\ref{def:general_opt_problem_zero_sum}) corresponds to the one of finding the pure strategies used in the convex combination of \Cref{prop:the_opt_problem_has_always_a_pure_solution}'s proof and then computing the corresponding $\lambda$. Let 
$$S^+ := \{s \in S : \exists t \in T, A(s,t) \geq c\} \text{ and } S^- := \{s \in S : \exists t \in T, A(s,t) \leq c \}.$$
These sets are computed in $|S|\cdot|T|$ comparisons, as in the worst case we have to check all coefficients in $A$. As $\min A \leq c \leq \max A$, both sets are non-empty. If $S^+ \cap S^- \neq \emptyset$, there exist $s \in S$ and $t,t' \in T$ such that $A(s,t) \leq c \leq A(s,t')$, so \Cref{eq:lambda_first_case} gives the solution desired. Otherwise, there exists $t \in T$ and $s,s' \in S$ such that $A(s,t) \leq c \leq A(s',t)$, and \Cref{eq:lambda_second_case} gives the solution desired. Computing the intersection of $S^+$ and $S^-$ has complexity $\mathcal{O}(|S|)$. In either case (the intersection is empty or not), finding the pure strategies needed for the convex combination takes at most $|T|$ comparisons. Finally, computing $\lambda$ requires a constant number of operations on the sizes of the strategy sets. Adding all up, we obtain the stated result.
\end{proof}

\subsection{Deferred-acceptance with competitions algorithm}\label{sec:deferred_acceptance_algo_zero_sum_games}

Suppose $\Gamma$ is a zero-sum additive separable matching game. We aim to prove the following result.

\begin{theorem}[Complexity]\label{teo:complexity_propose_dispose_algo_zero_sum_case}
Let $d \in D$ be a proposer doctor. Let $h$ be the proposed hospital and $d'$ be the doctor that $d$ wants to replace. If $d$ is the winner of the competition, the entire iteration of the DAC algorithm (\Cref{Algo:Propose_dispose_algo_additive_separable_case}) has complexity,
$$\mathcal{O}\left(\left[|H|\cdot|D| + (|S_d| + |S_{d'}|)\cdot\sum_{h' \in H}|T_{h'}|\right]L \right),$$
where $L$ represents the number of bits required to encode all the data.
\end{theorem}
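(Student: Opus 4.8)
The plan is to decompose a full iteration of the DAC algorithm into its constituent optimization problems and bound each one using \Cref{teo:complexity_of_solving_the_opt_problem_zero_sum_case}, which states that a single QCQP of the form (\ref{def:general_opt_problem}) over a zero-sum bi-matrix game $G_{d,h} = (X_d,Y_h,A_{d,h},B_{d,h})$ costs $\mathcal{O}(|S_d|\cdot|T_h|)$ comparisons. An iteration where $d$ proposes, competes at $h$, and wins consists of: (1) computing $d$'s optimal proposal by solving Problem (\ref{eq:optimal_proposal_problem_additive_separable}), which splits into $|H_0|$ independent QCQP subproblems, one per candidate hospital $h'$; (2) once the competition at $h$ is triggered, computing the reservation payoffs $\beta_d$ and $\beta_{d'}$, each again a maximization over $|H_0|\setminus\{h\}$ hospitals, i.e. $|H_0|-1$ QCQP subproblems apiece; (3) computing the bids $\lambda_d$ and $\lambda_{d'}$, each a single QCQP over the game $G_{d,h}$ (resp. $G_{d',h}$); and (4) the winner (here $d$) computing her final reduced proposal, one more QCQP over $G_{d,h}$. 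I would also account for the bookkeeping steps — evaluating $\min_{d'\in\mu(h)} g_{d',h}$ for the constraint right-hand sides, and the argmax over hospitals to pick the best proposal — which contribute the $\mathcal{O}(|H|\cdot|D|)$ term (at most $|D|$ doctors per hospital, checked across $|H|$ hospitals).

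First I would handle step (1). By \Cref{teo:complexity_of_solving_the_opt_problem_zero_sum_case}, each subproblem at $h'$ costs $\mathcal{O}(|S_d|\cdot|T_{h'}|)$ plus the cost of forming the constraint constant $\min_{d'\in\mu(h')} x_{d'}B_{d',h'}y_{d,h}$; since $|\mu(h')|\le |D|$ and each inner product is $\mathcal{O}(|S_{d'}|\cdot|T_{h'}|)$ but the strategies $x_{d'},y_{d',h'}$ are fixed scalars from the current allocation, a careful count shows the per-hospital overhead is dominated by either $|S_d|\cdot|T_{h'}|$ or by the $\mathcal{O}(|D|)$ lookup. Summing over $h'\in H$ gives $\mathcal{O}\bigl(|S_d|\sum_{h'\in H}|T_{h'}| + |H|\cdot|D|\bigr)$. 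Step (2) is structurally identical but run for both $d$ and $d'$, contributing $\mathcal{O}\bigl((|S_d|+|S_{d'}|)\sum_{h'\in H}|T_{h'}| + |H|\cdot|D|\bigr)$. Steps (3) and (4) are single QCQPs over $G_{d,h}$ or $G_{d',h}$, costing $\mathcal{O}(|S_d|\cdot|T_h| + |S_{d'}|\cdot|T_h|)$, which is absorbed into the sums already present since $|T_h|\le \sum_{h'\in H}|T_{h'}|$. Adding all contributions and multiplying by the bit-length factor $L$ (each comparison or arithmetic operation on $L$-bit data costs $\mathcal{O}(L)$, and we assume the magnitudes stay within $L$ bits throughout) yields exactly the claimed bound
$$\mathcal{O}\!\left(\left[|H|\cdot|D| + (|S_d|+|S_{d'}|)\cdot\textstyle\sum_{h'\in H}|T_{h'}|\right]L\right).$$

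The main obstacle I anticipate is bookkeeping the constraint-constant computations cleanly: the right-hand sides $\min_{d'\in\mu(h')} g_{d',h'}(x_{d'},y_{d',h'})$ appear in every subproblem, and one must argue they are either precomputed once (amortized $\mathcal{O}(|H|\cdot|D|)$ total, since $\sum_{h'}|\mu(h')|\le|D|$) or cheap enough not to dominate — otherwise a naive count would introduce spurious factors of $|D|$ multiplying the $|S|\cdot|T|$ terms. A second subtlety is ensuring that $|H_0| = |H|+1$ and $|D_0|=|D|+1$ only change the bounds by constants, so the empty players $d_0,h_0$ can be folded in without affecting the asymptotics. Beyond these accounting points, the argument is a routine aggregation: the real content is \Cref{teo:complexity_of_solving_the_opt_problem_zero_sum_case}, and this theorem is essentially its corollary for a full algorithmic step.
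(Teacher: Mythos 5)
Your proposal is correct and follows essentially the same route as the paper, which establishes the bound by splitting the iteration into the optimal proposal, the competitors' reservation payoffs and bids, and the winner's final problem (Corollaries \ref{cor:optimal_proposal_complexity}, \ref{cor:bid_complexity}, \ref{cor:second_price_complexity}), each bounded via \Cref{teo:complexity_of_solving_the_opt_problem_zero_sum_case}, and then summing and multiplying by $L$. Your extra care about amortizing the right-hand-side computations and folding in $d_0,h_0$ is consistent with, and slightly more explicit than, the paper's account.
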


The proof of \Cref{teo:complexity_propose_dispose_algo_zero_sum_case} is split in several results, each of them being a corollary of the complexity result for the general QCQP problem (\Cref{teo:complexity_of_solving_the_opt_problem_zero_sum_case}). 

\begin{corollary}\label{cor:optimal_proposal_complexity}
$d$'s optimal proposal can be computed in
$$\mathcal{O}\left(|H|\cdot|D| + |S_d|\cdot \sum_{h' \in H} |T_{h'}|\right)$$
comparisons.
\end{corollary}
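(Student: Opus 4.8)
The plan is to exploit the block structure of Problem (\ref{eq:optimal_proposal_problem_additive_separable}): although it is written as one optimization over all $h \in H_0$, its objective and constraint for a fixed hospital $h$ involve only the data $(X_d, Y_h, A_{d,h}, B_{d,h})$, so the problem splits into $|H_0|$ independent subproblems — one per hospital — followed by a single maximum over their values together with $\underline{f}_d$ (the payoff of proposing to $h_0$). First I would form, for each $h \in H$, the threshold
$$c_h := \min_{d' \in \mu(h)} g_{d',h}(x_{d'}, y_{d',h}) + \varepsilon$$
(replaced by the IRP-based threshold when $h$ is strictly below its quota). Since the DAC algorithm maintains, as part of the current allocation $\pi$, the contribution $g_{d',h}(x_{d'}, y_{d',h})$ of every doctor $d' \in \mu(h)$, evaluating $c_h$ is a minimum over at most $|\mu(h)| \leq |D|$ stored numbers, costing $\mathcal{O}(|D|)$ comparisons per hospital and $\mathcal{O}(|H|\cdot|D|)$ in total; this yields the first term of the bound.

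Next, for each $h \in H$ I would solve $\max\{ wA_{d,h}z \mid wB_{d,h}z \geq c_h,\ w \in X_d,\ z \in Y_h\}$. As $G_{d,h}$ is zero-sum, $B_{d,h} = -A_{d,h}$, so this has exactly the shape of Problem (\ref{def:general_opt_problem_zero_sum}), and by \Cref{teo:complexity_of_solving_the_opt_problem_zero_sum_case} it is solved in $\mathcal{O}(|S_d|\cdot|T_h|)$ comparisons, returning an optimizer $(w_h, z_h)$ in which one component is a pure strategy and the other mixes at most two pure strategies. Summing over $h \in H$ gives $\mathcal{O}\bigl(|S_d|\cdot\sum_{h' \in H}|T_{h'}|\bigr)$ comparisons — the second term; the subproblem attached to $h_0$ is trivial, its value being $\underline{f}_d$.

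Finally I would take the maximum of the $|H|+1$ values $\{\underline{f}_d\}\cup\{w_h A_{d,h} z_h : h \in H\}$, which reconstructs $d$'s optimal proposal: the winning hospital $h$, the doctor $d'$ realizing the minimum in $c_h$ (when $h$ is at quota), and the profile $(w_h, z_h)$. This last step costs $\mathcal{O}(|H|)$ further comparisons and is absorbed by the earlier terms, so the three contributions add to $\mathcal{O}\bigl(|H|\cdot|D| + |S_d|\cdot\sum_{h' \in H}|T_{h'}|\bigr)$, as claimed.

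The one place where a naive estimate would blow up is the threshold step: one must rely on the per-doctor contributions $g_{d',h}(x_{d'}, y_{d',h})$ being cached in the allocation — the cost of computing each such bilinear form is charged to the iteration that placed $d'$ at $h$, not to the present one — so that forming all the $c_h$ is a scan over stored numbers rather than a re-evaluation of $|D|$ bilinear forms. Everything else is a direct, per-hospital application of \Cref{teo:complexity_of_solving_the_opt_problem_zero_sum_case}.
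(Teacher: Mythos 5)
Your proposal is correct and follows essentially the same route as the paper: split Problem (\ref{eq:optimal_proposal_problem_additive_separable}) into one zero-sum QCQP per hospital, charge $\mathcal{O}(|D|)$ comparisons per hospital for the threshold (the paper likewise treats the stored contributions as numbers to compare), invoke \Cref{teo:complexity_of_solving_the_opt_problem_zero_sum_case} for the $\mathcal{O}(|S_d|\cdot|T_{h'}|)$ per-hospital cost, and take the best of the $|H|$ (plus $h_0$) solutions. Your explicit remark that the per-doctor contributions must be cached only makes precise what the paper's comparison count implicitly assumes.
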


\begin{proof}
$d$'s optimal proposal is computed by solving,
\begin{align}
\label{eq:optimization_problem_proposer_zero_sum_game}
\max\{xA_{d,h'}y \mid xA_{d,h'}y \leq \max_{d' \in \mu(h')} x_{d'}A_{d',h'}y_{d',h'} - \varepsilon, h' \in H_0, x \in X_d, y \in Y_{h'}\},
\end{align}

Problem (\ref{eq:optimization_problem_proposer_zero_sum_game}) can be solved by dividing it in $|H|$ sub-problems (one per hospital) and taking the best of the $|H|$ solutions. Once computed the right-hand side of each subproblem, they get the structure of the general QCQP Problem \ref{def:general_opt_problem_zero_sum} so they need a polynomial number of comparisons to be solved (\Cref{teo:complexity_of_solving_the_opt_problem_zero_sum_case}). Computing the right-hand side for each of them takes $|D|$ comparisons in the worst case. The complexity stated comes from putting it all together.
\end{proof}

\begin{remark}
\textup{$d$'s reservation payoff when competing for $h$ can be computed by solving Problem (\ref{eq:optimization_problem_proposer_zero_sum_game}) leaving $h$ out of the feasible region. Therefore, its complexity is bounded by the one in \Cref{cor:optimal_proposal_complexity}.}
\end{remark}

\begin{corollary}\label{cor:bid_complexity}
The computation of the reservation payoff $\beta_d$ of doctor $d$ plus her bid $\lambda_d$ during a competition takes
$$\mathcal{O}\left(|H|\cdot|D| + |S_d|\cdot \sum_{h' \in H} |T_{h'}|\right)$$
comparisons.
\end{corollary}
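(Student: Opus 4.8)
The plan is to recognize that the statement to be proved—Corollary~\ref{cor:bid_complexity}—is about computing $d$'s reservation payoff $\beta_d$ together with her bid $\lambda_d$, and that both quantities are, by construction, optimal values of QCQP problems of the form \eqref{def:general_opt_problem_zero_sum}. So the whole argument reduces to invoking \Cref{teo:complexity_of_solving_the_opt_problem_zero_sum_case} twice and bookkeeping the cost of assembling the data of each sub-problem, exactly as in the proof of \Cref{cor:optimal_proposal_complexity}.

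First I would recall that the reservation payoff $\beta_d$ is the optimal value of
\begin{align*}
\max\bigl\{xA_{d,h'}y \mid xB_{d,h'}y \geq \textstyle\min_{d'\in\mu(h')} xB_{d',h'}y + \varepsilon,\ h' \in H_0\setminus\{h\},\ (x,y)\in X_d\times Y_{h'}\bigr\},
\end{align*}
which in the zero-sum case is Problem~\eqref{eq:optimization_problem_proposer_zero_sum_game} with $h$ excluded from the feasible region. As noted in the remark following \Cref{cor:optimal_proposal_complexity}, this has the same cost bound as the optimal proposal: decompose into at most $|H|$ single-hospital sub-problems, spend $\mathcal{O}(|D|)$ comparisons per hospital to compute the right-hand side $\min_{d'\in\mu(h')} x_{d'}A_{d',h'}y_{d',h'}$, then solve each resulting QCQP in $\mathcal{O}(|S_d|\cdot|T_{h'}|)$ comparisons by \Cref{teo:complexity_of_solving_the_opt_problem_zero_sum_case}. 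Summing over $h'$ gives $\mathcal{O}\bigl(|H|\cdot|D| + |S_d|\sum_{h'\in H}|T_{h'}|\bigr)$ comparisons for $\beta_d$.

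Next I would treat the bid $\lambda_d := \max\{xB_{d,h}y \mid xA_{d,h}y \geq \beta_d,\ (x,y)\in X_d\times Y_h\}$. In the zero-sum setting $B_{d,h} = -A_{d,h}$, so this is $-\min\{xA_{d,h}y \mid xA_{d,h}y \geq \beta_d\}$, i.e. again the constant-objective feasibility problem of finding $(x,y)$ with $xA_{d,h}y = \max\{\beta_d,\min A_{d,h}\}$ in the relevant direction—precisely the structure \eqref{def:general_opt_problem_zero_sum} for a single hospital $h$, solvable in $\mathcal{O}(|S_d|\cdot|T_h|)$ comparisons by \Cref{teo:complexity_of_solving_the_opt_problem_zero_sum_case} (here $\beta_d$ is already known, so there is no extra right-hand-side assembly beyond a constant). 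Adding this single-hospital term $\mathcal{O}(|S_d|\cdot|T_h|)$ is absorbed into $|S_d|\sum_{h'\in H}|T_{h'}|$, so the combined cost of computing $\beta_d$ then $\lambda_d$ is still $\mathcal{O}\bigl(|H|\cdot|D| + |S_d|\sum_{h'\in H}|T_{h'}|\bigr)$ comparisons, which is the claim.

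The only subtlety—and the step I'd flag as the one needing care rather than the one being hard—is making sure the bound is stated in \emph{comparisons} for the combinatorial QCQP-solving part while the final theorem statements carry the extra factor $L$ (bits to encode the data); here the corollary is phrased purely in comparisons, so I would keep everything at the comparison level and not introduce $L$, exactly paralleling \Cref{cor:optimal_proposal_complexity}. A secondary point is checking that excluding $h$ from the feasible set in the $\beta_d$ computation does not change the asymptotics (it only removes one sub-problem), and that the symmetric computation for $d'$ obeys the same bound with $|S_{d'}|$ in place of $|S_d|$, which is what will later let \Cref{teo:complexity_propose_dispose_algo_zero_sum_case} add the two. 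I do not expect any genuine obstacle: the corollary is a direct packaging of \Cref{teo:complexity_of_solving_the_opt_problem_zero_sum_case} together with elementary counting of the right-hand-side evaluations.
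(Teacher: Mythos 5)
Your proposal is correct and follows essentially the same route as the paper: the reservation payoff $\beta_d$ is bounded by the optimal-proposal complexity of \Cref{cor:optimal_proposal_complexity} (Problem \eqref{eq:optimization_problem_proposer_zero_sum_game} with $h$ excluded), and the bid reduces in the zero-sum case to the single QCQP $\min\{xA_{d,h}y \mid xA_{d,h}y \geq \beta_d\}$, solved in $\mathcal{O}(|S_d|\cdot|T_h|)$ comparisons via \Cref{teo:complexity_of_solving_the_opt_problem_zero_sum_case}, after which the two costs are added. Your extra remarks on the max-to-min conversion and on keeping the bound at the comparison level are consistent with, and no different in substance from, the paper's argument.
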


\begin{proof}
$d$'s bid is computed by,
\begin{align}
\label{eq:optimization_problem_max_zero_sum_game}
    \min\{xA_{d,h}y \mid xA_{d,h}y \geq \beta_d, x \in X_d, y \in Y_h\},
\end{align}
and takes $\mathcal{O}(|S_d|\cdot|T_h|)$ comparisons (\Cref{teo:complexity_of_solving_the_opt_problem_zero_sum_case}). Adding this to the complexity of computing $\beta_d$, we obtain the stated result.
\end{proof}

Finally, we study the optimization problem solved by the winner.

\begin{corollary}\label{cor:second_price_complexity}
The final strategy profile played by the winner of a competition can be computed in $\mathcal{O}\left(|S_d|\cdot|T_{h}|\right)$ comparisons.
\end{corollary}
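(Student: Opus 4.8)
The statement concerns the last optimization problem of a competition in the DAC algorithm, namely the one solved by the winner (say $d$) after the two bids $\beta_d,\beta_{d'},\lambda_d,\lambda_{d'}$ have been computed:
\begin{align*}
\max\left\{ x A_{d,h} y \mid x B_{d,h} y \geq \lambda_{d'},\, x \in X_d,\, y \in Y_h\right\}.
\end{align*}
Since $G_{d,h}$ is a finite zero-sum game in mixed strategies, $B_{d,h} = -A_{d,h}$, so the constraint $x B_{d,h} y \geq \lambda_{d'}$ becomes $x A_{d,h} y \leq -\lambda_{d'}$, and the whole program is exactly an instance of the general QCQP Problem~(\ref{def:general_opt_problem_zero_sum}) with matrix $A = A_{d,h}$ and threshold $c = -\lambda_{d'}$ (after the harmless truncation $c \leftarrow \min\{c,\max A_{d,h}\}$, and noting that if $c < \min A_{d,h}$ the program is infeasible, which cannot happen here because $d$ won the competition and $\lambda_{d'} \le \lambda_d$ with $\lambda_d$ a feasible objective value of $d$'s bid problem).

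The plan is therefore short: invoke \Cref{teo:complexity_of_solving_the_opt_problem_zero_sum_case} directly. That theorem says any QCQP of the form (\ref{def:general_opt_problem_zero_sum}) over the bi-matrix game $(X_d,Y_h,A_{d,h},B_{d,h})$ can be solved in $\mathcal{O}(|S_d|\cdot|T_h|)$ comparisons, by computing the sets $S^+$ and $S^-$, intersecting them, picking the two pure strategies that bracket the target value, and evaluating the explicit $\lambda$ from \Cref{eq:lambda_first_case} or \Cref{eq:lambda_second_case}. Applying this with $S = S_d$, $T = T_h$ gives exactly the claimed bound $\mathcal{O}(|S_d|\cdot|T_h|)$, with no extra overhead, because unlike the optimal-proposal and bid computations there is no maximization over hospitals and no right-hand side involving other doctors — the hospital $h$ and the target $\lambda_{d'}$ are already fixed inputs at this stage of the iteration.

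The only point that needs a sentence of care — and the closest thing to an obstacle — is the reduction of the constraint to the zero-sum normal form $x A_{d,h} y \le c$ and the verification that the truncation/feasibility bookkeeping of \Cref{teo:complexity_of_solving_the_opt_problem_zero_sum_case} applies, i.e. that $\min A_{d,h} \le -\lambda_{d'} $; this holds because $-\lambda_{d'} = \min_{x,y} x B_{d,h} y$ restricted appropriately is itself an attained payoff value of the game, hence lies in $[\min A_{d,h},\max A_{d,h}]$. Once this is noted, the corollary is an immediate specialization of \Cref{teo:complexity_of_solving_the_opt_problem_zero_sum_case}, and the constant number of arithmetic operations needed to form $c$ and evaluate $\lambda$ is absorbed into the bound.
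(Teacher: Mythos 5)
Your argument is essentially the paper's proof: the winner's program is recognized as an instance of the zero-sum QCQP Problem~(\ref{def:general_opt_problem_zero_sum}) and the $\mathcal{O}(|S_d|\cdot|T_h|)$ bound follows by invoking \Cref{teo:complexity_of_solving_the_opt_problem_zero_sum_case}. The only difference is a sign convention (the paper writes the winner's problem directly as $\max\{xA_{d,h}y \mid xA_{d,h}y \leq \lambda_{d'}\}$ with $\lambda_{d'}$ already expressed on the $A$-scale), and your additional feasibility remark is correct but not needed beyond that.
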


\begin{proof}
Let $\lambda_{d'}$ be the bid of $d'$. $d$ solves,
\begin{align}
\label{eq:optimization_problem_new_zero_sum_game}
\max\{xA_{d,h}y \mid xA_{d,h}y \leq \lambda_{d'}, x \in X_d, y \in Y_h\}.
\end{align}

Problem (\ref{eq:optimization_problem_new_zero_sum_game}) has the same structure of Problem (\ref{def:general_opt_problem_zero_sum}). Therefore, we can solved it in $\mathcal{O}(|S_d|\cdot|T_h|)$ comparisons.
\end{proof}

The complexity of an entire iteration of the DAC algorithm (\Cref{teo:complexity_propose_dispose_algo_zero_sum_case}) is obtained by adding up the complexity results given in \Cref{cor:optimal_proposal_complexity,cor:bid_complexity,cor:second_price_complexity}. We omit its formal proof.

\begin{remark}
If there are at most $N$ players in each side and at most $k$ pure strategies per player, \Cref{teo:complexity_propose_dispose_algo_zero_sum_case} proves that each iteration of the DAC algorithm (\Cref{Algo:Propose_dispose_algo_additive_separable_case}) takes $\mathcal{O}((N^2+k^2)L)$ number of elementary operations in being solved, hence is polynomial. As the number of iterations does not depend on the size of the problem but only on $\varepsilon$, we conclude that computing an $\varepsilon$-pairwise stable allocation for a one-to-many zero-sum matching game is a polynomial problem.
\end{remark}

\subsection{Market procedure}\label{sec:market_procedure_zero_sum}

Let $\Gamma$ be a zero-sum roommates matching game. Computing the demand sets of the doctors for a zero-sum roommates matching game becomes particularly easy. Consider $f \in \mathbb{R}^{|D|}$ and two doctors $(d_1,d_2) \in D \times D$. Note that,
\begin{align*}
    d_2 \in P_{d_1}(f) &\Longleftrightarrow f_{d_1} = u_{1,2}(f_{d_2})                       \Longleftrightarrow f_{d_1} = -f_{d_2}
                       \Longleftrightarrow f_{d_1} + f_{d_2} = 0.
\end{align*}

In other words, to compute the demand set of a given doctor $d$, it is enough with checking whether the sum of the payoff of both agents is equal to zero. We directly state the following result.

\begin{theorem}\label{teo:complexity_demand_sets_zero_sum_roommates_games}
Computing the demand sets of all doctors during an iteration of the market procedure, given the current payoff profile $f \in \mathbb{R}^{|D|}$, has complexity $\mathcal{O}\left(|D|^2 L\right)$, where $L$ is the number of bits required to encode all the data.
\end{theorem}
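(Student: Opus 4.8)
The plan is to exploit the observation preceding the statement, namely that for a zero-sum roommates matching game the partnership functions are simply $u_{1,2}(z) = -z$, so that the test $d_2 \in P_{d_1}(f)$ collapses to verifying the single scalar identity $f_{d_1} + f_{d_2} = 0$. This replaces the generically hard problem of inverting a partnership function (which in the general case would require solving a QCQP feasibility problem of the form in Problem~(\ref{def:general_opt_problem})) by one arithmetic comparison per ordered pair. The whole theorem is then a bookkeeping exercise counting elementary operations, weighted by the bit-length $L$ of the encoding, in the spirit of the complexity conventions already used for Theorem~\ref{teo:complexity_propose_dispose_algo_zero_sum_case}.

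Concretely, I would proceed as follows. First, recall explicitly that in a zero-sum roommates matching game $A_{d',d} = -A_{d,d'}^{\top}$ for every pair, hence for any allocation realizing payoffs $(f_{d}, f_{d'})$ within the couple $(d,d')$ one has $f_{d} = -f_{d'}$; therefore the partnership function satisfies $u_{d,d'}(z) = -z$ and $d' \in P_d(f)$ iff $f_d = u_{d,d'}(f_{d'}) = -f_{d'}$, i.e. iff $f_d + f_{d'} = 0$. Second, describe the algorithm: iterate over all ordered pairs $(d,d') \in D \times D$ with $d \neq d'$, compute the sum $f_d + f_{d'}$, and include $d'$ in $P_d(f)$ precisely when this sum vanishes. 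Third, count: there are $|D|(|D|-1) = \mathcal{O}(|D|^2)$ such pairs; each pair requires one addition and one comparison to zero on numbers occupying $\mathcal{O}(L)$ bits, which costs $\mathcal{O}(L)$ elementary (bit) operations. Multiplying, the total cost of constructing all the demand sets $\{P_d(f)\}_{d \in D}$ is $\mathcal{O}(|D|^2 L)$, which is exactly the claimed bound.

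The main (and essentially only) subtlety is not in the counting but in justifying that the zero-sum structure really forces $u_{d,d'}(z) = -z$ on the relevant range of payoffs: this uses that the feasibility region $X_d \times X_{d'}$ is the product of simplices and that $f_d(x_d,x_{d'}) = x_d A_{d,d'} x_{d'}$ while $f_{d'}(x_{d'},x_d) = x_{d'} A_{d',d} x_d = -x_d A_{d,d'} x_{d'}$, so the two payoffs are exact negatives for \emph{every} strategy profile, not merely for Pareto-optimal ones; consequently the partnership function is the single linear map $z \mapsto -z$ with no dependence on the couple, and the decreasing/continuous hypotheses on partnership functions from Section~\ref{sec:roommates_matching_games} are trivially met. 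Once this identification is in place, no QCQP need be solved at all, and the stated polynomial bound is immediate. I would therefore keep the proof to a couple of sentences recording this reduction and then the arithmetic tally above.
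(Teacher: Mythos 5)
Your proposal is correct and follows essentially the same route as the paper: the paper derives exactly the chain $d_2 \in P_{d_1}(f) \Leftrightarrow f_{d_1} = u_{1,2}(f_{d_2}) \Leftrightarrow f_{d_1} + f_{d_2} = 0$ from the zero-sum structure and then states the theorem directly, the complexity being just the $\mathcal{O}(|D|^2)$ pairwise comparisons of payoffs (no strategy profiles or QCQPs needed), weighted by the bit-length $L$. Your additional remark that the two payoffs are exact negatives for every strategy profile, so the partnership function is $z \mapsto -z$, is the same justification the paper uses implicitly.
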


The previous theorem exploits that to compute the doctors' demand sets, it is not needed to compute the strategy profiles but only to compare the agents' payoffs. 

\begin{theorem}\label{teo:complexity_implementation_roommates_problem}
Let $f \in \mathbb{R}^{|D|}$ be the output of the market procedure and suppose there exists a matching $\mu$ such that for any doctor $d \in D$, $f_{d} + f_{\mu(d)} = 0$ if $d$ is matched and $f_d = \underline{f}_d$ if $d$ is unmatched. Then, computing the strategy profile $\vec{x} \in X_D$ such that $f_{d,\mu(d)}(x_d,x_{\mu(d)}) = f_d$, for any $d$ matched, has complexity 
$$\mathcal{O}\biggl(\sum\nolimits_{(d_1,d_2) \in \mu}|S_{d_1}|\cdot|S_{d_2}| L\biggr),$$
where $L$ is the number of bits required to encode all the data.
\end{theorem}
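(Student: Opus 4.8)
The plan is to decompose the computation couple by couple and, for each matched pair, to collapse the two scalar payoff equations into a single one that is then solved by the routine underlying \Cref{teo:complexity_of_solving_the_opt_problem_zero_sum_case}. Fix $(d_1,d_2)\in\mu$ and write $A:=A_{d_1,d_2}$. Because the game played by $(d_1,d_2)$ is zero-sum, the payoff of $d_2$ equals $-$(the payoff of $d_1$) for every strategy profile, so for any $(x_{d_1},x_{d_2})$ the identity $x_{d_1}Ax_{d_2}=f_{d_1}$ already forces $f_{d_2,d_1}(x_{d_2},x_{d_1})=-f_{d_1}$, which coincides with $f_{d_2}$ by the hypothesis $f_{d_1}+f_{d_2}=0$. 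Hence for each couple it suffices to exhibit a profile attaining the single value $f_{d_1}$ in the game $G_{d_1,d_2}$.

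First I would check feasibility: since $f$ is realizable by $\mu$ (this is exactly the running hypothesis), there is by definition some profile whose $d_1$-payoff equals $f_{d_1}$, so in particular $\min A\le f_{d_1}\le\max A$. This places us precisely in the setting of \Cref{prop:the_opt_problem_has_always_a_pure_solution} with $c=f_{d_1}$: a profile $(x_{d_1},x_{d_2})$ with $x_{d_1}Ax_{d_2}=f_{d_1}$ and at least one of the two players using a pure strategy exists, and the algorithmic proof of \Cref{teo:complexity_of_solving_the_opt_problem_zero_sum_case} produces it. Concretely, one scans the entries of $A$ once to form the sets $S^{+}$ and $S^{-}$, at cost $|S_{d_1}|\cdot|S_{d_2}|$ comparisons; depending on whether $S^{+}\cap S^{-}$ is empty, one then locates in at most $|S_{d_1}|$ (resp. $|S_{d_2}|$) further comparisons the two entries of $A$ entering the convex combination; finally the mixing weight $\lambda$ is given in closed form by \Cref{eq:lambda_first_case} or \Cref{eq:lambda_second_case}.

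It remains to pass from ``comparisons'' to bit complexity and to sum over the edges of $\mu$. Each comparison is between two $L$-bit rationals, and each evaluation of \Cref{eq:lambda_first_case} or \Cref{eq:lambda_second_case} is a bounded number of additions, subtractions and one division on $L$-bit numbers, so the work for the couple $(d_1,d_2)$ is $\mathcal{O}(|S_{d_1}|\cdot|S_{d_2}|\,L)$; the unmatched doctors contribute nothing, as they carry no strategy. Summing over all $(d_1,d_2)\in\mu$ yields the announced bound $\mathcal{O}\bigl(\sum_{(d_1,d_2)\in\mu}|S_{d_1}|\cdot|S_{d_2}|\,L\bigr)$. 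I do not expect a genuine obstacle: the one point that deserves care is precisely the reduction from two equations to one, i.e. the observation that the zero-sum structure together with $f_{d_1}+f_{d_2}=0$ makes the second payoff equation automatic; everything else is a direct application of \Cref{prop:the_opt_problem_has_always_a_pure_solution} and its constructive proof.
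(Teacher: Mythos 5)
Your proposal is correct and follows essentially the same route as the paper's proof: reduce the two payoff equations to the single equation $x_{d_1}A_{d_1,d_2}x_{d_2}=f_{d_1}$ via the zero-sum structure and $f_{d_1}+f_{d_2}=0$, solve it with the constructive argument of \Cref{prop:the_opt_problem_has_always_a_pure_solution} in $\mathcal{O}(|S_{d_1}|\cdot|S_{d_2}|)$ comparisons, and sum over the couples with the factor $L$ for bit operations. Your explicit feasibility check $\min A\le f_{d_1}\le\max A$ is a minor (and welcome) addition left implicit in the paper.
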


\begin{proof}
Given a couple $(d_1,d_2) \in \mu$, we aim to find $(x_{d_1},x_{d_2}) \in X_{d_1}\times X_{d_2}$ such that, $f_{d_1,d_2}(x_{d_1},x_{d_2}) = f_{d_1} \text{ and } f_{d_2,d_1}(x_{d_2},x_{d_1}) = f_{d_2}$. Since the strategic game of $d_1$ and $d_2$ is a zero-sum game, it is enough with computing $(x_{d_1},x_{d_2})$ such that,
\begin{align*}
    &f_{d_1,d_2}(x_{d_1},x_{d_2}) = f_{d_1} \Longleftrightarrow\ x_{d_1}A_{d_1,d_2}x_{d_2} = f_{d_1} \Longleftrightarrow\ \sum_{s \in S_1}\sum_{s' \in S_2} A_{d_1,d_2}(s,s')x_{d_1}(s)x_{d_2}(s') = f_{d_1},
\end{align*}
which corresponds to a quadratic equation with $|S_1|+|S_2|$ variables. Note that this problem has the exact form $xAy = c$ studied in \Cref{prop:the_opt_problem_has_always_a_pure_solution} that, we have already discussed, can be solved in $\mathcal{O}(|S_1|\cdot|S_2|)$ comparisons. We conclude the proof by considering the sum over all couples in $\mu$ plus the number of bits required to encode the data.
\end{proof}

Checking whether the output of the market procedure is realizable by an allocation (and not a semiallocation) can be easily done in the case of zero-sum matching games. Given $f\in \mathbb{R}^{|D|}$ the output, divide the set of doctors $D$ as,
\begin{align*}
    D_+ &= \{d \in D : f_d \bbi 0\} \cup \{d_0\},\ D_- = \{d \in D : f_d \sm 0\} \cup \{d_0\},\ D_{\sim} = \{d \in D : f_d = 0\}.
\end{align*}

Then, find a correspondence $\mu$ between $D_+$ and $D_-$ such that for any pair of matched agents, the sum of their payoffs in $f$ is zero, or the agents are matched with the empty doctor in case their payoff in $f$ was equal to their IRP. Note that the correspondence can be computed in $|D|^2$ comparisons by checking all possible combinations. Regarding $D_{\sim}$, divide the set into two equal parts and match the agents between them, or the empty doctor in case their IRPs are equal to zero and $D_{\sim}$ has an odd number of doctors. In case this procedure outputs a proper matching $\mu$, the matching game has an pairwise stable allocation. Otherwise, the set of pairwise stable allocations is necessarily empty as the outcome of the market procedure cannot be realizable by an allocation.

\begin{remark}
If there are at most $N$ doctors and $k$ pure strategies per doctor, computing all the demand sets during an iteration of the market procedure has complexity $\mathcal{O}(N^2L)$. Similarly, in case the outcome of the market procedure can be implemented by an allocation, finding the strategy profiles of the agents within the allocation has complexity $\mathcal{O}(Nk^2L)$.
\end{remark}

\subsection{Renegotiation process}\label{sec:str_prof_modif_algo_zero_sum_games}

We focus now on the computation of renegotiation proof allocations. Since the same algorithm can be used for both models, additive separable matching games and roommates matching games, we will only work on the first case. All results can be directly applied to the second model.

Suppose $\Gamma$ is an additive separable matching game in which each strategic game $G_{d,h} = (X_d, Y_h, A_{d,h})$ is a finite zero-sum game in mixed strategies with value $w_{d,h}$, where $A_{d,h}$ is the payoff matrix. We aim to prove the following result.

\begin{theorem}[CNE Complexity]\label{teo:zero_sum_games_are_eps_feasible_and_polynomial}
Let $(d,h)$ be a couple and $G_{d,h} = (X_d, Y_h, A_{d,h})$ be their bi-matrix zero-sum game with value $w_{d,h}$. Let $(f_d,g_{h})$ be a pair of reservation payoffs.
Then, 
\begin{itemize}[leftmargin = *]
    \item[1.] $G_{d,h}$ is $\varepsilon$-feasible,
    \item[2.] For any $(x',y') \in \varepsilon$-CNE($f_d,g_{h}$), it holds 
    $$x'A_{d,h}y' = median\{f_d - 2 \varepsilon, w_{d,h}, g_{h} + 2\varepsilon\}.$$
    \item[3.] Computing an $\varepsilon$-CNE $(x',y')$ has complexity $$\mathcal{O}\left( \max\{|S_d|,|T_h|\}^{2.5} \cdot \min\{|S_d|,|T_h|\} \cdot L_{{d,h}}\right),$$ where $S_d,T_h$ are the pure strategy sets of the players and $L_{{d,h}}$ is the number of bits required to encode the matrix $A_{d,h}$.
\end{itemize}
\end{theorem}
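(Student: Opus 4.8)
The plan is to reduce everything to the structure of zero-sum games, where the single number $w_{d,h}$ (the value) governs all the relevant maximin/minimax quantities. First I would fix a couple $(d,h)$ with payoff matrix $A_{d,h}$, value $w_{d,h}$, and reservation payoffs $(f_d,g_h)$, and recall that in a zero-sum game $g_{d,h} = -f_{d,h}$, so an $\varepsilon$-constrained Nash equilibrium $(x',y')$ is a strategy profile that is $\varepsilon$-feasible (i.e. $f_{d,h}(x',y')+\varepsilon \geq f_d$ and $-f_{d,h}(x',y')+\varepsilon \geq g_h$, equivalently $g_h - \varepsilon \leq f_{d,h}(x',y') \leq f_d - \varepsilon + 2\varepsilon$ — I need to be careful with the $\varepsilon$ bookkeeping here, this is where the ``$2\varepsilon$'' in the statement comes from) and satisfies the two constrained best-response inequalities. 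The key observation is that, for a fixed $y'$, the constraint set $\{s \in X_d : g_{d,h}(s,y') + \varepsilon \geq g_h\}$ is $\{s : f_{d,h}(s,y') \leq \varepsilon - g_h\}$, and $d$ wants to maximize $f_{d,h}(s,y')$ subject to staying below that ceiling; symmetrically $h$ wants to minimize $f_{d,h}(x',t)$ subject to a floor. So a CNE is exactly a profile whose payoff equals the median of the ``$d$-ceiling'', the ``$h$-floor'', and whatever value is attainable by unconstrained play, which in a zero-sum game clusters around $w_{d,h}$.

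The core technical step, which I expect to be the main obstacle, is proving part 2: that \emph{every} $\varepsilon$-CNE has payoff exactly $\mathrm{median}\{f_d - 2\varepsilon,\, w_{d,h},\, g_h + 2\varepsilon\}$. I would argue by cases according to where $w_{d,h}$ sits relative to the two thresholds $f_d - 2\varepsilon$ and $g_h + 2\varepsilon$. If $w_{d,h}$ lies between them, an optimal pair of (security) strategies of the zero-sum game is already $\varepsilon$-feasible and is a CNE, and any CNE must also give payoff $w_{d,h}$: a payoff strictly above would let $h$ profitably deviate toward a minimax strategy while keeping feasibility (using that the floor $g_h - \varepsilon$ is comfortably below), and symmetrically for a payoff below. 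If instead $f_d - 2\varepsilon > w_{d,h}$ (the $d$-constraint binds), then $d$'s feasibility forces the payoff up to at least $f_d - \varepsilon - \varepsilon$… — again the $\varepsilon$ bookkeeping needs care, but the upshot is that the payoff is pinned to $f_d - 2\varepsilon$ and one shows such a profile exists and is a CNE; the mirror case pins it to $g_h + 2\varepsilon$. The delicate point throughout is checking that the relevant ``achieve an exact target payoff'' profiles exist, which is precisely \Cref{prop:the_opt_problem_has_always_a_pure_solution} applied to the target value $\mathrm{median}\{f_d - 2\varepsilon, w_{d,h}, g_h + 2\varepsilon\}$ — this simultaneously yields part 1 ($\varepsilon$-feasibility of the CNE set nonempty whenever there is any feasible profile, since feasibility of the target is equivalent to $\min A_{d,h} \leq$ target $\leq \max A_{d,h}$, which follows from feasibility of some profile) and the constructive handle needed for part 3.

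For part 3, I would spell out the computation: the value $w_{d,h}$ of a bi-matrix zero-sum game with pure strategy sets $S_d, T_h$ is obtained by solving a linear program with $\mathcal{O}(\max\{|S_d|,|T_h|\})$ variables and $\mathcal{O}(\max\{|S_d|,|T_h|\})$ constraints, whose data takes $L_{d,h}$ bits; by Vaidya's theorem (\Cref{teo:vaidya_complexity_LP}) this costs $\mathcal{O}(\max\{|S_d|,|T_h|\}^{1.5}\cdot\max\{|S_d|,|T_h|\}\cdot L_{d,h})$ — and one gets the slightly sharper bound stated by using the LP whose variable count is $\min\{|S_d|,|T_h|\}$ (the smaller player's simplex) together with the $\max$-sized constraint block, giving $\mathcal{O}(\max^{1.5}\cdot\min\cdot L)$; I will reconcile this with the claimed $\max^{2.5}\cdot\min$ by noting the matrix-vector evaluations inside the LP contribute the extra $\max$ factor. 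Once $w_{d,h}$ is known, the target payoff is a single median computation, and then \Cref{prop:the_opt_problem_has_always_a_pure_solution}'s construction produces the actual strategies $(x',y')$ in $\mathcal{O}(|S_d|\cdot|T_h|)$ comparisons, which is dominated by the LP cost. Adding these up gives the stated complexity, and since this bound is independent of the number of agents and iterations of the renegotiation process (\Cref{Algo:strategy_profiles_modification_eps_version}) each iteration is polynomial, so computing an $\varepsilon$-renegotiation-proof allocation is polynomial — I would close with that remark.
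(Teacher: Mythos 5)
Your case split (value between the two thresholds, $d$-constraint binding, $h$-constraint binding) matches the paper's structure, and your use of the LP/Vaidya route for computing $w_{d,h}$ is the same as the paper's first case. But there is a genuine gap at the heart of your construction in the binding cases: you claim that a profile achieving the target payoff $\mathrm{median}\{f_d-2\varepsilon,\,w_{d,h},\,g_h+2\varepsilon\}$, produced by \Cref{prop:the_opt_problem_has_always_a_pure_solution}, is already an $\varepsilon$-CNE. Hitting the target value is only the feasibility half; it does not give the constrained best-response property. Concretely, in the case $w_{d,h} < f_d - 2\varepsilon \leq g_h + 2\varepsilon$, take the profile $(x,y)$ from the pure-strategy convex combination of \Cref{prop:the_opt_problem_has_always_a_pure_solution} with $xA_{d,h}y = f_d - 2\varepsilon$: against that particular $y$, player $d$ may have a row $s$ with $sA_{d,h}y$ strictly more than $\varepsilon$ above the target while still below the ceiling coming from $g_h$ (the ceiling lies above the target in this case), so $(x,y)$ violates the CNE inequality for $d$. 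This is exactly why the paper, after solving the sup-inf problem (\ref{eq:prob_zero_sum_games}) to reach the target payoff, must additionally push $y$ toward the optimal strategy $y^*$ via Problem (\ref{eq:closest_point_to_Nash_eq_that_is_feasible}), using \Cref{lemma:for_computing_t_it_is_enough_to_consider_pure_strategies} to reduce that step to $|S_d|$ one-variable linear programs; the resulting $(x_t,y_t)$ is "defensive enough" that no constrained deviation gains more than $\varepsilon$. Without some such extra step your construction does not yield a CNE, and part 1 is not established in the binding cases.

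This missing step is also where the stated complexity comes from, so your part 3 accounting does not close. The bound $\mathcal{O}\left(\max\{|S_d|,|T_h|\}^{2.5}\cdot\min\{|S_d|,|T_h|\}\cdot L_{d,h}\right)$ arises from solving $|T_h|$ (resp. $|S_d|$) linear programs with $|S_d|$ (resp. $|T_h|$) variables in the binding cases, not from the single value-computation LP plus an $\mathcal{O}(|S_d|\cdot|T_h|)$ construction; your attempt to "reconcile" $\max^{1.5}\cdot\min$ with $\max^{2.5}\cdot\min$ by charging matrix-vector products inside the LP is not a valid derivation of the bound. Finally, note that your sketch of part 2 (every $\varepsilon$-CNE has the median payoff) still needs the deviation arguments made precise with consistent $\varepsilon$ bookkeeping; as written, the inequalities you defer ("$f_d-\varepsilon-\varepsilon$\dots") are exactly the ones that must be pinned down, using that $(x^*,y^*)$ is a saddle point to rule out payoffs strictly above or below the median.
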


We will make use of the following lemma.

\begin{lemma}\label{lemma:for_computing_t_it_is_enough_to_consider_pure_strategies}
Let $s_1,s_2 \in S_d$ be two pure strategies for player $d$, $(x^*,y^*)$ be the optimal strategies of the players, and $(x,y) \in X_d \times Y_h$ be a strategy profile such that $x$ only has $s_1,s_2$ in its support. Consider $\tau \in (0,1)$ and define $y_{\tau} := (1-\tau)y + \tau y^*$. Suppose that $xA_{d,h}y_{\tau} = f_d$ but $s_1A_{d,h}y_{\tau} \neq f_d \neq s_2A_{d,h}y_{\tau}$. Finally, suppose that $w_{d,h} \sm f_d$. Then, there always exists $\tau' \in (\tau, 1)$, and a pure strategy $s \in S_d$ such that $sA_{d,h}y_{\tau'} = f_d$.
\end{lemma}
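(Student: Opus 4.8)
# Proof Plan for Lemma \ref{lemma:for_computing_t_it_is_enough_to_consider_pure_strategies}

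The plan is to reduce the statement to a one-dimensional intermediate value argument along the segment $t \mapsto y_t := (1-t)\,y + t\,y^*$, $t \in [\tau,1]$ (note $y_1 = y^*$ and, since $\tau \in (0,1)$, this segment is non-degenerate). The key map to track is the affine function $\phi(t) := s_1 A_{d,h} y_t$; I will show $\phi$ exceeds $f_d$ at $t=\tau$, is strictly below $f_d$ at $t=1$, and conclude by continuity that it hits $f_d$ somewhere strictly inside.

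First I would exploit the support hypothesis. Since $x$ is supported on $\{s_1,s_2\}$, there is $\alpha \in [0,1]$ with $x A_{d,h} y_\tau = \alpha\, (s_1 A_{d,h} y_\tau) + (1-\alpha)\,(s_2 A_{d,h} y_\tau)$, and this equals $f_d$ by hypothesis. Because $s_1 A_{d,h} y_\tau \neq f_d$ and $s_2 A_{d,h} y_\tau \neq f_d$, the scalar $f_d$ is a \emph{strict} convex combination of the two, so $\alpha \in (0,1)$ and one of the two numbers is strictly larger than $f_d$ while the other is strictly smaller; relabelling $s_1,s_2$ if necessary, I assume $s_1 A_{d,h} y_\tau > f_d > s_2 A_{d,h} y_\tau$. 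This gives $\phi(\tau) = s_1 A_{d,h} y_\tau > f_d$. For the other endpoint, recall that in the zero-sum game $G_{d,h}$ the hospital $h$ is the minimizing player, so its optimal strategy $y^*$ satisfies $\max_{x' \in X_d} x' A_{d,h} y^* = w_{d,h}$; in particular $\phi(1) = s_1 A_{d,h} y^* \le w_{d,h}$, and by the standing hypothesis $w_{d,h} < f_d$ we get $\phi(1) \le w_{d,h} < f_d$.

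Finally, $\phi$ is affine, hence continuous, on $[\tau,1]$, and $\phi(\tau) > f_d > \phi(1)$, so the intermediate value theorem yields $\tau' \in [\tau,1]$ with $\phi(\tau') = f_d$, i.e. $s_1 A_{d,h} y_{\tau'} = f_d$; since both endpoint inequalities are strict, necessarily $\tau' \in (\tau,1)$, and taking $s := s_1$ completes the argument. There is no genuine obstacle here: the only points that need care are (i) correctly identifying $y^*$ as the optimal strategy of the minimizing player so that $s_1 A_{d,h} y^* \le w_{d,h}$, and (ii) tracking strictness at both endpoints so that the crossing point lands in the open interval $(\tau,1)$.
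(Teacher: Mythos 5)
Your proof is correct and follows essentially the same route as the paper's: decompose $xA_{d,h}y_\tau = f_d$ as a strict convex combination over $\{s_1,s_2\}$, deduce (after relabelling) that $s_1A_{d,h}y_\tau > f_d$, use the saddle-point property of $(x^*,y^*)$ to get $s_1A_{d,h}y^* \le w_{d,h} < f_d$, and conclude by continuity of $t \mapsto s_1A_{d,h}y_t$ via the intermediate value theorem. Your version is slightly more explicit about why the crossing point lies in the open interval $(\tau,1)$, but the argument is the same.
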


\begin{proof} It holds,
$$xA_{d,h}y_{\tau} = x_{s_1}\cdot s_1A_{d,h}y_{\tau} + x_{s_2}\cdot s_2A_{d,h}y_{\tau} = f_d,$$
with $x_{s_1} + x_{s_2} = 1$, $x_{s_1}, x_{s_2} \in [0,1]$. Since $s_1A_{d,h}y_{\tau}$ and $s_2A_{d,h}y_{\tau}$ are both different from $f_d$, we can suppose (without loss of generality) that $s_1A_{d.h}y_{\tau} \bbi f_d$ and $s_2A_{d,h}y_{\tau} \sm f_d$. Then, as $x^*A_{d,h}y^* = w_{d,h} \sm f_d$ and $(x^*,y^*)$ is a saddle point, $s_1A_{d,h}y^* \leq w_{d,h} \sm f_d$. As $y_{\{\tau = 1\}} = y^*$, by continuity, there exists $\tau' \in (\tau, 1)$ such that, $s_1A_{d,h}y_{\{\tau = 1\}} \sm f_d = s_1A_{d,h}y_{\tau'} \sm s_1A_{d,h}y_{\tau}$.
\end{proof}

\Cref{lemma:for_computing_t_it_is_enough_to_consider_pure_strategies} can be easily extended to mixed strategies of any finite support.

\begin{proof}[Proof of \Cref{teo:zero_sum_games_are_eps_feasible_and_polynomial}.]
Let $(x^*,y^*)$ be the players' optimal strategies, i.e., the strategy profile that achieves the value of the game $x^*A_{d,h}y^* = w_{d,h}$. We split the proof into three cases.
\medskip

\noindent \textbf{1.} Suppose that $f_d - 2\varepsilon \leq w_{d,h} \leq g_{h} + 2\varepsilon$. In particular, the value of the game is $\varepsilon$-feasible for both agents. Since it is also a saddle point so agents do not have profitable deviations, $(x^*,y^*)$ is an $\varepsilon$-$(f_d,g_{h})$-CNE.  From Von Neumann's theorem, we know that $(x^*,y^*,w_{d,h})$ can be obtained from the solutions of the pair primal-dual problems,
\begin{align*}
    (P)\ \ &\min \langle c, x\rangle \hspace{2cm} (D)\ \max \langle b,y \rangle \\
    &xA_{d,h} \geq b \hspace{2.8cm} A_{d,h}y \leq c \\ 
    &x \geq 0 \hspace{3.5cm}  y \geq 0,
\end{align*}
where the variables satisfy $x \in X_d$, $y \in Y_h$, and the vectors $c,d$ are given and are equal to $1$ in every coordinate. If $(x',y')$ is the primal-dual solution and $z$ is their optimal value, the optimal strategies of player $d$ and $h$ are given by $(x^*,y^*) = (x'/z, y'/z)$, and they achieve the value of the game $w_{d,h}$. From Vaidya's linear programming complexity result (\Cref{teo:vaidya_complexity_LP}), the number of elementary operations needed to solve the primal-dual problem and computing $(x^*,y^*)$ is 
$$\mathcal{O}\left((|S_d|+|T_h|)^{1.5}\max\{|S_d|,|T_h|\}L_{d,h}\right).$$

\noindent \textbf{2.} Suppose that $w_{d,h} \sm f_d - 2\varepsilon \leq g_{h} + 2\varepsilon$. Let $(x_0,y_0)$ be an $\varepsilon$-feasible strategy profile. Consider the set 
$$\Lambda(f_d) := \{x \in X_d : \exists y \in Y_h, xA_{d,h}y + 2\varepsilon \geq f_d\}.$$
Note that $\Lambda(f_d)$ is non-empty as $(x_0,y_0)$ belongs to it. Consider the problem,
\begin{align}\label{eq:prob_zero_sum_games}
    \sup \bigl[ \inf\{xA_{d,h}y \mid xA_{d,h}y + 2\varepsilon \geq f_d, y \in Y_h \} \mid x \in \Lambda(f_d) \bigr].
\end{align}

Since the set $\{xA_{d,h}y + 2\varepsilon \geq f_d, y \in Y_h\}$, for a given $x$, is bounded, as well as the set $\Lambda(f_d)$, there exists a solution $(x,y)$ of Problem (\ref{eq:prob_zero_sum_games}). Moreover, computing $(x,y)$ has complexity $\mathcal{O}(|T_h|\cdot|S_d|^{2.5}L)$ as Problem (\ref{eq:prob_zero_sum_games}) is equivalent to solve $|T_h|$ linear programming problems, each of them with $|S_d|$ variables and $1$ constraint, and then considering the highest value between them.
By construction, $xA_{d,h}y + 2 \varepsilon \geq f_d$. Suppose $xA_{d,h}y + 2 \varepsilon \bbi f_d$. It follows,
$$xA_{d,h}y \bbi f_d - 2\varepsilon \bbi w_{d,h} = x^*A_{d,h}y^* \geq xA_{d,h}y^*,$$
where the last inequality holds as $(x^*,y^*)$ is a saddle point. Then, there exists $y' \in (y,y^*)$ such that $xA_{d,h}y' = f_d - 2\varepsilon$. This contradicts that $(x,y)$ is solution of Problem (\ref{eq:prob_zero_sum_games}). If $(x,y)$ is an $\varepsilon$-$(f_d, g_{h})$-CNE, the proof is over. Otherwise, consider the problem,
\begin{align}\label{eq:closest_point_to_Nash_eq_that_is_feasible}
  t := \sup\{\tau \in [0,1] : y_{\tau} := (1-\tau)y + \tau y^* \text{ and } \exists x_{\tau} \in X_d, x_{\tau}A_{d,h} y_{\tau} = f_d - 2\varepsilon \}.
\end{align}

The value $t$ exists as for $\tau = 0$, $xA_{d,h}y = f_d - 2 \varepsilon$. In addition, $y_t \neq y^*$ as $x^*A_{d,h}y^* \sm f_d - 2\varepsilon$ and $(x^*,y^*)$ is a saddle point. From \Cref{lemma:for_computing_t_it_is_enough_to_consider_pure_strategies}, if $xA_{d,h}y_{\tau} = f_d$ for some value $\tau \in (0,1)$, then there always exists a pure strategy $s \in S_d$ and $\tau \leq \tau' \sm 1$ such that $sA_{d,h}y_{\tau'} = f_d$. Thus, solving Problem (\ref{eq:closest_point_to_Nash_eq_that_is_feasible}) is equivalent to solve each of the next linear problems,
\begin{align*}
  t_s := \sup\{ \tau \in [0,1] : y_{\tau} := (1-\tau)y + \tau y^* \text{ and }sA_{d,h}y_{\tau} = f_d - 2\varepsilon\}, \text{ for any } s \in S_d,
\end{align*}
and then, considering $t :=\max_{s\in S_d}t_s$. Each $t_s$ can be computed in constant time over $|S_d|$ and $|T_h|$, as the linear programming problem associated has only one variable and one constraint. Finally, computing the maximum of all $t_s$ takes $|S_d|$ comparisons. We claim that $(x_t,y_t)$ is an $\varepsilon$-$(f_d, g_{h})$-CNE. Let $x' \in X_d$ such that $x'A_{d,h}y_t \leq g_{h} + \varepsilon$. We aim to prove that $x'A_{d,h}y_t \leq x_tA_{d,h}y_t + \varepsilon$. Suppose $x'A_{d,h}y_t \bbi x_tA_{d,h}y_t + \varepsilon$. It holds, 
$$x'A_{d,h}y^* \leq w_{d,h} = x^*A_{d,h}y^* \sm f_d - 2\varepsilon = x_tA_{d,h}y_t \sm x_tA_{d,hj}y_t + \varepsilon \sm x'A_{d,h}y_t.$$
Then, there exists $z \in X_d$ and $y' \in (y_t, y^*)$ such that $zA_{d,h}y' = f_d - 2\varepsilon$, contradicting that $t$ is solution of Problem (\ref{eq:closest_point_to_Nash_eq_that_is_feasible}). 

Regarding player $h$, let $y' \in Y_h$ such that $x_tA_{d,h}y' + \varepsilon \geq f_d$. We aim to prove that $x_tA_{d,h}y' \geq x_tA_{d,h}y_t - \varepsilon$, which follows from, 
\begin{align*}
    x_tA_{d,h}y' \geq f_d - \varepsilon = f_d - 2\varepsilon + \varepsilon = x_tA_{d,h}y_t + \varepsilon \bbi x_tA_{d,h}y_t - \varepsilon.
\end{align*}

We conclude that $(x_t,y_t) \in \varepsilon$-CNE($f_d,g_{h}$).

\noindent \textbf{3.} Suppose that $f_d - 2\varepsilon \leq g_{h} + 2\varepsilon \sm w_{d,h}$. Analogously to case 2 (an analogous version of \Cref{lemma:for_computing_t_it_is_enough_to_consider_pure_strategies} has to be proved as well. As the proof follows exactly the same arguments, we prefer to omit it), there exists an $\varepsilon$-$(f_d, g_{h})$-CNE $(x,y)$ satisfying $xA_{d,h}y = g_{h} + 2 \varepsilon$.
\medskip

\noindent Finally, the complexity given at the theorem's state is obtained when taking the maximum complexity between the three cases.
\end{proof}

As a corollary of \Cref{teo:zero_sum_games_are_eps_feasible_and_polynomial} we obtain the following result.

\begin{corollary}\label{cor:complexity_of_computing_all_the_values}
Given an allocation $\pi = (\mu,\vec{x},\vec{y})$, computing all games' values is a polynomial problem and its complexity is bounded by
$$\mathcal{O}\biggl(\sum_{(d,h) \in \mu} (|S_d| + |T_h|)^{1.5}\max\{|S_d|, |T_h|\}L_{d,h}\biggr).$$
\end{corollary}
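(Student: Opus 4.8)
The plan is to observe that this is essentially a couple-by-couple restatement of Case~1 in the proof of \Cref{teo:zero_sum_games_are_eps_feasible_and_polynomial}, summed over the matching. Fix a matched couple $(d,h)\in\mu$. Since $G_{d,h}=(X_d,Y_h,A_{d,h})$ is a finite zero-sum game in mixed strategies with $X_d=\Delta(S_d)$ and $Y_h=\Delta(T_h)$, Von Neumann's theorem identifies its value $w_{d,h}$ with the common optimal value of the primal-dual linear programming pair displayed in Case~1 of that proof: once an optimal primal-dual pair $(x',y')$ with optimal value $z$ is computed, the optimal strategies are $(x^*,y^*)=(x'/z,y'/z)$ and the value is recovered directly as $w_{d,h}=x^*A_{d,h}y^*$.

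I would then invoke Vaidya's bound (\Cref{teo:vaidya_complexity_LP}). The primal program has $|S_d|$ variables, $|T_h|$ inequality constraints, and the normalization constraint defining $X_d$; since the latter adds only a constant number of rows, the relevant parameters are $n=\Theta(|S_d|)$, $m=\Theta(|T_h|)$, and the data is encoded in $L_{d,h}$ bits, and symmetrically for the dual. Taking whichever of the primal and dual has the larger number of variables, solving the pair and reading off $w_{d,h}$ costs
$$\mathcal{O}\!\left((|S_d|+|T_h|)^{1.5}\max\{|S_d|,|T_h|\}\,L_{d,h}\right)$$
elementary operations.

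Summing this cost over all matched couples $(d,h)\in\mu$ — there are at most $|D|$ of them, and each underlying game is finite — yields the stated total complexity, which is polynomial in the number of players and the number of pure strategies per player once the payoff matrices are bounded. There is no genuine obstacle in the argument; the only point requiring care is checking that the simplex normalization constraints $x\in X_d$, $y\in Y_h$ contribute only lower-order terms to the LP dimensions, so that the hypotheses of \Cref{teo:vaidya_complexity_LP} are met with $n=\max\{|S_d|,|T_h|\}$ and $m=\Theta(|S_d|+|T_h|)$, and that $L_{d,h}$ is large enough to encode the all-ones objective and right-hand-side vectors (which it is, up to an additive constant).
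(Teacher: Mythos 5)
Your proposal is correct and follows essentially the same route as the paper: it reduces the computation of each value $w_{d,h}$ to the primal-dual linear programming pair from Case~1 of the proof of \Cref{teo:zero_sum_games_are_eps_feasible_and_polynomial}, applies Vaidya's bound (\Cref{teo:vaidya_complexity_LP}), and sums over the matched couples. The extra care you take with the LP dimensions and encoding length is consistent with, and slightly more explicit than, the paper's argument.
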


\begin{proof}
Let $(I,h) \in \mu$ be a matched pair, $d \in I$ a doctor, and $G_{d,h} = (X_d, Y_h, A_{d,h})$ be a zero-sum game. The proof of \Cref{teo:zero_sum_games_are_eps_feasible_and_polynomial} in its first case proves that computing $w_{d,h}$ takes at most $\mathcal{O}((|S_d| + |T_h|)^{1.5}\max\{|S_d|, |T_h|\}L_{d,h})$ elementary operations, where $S_d, T_h$ are the players' strategy sets and $L_{d,h}$ is the number of bits required to encode the matrix $A_{d,h}$. Summing up all the couples, we obtain the stated complexity.
\end{proof}

The complexity of one iteration of the renegotiation process corresponds to the complexity of computing the reservation payoffs and a constrained Nash equilibrium for each couple. As we can have at most $|D|$ couples, the complexity of an entire iteration of the renegotiation process (\Cref{Algo:strategy_profiles_modification_eps_version}) is bounded by,
\begin{align*}
    \mathcal{O}\left( \sum_{d \in D} \left[|H|\cdot|D| + |S_d|\cdot \sum_{h \in H} |T_h| + \max\{|S_d|,|T_{\mu(d)}|\}^{2.5} \cdot \min\{|S_d|,|T_{\mu(d)}|\}\right]\cdot L\right),
\end{align*}
where $L$ is the number of bits required to encode all the problem data.

\begin{remark}
Considering $N$ agents per side and $k$ pure strategies per agent, the complexity of an entire iteration of the renegotiation process (\Cref{Algo:strategy_profiles_modification_eps_version}) is bounded by,
$\mathcal{O}\left(N^4k^{3.5}L\right).$
Hence, it is polynomial.
\end{remark}

The renegotiation process in its original version is known to converge for zero-sum matching games. However, no bound could be given to the number of iterations. For the $\varepsilon$-version, in exchange, we are able to guarantee a bound $T \propto \frac{1}{\varepsilon}$, with $T$ not depending on the problem size.

\begin{theorem}[Convergence]\label{teo:algorithm_2_ends_in_finite_time_for_zero_sum_games}
Let $\Gamma$ be a bi-matrix zero-sum matching game such that each game $G_{d,h}$ has a value $w_{d,h}$. Let $\pi = (\mu,\vec{x},\vec{y})$ be an $\varepsilon$-pairwise stable allocation, input of the $\varepsilon$-renegotiation process (\Cref{Algo:strategy_profiles_modification_eps_version}), the one defines a profile of $\varepsilon$-reservation payoffs $(f_d^{\pi}(\varepsilon), g_h^{\pi}(\varepsilon))_{d \in I,(I,h) \in \mu}$. Then, the number of iterations of \Cref{Algo:strategy_profiles_modification_eps_version} is bounded by
$$\max_{d \in I, (I,h) \in \mu}\frac{\{f_d^{\pi}(\varepsilon) - w_{d,h}, w_{d,h} - g_h^{\pi}(\varepsilon)\}}{\varepsilon}.$$
\end{theorem}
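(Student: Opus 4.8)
The plan is to reduce the run of \Cref{Algo:strategy_profiles_modification_eps_version} to a scalar dynamics on the numbers $v_{d,h}^{t}:=x_{d}^{t}A_{d,h}y_{d,h}^{t}$, one per couple $(d,h)\in\mu$. In a zero-sum couple game $v_{d,h}^{t}$ is the doctor's realized payoff while $-v_{d,h}^{t}$ is the hospital's contribution coming from $d$, so (up to the fixed matching $\mu$) the iterate $\pi(t)$ is encoded by the vector $\bigl(v_{d,h}^{t}\bigr)_{(d,h)\in\mu}$. To make this dynamics explicit I would first invoke \Cref{teo:eps_strag_prof_mod_algo_is_correct} (in fact its proof), which says that $\pi(t)$ stays $\varepsilon$-pairwise stable for every $t$; hence each couple game admits an $\varepsilon$-feasible profile for the current reservation payoffs, the $\varepsilon$-CNE selected in \Cref{Algo:strategy_profiles_modification_eps_version} exists, and part~2 of \Cref{teo:zero_sum_games_are_eps_feasible_and_polynomial} gives
\[
 v_{d,h}^{t+1}\;=\;\operatorname{median}\bigl\{\,f_{d}^{\pi(t)}(\varepsilon)-2\varepsilon,\ w_{d,h},\ g_{h}^{\pi(t)}(\varepsilon)+2\varepsilon\,\bigr\}.
\]
By \Cref{prop:the_opt_problem_has_always_a_pure_solution} the attainable payoffs of a bi-matrix zero-sum game fill the whole interval $[\min A_{d,h},\max A_{d,h}]$, so for zero-sum games the reservation payoffs of \Cref{eq:eps_outside_options_d} rewrite as explicit $\max$--$\min$ expressions in the numbers $v_{\cdot,\cdot}^{t}$ of the \emph{other} couples, each constraint carrying a uniform slack $\varepsilon$.

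Next I would unfold $\varepsilon$-pairwise stability of $\pi(t)$ itself (\Cref{def:pairwise_stable_allocation}, \Cref{def:external_stability_one_to_many}) in the zero-sum case, again using that payoffs fill an interval, to obtain the sandwich
\[
 f_{d}^{\pi(t)}(\varepsilon)-2\varepsilon\ \le\ v_{d,h}^{t}\ \le\ 2\varepsilon-g_{h}^{\pi(t)}(\varepsilon)\qquad\text{for every }(d,h)\in\mu .
\]
Comparing this with the median formula already shows that an update can only push $v_{d,h}$ toward $w_{d,h}$: if $w_{d,h}$ lies between the two outer terms the couple freezes at $w_{d,h}$, and otherwise $v_{d,h}^{t+1}$ equals the outer term sitting on the $w_{d,h}$-side of $v_{d,h}^{t}$; moreover the $\varepsilon$-slacks present in every constraint of \Cref{eq:eps_outside_options_d} force $\lvert v_{d,h}^{t+1}-v_{d,h}^{t}\rvert\ge\varepsilon$ as soon as the value actually changes.

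The remaining and delicate point is to propagate this locally monotone behaviour \emph{consistently across the whole allocation}, since the reservation payoffs of one couple are built from the values of the others. I would prove, by induction on $t$, that the reservation-payoff profile evolves monotonically along the run: each $f_{d}^{\pi(t)}(\varepsilon)$ and each $g_{h}^{\pi(t)}(\varepsilon)$ is monotone in $t$ in the direction dictated by the first iteration, so that every $v_{d,h}^{t}$ is itself monotone, stays trapped between $w_{d,h}$ and its opening value $v_{d,h}^{1}$, and, once it has reached the median value attached to the current reservation payoffs, never moves again (the last point using that a frozen couple keeps its outer terms on the correct sides of $w_{d,h}$). The induction step is exactly where the zero-sum structure is essential — the values of the other couples enter the reservation payoffs only through a single $\max$ of a $\min$ with a uniform $\varepsilon$ — and I expect this to be the main obstacle of the proof.

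Finally, granting the monotonicity the bound is pure counting. The total displacement of $v_{d,h}$ over the whole run is at most $\lvert v_{d,h}^{1}-w_{d,h}\rvert$, which by the $t=1$ instance of the sandwich (applied at the input $\pi=\pi(1)$) is at most $\max\{\,f_{d}^{\pi}(\varepsilon)-w_{d,h},\ w_{d,h}-g_{h}^{\pi}(\varepsilon)\,\}$. Since each genuine modification of $(d,h)$ spends at least $\varepsilon$ of this budget and a stopped couple stays put, $(d,h)$ is modified only during an initial block of at most $\max\{\,f_{d}^{\pi}(\varepsilon)-w_{d,h},\ w_{d,h}-g_{h}^{\pi}(\varepsilon)\,\}/\varepsilon$ iterations; the algorithm halts at the first iteration at which no couple is modified, hence after at most $\max_{d\in I,\,(I,h)\in\mu}\max\{\,f_{d}^{\pi}(\varepsilon)-w_{d,h},\ w_{d,h}-g_{h}^{\pi}(\varepsilon)\,\}/\varepsilon$ iterations, which is the asserted bound.
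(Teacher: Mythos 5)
You have the right skeleton --- preserved $\varepsilon$-pairwise stability along the run (\Cref{teo:eps_strag_prof_mod_algo_is_correct}), the median formula of \Cref{teo:zero_sum_games_are_eps_feasible_and_polynomial}, and a per-couple budget of size $\max\{f_d^{\pi}(\varepsilon)-w_{d,h},\,w_{d,h}-g_h^{\pi}(\varepsilon)\}$ consumed in $\varepsilon$-quanta --- but the step you explicitly postpone, the induction showing that every reservation payoff $f_d^{\pi(t)}(\varepsilon)$ and $g_h^{\pi(t)}(\varepsilon)$ is monotone in $t$ ``in the direction dictated by the first iteration'', is precisely where the theorem lives, and it is left unproved. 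Worse, as stated it is doubtful: in the zero-sum case the reservation payoff in \Cref{eq:eps_outside_options_d} is, using that attainable payoffs fill an interval, essentially of the form $\max_{k\neq\mu(d)}\min\{\max A_{d,k},\;\max_{d'\in\mu(k)}v^{t}_{d',k}-\varepsilon\}$; the values $v^{t}_{d',k}$ of the other couples move toward their respective $w_{d',k}$, some upward and some downward, and a maximum of oppositely moving quantities can first decrease and then increase, so the reservation-payoff profile need not be monotone in $t$. Your counting leans on this unproved monotonicity to rule out oscillation of $v^{t}_{d,h}$, and, as a smaller matter, bounding the budget $\lvert v^{1}_{d,h}-w_{d,h}\rvert$ through the $2\varepsilon$-slack sandwich overshoots the stated bound by an additive constant.

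The paper's proof needs none of this cross-couple structure. Its \Cref{lemma:eps_outside_options_are_monotone_in_zero_sum_games} is a one-step, single-couple observation: because $\pi(t+1)$ is again $\varepsilon$-pairwise stable, the CNE $(\hat x_d,\hat y_h)$ adopted at step $t$ is $\varepsilon$-feasible for the step-$(t+1)$ reservation payoffs, so $f_d^{\pi(t+1)}(\varepsilon)\leq \hat x_d A_{d,h}\hat y_h+\varepsilon$; in the regime $w_{d,h}\leq f_d^{\pi(t)}(\varepsilon)-2\varepsilon$ the couple's new payoff is the median $f_d^{\pi(t)}(\varepsilon)-2\varepsilon$, whence $f_d^{\pi(t+1)}(\varepsilon)\leq f_d^{\pi(t)}(\varepsilon)-\varepsilon$. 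Thus it is the relevant \emph{reservation payoff} (not the couple value) that drops by at least $\varepsilon$ per iteration while it stays above $w_{d,h}$, and symmetrically on the hospital side; the same stability inequality shows that once a couple is in the middle regime it plays its value and $f_d^{\pi(s)}(\varepsilon)\leq w_{d,h}+\varepsilon$ for all later $s$, so it never moves again. Counting on the reservation payoffs, which are bracketed by $w_{d,h}$, yields exactly the stated bound. To complete your argument, replace the global monotonicity induction by this local stability inequality (or prove that induction, which I do not expect to go through in the generality you require).
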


To prove \Cref{teo:algorithm_2_ends_in_finite_time_for_zero_sum_games} we will make use of the following lemma.

\begin{lemma}\label{lemma:eps_outside_options_are_monotone_in_zero_sum_games}
Let $\Gamma$ be a matching game as in \Cref{teo:algorithm_2_ends_in_finite_time_for_zero_sum_games}. Let $\pi = (\mu,\vec{x},\vec{y})$ be an $\varepsilon$-pairwise stable allocation, $(I,h)$ be a matched pair, and $d \in I$. Consider the sequence of reservation payoffs of $(d,h)$ denoted by $(f_d^{\pi(t)}(\varepsilon), g_h^{\pi(t)}(\varepsilon))_t$, with $t$ being the iterations of the renegotiation process (\Cref{Algo:strategy_profiles_modification_eps_version}). If there exists $t^*$ such that $w_{d,h} \leq f_d^{\pi(t)}(\varepsilon) - 2\varepsilon$ (resp. $w_{d,h} \geq g_h^{\pi(t)}(\varepsilon) + 2\varepsilon$), then the subsequence $(f_d^{\pi(t)}(\varepsilon))_{t\geq t^*}$ (resp. $(g_h^{\pi(t)}(\varepsilon))_{t\geq t^*}$) decreases (resp. increases) at least $\varepsilon$ at each step.
\end{lemma}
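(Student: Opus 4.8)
The plan is to analyze one iteration of the renegotiation process for the couple $(d,h)$ and show that under the stated hypothesis the relevant reservation payoff is strictly monotone, decreasing (resp. increasing) by at least $\varepsilon$. First I would recall from \Cref{teo:zero_sum_games_are_eps_feasible_and_polynomial} that whenever $w_{d,h} \leq f_d^{\pi(t)}(\varepsilon) - 2\varepsilon$, the $\varepsilon$-CNE $(x',y')$ chosen by $(d,h)$ at step $t$ satisfies $x'A_{d,h}y' = \mathrm{median}\{f_d^{\pi(t)}(\varepsilon) - 2\varepsilon, w_{d,h}, g_h^{\pi(t)}(\varepsilon) + 2\varepsilon\} = f_d^{\pi(t)}(\varepsilon) - 2\varepsilon$, using that $w_{d,h}$ is the smallest of the three and $f_d^{\pi(t)}(\varepsilon) - 2\varepsilon \leq g_h^{\pi(t)}(\varepsilon) + 2\varepsilon$ (which holds because an $\varepsilon$-feasible profile exists). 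Hence after step $t$, doctor $d$'s payoff in her own couple drops to exactly $f_d^{\pi(t)}(\varepsilon) - 2\varepsilon$.

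Next I would track how this affects $d$'s reservation payoff at the following step. Recall from \Cref{eq:eps_outside_options_d} that $f_d^{\pi(t+1)}(\varepsilon)$ is a maximum over outside hospitals $k \neq h$ of $f_{d,k}(s,t)$ subject to $g_{d,k}(s,t) \geq \min_{d' \in \mu(k)} g_{d',k}(x_{d'},y_{d,h}) + \varepsilon$; in the zero-sum case the constraint region and the objective are both controlled by $A_{d,k}$. The key observation is that $f_d^{\pi(t+1)}(\varepsilon)$ depends on $\pi(t+1)$ only through the payoffs of the \emph{other} agents $d' \in \mu(k)$, $k \neq h$, together with (via the definition of the blocking condition that the renegotiation process preserves $\varepsilon$-pairwise stability, \Cref{teo:eps_strag_prof_mod_algo_is_correct}) the fact that $d$'s own value has just gone down. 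I would argue that since $d$'s contribution everywhere else is capped by her outside-option constraints, and since the only change relevant to $d$'s outside options at hospitals $k \neq h$ is mediated through the reservation structure, the drop of $d$'s in-couple payoff by $2\varepsilon$ forces the constraints defining $f_d^{\pi(t+1)}(\varepsilon)$ (which involve $g_{d',k}(x_{d'},\cdot)$ for colleagues at other hospitals, themselves bounded using $f_{d,k} \geq f_{d,\mu(d)} + \varepsilon = f_d^{\pi(t)}(\varepsilon) - 2\varepsilon + \varepsilon$) to tighten by at least $\varepsilon$, so $f_d^{\pi(t+1)}(\varepsilon) \leq f_d^{\pi(t)}(\varepsilon) - \varepsilon$. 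The hypothesis $w_{d,h} \leq f_d^{\pi(t)}(\varepsilon) - 2\varepsilon$ then propagates (since $f_d^{\pi(t+1)}(\varepsilon) \leq f_d^{\pi(t)}(\varepsilon) - \varepsilon \leq f_d^{\pi(t)}(\varepsilon)$, we need to recheck it holds at $t+1$; if $f_d^{\pi(t+1)}(\varepsilon) - 2\varepsilon$ drops below $w_{d,h}$ the median clamps at $w_{d,h}$ and we would instead be in the "converged" regime — so strictly speaking the decrease-by-$\varepsilon$ claim holds as long as we remain in the regime $w_{d,h} \leq f_d^{\pi(t)}(\varepsilon) - 2\varepsilon$, which is exactly what the lemma asserts for the subsequence $t \geq t^*$). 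The argument for $g_h^{\pi(t)}(\varepsilon)$ under $w_{d,h} \geq g_h^{\pi(t)}(\varepsilon) + 2\varepsilon$ is symmetric, swapping the roles of the two players and using the other branch of the median.

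The main obstacle I anticipate is making precise the dependency chain: $f_d^{\pi(t+1)}(\varepsilon)$ is defined through the payoffs and constrained best-responses of \emph{third parties} at other hospitals, not directly through $d$'s own payoff, so I need a clean argument that lowering $d$'s in-couple value by $2\varepsilon$ actually tightens the feasibility constraint in \Cref{eq:eps_outside_options_d} by $\varepsilon$ rather than leaving it unchanged. The way to handle this is to exploit $\varepsilon$-pairwise stability of $\pi(t)$ and $\pi(t+1)$ (guaranteed by \Cref{teo:eps_strag_prof_mod_algo_is_correct}): the quantity $\min_{d' \in \mu(k)} g_{d',k}(\cdot)$ appearing in $d$'s outside-option constraint at hospital $k$ is, at an $\varepsilon$-pairwise-stable allocation, itself pinned down (via the zero-sum identity $g = -f$) by the outside options of those colleagues, and one shows inductively that the whole profile of reservation payoffs moves monotonically, so that the constraint governing $d$'s access to $k$ shifts by at least $\varepsilon$. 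I would isolate this monotone-propagation fact as the crux, prove it by a single global induction on $t$ over all couples simultaneously, and then read off the per-couple statement of the lemma as an immediate consequence.
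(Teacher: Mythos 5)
Your first step is right and matches the paper: in the regime $w_{d,h}\leq f_d^{\pi(t)}(\varepsilon)-2\varepsilon\leq g_h^{\pi(t)}(\varepsilon)+2\varepsilon$, the $\varepsilon$-CNE $(\hat x_d,\hat y_h)$ chosen at iteration $t$ gives the couple exactly $\hat x_dA_{d,h}\hat y_h=f_d^{\pi(t)}(\varepsilon)-2\varepsilon$ by the median property of \Cref{teo:zero_sum_games_are_eps_feasible_and_polynomial}. The gap is in your second step. The decrease of $f_d^{\pi(t+1)}(\varepsilon)$ does \emph{not} come from any tightening of the constraints $\min_{d'\in\mu(k)}g_{d',k}(\cdot)+\varepsilon$ appearing in \Cref{eq:eps_outside_options_d}: those thresholds are determined by the strategies of couples at the other hospitals, and they need not move at all (for instance, every couple whose reservation payoffs straddle its value plays its optimal strategies once and never changes again). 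So the ``monotone-propagation'' fact you isolate as the crux --- that the whole profile of reservation payoffs, and hence the constraint governing $d$'s access to each $k$, shifts by at least $\varepsilon$ per step --- is both unnecessary and false as a general claim, and the proposed global induction over all couples would not go through.

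The argument the paper uses is local and much shorter, and it is the piece you only brush against with the inequality $f_{d,k}\geq f_{d,\mu(d)}+\varepsilon$ before abandoning it: since the renegotiation process preserves $\varepsilon$-pairwise stability (\Cref{teo:eps_strag_prof_mod_algo_is_correct}), the profile $(\hat x_d,\hat y_h)$ currently played must be $\varepsilon$-feasible with respect to the \emph{newly} computed reservation payoffs --- equivalently, if some hospital $k\neq h$ admitted a profile attaining $f_d^{\pi(t+1)}(\varepsilon)\bbi f_d(\pi(t+1))+\varepsilon$ while clearing the colleague threshold by $\varepsilon$, the pair $(d,k)$ would $\varepsilon$-block $\pi(t+1)$. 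Hence, whatever the (possibly unchanged) outside thresholds are,
$$f_d^{\pi(t+1)}(\varepsilon)\ \leq\ \hat x_dA_{d,h}\hat y_h+\varepsilon\ =\ f_d^{\pi(t)}(\varepsilon)-\varepsilon,$$
which is the claimed decrease; the case $w_{d,h}\geq g_h^{\pi(t)}(\varepsilon)+2\varepsilon$ is symmetric. In short: the cap on the new reservation payoff is the current in-couple payoff plus $\varepsilon$, and it is the in-couple payoff (driven down to $f_d^{\pi(t)}(\varepsilon)-2\varepsilon$ by the median rule) that forces the decrease, not any movement of the third-party constraints. Your caveat about the regime possibly terminating (when $f_d^{\pi(t)}(\varepsilon)-2\varepsilon$ falls to $w_{d,h}$ the couple clamps at the value) is fine and is exactly how the lemma is used in \Cref{teo:algorithm_2_ends_in_finite_time_for_zero_sum_games}, but it does not repair the missing step.
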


\begin{proof}
Suppose there exists an iteration $t$ in which $w_{d,h} \leq f_d^{\pi(t)}(\varepsilon) - 2 \varepsilon \leq g_h^{\pi(t)}(\varepsilon) + 2 \varepsilon$, so couple $(d,h)$ switches its payoff to $f_d^{\pi(t)}(\varepsilon) - 2 \varepsilon$ (\Cref{teo:zero_sum_games_are_eps_feasible_and_polynomial}). Let $(\hat{x}_d, \hat{y}_h)$ be the $\varepsilon$-$(f_d^{\pi(t)}(\varepsilon), g_h^{\pi(t)}(\varepsilon))$-CNE played by $(d,h)$ at iteration $t$. Since  $(\hat{x}_d, \hat{y}_h)$ must be 
$${\varepsilon}\text{-}(f_d^{\pi(t+1)}(\varepsilon), g_h^{\pi(t+1)}(\varepsilon))\text{-feasible},$$
in particular, it holds $f_d^{\pi(t+1)}(\varepsilon)\leq \hat{x}_dA_{d,h}\hat{y}_h + \varepsilon = f_d^{\pi(t)}(\varepsilon) - \varepsilon$. Therefore, the sequence of reservation payoffs starting from $t$ decreases at least in $\varepsilon$ at each step.
\end{proof}

Finally, we prove the convergence of the $\varepsilon$-renegotiation process in a $T\propto \frac{1}{\varepsilon}$ number of iterations.

\begin{proof}[Proof of \Cref{teo:algorithm_2_ends_in_finite_time_for_zero_sum_games}.]
At the beginning of the renegotiation process (\Cref{Algo:strategy_profiles_modification_eps_version}), all couples $(d,h)$ belong to one (not necessarily the same) of the following cases: $f_d^{\pi}(\varepsilon) - 2\varepsilon \leq w_{d,h} \leq g_h^{\pi}(\varepsilon) + 2\varepsilon$, $w_{d,h} \leq f_d^{\pi}(\varepsilon) - 2\varepsilon \leq g_h^{\pi}(\varepsilon) + 2\varepsilon$ or $f_d^{\pi}(\varepsilon) - 2\varepsilon \leq g_h^{\pi}(\varepsilon) + 2\varepsilon \leq w_{d,h}$. In the first case, the couple plays their Nash equilibrium and never changes it afterward. In the second case, as $f_d^{\pi}(\varepsilon)$ is strictly decreasing for $d$ (\Cref{lemma:eps_outside_options_are_monotone_in_zero_sum_games}) and bounded from below by $w_{d,h}$, the sequence of reservation payoffs converges in at most $\frac{1}{\varepsilon}(f_d^{\pi}(\varepsilon) - w_{d,h})$ iterations. Analogously, the sequence of reservation payoffs for $h$ converges on the third case in finite time. Therefore, \Cref{Algo:strategy_profiles_modification_eps_version} converges it at most $\frac{1}{\varepsilon} \max_{(d,h) \in \mu} \{f_d^{\pi}(\varepsilon) - w_{d,h}, w_{d,h} - g_h^{\pi}(\varepsilon)\}$ iterations.
\end{proof}  

Let $T := \max\{\max A_{d,h} - \min A_{d,h} : (d,h) \in D \times H\}$. The following table summarizes the complexity results found for zero-sum matching games.

\begin{table}[H]
    \centering
    \begin{tabular}{c|c|c}
    \toprule
    Algorithms & Complexity per It & Max Number of It\\
    \toprule
    DAC & $\mathcal{O}((N^2+k^2)L)$ & $T/\varepsilon$\\
    \hline
    Market procedure  & - & $p(N)$\\ 
    Demand sets & \multirow{1}{*}{$\mathcal{O}(N^2L)$} & \multirow{1}{*}{-} \\
    Implementation & $\mathcal{O}(Nk^2L)$ & -\\
    \hline
    Renegotiation process & $\mathcal{O}(N^4k^{3.5}L)$ & $T/\varepsilon$\\
    \hline
    \end{tabular}
    \caption{Complexity zero-sum matching games: $N$ players per side, $k$ strategies per player, $L$ bits to encode the data, and $p(N)$ a polynomial on $N$.}
    \label{tab:complexity_zero_sum}
\end{table}


\section{Conclusions}\label{sec:conclusions}

We have extended the model of one-to-one matching games under commitment of Garrido-Lucero and Laraki \cite{garrido2025stable} to a general framework in which hospitals can be matched with many doctors at the time and doctors care about their colleagues, capturing many models from the literature including the \textit{matching job market} model of Kelso and Crawford \cite{kelso_jr_job_1982}, the \textit{matching with contracts} model of Hatfield and Milgrom \cite{hatfield_matching_2005}, \textit{hedonic games} \cite{dreze_hedonic_1980}, \textit{roommates problem} of Gale and Shapley \cite{gale_college_1962,irving_efficient_1985,knuth_marriages_1976}, the \textit{roommates problem with transferable utility} \cite{andersson_competitive_2014,chiappori_roommate_2014,eriksson_stable_2001,klaus_consistency_2010,talman_model_2011}, \textit{the roommates problem with non-transferable utility} \cite{alkan_pairing_2014}. 

We have studied the existence of core stable and renegotiation proof allocations for two submodels of general matching games: \textit{additive separable} and \textit{roommates}. For the additive separable matching games submodel, we have adapted the deferred-acceptance with competitions algorithm to compute a core stable allocation whenever all strategy sets are compact, payoff functions are continuous and, in addition, the sets of Pareto-optimal strategy profiles are closed. For the roommates submodel, we have leveraged the work of Alkan and Tuncay \cite{alkan_pairing_2014}. Using their market procedure we are able to compute payoff profiles that, whenever an allocation $\pi$ can implement them, $\pi$ results to be \textit{core stable}. Moreover, the market procedure output $\overline{f}$ is realizable, i.e., we can ensure the existence of core stable allocations, every time that the demand graph at $\overline{f}$ can be decomposed in the disjoint union of even-cycles. 

Regarding renegotiation proofness, we have extended the results in \cite{garrido2025stable} to the two submodels. For the first model the extension is a consequence of the additive separability. For the second model, as agents keep matching in couples, the extension of renegotiation proofness is straightforward. For both submodels we have studied how to compute the agents' \textit{reservation payoffs}, the \textit{constrained Nash equilibria}, and we have adapted the \textit{renegotiation process}. As for one-to-one matching games, we can compute renegotiation proof allocations whenever players play zero-sum games with a value, strictly competitive games with an equilibrium, infinitely repeated games, and potential games.

Subsequently, we have provided the complexity study of the algorithms for three classes of bi-matrix matching games: zero-sum, infinitely repeated (\Cref{sec:complexity_infinitely_repeated_games}), and strictly competitive (\Cref{sec:complexity_strictly_competitive_games}). For the additive separable matching games submodel, we have proved that our algorithms converge to an $\varepsilon$-pairwise-renegotiation proof allocation in the three classes of matching games in a bounded number of iterations, with a bound only depending on $\varepsilon$. Each iteration of the algorithms having a polynomial complexity, the deferred-acceptance with competitions algorithm and renegotiation process are efficient algorithms. For the roommates submodel, we have proved that for zero-sum games, strictly competitive games, and infinitely repeated games, the computations of the demand sets and the allocation that implements the output of the market procedure (if it exists) of Alkan and Tuncay \cite{alkan_pairing_2014} are polynomial problems over the number of doctors and their number of pure strategies. Moreover, we have given a procedure to determine the existence of an allocation implementing the output of the market procedure for zero-sum and strictly competitive matching games. 

Alkan and Tuncay's proofs for the correctness and complexity of the market procedure and direction procedure have been given only in the quasi-linear case. Nevertheless, they claim their procedures achieve the same results in the general non-quasi-linear setting. A deeper analysis of these results is in the list of incoming works.

The $\varepsilon$-renegotiation process works as well for the roommates submodel. Moreover, its complexity results hold for roommates matching games with couples playing zero-sum, strictly competitive, and infinitely repeated games. Therefore, using the complexity results claimed by Alkan and Tuncay together with our results, we conclude that the mechanism designed to compute pairwise stable and renegotiation proof allocations for the roommates submodel is efficient.

A possible future research line is the mix between the two one-to-many matching games submodels, roommates and additive separable matching games, studied in this article. The mixed setting would define a model in which couples of doctors are assigned to hospitals, and agents' utilities within the same triplet depend on the strategies and identities of the partners. Interesting applications may rise from this mix, for example, the siblings' schools allocation problem of Correa et al. \cite{correa_school_2019}.

\bibliographystyle{abbrv}
\bibliography{Zotero.bib}

\appendix

\section{Complexity study for Infinitely repeated matching games}\label{sec:complexity_infinitely_repeated_games}

For each potential pair $(d,h) \in D \times H$, let $G_{d,h} = (X_d,Y_h, A_{d,h}, B_{d,h})$ be a finite bi-matrix game in mixed strategies, with $X_d = \Delta(S_d), Y_h = \Delta(T_h)$, where all matrices have only rational entries. Given $K \in \mathbb{N}$, consider the $K$-stages game $G_{d,h}^K$ defined by the payoff functions,
$$f_{d,h}(K,\sigma_d, \sigma_h) = \frac{1}{K} \mathbb{E}_{\sigma}\left[\sum_{k=1}^K A_{d,h}(s_k,t_k)\right],\ \  g_{d,h}(K,\sigma_d, \sigma_h) = \frac{1}{K} \mathbb{E}_{\sigma}\left[\sum_{k=1}^K B_{d,h}(s_k,t_k)\right],$$
where $\sigma_d:\bigcup (S_d\times T_h)_{k=1}^{\infty} \rightarrow X_d$ is a behavioral strategy for player $d$ and $\sigma_h:\bigcup (S_d\times T_h)_{k=1}^{\infty} \rightarrow Y_h$ is a behavioral strategy for player $h$. We define the uniform game $G_{d,h}^{\infty}$ as the limit of $G_{d,h}^K$ when $K$ goes to infinity.

\begin{definition}
\textup{A matching game $\Gamma$ is a \textbf{bi-matrix infinitely repeated matching game} if every strategic game is a uniform game as explained above.}
\end{definition}

To study the complexity of computing pairwise-renegotiation proof allocations in infinitely repeated games, we start by studying the complexity of solving the general QCQP Problem (\ref{def:general_opt_problem}). We state the proof for a pair (doctor, hospital) although it can be straightforwardly applied to roommates.

\begin{proposition}\label{prop:QCQP_problem_is_polynomial_for_inf_repeated_games}
Let $(d,h) \in D \times H$ be a pair, $G_{d,h} = (X_d,Y_h, A_{d,h}, B_{d,h})$ their finite stage game in mixed strategies and $c \in \mathbb{R}$, such that $c \leq \max B_{d,h}$. The complexity of solving the QCQP Problem (\ref{def:general_opt_problem}) in $G_{d,h}^{\infty}$ is $\mathcal{O}\left((|S_d|\cdot|T_h|)^{2.5}L_{d,h} \right)$, where $L_{d,h}$ is the number of bits required to encode the stage game.
\end{proposition}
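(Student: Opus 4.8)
The plan is to reduce the QCQP Problem~(\ref{def:general_opt_problem}) over the uniform game $G_{d,h}^{\infty}$ to a single linear program in $|S_d|\cdot|T_h|$ variables and then apply Vaidya's bound (\Cref{teo:vaidya_complexity_LP}), in the same spirit in which the zero-sum case was reduced to elementary comparisons in \Cref{teo:complexity_of_solving_the_opt_problem_zero_sum_case}. Write $A := A_{d,h}$, $B := B_{d,h}$, and let $\mathcal{F} := \operatorname{conv}\{(A(s,t),B(s,t)) : s \in S_d,\ t \in T_h\}$ be the convex hull of the $|S_d|\cdot|T_h|$ stage-payoff vectors. The first step is a Folk-Theorem-type characterization: a pair $(f,g)$ is the long-run average payoff vector of some behavioral strategy profile of $G_{d,h}^{\infty}$ if and only if $(f,g) \in \mathcal{F}$. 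The ``only if'' direction is immediate, since in every period the expected payoff vector is a convex combination of the $(A(s,t),B(s,t))$ with the period's action distribution as weights, hence lies in $\mathcal{F}$, and averaging over periods and passing to the limit keeps the point in the closed convex set $\mathcal{F}$. For the converse, given rational weights $\lambda \in \Delta(S_d\times T_h)$, a balanced deterministic periodic schedule that plays each profile $(s,t)$ with long-run frequency $\lambda_{s,t}$ is a (clock-dependent) behavioral profile whose long-run average payoff equals $\sum_{s,t}\lambda_{s,t}(A(s,t),B(s,t))$; rational $\lambda$'s being dense in $\Delta(S_d\times T_h)$ and $\mathcal{F}$ being closed, every point of $\mathcal{F}$ is attained. (If in the model couples must play an equilibrium of $G_{d,h}^{\infty}$, one intersects $\mathcal{F}$ with the individual-rationality halfplanes $\{f \geq \underline{v}_d\}$, $\{g \geq \underline{v}_h\}$, where $\underline{v}_d,\underline{v}_h$ are the minmax values of the zero-sum games with payoff matrices $A$ and $B$; this changes nothing below.)

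Given this, Problem~(\ref{def:general_opt_problem}) over $G_{d,h}^{\infty}$, parametrized by the weight vector $\lambda = (\lambda_{s,t})_{s\in S_d,\, t\in T_h}$, becomes the linear program
\begin{align*}
\max \Bigl\{\, \textstyle\sum_{s,t}\lambda_{s,t}A(s,t)\ \Big|\ \textstyle\sum_{s,t}\lambda_{s,t}B(s,t) \geq c,\ \textstyle\sum_{s,t}\lambda_{s,t} = 1,\ \lambda \geq 0 \,\Bigr\}
\end{align*}
(together with the two linear constraints $\sum_{s,t}\lambda_{s,t}A(s,t)\geq \underline{v}_d$ and $\sum_{s,t}\lambda_{s,t}B(s,t)\geq \underline{v}_h$ in the equilibrium variant, after computing $\underline{v}_d,\underline{v}_h$ by two auxiliary zero-sum LPs with $\mathcal{O}(|S_d|+|T_h|)$ variables and constraints, a dominated cost). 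The hypothesis $c \leq \max B_{d,h}$ makes the feasible region nonempty --- the vertex $\lambda$ concentrated on an argmax entry of $B$ is feasible --- so the maximum is attained at a rational vertex $\lambda^{\star}$.

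Applying Vaidya's theorem (\Cref{teo:vaidya_complexity_LP}): this LP has $n := |S_d|\cdot|T_h|$ variables, $m = n + \mathcal{O}(1)$ constraints, and data encodable in $\mathcal{O}(L_{d,h})$ bits, hence is solved in $\mathcal{O}\bigl((n+m)^{1.5}\,n\,L_{d,h}\bigr) = \mathcal{O}\bigl((|S_d|\cdot|T_h|)^{2.5}\,L_{d,h}\bigr)$ elementary operations. An actual optimal strategy profile of $G_{d,h}^{\infty}$ is then recovered from $\lambda^{\star}$ as the balanced periodic schedule of the first step (backed by minmax punishment in the equilibrium variant), which is built in time linear in $n$ and therefore absorbed in the bound.

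The step I expect to be the main obstacle is the first one: stating the payoff-set characterization of the uniform game precisely and, in particular, identifying which linear inequalities cut out the feasible region (bare convex hull $\mathcal{F}$ versus $\mathcal{F}$ intersected with individual-rationality halfplanes, depending on whether a couple may commit to an arbitrary behavioral profile or must play an equilibrium of $G_{d,h}^{\infty}$), together with the verification that in the equilibrium variant the minmax values enter only as $\mathcal{O}(1)$ extra constraints computable by auxiliary LPs of dominated size, so that the LP dimensions --- and hence the exponent $2.5$ --- are unchanged. Everything after that is the routine linear-programming reduction.
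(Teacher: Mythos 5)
Your proposal is correct and follows essentially the same route as the paper: reduce Problem (\ref{def:general_opt_problem}) over $G_{d,h}^{\infty}$ to the linear program in the weights $\lambda \in \Delta(S_d\times T_h)$, apply Vaidya's bound to get $\mathcal{O}((|S_d|\cdot|T_h|)^{2.5}L_{d,h})$, and realize the optimal (rational) $\lambda^{\star}$ by a cyclic pure schedule with those frequencies, which is exactly the content of \Cref{lemma:feasible_payoffs_are_computed_in_pol_time}. The only loose step, invoking density of rational weights to attain all of $\mathcal{F}$, is not needed (and the paper does not need it either), since the LP optimum is rational and that is the only point one must implement; likewise the equilibrium/individual-rationality variant you discuss is outside the scope of this proposition, which concerns the bare QCQP.
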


To prove \Cref{prop:QCQP_problem_is_polynomial_for_inf_repeated_games} we will use the following result.

\begin{lemma}\label{lemma:feasible_payoffs_are_computed_in_pol_time}
Let $(d,h) \in D \times H$ be a pair and let $(\overline{f}, \overline{g}) \in \mathbb{R}^2$ be a payoff vector in the set of feasible payoffs,
$$co(A_{d,h},B_{d,h}) := \text{convex hull}\{(A_{d,h}(s,t) B_{d,h}(s,t)) \in \mathbb{R}^2 : s \in S_d, t \in T_h\}.$$
Then, there exists a pure strategy profile $\sigma$ of $G_{d,h}^{\infty}$ that achieves $(\overline{f}, \overline{g})$. In addition, the number of elementary operations used to compute $\sigma$ is bounded by $\mathcal{O}((|S_d|\cdot|T_h|)^{2.5}L_{d,h})$, where $L_{d,h}$ is the number of bits required to encode the matrices $A_{d,h}$ and $B_{d,h}$.
\end{lemma}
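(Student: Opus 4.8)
The plan is to use the classical folk-theorem construction: express the target payoff as a convex combination of at most three pure action profiles, and then let both players follow a deterministic schedule that visits those profiles with the prescribed long-run frequencies.

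\emph{Step 1 (decomposition via a small linear program).} Put $n := |S_d|\cdot|T_h|$. Since $(\overline f,\overline g)\in co(A_{d,h},B_{d,h})$, the linear feasibility problem in the variables $(\alpha_{s,t})_{(s,t)\in S_d\times T_h}$ given by $\alpha\geq 0$, $\sum_{s,t}\alpha_{s,t}=1$, $\sum_{s,t}\alpha_{s,t}A_{d,h}(s,t)=\overline f$ and $\sum_{s,t}\alpha_{s,t}B_{d,h}(s,t)=\overline g$ is solvable. I would solve it with an LP solver returning a \emph{basic} feasible solution: since there are only three equality constraints, such a solution has at most three nonzero coordinates — this is exactly Carathéodory's theorem in $\mathbb{R}^2$ read through LP bases. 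With $n$ variables, $\mathcal{O}(1)$ constraints, and data of bit-size $L_{d,h}$, \Cref{teo:vaidya_complexity_LP} bounds the cost by $\mathcal{O}((n+\mathcal{O}(1))^{1.5}nL_{d,h})=\mathcal{O}(n^{2.5}L_{d,h})=\mathcal{O}((|S_d|\cdot|T_h|)^{2.5}L_{d,h})$. The output is a collection of action profiles $(s_1,t_1),(s_2,t_2),(s_3,t_3)$ and weights $\alpha_1,\alpha_2,\alpha_3\geq 0$ with $\sum_i\alpha_i=1$ and $\sum_{i=1}^3\alpha_i\bigl(A_{d,h}(s_i,t_i),B_{d,h}(s_i,t_i)\bigr)=(\overline f,\overline g)$.

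\emph{Step 2 (the pure strategy profile and its payoff).} Given $(\alpha_1,\alpha_2,\alpha_3)$ in the simplex, fix any deterministic sequence $(i_k)_{k\geq 1}$ over $\{1,2,3\}$ whose empirical frequencies converge to $(\alpha_1,\alpha_2,\alpha_3)$ (such balanced sequences always exist, e.g. by a greedy rounding rule). Define the profile $\sigma=(\sigma_d,\sigma_h)$ of $G_{d,h}^\infty$ in which, at stage $k$ and independently of the history, player $d$ plays the pure action $s_{i_k}$ and player $h$ plays $t_{i_k}$; this is a pure behavioral strategy for each player. Since $A_{d,h},B_{d,h}$ are bounded and the frequency of each profile $(s_i,t_i)$ tends to $\alpha_i$, the Cesàro averages converge:
\begin{align*}
\lim_{K\to\infty}\frac1K\sum_{k=1}^K A_{d,h}(s_{i_k},t_{i_k})=\sum_{i=1}^3\alpha_i A_{d,h}(s_i,t_i)=\overline f,
\end{align*}
and similarly the $B_{d,h}$-average tends to $\overline g$. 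Hence $f_{d,h}(K,\sigma_d,\sigma_h)\to\overline f$ and $g_{d,h}(K,\sigma_d,\sigma_h)\to\overline g$, so by definition of the uniform game $\sigma$ achieves $(\overline f,\overline g)$ in $G_{d,h}^\infty$. The whole computation is dominated by the LP of Step 1, which yields the announced bound; the schedule itself is specified by the finite data $\bigl((s_i,t_i),\alpha_i\bigr)_{i\leq 3}$ plus the (explicit) balancing rule.

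\emph{Main obstacle.} The two delicate points are (i) justifying that only three action profiles are needed and that an LP solver can be forced to output such a sparse basic solution within the stated complexity — this rests on phrasing Carathéodory via LP bases together with \Cref{teo:vaidya_complexity_LP}, and on the implicit assumption that $(\overline f,\overline g)$ is rational so that the bit-complexity framework applies verbatim — and (ii) the fact that the weights $\alpha_i$ need not be rational even though the payoff matrices are, so the realizing schedule cannot in general be periodic; one must instead invoke a deterministic sequence with merely \emph{convergent} (not exactly periodic) empirical frequencies, which is legitimate precisely because the uniform-game payoff is defined as the limit of the finite-horizon averages. Everything else is routine.
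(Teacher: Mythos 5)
Your proof is correct and its skeleton coincides with the paper's: the paper sets up exactly the same linear feasibility system in the weights $\lambda\in\Delta(S_d\times T_h)$ with the two payoff equations, invokes \Cref{teo:vaidya_complexity_LP} for the $\mathcal{O}((|S_d|\cdot|T_h|)^{2.5}L_{d,h})$ bound, and then realizes the weights by a deterministic schedule. Where you diverge is the realization step. The paper uses the rationality of the data to argue the LP solution is a vector of rationals, takes the least common multiple $N_\lambda$ of the denominators, and has the couple cycle periodically, playing each profile $(s,t)$ for $p'_{s,t}$ stages out of every $N_\lambda$; the target is then hit exactly along the cycle. You instead sparsify via Carath\'eodory (optional — the argument works with all $n$ weights) and realize the weights by a greedy balanced sequence whose empirical frequencies merely converge, so the Ces\`aro averages converge to $(\overline f,\overline g)$. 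Your route is more robust: it does not need the weights (or the target) to be rational for the existence claim, and it avoids the lcm, whose bit-size the paper's bound $\mathcal{O}((|S_d|\cdot|T_h|)^2)$ quietly glosses over. What the paper's construction buys in exchange is a finitely described, exactly periodic profile, which is convenient when the strategy must be stored or verified, and which matches the rational-data setting the model already assumes. Both versions, as you note, need $(\overline f,\overline g)$ rational for the bit-complexity statement to be meaningful, and your reliance on extracting a basic feasible solution from an interior-point method would formally require a (polynomial) purification step — a nuance worth one sentence, but not a gap.
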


\begin{proof}
Consider the following system with $|S_d|\cdot|T_h|$ variables and three linear equations,
\begin{align}
    \begin{split}\label{def:system_linear_equations_repeated_games}
    \sum_{s,t} A_{d,h}(s,t)\lambda_{s,t} &= \overline{f},\\ \sum_{s,t} B_{d,h}(s,t)\lambda_{s,t} &= \overline{g},\ \ \lambda \in \Delta(S_d \times T_h).
    \end{split}
\end{align}
System $(\ref{def:system_linear_equations_repeated_games})$ can be solved in $\mathcal{O}((|S_d|\cdot|T_h|)^{2.5}L_{d,h})$ elementary operations. Since matrices $A_{d,h}$ and $B_{d,h}$ have rational entries, the solution has the form $(\lambda_{s,t})_{s,t} = (\frac{p_{s,t}}{q_{s,t}})_{s,t}$ with each $p_{s,t}, q_{s,t} \in \mathbb{N}$. Let $N_{\lambda} = \text{lcm}(q_{s,t} : (s,t) \in S_d \times T_h)$ be the least common multiple of all denominators. The number of elementary operations to compute $N_{\lambda}$ is bounded by $\mathcal{O}\left((|S_d|\cdot|T_h|)^{2}\right)$. Enlarge each fraction of the solution so all denominators are equal to $N_{\lambda}$, i.e. $\lambda = (\frac{p'_{s,t}}{N_{\lambda}})_{s,t}$. Suppose that $S_d = \{s_1,s_2,...,s_d\}$ and $T_h = \{t_1, t_2, ..., t_h\}$. Let $\sigma$ be the strategy profile in which players play $(s_1,t_1)$ the first $p'_{s_1,t_1}$-stages, $(s_1,t_2)$ the next $p'_{s_1,t_2}$-stages, $(s_1,t_3)$ the next $p'_{s_1,t_3}$-stages, up to playing $(s_d,t_h)$ during $p'_{s_d,t_h}$-stages, infinitely repeated. By construction, $(f_{d,h}(\sigma), g_{d,h}(\sigma)) = (\overline{f}, \overline{g})$.
\end{proof}

Let us illustrate the previous result with an example.

\begin{example}
\textup{Consider the following prisoners' dilemma $G$ played infinitely many times by a couple $(d,h)$. 
\begin{table}[H]
\centering
\begin{tabular}{cc|c|c}
     & \multicolumn{3}{c}{Agent h}\\\noalign{\vskip 2mm}
     \multirow{3}{*}{Agent d  } & & Cooperate & Betray \\
     \cline{2-4}
     & Cooperate & $2, 2$ & $-1,3$ \\ 
     \cline{2-4}
     & Betray & $3,-1$ & $0,0$
\end{tabular}
\hspace{1.5cm}
\end{table}
\Cref{fig:prisoners_dilemma_2} shows the convex envelope of the pure payoff profiles. 
\begin{figure}[ht]
    \centering
    \includegraphics[scale = 0.8]{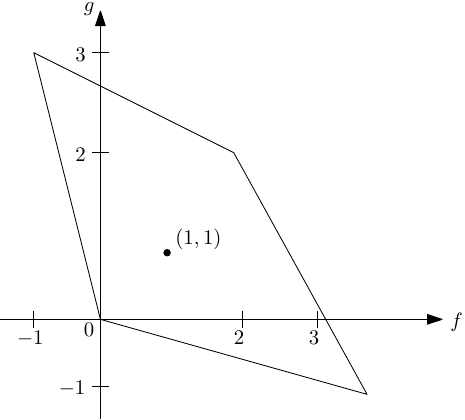}
    \caption{Prisoners' dilemma payoff profiles}
    \label{fig:prisoners_dilemma_2}
\end{figure}
\noindent Consider $(\bar{f},\bar{g}) = (1,1) \in co(A_{d,h}, B_{d,h})$, represented in the figure by the star. Note that $(1,1)$ can be obtained as the convex combination of $\frac{1}{4} (0,0) + \frac{1}{4}(3,-1) + \frac{1}{4}(-1,3) + \frac{1}{4}(2,2)$. Therefore, $(d,h)$ can obtain $(1,1)$ in their infinitely repeated game by playing $(B,B)$ the first four rounds, $(C,B)$ the second four rounds, $(B,C)$ the third four rounds, $(C,C)$ the fourth four rounds, and cycling like this infinitely many times. As every $16$ rounds the couple obtains $(1,1)$, in the limit, their average payoff converges to $(\bar{f}, \bar{g})$.} 
\end{example}

Finally, we prove the complexity result of solving the QCQP problem (\Cref{prop:QCQP_problem_is_polynomial_for_inf_repeated_games}).

\begin{proof}[Proof of \Cref{prop:QCQP_problem_is_polynomial_for_inf_repeated_games}.]
Consider the following optimization problem,
\begin{align}\label{def:opt_problem_1_inf_rep_games}
    \max\biggl\{ \sum_{s \in S_d} \sum_{t \in T_h} A_{d,h}(s,t) \lambda_{s,t} \mid \sum_{s \in S_d} \sum_{t \in T_h} B_{d,h}(s,t)\lambda_{s,t} \geq c, \lambda \in \Delta(S_d \times T_h)\biggr\}.
\end{align}
Problem (\ref{def:opt_problem_1_inf_rep_games}) is a linear programming problem with $|S_d|\cdot|T_h|$ variables and two constraints and its optimal value $(\overline{f},\overline{g})$ coincides with the optimal value of the QCQP Problem (\ref{def:general_opt_problem}). Therefore, any strategy profile $\sigma$ that achieves $(\overline{f},\overline{g})$, is a solution of the QCQP Problem (\ref{def:general_opt_problem}). The stated complexity is obtained from solving Problem (\ref{def:opt_problem_1_inf_rep_games}) and applying \Cref{lemma:feasible_payoffs_are_computed_in_pol_time} to compute $\sigma$.
\end{proof}

\subsection{Deferred-acceptance with competitions algorithm}

The polynomial complexity of solving the QCQP general problem (\Cref{prop:QCQP_problem_is_polynomial_for_inf_repeated_games}) allows us to prove the main result of this section.

\begin{theorem}[Complexity]\label{teo:propose_dispose_algo_is_polynomial_for_inf_rep_games}
Let $d \in D$ be the proposer doctor. Let $h$ be the proposed hospital and $d'$ be the doctor that $d$ wants to replace. If $d$ is the winner of the competition, the entire iteration of the DAC algorithm (\Cref{Algo:Propose_dispose_algo_additive_separable_case}) has complexity,
$$\mathcal{O}\left(|H|\cdot|D| + |S_d|^{2.5}\sum_{h' \in H} |T_{h'}|^{2.5}L_{d,{h'}} + |S_{d'}|^{2.5}\sum_{h' \in W} |T_{h'}|^{2.5}L_{{d'},{h'}} \right),$$
where $L_{i,j}$ is the number of bits required to encode the payoff matrices of $(i,j)$. 
\end{theorem}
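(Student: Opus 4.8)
The plan is to follow exactly the template of the zero-sum analysis (proof of \Cref{teo:complexity_propose_dispose_algo_zero_sum_case}): decompose a single iteration of the DAC algorithm (\Cref{Algo:Propose_dispose_algo_additive_separable_case}) into its constituent optimization problems and bound each one by invoking \Cref{prop:QCQP_problem_is_polynomial_for_inf_repeated_games}. Concretely, one iteration consists of (i) the proposer $d$ solving the optimal proposal Problem (\ref{eq:optimal_proposal_problem_additive_separable}); (ii) when a competition is triggered, $d$ and the displaced doctor $d'$ each solving a reservation-payoff program; (iii) each of them computing a bid $\lambda_d,\lambda_{d'}$; and (iv) the winner, here $d$, solving the second-price program $\max\{xA_{d,h}y\mid xB_{d,h}y\geq\lambda_{d'}\}$. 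Steps (iii) and (iv) are single instances of the generic QCQP Problem (\ref{def:general_opt_problem}) in the \emph{fixed} game $G_{d,h}^{\infty}$ (for the bid, after swapping the roles of $A_{d,h}$ and $B_{d,h}$), so by \Cref{prop:QCQP_problem_is_polynomial_for_inf_repeated_games} each costs $\mathcal{O}((|S_d|\cdot|T_h|)^{2.5}L_{d,h})$, which will turn out to be dominated by the contributions of (i)--(ii).

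First I would treat step (i). As in \Cref{cor:optimal_proposal_complexity}, Problem (\ref{eq:optimal_proposal_problem_additive_separable}) splits into $|H_0|$ independent subproblems indexed by the candidate hospital $h'$; for each, the constraint right-hand side $\min_{d'\in\mu(h')}x_{d'}B_{d',h'}y_{d',h'}+\varepsilon$ is formed in $\mathcal{O}(|D|)$ comparisons (and the subproblem is discarded as infeasible as soon as this value exceeds $\max B_{d,h'}$, so the hypothesis of \Cref{prop:QCQP_problem_is_polynomial_for_inf_repeated_games} is met for the surviving ones), and the remaining QCQP in $G_{d,h'}^{\infty}$ is solved in $\mathcal{O}((|S_d|\cdot|T_{h'}|)^{2.5}L_{d,h'})$. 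Summing over hospitals --- and, crucially, keeping $\sum_{h'}|T_{h'}|^{2.5}$ inside the estimate rather than bounding it by $|H|\max_{h'}|T_{h'}|^{2.5}$ --- yields $\mathcal{O}\bigl(|H|\cdot|D| + |S_d|^{2.5}\sum_{h'\in H}|T_{h'}|^{2.5}L_{d,h'}\bigr)$ for the proposal step. The reservation payoff of $d$ in the competition is computed by the very same program with $h$ deleted from the feasible region, hence obeys the same bound; the reservation payoff of $d'$ is the analogous program over the hospital set $W$ available to $d'$, contributing $\mathcal{O}\bigl(|S_{d'}|^{2.5}\sum_{h'\in W}|T_{h'}|^{2.5}L_{d',h'}\bigr)$. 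Adding the four blocks, absorbing the dominated single-game costs from (iii)--(iv), and accounting for the $L$ bits needed to encode the data gives the claimed bound. A formal write-up would, exactly as in the zero-sum case, phrase this as three corollaries (optimal proposal; reservation payoff plus bid; winner's final strategy) and then add them up.

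The main obstacle --- and the only step that needs real care rather than bookkeeping --- is the heterogeneous decomposition of the proposal and reservation-payoff programs: because their subproblems live in distinct stage games $G_{d,h'}^{\infty}$ with different pure-strategy-set sizes, one cannot collapse them to a single worst-case game, and the sum $\sum_{h'}(|S_d|\cdot|T_{h'}|)^{2.5}L_{d,h'}$ must be carried verbatim through the estimate. One must also account separately for the purely combinatorial $\mathcal{O}(|H|\cdot|D|)$ cost of evaluating all the constraint right-hand sides (the minimum over each hospital's current colleagues), since that term does not arise from any QCQP. Everything else --- that each sub-problem reduces to Problem (\ref{def:general_opt_problem}), and that the competition phase introduces only a constant number of additional such problems in the single game $G_{d,h}^{\infty}$ --- is identical to the zero-sum situation and is immediate from \Cref{prop:QCQP_problem_is_polynomial_for_inf_repeated_games}, so the remaining calculations are routine and can be omitted.
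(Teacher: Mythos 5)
Your proposal is correct and follows essentially the same route as the paper's proof: split the optimal proposal (and the reservation-payoff programs) into per-hospital QCQPs bounded by \Cref{prop:QCQP_problem_is_polynomial_for_inf_repeated_games}, add the $\mathcal{O}(|H|\cdot|D|)$ cost of forming the constraint right-hand sides, and absorb the bid and winner's problems as single QCQPs in $G_{d,h}^{\infty}$. The only difference is your explicit feasibility check ($c\leq\max B_{d,h'}$) before invoking the proposition, which the paper leaves implicit and which is a harmless refinement.
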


\begin{proof}
The optimal proposal problem is split into $|H|$ problems. Each subproblem needs $|D|$ comparisons to compute the right-hand side and then, they have the complexity stated in \Cref{prop:QCQP_problem_is_polynomial_for_inf_repeated_games}. Thus, the optimal proposal computation has complexity, 
$$\mathcal{O}\left(|H|\cdot|D| + \sum_{h' \in H}(|S_d|\cdot|T_{h'}|)^{2.5}L_{d,h'} \right).$$ 
Computing the reservation payoff and the bid of each competitor has exactly the same complexity as the optimal proposal computation, considering the respective set of strategies. Finally, the problem solved by the winner has complexity $\mathcal{O}((|S_d|\cdot|T_{h}|)^{2.5}$ $\cdot L_{d,h})$. Summing up, we obtain the complexity stated in the theorem.
\end{proof}

\begin{remark}
If there are at most $N$ players in each side and at most $k$ pure strategies per player, \Cref{teo:propose_dispose_algo_is_polynomial_for_inf_rep_games} proves that each iteration of the DAC algorithm (\Cref{Algo:Propose_dispose_algo_additive_separable_case}) takes $\mathcal{O}(N^2+Nk^5L)$ number of elementary operations in being solved, hence it is polynomial. As the number of iterations is bounded by $T \propto \frac{1}{\varepsilon}$ (\Cref{teo:propose_dipose_algo_number_of_iterations}), we conclude that computing an $\varepsilon$-pairwise stable allocation for a infinitely repeated matching game is a polynomial problem.
\end{remark}

\subsection{Market procedure}

Let $\Gamma$ be a roommates matching game such that for each couple $(d,d')\in D \times D$, their game $G_{d,d'}$ is an infinitely repeated game with a bi-matrix stage game. Consider a payoff profile $f \in \mathbb{R}^{|D|}$ and a fixed doctor $d \in D$. Note that,
\begin{align*}
    d' \in P_d(f) &\Longleftrightarrow f_d = u_{d,d'}(f_{d'})\\
                  &\Longleftrightarrow (f_d,f_{d'}) \in co(A_{d,d'},A_{d',d})\\
                  &\Longleftrightarrow \exists \lambda = (\lambda_{s,s'})_{s \in S_d,s' \in S_{d'}} \subseteq [0,1], \sum_{s,s'} \lambda_{s,s'} = 1:\\
                  &\qquad f_d = \sum_{s,s'} A_{d,d'}(s,s') \lambda_{s,s'} \text{ and } f_{d'} = \sum_{s,s'} A_{d',d}(s,s') \lambda_{s,s'}.
\end{align*}
Therefore, to determine if a doctor belongs to the demand set of $d$, it is enough with computing the coefficients of their convex combination which, we know, has a polynomial complexity (\Cref{lemma:feasible_payoffs_are_computed_in_pol_time}). We can conclude the following result.

\begin{theorem}
The complexity of computing the demand sets of all doctors during an iteration of the market procedure has complexity
$$\mathcal{O}\left(\sum_{(d,d') \in D \times D} |S_d|\cdot|S_{d'}| L_{d,d'}\right),$$
where $L_{d,d'}$ is the number of bits required to encode the payoff matrices of the pair $(d,d')$.
\end{theorem}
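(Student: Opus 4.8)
The plan is to reduce the computation of the demand sets to a family of planar convex-hull membership tests, one per ordered pair of doctors, and then to bound the cost of a single test sharply by exploiting that the ambient dimension is two. Fix the current aspiration $f\in\mathbb{R}^{|D|}$ of the market procedure. By the equivalence chain displayed just above the statement, for a pair $(d,d')\in D\times D$ one has $d'\in P_d(f)$ if and only if the point $(f_d,f_{d'})$ lies in $co(A_{d,d'},A_{d',d})$, i.e., in the convex hull of the $|S_d|\cdot|S_{d'}|$ pure-profile payoff points $\{(A_{d,d'}(s,s'),A_{d',d}(s,s')):s\in S_d,\ s'\in S_{d'}\}\subseteq\mathbb{R}^2$; equivalently, whether the associated linear feasibility system (the analogue of System (\ref{def:system_linear_equations_repeated_games}) with $f_d,f_{d'}$ on the right-hand side) admits a solution $\lambda\in\Delta(S_d\times S_{d'})$. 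Hence computing all demand sets amounts to answering $|D|(|D|-1)$ such membership questions, and since $d'\in P_d(f)$ if and only if $d\in P_{d'}(f)$ it is enough to treat unordered pairs.

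The next step is to bound the cost of one test for a pair $(d,d')$; write $N:=|S_d|\cdot|S_{d'}|$. The question ``is $p:=(f_d,f_{d'})$ in the convex hull of $N$ given planar points?'' is a linear-programming feasibility problem of fixed dimension: the primal has $N$ nonnegative variables and only three equalities, so, after translating by $p$, the question is whether the translated points all lie in some open half-plane through the origin, a feasibility problem in two or three real variables with $N$ linear constraints. Such a bounded-dimension program is decided in $O(N)$ arithmetic operations, for instance by a linear-time algorithm for low-dimensional linear programming, or by an elementary sweep over the directions of the translated points. Every datum entering the test, namely the entries of $A_{d,d'}$ and $A_{d',d}$ and the two relevant coordinates of $f$, is encoded in $O(L_{d,d'})$ bits, and every branch of the test is the sign of a $2\times2$ determinant formed from these entries, so no intermediate quantity grows and each operation costs $O(L_{d,d'})$ bit-operations. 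This gives $O(N\,L_{d,d'})=O(|S_d|\cdot|S_{d'}|\,L_{d,d'})$ for the pair $(d,d')$.

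Summing over all $(d,d')\in D\times D$, and accounting for the linear-time assembly of the point sets from the stage payoff matrices, then yields the stated bound $\mathcal{O}\bigl(\sum_{(d,d')\in D\times D}|S_d|\cdot|S_{d'}|\,L_{d,d'}\bigr)$.

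The main obstacle is precisely this per-pair estimate: treating the membership question as a generic linear program and applying Vaidya's bound (\Cref{teo:vaidya_complexity_LP}), or invoking \Cref{lemma:feasible_payoffs_are_computed_in_pol_time} verbatim, would only give $O(N^{2.5}L_{d,d'})$; the sharper $O(N\,L_{d,d'})$ relies on the fact that the payoff polytope lives in $\mathbb{R}^2$, so that deciding feasibility is strictly easier than constructing a witness $\lambda$ or an implementing strategy profile as in \Cref{lemma:feasible_payoffs_are_computed_in_pol_time}. The one technical point to verify is the bit-complexity claim: one must check that the low-dimensional linear program, or the direction sweep, never forms numbers longer than $O(L_{d,d'})$ bits, which holds because all comparisons reduce to signs of determinants built from the input entries.
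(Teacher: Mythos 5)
Your proposal is correct, but it proves the bound by a genuinely different route than the paper. The paper's justification is the one-line reduction stated just before the theorem: deciding $d' \in P_d(f)$ amounts to deciding solvability of the linear system $\sum_{s,s'}A_{d,d'}(s,s')\lambda_{s,s'}=f_d$, $\sum_{s,s'}A_{d',d}(s,s')\lambda_{s,s'}=f_{d'}$, $\lambda\in\Delta(S_d\times S_{d'})$, and the paper then simply invokes \Cref{lemma:feasible_payoffs_are_computed_in_pol_time} (i.e., solving that system as a generic LP) and sums over pairs; no further proof is given. As you correctly point out, that route, taken literally with \Cref{teo:vaidya_complexity_LP} or the lemma's own estimate, yields only $\mathcal{O}\bigl((|S_d|\cdot|S_{d'}|)^{2.5}L_{d,d'}\bigr)$ per pair, so it does not by itself deliver the linear-in-$|S_d|\cdot|S_{d'}|$ bound in the statement. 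Your argument — recasting each test as membership of the point $(f_d,f_{d'})$ in a planar convex hull of $N=|S_d|\cdot|S_{d'}|$ points, deciding it by a fixed-dimension (two- or three-variable) LP feasibility in $O(N)$ arithmetic operations with all comparisons reducing to signs of small determinants of the input, hence $O(N\,L_{d,d'})$ bit operations per pair — exploits that membership is strictly easier than constructing a witness $\lambda$ or an implementing strategy profile, and it actually attains the stated bound; it also correctly uses the symmetry $d'\in P_d(f)\Leftrightarrow d\in P_{d'}(f)$ noted in the paper. One small caveat: the ``sweep over directions'' alternative you mention would cost $O(N\log N)$ comparisons unless the directions are already sorted, so to match the claimed bound you should rely on the deterministic linear-time low-dimensional LP (Megiddo-type) routine, together with the bit-length check you already sketch. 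With that understood, your proof is sound and, if anything, supplies the sharper per-pair estimate that the paper's own citation chain leaves implicit.
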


Remark the previous result can be refined as we only need to check once every couple, since belonging to the demand set is a symmetric property. Therefore, once determined that $d' \in P_d(f)$, we directly obtain that $d \in P_{d'}(f)$.

Regarding the implementation of the market procedure's output, once having the good pairs of doctors given by a matching $\mu$, it is enough with solving the linear system of equations from \Cref{lemma:feasible_payoffs_are_computed_in_pol_time} for each of the couples. In particular, we can conclude the following result.

\begin{theorem}
Let $f \in \mathbb{R}^{|D|}$ be the output of the market procedure and suppose there exists a matching $\mu$ such that for any doctor $d \in D$, $(f_d,f_{\mu(d)}) \in co(A_{d,\mu(d)},A_{\mu(d),d})$, if $d$ is matched, and $f_d = \underline{f}_d$, if $d$ is unmatched. Then, computing the strategy profile $\vec{x} \in X_D$ such that $f_{d,\mu(d)} (x_d,x_{\mu(d)}) = f_d$, for any $d$ matched, has complexity
$$\mathcal{O}\left(\sum_{(d,d') \in \mu} |S_{d}|\cdot|S_{d'}|L_{d,d'} \right),$$
where $L_{d,d'}$ is the number of bits required to encode the data of the stage game $G_{d,d'}$.
\end{theorem}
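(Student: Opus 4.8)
The plan is to decompose the computation couple by couple along the matching $\mu$. Fix a matched couple $(d,d')\in\mu$. By hypothesis $(f_d,f_{d'})\in co(A_{d,d'},A_{d',d})$, i.e.\ $(f_d,f_{d'})$ is a feasible payoff of the uniform game $G_{d,d'}^{\infty}$, which is exactly the hypothesis of \Cref{lemma:feasible_payoffs_are_computed_in_pol_time}. Applying that lemma to $(d,d')$ produces a pure, cyclic behavioral strategy profile $\sigma^{(d,d')}=(\sigma_d,\sigma_{d'})$ of $G_{d,d'}^{\infty}$ with $\bigl(f_{d,d'}(\sigma^{(d,d')}),f_{d',d}(\sigma^{(d,d')})\bigr)=(f_d,f_{d'})$, together with an algorithm that outputs it. For an unmatched doctor $d$ there is nothing to compute: $d$ receives $\underline f_d$ no matter what it plays, so we fix an arbitrary $s\in S_d$, which costs $O(1)$. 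Collecting these profiles over all couples yields the required strategy profile, and since $\mu$ is a partition of $D$ into pairs and singletons the couples $(d,d')\in\mu$ are pairwise disjoint, so the total cost is $\sum_{(d,d')\in\mu}$ of the per-couple cost.

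It remains to pin down the per-couple cost. By the proof of \Cref{lemma:feasible_payoffs_are_computed_in_pol_time} this cost consists of (i) exhibiting $(f_d,f_{d'})$ as a convex combination $\sum_{s,s'}\lambda_{s,s'}\bigl(A_{d,d'}(s,s'),A_{d',d}(s,s')\bigr)$ of the $k:=|S_d|\cdot|S_{d'}|$ pure payoff vectors, and (ii) clearing the denominators $q_{s,s'}$ through their least common multiple to read off the lengths of the cyclic blocks. Step (ii) is $O(k)$ arithmetic operations on integers of bit-length $O(L_{d,d'})$, hence $O(k\,L_{d,d'})$. For step (i), rather than solving the full linear feasibility program with a general LP routine — which would only give $O(k^{2.5}L_{d,d'})$, the bound recorded inside the lemma but weaker than what the theorem claims — one exploits that the feasibility program has only three equality constraints (the two payoff coordinates and $\sum\lambda=1$), so it is a fixed-dimension linear program; equivalently, by Carath\'eodory's theorem the target is a convex combination of at most three of the $k$ pure payoff vectors. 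A linear-time low-dimensional LP routine (or, concretely, a single scan that fixes one extreme pure payoff vector and locates the two pure payoff vectors whose directions from it bracket the target, followed by solving the resulting $2\times 2$ or $1\times 1$ barycentric system) then performs step (i) in $O(k)$ arithmetic operations on numbers of bit-size $O(L_{d,d'})$. Summing (i) and (ii) over $(d,d')\in\mu$ gives the stated $\mathcal O\bigl(\sum_{(d,d')\in\mu}|S_d|\cdot|S_{d'}|\,L_{d,d'}\bigr)$.

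The main obstacle is precisely step (i): achieving the linear-in-$k$ bound the theorem asserts, as opposed to the $k^{2.5}$ bound that \Cref{lemma:feasible_payoffs_are_computed_in_pol_time} hands over for free. This requires being careful about the degenerate configurations of the $k$ pure payoff vectors (coincident vectors, collinear vectors, the target on an edge or at a vertex of the hull), and about bit-complexity bookkeeping: one must check that the reference point chosen, the cross-products used to compare directions, and the small determinants solved for $\lambda$ all stay polynomially bounded in the input bit-size, so that each comparison and arithmetic step genuinely costs $O(L_{d,d'})$. If one is content with the (still polynomial) bound $\mathcal O\bigl(\sum_{(d,d')\in\mu}(|S_d|\cdot|S_{d'}|)^{2.5}L_{d,d'}\bigr)$, the argument reduces to "apply \Cref{lemma:feasible_payoffs_are_computed_in_pol_time} to each couple and add up'', with no further work.
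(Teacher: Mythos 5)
Your couple-by-couple decomposition and the reduction to expressing $(f_d,f_{d'})$ as a convex combination of the pure payoff vectors is exactly the paper's route: the paper proves this theorem with no separate argument, simply invoking \Cref{lemma:feasible_payoffs_are_computed_in_pol_time} for each matched pair and summing (unmatched doctors being trivial, as you note). Where you genuinely go beyond the paper is in observing that this invocation only delivers $\mathcal{O}\bigl((|S_d|\cdot|S_{d'}|)^{2.5}L_{d,d'}\bigr)$ per couple — a general LP solve of the system plus the lcm step — which does not match the linear-in-$|S_d|\cdot|S_{d'}|$ bound asserted in the statement, and in supplying the missing ingredient: since the feasibility system lives in the plane, Carath\'eodory gives a representation with at most three pure payoff vectors, and a fixed-dimension LP routine (or your direct bracketing scan) finds it in $\mathcal{O}(|S_d|\cdot|S_{d'}|)$ arithmetic operations on $\mathcal{O}(L_{d,d'})$-bit numbers. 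The paper contains none of this; read literally, its proof justifies only the $2.5$-power bound, so either your refinement (or something like it) is needed, or the exponent in the theorem should be weakened — your proposal is thus correct and in fact repairs an overstatement in the paper, and it is consistent with the paper's own summary table entry $\mathcal{O}(Nk^2L)$. Two small tightenings: with only three atoms in the support, the denominator-clearing step costs $\mathcal{O}(1)$ arithmetic operations (not $\mathcal{O}(k)$, though this does not affect the bound), and the degenerate-configuration and bit-size bookkeeping you flag is routine, since all comparisons reduce to cross products and $2\times 2$ determinants of $\mathcal{O}(L_{d,d'})$-bit entries.
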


\begin{remark}
If there are at most $N$ doctors and $k$ pure strategies per doctor, computing all the demand sets during an iteration of the market procedure has complexity $\mathcal{O}(N^2k^2L)$. Similarly, in case the outcome of the market procedure can be implemented by an allocation, finding the strategy profiles of the agents within the allocation has complexity $\mathcal{O}(Nk^2L)$.
\end{remark}

\subsection{Renegotiation process}

We introduce some classical definitions from uniform games. Let 
\begin{align*}
    co(A_{d,h},B_{d,h}) :=\text{convex hull}\{(A_{d,h}(s,t) B_{d,h}(s,t)) \in \mathbb{R}^2 : s \in S_d, t \in T_h\},
\end{align*}
to be the set of \textit{feasible payoffs}, and $\alpha_{d,h},\beta_{d,h}$ be, respectively, the \textit{punishment levels}of players $d \in D$ and $h \in H$, defined by,
\begin{align*}
    \alpha_{d,h} &:= \min_{y \in \Delta(T_h)}\max_{x \in \Delta(S_d)} x A_{d,h} y,\\
    \beta_{d,h} &:= \min_{x \in \Delta(S_t)}\max_{y \in \Delta(T_h)} x B_{d,h} y.
\end{align*}

We define the set of \textit{uniform equilibrium payoffs} by,
\begin{align*}
    E_{d,h} := \{(\overline{f},\overline{g}) \in co(A_{d,h},B_{d,h}) : \overline{f}\geq \alpha_{d,h}, \overline{g} \geq \beta_{d,h}\}
\end{align*}

From the Folk theorem of Aumann-Shapley \cite{aumann_long-term_1994}, we know that $E_{d,h}$ is exactly the set of uniform equilibrium payoff of $G_{d,h}^{\infty}$. 

\begin{definition}\label{def:eps_acceptable_payoffs_set}
Let $\pi = (\mu,\Vec{x},\Vec{y})$ be an allocation. For every pair of reservation payoffs $(f_d^{\pi}(\varepsilon),g_{h}^{\pi}(\varepsilon))$ and $\varepsilon \bbi 0$, we define the $\varepsilon$-\textbf{acceptable payoffs set} as
$$E_{d,h}(f_d^{\pi}(\varepsilon),g_{h}^{\pi}(\varepsilon)) := co(A_{d,h},B_{d,h}) \cap \{(\bar{f}, \bar{g}) \in \mathbb{R}^2 : \bar{f} + \varepsilon \geq f^{\pi}_d(\varepsilon), \bar{g} + \varepsilon \geq g_h^{\pi}(\varepsilon)\}.$$
\end{definition}

Finally, we define $\varepsilon$-constrained Nash equilibria for uniform games.

\begin{definition}
\textup{A strategy profile $\sigma = (\sigma_d,\sigma_h)$ is an $\varepsilon$-$(f^{\pi}_d(\varepsilon),g_h^{\pi}(\varepsilon))$-constrained Nash equilibrium of $G_{d,h}^{\infty}$ if,
\begin{itemize}
    \item[1.] For ay $\overline{\varepsilon} \bbi \varepsilon$, there exists $K_0 \in \mathbb{N}$, such that for any $K \geq K_0$ and any $(\tau_d,\tau_h)$,
    \begin{itemize}
        \item[(a)] if $f_{d,h}(K,\tau_d,\sigma_h) \bbi f_{d,h}(K,\sigma) + \overline{\varepsilon}$ then, $g_{d,h}(K,\tau_d, \sigma_h) + \varepsilon \sm g_h^{\pi}(\varepsilon)$,
        \item[(b)] if $g_{d,h}(K,\sigma_d,\tau_h) \bbi g_{d,h}(K,\sigma) + \overline{\varepsilon}$ then, $f_{d,h}(K,\sigma_d, \tau_h) + \varepsilon \sm f^{\pi}_d(\varepsilon)$.
    \end{itemize}
    \item[2.] It holds $(f_{d,h}(K,\sigma), g_{d,h}(K,\sigma)) \xrightarrow{K\to \infty}(f_{d,h}(\sigma), g_{d,h}(\sigma)) \in \mathbb{R}^2$, with $f_{d,h}(\sigma) + \varepsilon \geq f^{\pi}_d(\varepsilon)$, and $g_{d,h}(\sigma) + \varepsilon \geq g_h^{\pi}(\varepsilon)$.
\end{itemize}
The set of $\varepsilon$-$(f^{\pi}_d(\varepsilon),g_h^{\pi}(\varepsilon))$-CNE payoffs is denoted $E^{\infty}_{d,h}(f^{\pi}_d(\varepsilon),g_h^{\pi}(\varepsilon))$.}
\end{definition}

We begin the complexity analysis by studying the computation of $\varepsilon$-CNE.

\begin{theorem}[CNE Complexity]\label{teo:CNE_computation_is_polynomial_for_uniform_games}
Let $G_{d,h}^{\infty}$ be an infinitely repeated game as defined above. Given any players' reservation payoffs $(f^{\pi}_d(\varepsilon),g^{\pi}_h(\varepsilon)) \in \mathbb{R}^2$ such that $E_{d,h}(f^{\pi}_d(\varepsilon),g_h^{\pi}(\varepsilon))$ is non-empty, the complexity of computing an $\varepsilon$-$(f^{\pi}_d(\varepsilon),g^{\pi}_h(\varepsilon))$-CNE is at most,
$\mathcal{O}((|S_d|\cdot|T_h|)^{2.5}L_{d,h})$, where $L_{d,h}$ is the number of bits required to encode the data of the stage game $G_{d,h}$.
\end{theorem}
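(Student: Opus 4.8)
The plan is to reduce the computation of an $\varepsilon$-$(f^{\pi}_d(\varepsilon),g^{\pi}_h(\varepsilon))$-CNE of the uniform game $G_{d,h}^{\infty}$ to a \emph{constant} number of linear programs over the simplex $\Delta(S_d\times T_h)$, followed by a single application of \Cref{lemma:feasible_payoffs_are_computed_in_pol_time} to turn a target payoff vector into an explicit behavioral strategy profile. Existence of an $\varepsilon$-CNE is not the issue: since $G_{d,h}^{\infty}$ is $\varepsilon$-feasible (\Cref{prop:class_of_eps_feasible_games}) and $E_{d,h}(f^{\pi}_d(\varepsilon),g_h^{\pi}(\varepsilon))$ is non-empty, one exists; the point is to \emph{produce} one within $\mathcal{O}((|S_d|\cdot|T_h|)^{2.5}L_{d,h})$ elementary operations. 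I would first record the two ingredients I rely on: the Folk theorem of Aumann--Shapley \cite{aumann_long-term_1994}, which says the uniform equilibrium payoffs are exactly $E_{d,h}$ and are sustained by a main play path together with grim-trigger minmax punishments; and \Cref{lemma:feasible_payoffs_are_computed_in_pol_time}, which already converts any feasible payoff $(\bar f,\bar g)\in co(A_{d,h},B_{d,h})$ into a cyclic pure strategy profile achieving it, at cost $\mathcal{O}((|S_d|\cdot|T_h|)^{2.5}L_{d,h})$.

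Next I would characterize which payoff to target. As in the proof of \Cref{teo:zero_sum_games_are_eps_feasible_and_polynomial}, the $\varepsilon$-CNE payoffs are the feasible, $\varepsilon$-acceptable payoffs that are moreover \emph{constrained Pareto--optimal}: no feasible $(\bar f',\bar g')$ has $\bar f'>\bar f+\varepsilon$ with $\bar g'+\varepsilon\geq g_h^{\pi}(\varepsilon)$, and symmetrically no such point improves $\bar g$ while keeping $d$ above $f^{\pi}_d(\varepsilon)-\varepsilon$; this is the two-dimensional analogue of the median formula of the zero-sum case, and it likewise breaks into three cases according to whether the upper-right boundary of $E_{d,h}$ meets the acceptable box $\{\bar f+\varepsilon\geq f^{\pi}_d(\varepsilon),\ \bar g+\varepsilon\geq g_h^{\pi}(\varepsilon)\}$ or whether one of the two acceptability constraints is the binding one. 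In every case the target payoff is the optimum of a linear program whose variables are the weights $\lambda\in\Delta(S_d\times T_h)$, whose two affine read-outs are $\bar f(\lambda)=\sum_{s,t}A_{d,h}(s,t)\lambda_{s,t}$ and $\bar g(\lambda)=\sum_{s,t}B_{d,h}(s,t)\lambda_{s,t}$, and which has only $\mathcal{O}(1)$ constraints (the two acceptability inequalities, plus at most one equality fixing the binding coordinate). I would also solve the two matrix-game linear programs giving the punishment levels $\alpha_{d,h},\beta_{d,h}$, needed to build the trigger phases. Feeding the optimal $\lambda$ into \Cref{lemma:feasible_payoffs_are_computed_in_pol_time} yields the main-path strategy $\sigma_0$ with payoff $(\bar f,\bar g)$, and appending ``switch forever to minmaxing the deviator'' produces the candidate $\sigma$; conditions 1 and 2 of the $\varepsilon$-CNE definition are then checked as in the Folk theorem — condition 2 holds because $(\bar f,\bar g)$ is $\varepsilon$-acceptable, and for condition 1 any deviation of $d$ whose uniform payoff exceeds $\bar f+\bar\varepsilon$ must, by constrained Pareto--optimality, drive $h$'s payoff below $g_h^{\pi}(\varepsilon)-\varepsilon$ (either along the deviation or in the punishment it triggers), the finite horizon $K_0$ being extracted in the standard way.

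The complexity then adds up: the punishment-level programs cost $\mathcal{O}((|S_d|+|T_h|)^{1.5}\max\{|S_d|,|T_h|\}L_{d,h})$; each target-payoff program has $n=|S_d|\cdot|T_h|$ variables and $m=\mathcal{O}(1)$ constraints, so by \Cref{teo:vaidya_complexity_LP} it costs $\mathcal{O}((n+m)^{1.5}nL_{d,h})=\mathcal{O}((|S_d|\cdot|T_h|)^{2.5}L_{d,h})$; and the strategy construction costs the same by \Cref{lemma:feasible_payoffs_are_computed_in_pol_time}. Since only constantly many such computations are performed, the total is $\mathcal{O}((|S_d|\cdot|T_h|)^{2.5}L_{d,h})$, as claimed. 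The step I expect to be the real obstacle is the characterization: making precise which constrained-Pareto-optimal, $\varepsilon$-acceptable payoff is actually realized by an $\varepsilon$-CNE of $G_{d,h}^{\infty}$ — in particular tracking how the punishment levels $\alpha_{d,h},\beta_{d,h}$ interact with the acceptable box — and verifying the resulting case split; once that is settled, the rest is routine reuse of \Cref{prop:QCQP_problem_is_polynomial_for_inf_repeated_games,lemma:feasible_payoffs_are_computed_in_pol_time} and \Cref{teo:vaidya_complexity_LP}.
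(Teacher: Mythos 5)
The complexity bookkeeping in your proposal is fine (constantly many LPs over $\Delta(S_d\times T_h)$ with $\mathcal{O}(1)$ constraints via \Cref{teo:vaidya_complexity_LP}, punishment-level matrix-game LPs, and one call to \Cref{lemma:feasible_payoffs_are_computed_in_pol_time}), and this matches the paper's accounting. The gap is exactly the step you flag as the obstacle: your ``characterization'' of $\varepsilon$-CNE payoffs as the feasible, $\varepsilon$-acceptable, constrained-Pareto-optimal points is wrong in both of the ways that matter. It is not necessary: by the Folk theorem any point of $E_{d,h}\cap E_{d,h}(f^{\pi}_d(\varepsilon),g_h^{\pi}(\varepsilon))$ is a CNE payoff, and such points are sustained by punishments, not by Pareto optimality. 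Worse, as a target for the algorithm it may simply not exist: your condition asks for a single feasible $(\bar f,\bar g)$ with $\bar f+\varepsilon\geq\max\{f':(f',g')\in co(A_{d,h},B_{d,h}),\ g'\geq g_h^{\pi}(\varepsilon)-\varepsilon\}$ \emph{and} $\bar g+\varepsilon\geq\max\{g':(f',g')\in co(A_{d,h},B_{d,h}),\ f'\geq f_d^{\pi}(\varepsilon)-\varepsilon\}$, and when the Pareto frontier of $co(A_{d,h},B_{d,h})$ inside the acceptable box is steep (e.g.\ a zero-sum-like frontier such as the hull of $(0,0),(1,0),(0,1)$ with low reservation payoffs) these two maxima are attained at far-apart points and no feasible point is within $\varepsilon$ of both. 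This is precisely why in the zero-sum case (\Cref{teo:zero_sum_games_are_eps_feasible_and_polynomial}) the CNE payoff is a \emph{median} — only one constraint can bind — not a simultaneous optimum.

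The verification sketch inherits the same problem: ``switch forever to minmaxing the deviator'' only deters a player whose target payoff is at or above her punishment level; if $\bar g<\beta_{d,h}$, hospital $h$ can deviate and secure roughly $\beta_{d,h}$, and during $d$'s punishment phase $d$'s own $K$-stage payoff is uncontrolled, so nothing forces it below $f_d^{\pi}(\varepsilon)-\varepsilon$ as condition 1(b) requires. The argument that does work is one-sided and uses only that every $K$-stage average payoff pair is feasible: if you have maximized \emph{that} player's coordinate over the acceptable set, any deviation gaining her more than $\bar\varepsilon$ must push the other coordinate below its reservation. Hence the correct structure is the paper's case split on $f_d^{\pi}(\varepsilon)-\varepsilon$ versus $\alpha_{d,h}$ and $g_h^{\pi}(\varepsilon)-\varepsilon$ versus $\beta_{d,h}$ (\Cref{lemma:inf_rep_games_intersection_non_empty,lemma:inf_rep_games_intersection_empty}): when both comparisons go the same way, $E_{d,h}\cap E_{d,h}(f^{\pi}_d(\varepsilon),g_h^{\pi}(\varepsilon))\neq\emptyset$ and one LP plus \Cref{lemma:folk_theorem_CNE} and \Cref{lemma:feasible_payoffs_are_computed_in_pol_time} finishes; in the mixed cases one maximizes the coordinate of the player whose reservation lies below her punishment level, shifts by $\varepsilon$, and supports it with a punishment against the \emph{other} player only, the maximality argument covering the unpunishable one. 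Your three-case hint about the boundary of $E_{d,h}$ meeting the acceptable box gestures toward this, but without the punishment-level case analysis the characterization and the incentive verification both fail, so the proof as written does not go through.
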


We split the proof of \Cref{teo:CNE_computation_is_polynomial_for_uniform_games} in the following three lemmas. First, from the Folk theorem of Aumann-Shapley \cite{aumann_long-term_1994}, the following holds.

\begin{lemma}\label{lemma:folk_theorem_CNE}
It holds in $E_{d,h} \cap E_{d,h}(f^{\pi}_d(\varepsilon),g_h^{\pi}(\varepsilon)) \subseteq E^{\infty}_{d,h}(f^{\pi}_d(\varepsilon),g_h^{\pi}(\varepsilon))$.
\end{lemma}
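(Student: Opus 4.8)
The goal is to show the containment $E_{d,h} \cap E_{d,h}(f^{\pi}_d(\varepsilon),g_h^{\pi}(\varepsilon)) \subseteq E^{\infty}_{d,h}(f^{\pi}_d(\varepsilon),g_h^{\pi}(\varepsilon))$, i.e., that any feasible payoff which is simultaneously an (unconstrained) uniform equilibrium payoff and an $\varepsilon$-acceptable payoff is in fact realizable by an $\varepsilon$-constrained Nash equilibrium. The plan is to take a point $(\overline{f},\overline{g})$ in the left-hand intersection, recall from the Folk theorem of Aumann--Shapley \cite{aumann_long-term_1994} that it is a uniform equilibrium payoff of $G_{d,h}^{\infty}$, and then verify that the strategy profile $\sigma$ supporting it --- a ``play the convex-combination cycle, punish any deviation'' profile --- also meets the two requirements in the definition of $\varepsilon$-constrained Nash equilibrium.

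First I would fix $(\overline{f},\overline{g}) \in E_{d,h} \cap E_{d,h}(f^{\pi}_d(\varepsilon),g_h^{\pi}(\varepsilon))$. Membership in $E_{d,h}$ means $(\overline{f},\overline{g}) \in co(A_{d,h},B_{d,h})$ with $\overline{f}\geq \alpha_{d,h}$ and $\overline{g}\geq\beta_{d,h}$, so by the Folk theorem there is a behavioral strategy profile $\sigma=(\sigma_d,\sigma_h)$ which is a uniform equilibrium of $G_{d,h}^{\infty}$ and achieves $(f_{d,h}(\sigma),g_{d,h}(\sigma))=(\overline{f},\overline{g})$ (concretely, $\sigma$ cycles through the pure profiles with the convex weights realizing $(\overline{f},\overline{g})$ as in \Cref{lemma:feasible_payoffs_are_computed_in_pol_time}, with deviations met by the minmax punishment). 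Requirement 2 of the definition of $\varepsilon$-CNE is then immediate: $(f_{d,h}(K,\sigma),g_{d,h}(K,\sigma))\to(\overline{f},\overline{g})$, and membership of $(\overline{f},\overline{g})$ in $E_{d,h}(f^{\pi}_d(\varepsilon),g_h^{\pi}(\varepsilon))$ gives exactly $\overline{f}+\varepsilon\geq f^{\pi}_d(\varepsilon)$ and $\overline{g}+\varepsilon\geq g_h^{\pi}(\varepsilon)$.

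For requirement 1, I would argue the contrapositive of 1(a) (and symmetrically 1(b)). Since $\sigma$ is a uniform equilibrium, for every $\overline{\varepsilon}>\varepsilon\ (>0)$ there is $K_0$ so that for all $K\geq K_0$ no deviation $\tau_d$ raises $d$'s $K$-stage payoff by more than, say, $\overline{\varepsilon}-\varepsilon$ above $f_{d,h}(K,\sigma)$ --- this is just the definition of uniform equilibrium with a smaller slack. Now suppose $\tau_d$ satisfies $f_{d,h}(K,\tau_d,\sigma_h) > f_{d,h}(K,\sigma)+\overline{\varepsilon}$; this is impossible for $K\geq K_0$ once $K_0$ is chosen for the slack $\overline{\varepsilon}-\varepsilon$, so the implication in 1(a) holds vacuously. (Alternatively, if one prefers a non-vacuous reading, one uses that $\sigma_h$ holds $d$ to at most $\alpha_{d,h}\leq\overline{f}$ asymptotically when $d$ stops following the cycle, together with $\overline{g}+\varepsilon\geq g_h^{\pi}(\varepsilon)$ and continuity, to get $g_{d,h}(K,\tau_d,\sigma_h)+\varepsilon < g_h^{\pi}(\varepsilon)$ whenever such a profitable $\tau_d$ existed.) The case 1(b) is identical with the roles of $d$ and $h$, $A_{d,h}$ and $B_{d,h}$, $\alpha_{d,h}$ and $\beta_{d,h}$ swapped.

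The main obstacle is bookkeeping the two slack parameters correctly: the constrained-equilibrium definition quantifies ``for any $\overline{\varepsilon}>\varepsilon$ there exists $K_0$'', whereas the Folk theorem gives ``for any $\delta>0$ there exists $K_0$'', so one has to set $\delta=\overline{\varepsilon}-\varepsilon$ and track which of the two inequalities in 1(a) is actually being used (the premise about $f_{d,h}$ or the conclusion about $g_{d,h}$). Once that correspondence is pinned down, the argument is a direct translation and no further computation is needed; the containment follows. Thus $E_{d,h}\cap E_{d,h}(f^{\pi}_d(\varepsilon),g_h^{\pi}(\varepsilon))\subseteq E^{\infty}_{d,h}(f^{\pi}_d(\varepsilon),g_h^{\pi}(\varepsilon))$, as claimed.
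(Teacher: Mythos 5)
Your proof is correct and takes essentially the same route as the paper, which states the lemma as a direct consequence of the Aumann--Shapley Folk theorem: a payoff in $E_{d,h}$ is supported by a uniform equilibrium whose $K$-stage deviation gains are eventually below $\overline{\varepsilon}$, so condition 1 of the $\varepsilon$-CNE definition holds vacuously, while condition 2 is exactly membership in $E_{d,h}(f^{\pi}_d(\varepsilon),g_h^{\pi}(\varepsilon))$. Your write-up merely fills in this verification (the parenthetical ``non-vacuous reading'' is unnecessary and can be dropped), so it matches the paper's argument.
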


Whenever the intersection in \Cref{lemma:folk_theorem_CNE} is non-empty, there exists a uniform equilibrium payoff profile $(\bar{f}, \bar{g})$ that belongs to $E^{\infty}_{d,h}(f^{\pi}_d(\varepsilon),g_h^{\pi}(\varepsilon))$. Combined with \Cref{lemma:feasible_payoffs_are_computed_in_pol_time} that states the complexity of finding a strategy profile that achieves a given payoff profile, we obtain a uniform equilibrium that achieves $(\bar{f}, \bar{g})$ with the complexity stated in \Cref{teo:CNE_computation_is_polynomial_for_uniform_games}. The following lemma provides sufficient conditions for that intersection to be non-empty.

\begin{lemma}\label{lemma:inf_rep_games_intersection_non_empty}
Let $(f^{\pi}_d(\varepsilon),g^{\pi}_h(\varepsilon))$ be a pair of reservation payoffs such that the set $E_{d,h}(f^{\pi}_d(\varepsilon),g_h^{\pi}(\varepsilon))$ is non-empty. Then, $E_{d,h} \cap E_{d,h}(f^{\pi}_d(\varepsilon),g_h^{\pi}(\varepsilon))$ is non-empty if either $f^{\pi}_d(\varepsilon) - \varepsilon \geq \alpha_{d,h}$ and $g^{\pi}_h(\varepsilon) - \varepsilon \geq \beta_{d,h}$, or $f^{\pi}_d(\varepsilon) - \varepsilon \sm \alpha_{d,h}$ and $g^{\pi}_h(\varepsilon) - \varepsilon \sm \beta_{d,h}$.
\end{lemma}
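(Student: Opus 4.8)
The plan is to express the intersection $E_{d,h}\cap E_{d,h}(f^{\pi}_d(\varepsilon),g_h^{\pi}(\varepsilon))$ in closed form and then dispatch the two hypotheses separately. Write $C:=co(A_{d,h},B_{d,h})$ for the feasible payoff set. Both $E_{d,h}$ and the $\varepsilon$-acceptable set are intersections of $C$ with a ``north-east'' orthant, so
\begin{align*}
E_{d,h}\cap E_{d,h}(f^{\pi}_d(\varepsilon),g_h^{\pi}(\varepsilon))
= C\cap\bigl\{(\bar f,\bar g):\ \bar f\geq\max\{\alpha_{d,h},\,f^{\pi}_d(\varepsilon)-\varepsilon\},\ \bar g\geq\max\{\beta_{d,h},\,g_h^{\pi}(\varepsilon)-\varepsilon\}\bigr\}.
\end{align*}
Under the first hypothesis ($f^{\pi}_d(\varepsilon)-\varepsilon\geq\alpha_{d,h}$ and $g_h^{\pi}(\varepsilon)-\varepsilon\geq\beta_{d,h}$) the two maxima are attained by the reservation terms, so the right-hand side is exactly $E_{d,h}(f^{\pi}_d(\varepsilon),g_h^{\pi}(\varepsilon))$, which is non-empty by assumption; this disposes of the first case at once, and note that the non-emptiness hypothesis is used only here.

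For the second hypothesis ($f^{\pi}_d(\varepsilon)-\varepsilon<\alpha_{d,h}$ and $g_h^{\pi}(\varepsilon)-\varepsilon<\beta_{d,h}$) the two maxima are attained by the punishment levels, so the right-hand side reduces to $E_{d,h}$ itself, and it suffices to show $E_{d,h}\neq\varnothing$. Here I would invoke Nash's theorem: the finite stage game $G_{d,h}$ admits a mixed equilibrium $(\hat x,\hat y)$. Its payoff $(\hat x A_{d,h}\hat y,\ \hat x B_{d,h}\hat y)=\sum_{s\in S_d,\,t\in T_h}\hat x(s)\hat y(t)\,(A_{d,h}(s,t),B_{d,h}(s,t))$ is a convex combination of the pure payoff profiles, hence lies in $C$; and since $(\hat x,\hat y)$ is an equilibrium, $\hat x A_{d,h}\hat y=\max_{x}xA_{d,h}\hat y\geq\min_{y}\max_{x}xA_{d,h}y=\alpha_{d,h}$ and, symmetrically, $\hat x B_{d,h}\hat y\geq\beta_{d,h}$. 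Thus this equilibrium payoff lies in $E_{d,h}$, proving non-emptiness.

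There is no genuinely hard step: the argument only uses the description of $E_{d,h}$ and $E_{d,h}(\cdot)$ as truncations of $C$, the existence of a stage-game mixed equilibrium, and the elementary fact that equilibrium payoffs weakly dominate the mixed minmax levels $\alpha_{d,h},\beta_{d,h}$. The one subtlety worth flagging is that the lemma makes no claim in the two ``crossed'' configurations (say $f^{\pi}_d(\varepsilon)-\varepsilon\geq\alpha_{d,h}$ while $g_h^{\pi}(\varepsilon)-\varepsilon<\beta_{d,h}$), where the displayed intersection can genuinely be empty; the proof need only, and does only, cover the two stated cases.
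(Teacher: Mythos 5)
Your proof is correct and follows essentially the same route as the paper's: in the first case the inclusion $E_{d,h}(f^{\pi}_d(\varepsilon),g_h^{\pi}(\varepsilon))\subseteq E_{d,h}$ makes the intersection equal to the (assumed non-empty) acceptable set, and in the second case the reverse inclusion reduces the claim to $E_{d,h}\neq\varnothing$. The only difference is that you explicitly justify $E_{d,h}\neq\varnothing$ via a stage-game Nash equilibrium dominating the punishment levels $\alpha_{d,h},\beta_{d,h}$, a fact the paper leaves implicit (relying on the Folk theorem), so your write-up is if anything slightly more complete.
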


\begin{proof}
In the first case, $E_{d,h}(f^{\pi}_d(\varepsilon),g_h^{\pi}(\varepsilon))\subseteq E_{d,h}$, thus the intersection between them is equal to $E_{d,h}(f^{\pi}_d(\varepsilon),g_h^{\pi}(\varepsilon))$, which is non-empty. In the second case, $E_{d,h} \subseteq E_{d,h}(f^{\pi}_d(\varepsilon),g_h^{\pi}(\varepsilon))$ and therefore, the intersection is non-empty.
\end{proof}

This yields the two following missing cases.

\begin{lemma}\label{lemma:inf_rep_games_intersection_empty}
Let $(f^{\pi}_d(\varepsilon),g^{\pi}_h(\varepsilon))$ be a pair of reservation payoffs such that the set $E_{d,h}(f^{\pi}_d(\varepsilon),g_h^{\pi}(\varepsilon))$ is non-empty. Then, computing an $\varepsilon$-CNE has complexity $\mathcal{O}((|S_d|\cdot|T_h|)^{2.5}L_{d,h})$ either if $f^{\pi}_d(\varepsilon) - \varepsilon \geq \alpha_{d,h}$ and $g^{\pi}_h(\varepsilon) - \varepsilon \sm \beta_{d,h}$, or $f^{\pi}_d(\varepsilon) - \varepsilon \sm \alpha_{d,h}$ and $g^{\pi}_h(\varepsilon) - \varepsilon \geq \beta_{d,h}$.
\end{lemma}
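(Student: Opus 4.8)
The plan is to exchange the roles of the two players so as to reduce to the case $f_d^\pi(\varepsilon)-\varepsilon\ge\alpha_{d,h}$ and $g_h^\pi(\varepsilon)-\varepsilon<\beta_{d,h}$; the other case follows identically after swapping $d$ with $h$ and $A_{d,h}$ with $B_{d,h}$. First I would compute
$$\bar g:=\max\{\,g:(f,g)\in co(A_{d,h},B_{d,h}),\ f+\varepsilon\ge f_d^\pi(\varepsilon)\,\}$$
together with a pair $(\bar f,\bar g)\in co(A_{d,h},B_{d,h})$ attaining it; written in the $|S_d|\cdot|T_h|$ barycentric weights this is a linear program of the form of Problem (\ref{def:opt_problem_1_inf_rep_games}). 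Since $E_{d,h}(f_d^\pi(\varepsilon),g_h^\pi(\varepsilon))$ is non-empty the program is feasible and $\bar g\ge g_h^\pi(\varepsilon)-\varepsilon$, so $(\bar f,\bar g)$ is $\varepsilon$-acceptable; moreover $\bar f\ge f_d^\pi(\varepsilon)-\varepsilon\ge\alpha_{d,h}$. If $\bar g\ge\beta_{d,h}$, then $(\bar f,\bar g)\in E_{d,h}\cap E_{d,h}(f_d^\pi(\varepsilon),g_h^\pi(\varepsilon))$, hence lies in $E_{d,h}^\infty(f_d^\pi(\varepsilon),g_h^\pi(\varepsilon))$ by \Cref{lemma:folk_theorem_CNE}, and \Cref{lemma:feasible_payoffs_are_computed_in_pol_time} produces a strategy profile attaining it within the claimed budget.

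The real work is the subcase $\bar g<\beta_{d,h}$. Here the key structural remark is that the vertical slice of $co(A_{d,h},B_{d,h})$ at ordinate $\bar g$ reaches down to the line $f=f_d^\pi(\varepsilon)-\varepsilon$, i.e.\ $(f_d^\pi(\varepsilon)-\varepsilon,\bar g)\in co(A_{d,h},B_{d,h})$: indeed $\beta_{d,h}$, being attained by a pair in $co(A_{d,h},B_{d,h})$, does not exceed the top ordinate of that polytope, so $\bar g<\beta_{d,h}$ implies some ordinate strictly above $\bar g$ is still feasible; by maximality of $\bar g$ every feasible pair with ordinate above $\bar g$ has abscissa below $f_d^\pi(\varepsilon)-\varepsilon$, and convexity of $co(A_{d,h},B_{d,h})$ then forces the slice at $\bar g$ to cross the vertical line $f=f_d^\pi(\varepsilon)-\varepsilon$. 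I would therefore fix the target $(\bar f,\bar g)$ with $\bar f:=f_d^\pi(\varepsilon)-\varepsilon$ and take $\sigma$ to be a grim-trigger profile: on path the players cycle the pure profiles supporting $(\bar f,\bar g)$ (\Cref{lemma:feasible_payoffs_are_computed_in_pol_time}); a unilateral deviation of $d$ makes $h$ switch forever to a minmax strategy $\bar y$ holding $d$ to $\alpha_{d,h}$, and a unilateral deviation of $h$ makes $d$ switch forever to a minmax strategy $\bar x$ holding $h$ to $\beta_{d,h}$, where $\bar x,\bar y,\alpha_{d,h},\beta_{d,h}$ come from matrix-game linear programs (\Cref{teo:vaidya_complexity_LP}).

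Verifying that $\sigma$ is an $\varepsilon$-$(f_d^\pi(\varepsilon),g_h^\pi(\varepsilon))$-CNE is where the main obstacle lies, concentrated in condition 1(b). The conditions in 2 hold by construction ($\bar f+\varepsilon=f_d^\pi(\varepsilon)$, $\bar g+\varepsilon\ge g_h^\pi(\varepsilon)$), and condition 1(a) is vacuous because after any deviation of $d$ the punishment caps $d$'s continuation at $\alpha_{d,h}\le\bar f$, so for large $K$ no deviation of $d$ improves her $K$-stage average by more than $\bar\varepsilon$. For condition 1(b): after a deviation of $h$ at some stage, $d$ plays $\bar x$ forever, so the time-averaged continuation payoff pair lies in $Q:=\{(\bar x A_{d,h}y,\bar x B_{d,h}y):y\in\Delta(T_h)\}\subseteq co(A_{d,h},B_{d,h})$, whose ordinate never exceeds $\beta_{d,h}$; if the deviation gains more than $\bar\varepsilon$ in the $K$-average, then (the pre-punishment block contributing only $o(1)$ and $h$ earning $\bar g+o(1)$ on path) the continuation ordinate exceeds $\bar g$ by at least a fixed positive amount, which by maximality of $\bar g$ forces the continuation abscissa below $f_d^\pi(\varepsilon)-\varepsilon$ by a fixed $\eta>0$; since $\bar f$ equals $f_d^\pi(\varepsilon)-\varepsilon$ exactly (no on-path surplus for $d$) and gainfulness forces the continuation block to occupy at least a fixed fraction of the horizon, $d$'s full $K$-average falls strictly below $f_d^\pi(\varepsilon)-\varepsilon$ for all large $K$, i.e.\ $f_{d,h}(K,\sigma_d,\tau_h)+\varepsilon<f_d^\pi(\varepsilon)$. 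The hard part is precisely this last estimate: combining the convexity remark that pins $\bar f$ to the threshold, the maximality-of-$\bar g$ geometry of $Q$, and the uniform-game limit bookkeeping (cycle-period remainders, the minimal length of gainful punishments) into a clean bound.

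Finally, the complexity is the sum of the linear program defining $\bar g$, the matrix games giving $\alpha_{d,h},\beta_{d,h},\bar x,\bar y$, and the call to \Cref{lemma:feasible_payoffs_are_computed_in_pol_time}; each is solvable in $\mathcal{O}((|S_d|\cdot|T_h|)^{2.5}L_{d,h})$ by \Cref{teo:vaidya_complexity_LP} (the matrix games, having $O(|S_d|+|T_h|)$ variables, being dominated), which yields the announced bound. The symmetric case is handled verbatim after the role swap.
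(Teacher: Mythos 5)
Your proposal is essentially correct, but it deviates from the paper's construction in three places, and the comparison is instructive. The paper also starts from the linear program maximizing $h$'s coordinate over the $\varepsilon$-acceptable part of $co(A_{d,h},B_{d,h})$, but then (i) targets the shifted profile $(\bar f,\bar g+\varepsilon)$ rather than pinning $d$ exactly at $f_d^{\pi}(\varepsilon)-\varepsilon$, so it never needs your geometric claim that the slice at ordinate $\bar g$ meets the vertical line $f=f_d^{\pi}(\varepsilon)-\varepsilon$ (your claim is in fact true when $\bar g\sm\beta_{d,h}\leq\max B_{d,h}$, since otherwise the constraint would be slack at an optimum and $\bar g$ would equal the unconstrained maximum, but it is an extra argument you must supply); (ii) has $d$ simply \emph{ignore} $h$'s deviations instead of your grim trigger with the minmax strategy $\bar x$; and (iii) disposes of condition 1(b) in two lines: whatever $h$ does, the full $K$-stage average pair $(f',g')$ is itself a point of $co(A_{d,h},B_{d,h})$, so if $g'$ exceeds the on-path payoff by more than $\bar\varepsilon$ then $g'\bbi\bar g$ and maximality of $\bar g$ immediately forces $f'\sm f_d^{\pi}(\varepsilon)-\varepsilon$. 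This last observation makes your ``hard part'' self-inflicted: your block decomposition, the fixed fraction of the horizon, and the compactness constant $\eta$ are not needed, because the maximality argument applies to the whole average and not only to the punishment continuation in $Q$ (your parenthetical that the pre-punishment block contributes only $o(1)$ is also inaccurate as stated, though the weighted-average version of your estimate still goes through). Your bilateral punishment is harmless but buys nothing, and condition 1(a), the case split $\bar g\geq\beta_{d,h}$ versus $\bar g\sm\beta_{d,h}$ (equivalent to the paper's emptiness test for $E_{d,h}\cap E_{d,h}(f_d^{\pi}(\varepsilon),g_h^{\pi}(\varepsilon))$), and the complexity accounting via \Cref{teo:vaidya_complexity_LP} and \Cref{lemma:feasible_payoffs_are_computed_in_pol_time} match the paper.
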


\begin{proof}
Suppose the first case, $f^{\pi}_d(\varepsilon) - \varepsilon \geq \alpha_{d,h}$ and $g^{\pi}_h(\varepsilon) - \varepsilon \sm \beta_{d,h}$. Let $F  := E_{d,h}(f^{\pi}_d(\varepsilon),g_h^{\pi}(\varepsilon)) \cap E_{d,h}$. If $F$ is non-empty, the result holds from \Cref{lemma:folk_theorem_CNE}. Suppose $F$ is empty and consider the payoff profile $(\bar{f}, \bar{g}) \in co(A_{d,h}, B_{d,h})$ given by 
$$\bar{g} = \max \{ g \in co(A_{d,h},B_{d,h}): \exists f \in \mathbb{R}, (f,g) \in E_{d,h}(f^{\pi}_d(\varepsilon),g_h^{\pi}(\varepsilon))\}.$$ 

Computing $(\bar{f},\bar{g})$ can be done in $\mathcal{O}((|S_d|\cdot|T_h|)^{2.5}L_{d,h})$ elementary operations by solving the system of linear equations with $(\lambda_{s,t})_{s\in S_d,t\in T_h}$ variables (Problem (\ref{def:opt_problem_1_inf_rep_games})) exchanging the roles of the matrices. Shift the payoff profile to $(\bar{f}, \bar{g} + \varepsilon)$, assuming that increasing $\bar{g}$ by $\varepsilon$ does not take the payoff out of the convex envelope (if it does it, $h$ has reached its highest possible payoff, so it does not have any profitable deviation). Let $\sigma$ be a strategy in $G_{d,h}^{\infty}$ that achieves $(\bar{f}, \bar{g} + \varepsilon)$, computable in $\mathcal{O}((|S_d|\cdot|T_h|)^{2.5}L_{d,h})$ (\Cref{lemma:feasible_payoffs_are_computed_in_pol_time}). Consider next $\sigma'$ the strategy profile in which $d$ and $h$ play following $\sigma$ at every stage, such that if $d$ deviates, $h$ punishes her decreasing her payoff to $\alpha_{d,h}$, and if $h$ deviates, $d$ ignores it and keeps playing according to $\sigma$. We claim that $\sigma'$ is an $\varepsilon$-$(f^{\pi}_d(\varepsilon),g^{\pi}_h(\varepsilon))$-CNE. Indeed, it is feasible as their limit payoff profile is $(\bar{f}, \bar{g} + \varepsilon)$. In addition, remark that $s$ does not have profitable deviations as $h$ punishes her and $\bar{f} \geq f^{\pi}_d(\varepsilon) - \varepsilon \geq \alpha_{d,h}$.

Finally, let $K \in \mathbb{N}$ and $\bar{\varepsilon} \bbi \varepsilon$ such that $h$ can deviate at time $K$ and get $g' \geq (\bar{g} + \varepsilon) + \bar{\varepsilon}$. Let $f'$ be the payoff of $d$ until the stage $K$. Note that $(f',g') \in co(A_{d,h}, B_{d,h})$ since $(f',g')$ is an average payoff profile of the $K$-stage game. Suppose that $f' \geq f^{\pi}_d(\varepsilon) - \varepsilon$, so $(f',g') \in E_{d,h}(f^{\pi}_d(\varepsilon),g_h^{\pi}(\varepsilon))$. Then, 
$\bar{f} \geq f' \geq \bar{f} + \varepsilon + \bar{\varepsilon} \bbi \bar{f}$,
which is a contradiction. Therefore, $f' \sm f^{\pi}_d(\varepsilon) - \varepsilon$. Thus, $\sigma'$ is an $\varepsilon$-$(f^{\pi}_d(\varepsilon),g^{\pi}_h(\varepsilon))$-CNE. For the second case in which $f^{\pi}_d(\varepsilon) - \varepsilon \sm \alpha_{d,h}$ and $g^{\pi}_h(\varepsilon) - \varepsilon \geq \beta_{d,h}$, the argument is analogous.
\end{proof}

As all the possible cases are covered by \Cref{lemma:inf_rep_games_intersection_non_empty,lemma:inf_rep_games_intersection_empty}, we conclude the proof of \Cref{teo:CNE_computation_is_polynomial_for_uniform_games} regarding the complexity of computing constrained Nash equilibria. Making a similar computation to the one for zero-sum matching games, we can bound the complexity of an entire iteration of the $\varepsilon$-renegotiation process (\Cref{Algo:strategy_profiles_modification_eps_version}) by,
\begin{align*}
    \mathcal{O}\left(\sum_{d \in D} \left[|H|\cdot|D| + \sum_{h \in H} (|S_d|\cdot|T_h|)^{2.5} + |S_d|\cdot|T_{\mu(d)}|^{2.5}\right]\cdot L \right),
\end{align*}
where the first two terms come from the reservation payoffs computation, the last one from the constrained N as equilibria computation, and $L$ is the number of bits required to encode the entire data. 

\begin{remark}\label{remark:complexity_stra_prof_mod_inf_rept_games}
Considering $N$ agents per side and $k$ pure strategies per player, the complexity of an iteration of the $\varepsilon$-renegotiation process can be bounded by $\mathcal{O}\left(N^3k^5L\right)$.
\end{remark}

Finally, we study the convergence of the algorithm for infinitely repeated games.

\begin{theorem}[Convergence]\label{teo:convergence_algo_2_inf_rep_games}
Let $\pi = (\mu,\sigma_D, \sigma_H)$ be an $\varepsilon$-pairwise stable allocation. Let $(f_d^{\pi}(\varepsilon),g_h^{\pi}(\varepsilon))_{(d,h) \in \mu}$ be the $\varepsilon$-reservation payoffs generated by $\pi$. Then, there exists an oracle for computing $\varepsilon$-CNE such that, starting from $\pi$, the $\varepsilon$-renegotiation process (\Cref{Algo:strategy_profiles_modification_eps_version}) converges in at most $$ \frac{1}{\varepsilon}\left(\max_{(d,h) \in \mu}\{\max\{\alpha_{d,h} - f_d^{\pi}(\varepsilon), \beta_{d,h} - g_h^{\pi}(\varepsilon)\}\}\right),$$
iterations, where $\alpha_{d,h}, \beta_{d,h}$ are the punishment levels of $(d,h)$.
\end{theorem}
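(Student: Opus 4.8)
The plan is to follow the pattern of the proof of \Cref{teo:algorithm_2_ends_in_finite_time_for_zero_sum_games}, with the punishment levels $\alpha_{d,h}$ and $\beta_{d,h}$ taking the role played there by the game value $w_{d,h}$, together with an explicit choice of oracle to compensate for the fact that a $\varepsilon$-CNE payoff is no longer a single number but a region of $\mathbb{R}^2$. First I would pin down the oracle used in the line ``\textup{Choose} $(x_d^*,y_{d,h}^*)\in\varepsilon$-\textup{CNE}'' of \Cref{Algo:strategy_profiles_modification_eps_version}. All along the run the $\varepsilon$-acceptable set $E_{d,h}(f_d^{\pi(t)}(\varepsilon),g_h^{\pi(t)}(\varepsilon))$ is non-empty, because $\pi(t)$ stays $\varepsilon$-pairwise stable (\Cref{teo:eps_strag_prof_mod_algo_is_correct}) and so the strategy profile currently carried by $(d,h)$ is $\varepsilon$-feasible; hence a $\varepsilon$-CNE exists (\Cref{lemma:folk_theorem_CNE,lemma:inf_rep_games_intersection_non_empty,lemma:inf_rep_games_intersection_empty,teo:CNE_computation_is_polynomial_for_uniform_games}). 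The oracle keeps the current profile whenever it still induces a $\varepsilon$-CNE payoff; otherwise it returns precisely the CNE built in the proofs of \Cref{lemma:inf_rep_games_intersection_non_empty,lemma:inf_rep_games_intersection_empty} --- a genuine uniform equilibrium payoff of $E_{d,h}\cap E_{d,h}(f_d^{\pi(t)}(\varepsilon),g_h^{\pi(t)}(\varepsilon))$ when $f_d^{\pi(t)}(\varepsilon)-\varepsilon\ge\alpha_{d,h}$ and $g_h^{\pi(t)}(\varepsilon)-\varepsilon\ge\beta_{d,h}$, and otherwise the extreme point of $E_{d,h}(f_d^{\pi(t)}(\varepsilon),g_h^{\pi(t)}(\varepsilon))$ in the coordinate whose reservation payoff sits below its punishment level, pushed up by $\varepsilon$ as in \Cref{lemma:inf_rep_games_intersection_empty}. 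This is the polynomial-time oracle of \Cref{teo:CNE_computation_is_polynomial_for_uniform_games}, so the resulting run is also efficient iteration by iteration.

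I would then show that a couple with $f_d^{\pi(t)}(\varepsilon)-\varepsilon\ge\alpha_{d,h}$ and $g_h^{\pi(t)}(\varepsilon)-\varepsilon\ge\beta_{d,h}$ becomes a fixed point of the process: it plays a payoff of $E_{d,h}$, which by preservation of $\varepsilon$-pairwise stability (\Cref{teo:eps_strag_prof_mod_algo_is_correct}) remains $\varepsilon$-feasible --- hence a $\varepsilon$-CNE --- at the next iteration. The core of the argument is the infinitely repeated analogue of \Cref{lemma:eps_outside_options_are_monotone_in_zero_sum_games}: for a couple stressed in the hospital direction ($g_h^{\pi(t)}(\varepsilon)-\varepsilon<\beta_{d,h}$ while $f_d^{\pi(t)}(\varepsilon)-\varepsilon\ge\alpha_{d,h}$), the payoff the oracle assigns to $h$ --- the top of $E_{d,h}(f_d^{\pi(t)}(\varepsilon),g_h^{\pi(t)}(\varepsilon))$ in the $g$-coordinate, plus $\varepsilon$ --- advances by at least $\varepsilon$ at each iteration where the couple changes, and is capped by $\beta_{d,h}$, beyond which $E_{d,h}\cap E_{d,h}(f_d^{\pi(t)}(\varepsilon),g_h^{\pi(t)}(\varepsilon))\ne\emptyset$ and the couple stabilises; symmetrically for a couple stressed in the doctor direction and $\alpha_{d,h}$. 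I expect this monotonicity to be the main obstacle: unlike the zero-sum case, where the CNE payoff and hence the displacement of the reservation payoff are pinned down, here one must control a \emph{coupled} dynamics --- the reservation payoffs of \Cref{eq:eps_outside_options_d} link different couples through the quantities $\min_{d'\in\mu(k)}g_{d',k}$ and $f_k(\pi)$ --- and rule out that a couple oscillates between the cases of \Cref{lemma:inf_rep_games_intersection_non_empty,lemma:inf_rep_games_intersection_empty} in a way that erases the $\varepsilon$-progress; I would handle this by an induction on $t$ establishing that the whole system of reservation payoffs is monotone, which makes the feasible sets $E_{d,h}(f_d^{\pi(t)}(\varepsilon),g_h^{\pi(t)}(\varepsilon))$ nested and the advance genuinely monotone.

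Finally I would assemble the bound just as in the zero-sum proof. At the start each couple $(d,h)$ either already satisfies $f_d^{\pi}(\varepsilon)\ge\alpha_{d,h}$ and $g_h^{\pi}(\varepsilon)\ge\beta_{d,h}$, in which case it stabilises after its first renegotiation and contributes nothing; or it is stressed in the doctor direction and needs at most $\tfrac1\varepsilon(\alpha_{d,h}-f_d^{\pi}(\varepsilon))$ iterations for the corresponding payoff to climb to $\alpha_{d,h}$; or it is stressed in the hospital direction and needs at most $\tfrac1\varepsilon(\beta_{d,h}-g_h^{\pi}(\varepsilon))$. Since \Cref{Algo:strategy_profiles_modification_eps_version} halts exactly when all couples have stabilised, the number of iterations is at most $\frac1\varepsilon\max_{(d,h)\in\mu}\max\{\alpha_{d,h}-f_d^{\pi}(\varepsilon),\ \beta_{d,h}-g_h^{\pi}(\varepsilon)\}$, which is finite because the punishment levels and the reservation payoffs are finite. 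This is the announced bound.
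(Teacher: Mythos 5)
Your overall architecture matches the paper's proof: the same case split on whether $f_d^{\pi}(\varepsilon)-\varepsilon$ and $g_h^{\pi}(\varepsilon)-\varepsilon$ sit above or below the punishment levels, the same oracle taken from the proofs of \Cref{lemma:inf_rep_games_intersection_non_empty,lemma:inf_rep_games_intersection_empty}, the observation that a couple whose intersection $E_{d,h}\cap E_{d,h}(f_d^{\pi}(\varepsilon),g_h^{\pi}(\varepsilon))$ is non-empty renegotiates at most once, and the final assembly of the bound as progress of $\varepsilon$ per iteration capped by $\alpha_{d,h}$ or $\beta_{d,h}$. The gap is exactly at the point you flag as the main obstacle: you propose to obtain the $\varepsilon$-advance of the stressed coordinate ``by an induction on $t$ establishing that the whole system of reservation payoffs is monotone, which makes the feasible sets nested.'' That claim is neither needed nor, in general, available: couples stressed in the doctor direction push the doctor coordinate up (and can push the hospital coordinate down), so as the process runs, other agents' outside options in \Cref{eq:eps_outside_options_d} can move in either direction and the sets $E_{d,h}(f_d^{\pi(t)}(\varepsilon),g_h^{\pi(t)}(\varepsilon))$ are not nested. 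As stated, your induction would not go through, and with it the monotone advance --- the heart of the convergence bound --- is unproven.

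The paper closes this step with a purely local argument that sidesteps the coupled dynamics. Suppose at iteration $t$ the couple is in the hospital-stressed case and the oracle assigns the payoff $(\bar f,\bar g+\varepsilon)$, where $\bar g$ is the maximal $g$-coordinate of the current acceptable set. By \Cref{teo:eps_strag_prof_mod_algo_is_correct} the allocation $\pi(t+1)$ is again $\varepsilon$-pairwise stable, and $\varepsilon$-pairwise stability forces the \emph{new} reservation payoffs to lie within $\varepsilon$ of the payoffs currently achieved: $f_d^{\pi(t+1)}(\varepsilon)\leq \bar f+\varepsilon$ and $g_h^{\pi(t+1)}(\varepsilon)\leq (\bar g+\varepsilon)+\varepsilon$ (otherwise the maximizing outside pair would $\varepsilon$-block $\pi(t+1)$). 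Hence $(\bar f,\bar g+\varepsilon)$ belongs to $E_{d,h}\bigl(f_d^{\pi(t+1)}(\varepsilon),g_h^{\pi(t+1)}(\varepsilon)\bigr)$, so the next assigned value satisfies $\bar g'\geq \bar g+\varepsilon$, regardless of how the other couples' renegotiations moved the reservation payoffs. Once $\bar g$ reaches $\beta_{d,h}$ the intersection with $E_{d,h}$ is non-empty and the couple stabilises, giving at most $\frac1\varepsilon(\beta_{d,h}-g_h^{\pi}(\varepsilon))$ changing iterations for that couple (symmetrically with $\alpha_{d,h}$ in the doctor-stressed case), and the stated bound follows. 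Replacing your global-monotonicity induction by this stability-based argument repairs the proof.
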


\begin{proof}
Let $(d,h) \in \mu$ be a couple and $(f_d^{\pi}(\varepsilon),g_h^{\pi}(\varepsilon))$ be their reservation payoffs at the beginning of \Cref{Algo:strategy_profiles_modification_eps_version}. Note that one of the following four cases must hold:
\smallskip

\noindent 1. $f_d^{\pi}(\varepsilon) - \varepsilon \leq \alpha_{d,h}$ and $g_h^{\pi}(\varepsilon) - \varepsilon \leq \beta_{d,h}$,

\noindent 2. $f_d^{\pi}(\varepsilon) - \varepsilon \geq \alpha_{d,h}$ and $g_h^{\pi}(\varepsilon) - \varepsilon \geq \beta_{d,h}$,

\noindent 3. $f_d^{\pi}(\varepsilon) - \varepsilon \geq \alpha_{d,h}$ and $g_h^{\pi}(\varepsilon) - \varepsilon \sm \beta_{d,h}$,

\noindent 4. $f_d^{\pi}(\varepsilon) - \varepsilon \sm \alpha_{d,h}$ and $g_h^{\pi}(\varepsilon) - \varepsilon \geq \beta_{d,h}$.
\smallskip

\noindent Let $F_{d,h} := E_{d,h} \cap E_{d,h}(f^{\pi}_d(\varepsilon),g_h^{\pi}(\varepsilon))$ and suppose it is non-empty. Then, there exists a feasible uniform equilibrium for $(d,h)$, so the couple changes only once of strategy profile and never again. Suppose $F_{d,h}$ is empty. Necessarily it must hold case (3) or (4). Suppose $f^{\pi}_d(\varepsilon) - \varepsilon \geq \alpha_{d,h}$ and $g_h^{\pi}(\varepsilon) - \varepsilon \sm \beta_{d,h}$ and consider the oracle given in the proof of \Cref{lemma:inf_rep_games_intersection_empty}. Then, the couple passes to gain $(\bar{f}, \bar{g} + \varepsilon)$, where
\begin{align*}
  &\bar{g} = \max\{g : \exists f, (f,g) \in E_{d,h}(f^{\pi}_d(\varepsilon),g_h^{\pi}(\varepsilon))\},\\
  &\bar{f} \in \{f : (f,\bar{g}) \in E_{d,h}(f^{\pi}_d(\varepsilon),g_h^{\pi}(\varepsilon))\}.  
\end{align*}

Let $(f_d^{\pi(1)}(\varepsilon),g_{h}^{\pi(1)}(\varepsilon))$ be the couple's reservation payoffs at the next iteration and consider again $F_{d,h} := E_{d,h} \cap E_{d,h}\bigl(f^{\pi(1)}_d(\varepsilon),g_h^{\pi(1)}(\varepsilon)\bigr)$. If $F_{d,h}$ is non-empty, the couple passes to play a feasible uniform equilibrium. Otherwise, the oracle computes a new payoff profile $(\bar{f}', \bar{g}')$ such that
\begin{align*}
  &\bar{g}' = \max\{g : \exists f, (f,g) \in E_{d,h}\bigl(f^{\pi(1)}_d(\varepsilon),g_h^{\pi(1)}(\varepsilon)\bigr)\},\\
  &\bar{f}' \in \{f : (f,\bar{g}') \in E_{d,h}\bigl(f^{\pi(1)}_d(\varepsilon),g_h^{\pi(1)}(\varepsilon)\bigr)\}.  
\end{align*}
Since $\pi(1)$ is $\varepsilon$-pairwise stable, it holds $f'_d \leq \bar{f} + \varepsilon$, $g' \leq (\bar{g} + \varepsilon) + \varepsilon$. Therefore, $(\bar{f}, \bar{g}+ \varepsilon) \in E_{d,h}\bigl(f^{\pi(1)}_d(\varepsilon),g_h^{\pi(1)}(\varepsilon)\bigr)$ and then, $\bar{g}' \geq \bar{g} + \varepsilon$. We conclude that at each iteration, either the couple changes to play a feasible uniform equilibrium, or player $h$ increases its payoff in at least $\varepsilon$. Since its payoff is bounded by its punishment level, the sequence converges in $T \propto \frac{1}{\varepsilon}$ iterations. If case (4) holds, the conclusion is the same: at each iteration, either the couple plays a feasible uniform equilibrium or player $d$ increases by at least $\varepsilon$ her payoff. Again, we obtain a $T \propto \frac{1}{\varepsilon}$ bound for the number of iterations. Thus, we obtain the number of iterations given in the statement of the theorem by considering the worst possible case.
\end{proof}

\begin{remark}
Adding \Cref{teo:convergence_algo_2_inf_rep_games} to \Cref{remark:complexity_stra_prof_mod_inf_rept_games}, we can conclude that computing an $\varepsilon$-renegotiation proof allocation for an infinitely repeated matching game is a polynomial problem.
\end{remark}

\Cref{tab:complexity_inf_rep_games} summarizes the complexity results found.

\begin{table}[H]
    \centering
    \begin{tabular}{c|c|c|c}
    \toprule
    Algorithms & Complexity per It & Max Nº of It & Constants\\
    \toprule
    DAC & $\mathcal{O}(N^3k^5L)$ & $C_1/\varepsilon$ & $C_1 \leq \max\limits_{d,h}(\max A_{d,h} - \min A_{d,h})$\\
    \hline
    Market procedure:  & - & $p(N)$ & - \\ 
    Demand sets & \multirow{1}{*}{$\mathcal{O}(N^2kL)$} & \multirow{1}{*}{-} & \multirow{1}{*}{-} \\
    Implementation & $\mathcal{O}(Nk^2L)$ & - & - \\
    \hline
    \multirow{3}{*}{Renegotiation process} & \multirow{3}{*}{ $\mathcal{O}(N^2k^{5})L)$ } & \multirow{3}{*}{ $C_2/\varepsilon$ } & $C_2 \leq \max_{d,h}\max \{D_{d},D_h\}$ \\
    & & & $D_{d} := \max A_{d,h} - \min A_{d,h}$\\
    & & & $D_{h} := \max B_{d,h} - \min B_{d,h}$\\
    \hline
    \end{tabular}
    \caption{Complexity infinitely repeated games: $N$ players per side, $k$ strategies per player, $L$ bits to encode the data, $p(N)$ polynomial on $N$, and $A_{d,h}, B_{d,h}$ payoff matrices of the stage games of $(d,h) \in D \times H$.}
    \label{tab:complexity_inf_rep_games}
\end{table}

\section{Complexity study for Strictly competitive matching games}\label{sec:complexity_strictly_competitive_games}

The class $\mathcal{S}$ of strictly competitive games, initially defined by Aumann \cite{aumann_almost_1961}, was fully characterized by Adler, Daskalakis, and Papadimitriou \cite{adler_note_2009} in the bi-matrix case. 

\begin{definition}
\textup{A bimatrix game $G = (S,T,A,-B)$, with $S,T$ finite pure strategy sets and $A,-B$ payoff matrices, is called a \textbf{strictly competitive game} if for any $x,x' \in \Delta(S), y,y' \in \Delta(T)$, $xAy - x'Ay'$ and $xBy - x'By'$ have always the same sign.}
\end{definition}

\begin{definition}
\textup{Given two matrices $A, B \in \mathbb{R}^{m\times n}$, we say that $B$ is an \textbf{affine variant} of $A$ if for some $\lambda \bbi 0$ and unrestricted $\mu \in \mathbb{R}$, $B = \lambda A + \mu U$, where $U$ is $m \times n$ all-ones matrix.}
\end{definition}

Adler et al. proved the following result.

\begin{theorem}\label{teo:affine_transformation_strictly_competitive_games}
If for all $x, x' \in X$ and $y,y' \in Y$, $xAy - x'Ay'$ and $xBy - x'By'$ have the same sign, then $B$ is an affine variant of $A$. Even more, the affine transformation is given by,
\begin{align*}
    A = \frac{a^{max} - a^{min}}{b^{max} - b^{min}} [B - b^{min}U] + a^{min}U, \text{ with }
    \left\{ \begin{array}{c}
     a^{max} := \max A,\ a^{min} := \min A\\ b^{max} := \max B,\ b^{min} := \min B\end{array} \right.
\end{align*}

If $a^{max} = a^{min}$, then it also holds that $b^{max} = b^{min}$ (and vice-versa), in which case clearly $A$ and $B$ are affine variants.
\end{theorem}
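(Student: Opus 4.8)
The plan is to boil the sign hypothesis down to a single scalar ``transfer function'', argue that this function must be affine, and then read off the displayed identity (this is essentially the argument of Adler, Daskalakis and Papadimitriou). \textbf{Step 1 (the transfer function).} Since $(x,y)\mapsto xAy$ is continuous on the connected compact set $\Delta(S)\times\Delta(T)$ and attains the extreme matrix entries at pure profiles, its range is exactly $[a^{min},a^{max}]$, and likewise $\{xBy\}=[b^{min},b^{max}]$. The hypothesis gives $xAy=x'Ay'\Rightarrow xBy=x'By'$ and $xAy\ge x'Ay'\Rightarrow xBy\ge x'By'$, so there is a well-defined nondecreasing surjection $\phi:[a^{min},a^{max}]\to[b^{min},b^{max}]$ with $xBy=\phi(xAy)$ for all $x,y$, and such a map is automatically continuous. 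If $a^{max}=a^{min}$ then $xAy$ is constant, hence so is $xBy$, so $b^{max}=b^{min}$ and $A=a^{max}U$, $B=b^{max}U$ are affine variants of each other (take $\lambda=1$), which settles the last sentence of the statement; so assume $a^{min}<a^{max}$, which then forces $b^{min}<b^{max}$. It remains to show $\phi$ is affine: writing $\phi(t)=\alpha t+\beta$, surjectivity and monotonicity give $\phi(a^{min})=b^{min}$, $\phi(a^{max})=b^{max}$, hence $\alpha=\tfrac{b^{max}-b^{min}}{a^{max}-a^{min}}>0$ and $\beta=b^{min}-\alpha a^{min}$; evaluating $xBy=\phi(xAy)$ at pure profiles gives $B=\alpha A+\beta U$ entrywise, which rearranges exactly to $A=\tfrac{a^{max}-a^{min}}{b^{max}-b^{min}}[B-b^{min}U]+a^{min}U$ and displays $B$ as an affine variant of $A$.

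\textbf{Step 2 ($\phi$ is affine on ``slices'').} Fix $y$, let $i^{-},i^{+}$ attain $\min_i(Ay)_i$ and $\max_i(Ay)_i$, and put $x_\lambda=\lambda e_{i^{+}}+(1-\lambda)e_{i^{-}}$. Then $x_\lambda Ay$ sweeps $[\min_i(Ay)_i,\max_i(Ay)_i]$ as $\lambda\in[0,1]$ while $x_\lambda By=\lambda\phi(\max_i(Ay)_i)+(1-\lambda)\phi(\min_i(Ay)_i)$; since $x_\lambda By=\phi(x_\lambda Ay)$, the function $\phi$ agrees on that interval with the affine interpolant of its endpoint values. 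The mirror statement holds for $[\min_j(xA)_j,\max_j(xA)_j]$ with $x$ fixed. Call the nondegenerate intervals arising this way \emph{slices}; $\phi$ is affine on every slice.

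\textbf{Step 3 (the slices cover $(a^{min},a^{max})$ locally).} I claim every $t\in(a^{min},a^{max})$ lies in the relative interior of some slice; then $\phi$ is locally affine on the connected interval $(a^{min},a^{max})$, hence affine there and, by continuity, on $[a^{min},a^{max}]$, which with Step 1 finishes the proof. To prove the claim, pick entries $a_{i_1j_1}=a^{max}$, $a_{i_2j_2}=a^{min}$. If $i_1=i_2$ or $j_1=j_2$, that whole row- or column-span equals $[a^{min},a^{max}]$ and we are done. Otherwise set $b=a_{i_1j_2}$, $c=a_{i_2j_1}$, and, interchanging the roles of rows and columns if needed, assume $b\le c$. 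If $b$ or $c$ equals $a^{min}$ or $a^{max}$, one of the four lines through these entries already has span $[a^{min},a^{max}]$; so assume $a^{min}<b\le c<a^{max}$. If $b<c$, a direct check shows that every $t\in(a^{min},a^{max})$ is interior to one of the spans of column $j_2$, row $i_1$, row $i_2$, column $j_1$ (which contain $[a^{min},b]$, $[b,a^{max}]$, $[a^{min},c]$, $[c,a^{max}]$ respectively). The sole remaining value $t=b=c$ is interior to the slice $\{xAy_{1/2}:x\in\Delta(S)\}$ with $y_{1/2}=\tfrac12 e_{j_1}+\tfrac12 e_{j_2}$, since $e_{i_1}$ gives $\tfrac12(a^{max}+b)>t$ and $e_{i_2}$ gives $\tfrac12(c+a^{min})<t$. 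This proves the claim.

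\textbf{Where the difficulty lies.} Step 2 is immediate and Step 1 is bookkeeping; the real work is Step 3 — confirming that the slices genuinely cover the whole value interval with their interiors. The pinch point is exactly the matrices whose largest and smallest entries lie in no common row and no common column, so that no pure row or column spans the full range of $xAy$; the still-worse subcase where the two off-diagonal entries of the relevant $2\times2$ block coincide forces one to abandon pure strategies for a genuinely mixed slice. Making this degeneracy analysis airtight — and checking that in every other configuration two pure slices already suffice and overlap in more than a point — is where the argument takes effort.
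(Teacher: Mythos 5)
The paper itself contains no proof of this theorem: it is imported verbatim from Adler, Daskalakis and Papadimitriou \cite{adler_note_2009} (``Adler et al.\ proved the following result''), so there is no in-paper argument to compare yours against; the comparison can only be with the cited source, whose strategy (reduce the matrix statement to a scalar monotone ``transfer'' map and show it is affine) yours follows in spirit. On its own merits your proof is correct. Step 1 is sound provided one reads ``same sign'' as including the zero case — you need $xAy=x'Ay'\Rightarrow xBy=x'By'$ both for the well-definedness of $\phi$ and for the degenerate clause $a^{max}=a^{min}\Leftrightarrow b^{max}=b^{min}$; it would be worth one explicit line that constant $B$ with non-constant $A$ would violate the hypothesis, which is exactly your assertion that $a^{min}<a^{max}$ forces $b^{min}<b^{max}$. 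Step 2 (affinity of $\phi$ on each row- or column-slice, by linearity of $x\mapsto xAy$ and $x\mapsto xBy$) is immediate as you say, and I checked Step 3 in all its branches: the common-row/common-column case, the case where $b$ or $c$ hits $a^{min}$ or $a^{max}$, the generic case $a^{min}<b<c<a^{max}$ where the four interval interiors $(a^{min},b)$, $(b,a^{max})$, $(a^{min},c)$, $(c,a^{max})$ do cover $(a^{min},a^{max})$, and the pinch case $b=c$ where the mixed slice at $y_{1/2}=\tfrac12 e_{j_1}+\tfrac12 e_{j_2}$ indeed has $t=b$ strictly between $\tfrac12(b+a^{min})$ and $\tfrac12(a^{max}+b)$; the WLOG $b\le c$ by transposition is legitimate since the hypothesis and the family of slices are symmetric in the two players. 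Local affinity on a connected open interval plus continuity and the endpoint identification $\phi(a^{min})=b^{min}$, $\phi(a^{max})=b^{max}$ then yield $B=\alpha A+\beta U$ with $\alpha>0$, which rearranges to the displayed formula, so the proof is complete.
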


\Cref{teo:affine_transformation_strictly_competitive_games} allows us to extend all the results obtained in the previous section for zero-sum matching games to strictly competitive matching games. First of all, we prove that computing the affine transformations is a polynomial problem.


\begin{theorem}\label{teo:complexity_of_affine_transformation}
Let $\Gamma$ be a matching game in which all strategic games $G_{d,h} = (S_d,T_h,A_{d,h},-B_{d,h})$ are bi-matrix strictly competitive games. Let $\Gamma'$ be the affine transformation of $\Gamma$ in which all couples play zero-sum games. Then, computing $\Gamma'$ has complexity
$$\mathcal{O}\left(|D| + |H| + \sum_{d \in D} \sum_{h \in H} |S_d|\cdot|T_h|\right).$$
\end{theorem}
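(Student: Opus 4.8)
The plan is to apply the explicit affine transformation of Theorem~\ref{teo:affine_transformation_strictly_competitive_games} game by game, and to bound the cost of each step by a single pass over the relevant payoff matrices. Recall that in $\Gamma'$ we want every couple $(d,h)$ to play the zero-sum game $(S_d,T_h,A_{d,h},-A_{d,h})$, so the doctors' payoff matrices $A_{d,h}$ are left untouched and only the hospitals' payoff matrices $-B_{d,h}$ are replaced. The point is that, by Theorem~\ref{teo:affine_transformation_strictly_competitive_games}, this replacement is a \emph{positive affine} change of $h$'s payoffs, hence strategically legitimate, and its coefficients are given by a closed formula.

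First I would, for each pair $(d,h)\in D\times H$, compute the four scalars $a^{\max}_{d,h}=\max A_{d,h}$, $a^{\min}_{d,h}=\min A_{d,h}$, $b^{\max}_{d,h}=\max B_{d,h}$, $b^{\min}_{d,h}=\min B_{d,h}$; each requires visiting the $|S_d|\cdot|T_h|$ entries of a single matrix, hence $\mathcal{O}(|S_d|\cdot|T_h|)$ comparisons per pair. If $a^{\max}_{d,h}=a^{\min}_{d,h}$, then by Theorem~\ref{teo:affine_transformation_strictly_competitive_games} also $b^{\max}_{d,h}=b^{\min}_{d,h}$, both matrices are constant, and the game is already zero-sum up to an additive constant, so nothing more has to be computed for it. Otherwise, set $\lambda_{d,h}:=\frac{a^{\max}_{d,h}-a^{\min}_{d,h}}{b^{\max}_{d,h}-b^{\min}_{d,h}}>0$ and $\mu_{d,h}:=\lambda_{d,h}b^{\min}_{d,h}-a^{\min}_{d,h}$; by the formula of Theorem~\ref{teo:affine_transformation_strictly_competitive_games} one has $-A_{d,h}=\lambda_{d,h}(-B_{d,h})+\mu_{d,h}U$, so replacing the hospital's matrix $-B_{d,h}$ by $-A_{d,h}$ is a positive affine transformation and makes the couple's game zero-sum. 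Producing $-A_{d,h}$ (i.e. negating the already-stored $A_{d,h}$) costs $\mathcal{O}(|S_d|\cdot|T_h|)$ operations, and recording $(\lambda_{d,h},\mu_{d,h})$ together with the unchanged IRPs is $\mathcal{O}(1)$ per couple.

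Summing over all couples, the transformation of the strategic games costs $\sum_{d\in D}\sum_{h\in H}\mathcal{O}(|S_d|\cdot|T_h|)$. To this one adds $\mathcal{O}(|D|)$ and $\mathcal{O}(|H|)$ for carrying over the remaining, unchanged data of the matching game — the agent sets, the strategy simplices $X_d=\Delta(S_d)$ and $Y_h=\Delta(T_h)$, and the individually rational payoffs $(\underline{f}_d)_{d\in D}$ and $(\underline{g}_h)_{h\in H}$. Adding everything up yields the announced bound $\mathcal{O}\big(|D|+|H|+\sum_{d\in D}\sum_{h\in H}|S_d|\cdot|T_h|\big)$.

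The only point requiring a little care is the degenerate case of a constant payoff matrix, where the ratio defining $\lambda_{d,h}$ is $0/0$: Theorem~\ref{teo:affine_transformation_strictly_competitive_games} guarantees this occurs simultaneously for $A_{d,h}$ and $B_{d,h}$, so such a couple is handled by the observation above and still contributes only $\mathcal{O}(|S_d|\cdot|T_h|)$. Beyond this, the argument is a routine accounting of matrix scans; there is no genuine obstacle, since the substantive content — that a strictly competitive bi-matrix game admits a zero-sum representation obtained by an explicit affine variant — is exactly what Theorem~\ref{teo:affine_transformation_strictly_competitive_games} already provides.
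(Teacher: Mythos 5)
Your operation count (one scan of each payoff matrix to get $a^{\max}_{d,h},a^{\min}_{d,h},b^{\max}_{d,h},b^{\min}_{d,h}$, hence $\mathcal{O}(|S_d|\cdot|T_h|)$ per pair, plus $\mathcal{O}(|D|+|H|)$ for the remaining data) matches the paper's, but there is a genuine gap: you claim the individually rational payoffs are carried over unchanged, and they cannot be. Whichever side's payoff matrices are rescaled by the pair-dependent affine map of \Cref{teo:affine_transformation_strictly_competitive_games}, that side's IRP must be pushed through the same map; otherwise individual rationality (and every comparison of in-match payoffs against the outside option that enters blocking, optimal proposals and reservation payoffs) in $\Gamma'$ is no longer equivalent to that in $\Gamma$, so the transformed game is useless for transferring the zero-sum complexity results. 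In your construction the hospital's matrix $-B_{d,h}$ is replaced by $-A_{d,h}=\lambda_{d,h}(-B_{d,h})+\mu_{d,h}U$ with $\lambda_{d,h}$ depending on the pair, so $\underline{g}_h$ would have to become a doctor-dependent threshold of the form $\lambda_{d,h}\underline{g}_h+\mu_{d,h}$; keeping a single unchanged $\underline{g}_h$ is wrong. This is exactly where the paper's proof spends its effort: it transforms the doctor's matrices ($A_{d,h}\mapsto B_{d,h}$), leaves the hospital's matrices untouched, and computes the new IRPs, namely $\underline{g}'_h=-\underline{g}_h$ and the partner-dependent $\underline{f}'_{d,h}=\bigl(\underline{f}_d-(a^{\min}_{d,h}-b^{\min}_{d,h}\alpha_{d,h})\bigr)/\alpha_{d,h}$, and it is the computation of these transformed IRPs that accounts for the $\mathcal{O}(|D|+|H|)$ term in the bound.

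Two further reasons the direction you chose is not interchangeable with the paper's. First, in the additive separable submodel the hospital's payoff is the sum of the $g_{d,h}$ over its doctors, and the DAC algorithm compares $g_{d,h}$ with $\min_{d'\in\mu(h)}g_{d',h}$; rescaling each summand by a different $\lambda_{d,h}$, as your transformation does, changes the hospital's ranking over strategy profiles and coalitions, not merely its representation inside a single two-player game, whereas the paper's choice keeps every hospital matrix equal to $-B_{d,h}$ and only rewrites the constraint in terms of $B_{d,h}$ (whence $\underline{g}'_h=-\underline{g}_h$). Second, the paper picks the direction of the affine map so that $\alpha_{d,h}\leq 1$, which is needed later in \Cref{prop:affine_transformed_CNE_are_CNE} to conclude that $\varepsilon$-CNEs of the transformed zero-sum game are $\varepsilon$-CNEs of the original strictly competitive game; your fixed choice of $\lambda_{d,h}$ ignores this. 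So while your accounting would yield the stated asymptotic bound, the object you compute is not the $\Gamma'$ the theorem (and its subsequent use) is about; the missing piece is precisely the transformation of the IRPs.
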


\begin{proof}
In order to obtain $\Gamma'$, besides computing all zero-sum games, we also need to compute all the new individually rational payoffs. Let $(d,h) \in D \times H$ be a potential couple that plays a strictly competitive game $(S_d,T_h,A_{d,h}, -B_{d,h})$. The complexity of computing their affine transformation to a zero-sum game $(S_d,T_h,B_{d,h}, -B_{d,h})$ is $\mathcal{O}|S_d|\cdot|T_h|)$, as we need to compute $a^{max}_{d,h}, a^{min}_{d,h}, b^{max}_{d,h}$, and $b^{min}_{d,h}$. Regarding the individually rational payoffs $(\underline{f}_d, \underline{g}_h)$, set $\alpha_{d,h} := \frac{a_{d,h}^{max} - a_{d,h}^{min}}{b_{d,h}^{max} - b_{d,h}^{min}}$. We take $\alpha_{d,h}$ so it is always lower or equal to $1$ (at least one of the two ways of taking the affine transformation guarantees this). Given $(x,y) \in X_d \times Y_h$ a strategy profile, note that,
\begin{align}
    xA_{d,h}y \geq \underline{f}_d &\Longleftrightarrow xB_{d,h}y \geq \frac{\underline{f}_d - (a_{d,h}^{min} - b_{d,h}^{min}\alpha_{d,h})}{\alpha_{d,h}},\label{eq:IRP_transformation_men}\\
    x(-B_{d,h})y \geq \underline{g}_h &\Longleftrightarrow xB_{d,h}y \leq -\underline{g}_h,\label{eq:IRP_transformation_women}
\end{align}
where we have used that $xUy = 1$. Unlike a ``standard'' matching game in which each player has a unique IRP that works for all possible partners, in the transformed game $\Gamma'$ doctors will have one IRP per hospital, given by \Cref{eq:IRP_transformation_men}. Formally, let 
$$\underline{f}_{d,h}' := \frac{\underline{f}_d - (a_{d,h}^{min} - b_{d,h}^{min}\alpha_{d,h})}{\alpha_{d,h}}, \text{for any } d \in D \text{ and } h \in H.$$
Then, a doctor $d$ accepts to be matched with hospital $h$ if and only her payoff is greater or equal than $\underline{f}_{d,h}'$. Regarding hospitals, it is enough considering $\underline{g}_{h}' := -\underline{g}_h$. Computing each coefficient takes constant time on the size of the agent sets and strategy sets. Thus, the complexity of transforming the IRPs is $\mathcal{O}|D| + |H|)$ plus some factor indicating the number of required bits.
\end{proof}

\subsection{Deferred-acceptance with competitions algorithm}

The analysis of the DAC algorithm (\Cref{Algo:Propose_dispose_algo_additive_separable_case}) complexity is not affected by the fact that doctors may have personalized IRPs for hospitals. Thus, from the complexity results of zero-sum games (\Cref{teo:complexity_propose_dispose_algo_zero_sum_case,teo:complexity_of_affine_transformation}) we conclude the following. 

\begin{corollary}
Computing $\varepsilon$-pairwise stable allocations in bi-matrix strictly competitive matching games is a polynomial problem as the DAC algorithm has a bounded number of iterations, each of them with complexity $\mathcal{O}((N^2+k^2)L)$, where $N$ bounds the number of players in the biggest side, $k$ bounds the number of pure strategies per player and $L$ is the number of bits required to encode all the data.
\end{corollary}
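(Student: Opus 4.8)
The plan is to reduce the strictly competitive case to the zero-sum case already settled. First I would apply \Cref{teo:complexity_of_affine_transformation} to replace $\Gamma$ by its affine variant $\Gamma'$, in which every two-player game $G_{d,h}=(S_d,T_h,A_{d,h},-B_{d,h})$ is replaced by the zero-sum game $(S_d,T_h,B_{d,h},-B_{d,h})$ and the individually rational payoffs are replaced by the couple-dependent thresholds $\underline{f}'_{d,h}$ and $\underline{g}'_h$ defined through \eqref{eq:IRP_transformation_men}--\eqref{eq:IRP_transformation_women}. By \Cref{teo:complexity_of_affine_transformation} this preprocessing costs $\mathcal{O}((|D|+|H|+\sum_{d,h}|S_d||T_h|)L)=\mathcal{O}((N^2+k^2)L)$ and is incurred only once.

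Second, I would check that the reduction transports (approximate) pairwise stability. By \Cref{teo:affine_transformation_strictly_competitive_games}, on every coalition $(d,h)$ the doctor's payoff is an increasing affine image of its zero-sum counterpart, with slope $\alpha_{d,h}\in(0,1]$ equal to a ratio of payoff-matrix ranges, while the hospital's per-couple payoff is literally unchanged; \eqref{eq:IRP_transformation_men}--\eqref{eq:IRP_transformation_women} show that individual rationality is preserved accordingly. Consequently, referring to \Cref{def:pairwise_stable_allocation}, a joint deviation that $\varepsilon$-blocks an allocation $\pi$ in $\Gamma$ still $(\alpha_{d,h}\varepsilon)$-blocks it in $\Gamma'$, and conversely, so a $\delta$-pairwise stable allocation of $\Gamma'$ is $(\delta/\alpha^{*})$-pairwise stable in $\Gamma$, where $\alpha^{*}:=\min_{d,h}\alpha_{d,h}$ and $1/\alpha^{*}$ depends only on the payoff-matrix entries, not on $N$ or $k$. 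Hence it suffices to run the zero-sum machinery on $\Gamma'$ with parameter $\delta:=\alpha^{*}\varepsilon$ and read the output back in $\Gamma$.

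Third, I would run the DAC algorithm (\Cref{Algo:Propose_dispose_algo_additive_separable_case}) on $\Gamma'$. Since $\Gamma'$ is a zero-sum additive separable matching game, \Cref{teo:propose_dipose_algo_number_of_iterations} bounds its number of iterations by a quantity proportional to $1/\delta=1/(\alpha^{*}\varepsilon)$, which depends only on $\varepsilon$ and on the entries of the payoff matrices. Each iteration decomposes into the optimal-proposal, reservation-payoff, bid, and winner QCQP subproblems handled by \Cref{teo:complexity_propose_dispose_algo_zero_sum_case}; the only difference is that the feasibility threshold is now indexed by the pair $(d,h)$ rather than by $d$ alone, which merely substitutes one precomputed scalar for another in each subproblem and leaves the operation counts in \Cref{cor:optimal_proposal_complexity,cor:bid_complexity,cor:second_price_complexity} intact. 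Thus every iteration still costs $\mathcal{O}((N^2+k^2)L)$, and composing the three steps yields the claimed polynomial bound.

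The step I expect to be \emph{delicate} is the second one: keeping careful track of how $\varepsilon$ rescales under the per-couple affine maps, and confirming that the proportionality constant relating $\delta$ and $\varepsilon$ is controlled by the payoff-matrix ranges alone (through $\alpha^{*}$) and not by the number of agents or of pure strategies, so that the iteration bound $T\propto 1/\varepsilon$ remains genuinely size-independent. Everything else is a mechanical transcription of the zero-sum analysis with couple-dependent individual rationality thresholds.
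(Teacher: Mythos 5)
Your proposal takes essentially the same route as the paper: apply the affine transformation (\Cref{teo:affine_transformation_strictly_competitive_games}, with preprocessing cost given by \Cref{teo:complexity_of_affine_transformation}) to pass to an equivalent zero-sum matching game with couple-dependent doctor IRPs, then reuse the zero-sum iteration bound (\Cref{teo:propose_dipose_algo_number_of_iterations}) and per-iteration complexity (\Cref{teo:complexity_propose_dispose_algo_zero_sum_case}), observing that pair-indexed thresholds do not alter the operation counts. Your only deviation, the rescaling $\delta := \alpha^{*}\varepsilon$, is correct but unnecessary: since the transformation is chosen with $\alpha_{d,h}\leq 1$, doctors' payoff differences can only be amplified in the zero-sum game, so an $\varepsilon$-pairwise stable allocation of $\Gamma'$ is already $\varepsilon$-pairwise stable in $\Gamma$; your version merely inflates the (size-independent) iteration bound by the factor $1/\alpha^{*}$.
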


\subsection{Market procedure}

Let $\Gamma$ be a roommates matching game such that for each couple $(d,d') \in D\times D$, their game $G_{d,d'}$ belongs to the class of bi-matrix strictly competitive games in $\mathcal{S}$.

The complexity results obtained for zero-sum games in the computation of the demand sets (\Cref{teo:complexity_demand_sets_zero_sum_roommates_games}) and the mechanism to implement the output of the market procedure (\Cref{teo:complexity_implementation_roommates_problem}) can be extended to $\Gamma$ thanks to the affine transformation result (\Cref{teo:affine_transformation_strictly_competitive_games}). We conclude directly the following result.

\begin{corollary}
Computing the demand sets of all agents during an iteration of the market procedure has complexity $\mathcal{O}(N^2L)$ where $N$ bounds the number of doctors and $L$ is the number of bits required to encode all the data. In addition, the complexity of computing an allocation $\pi$ that implements the output $f$ of the market procedure (if it exists) is $\mathcal{O}(Nk^2L)$, where $k$ bounds the number of pure strategies per player.
\end{corollary}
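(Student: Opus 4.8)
The plan is to reduce both computations to the zero-sum roommates case already settled in \Cref{sec:market_procedure_zero_sum}, by pulling back each pairwise strictly competitive game through its affine transformation. Concretely, I would first run, once and for all, the transformation of \Cref{teo:affine_transformation_strictly_competitive_games}: for every potential couple $(d,d')$ the payoff matrix $A_{d',d}$ of the ``second'' player is an affine variant $A_{d',d} = \lambda_{d,d'} A_{d,d'} + \mu_{d,d'} U$ of the matrix of the ``first'' player with $\lambda_{d,d'} \bbi 0$, and the coefficients $\lambda_{d,d'},\mu_{d,d'}$, together with $\max A_{d,d'},\min A_{d,d'}$ and the induced pair-dependent individual rationality bounds, are obtained — by the roommates instance of \Cref{teo:complexity_of_affine_transformation} — in $\mathcal{O}\bigl(\sum_{d,d'}|S_d||S_{d'}|\bigr)$ elementary operations; this is a preprocessing step performed before the market procedure starts and is dominated by the subsequent work. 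After this step the payoff of $d'$ in the game it plays with $d$ equals $-\lambda_{d,d'}$ times the payoff of $d$ minus $\mu_{d,d'}$, so the partnership function is the affine decreasing map $u_{d,d'}(s) = -(s+\mu_{d,d'})/\lambda_{d,d'}$, and $u_{d,d'} = u_{d',d}^{-1}$ as required by the roommates model.

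Second, for the demand sets I would observe that membership $d' \in P_d(f)$ for a payoff profile $f$ becomes, after the transformation, the single linear identity $\lambda_{d,d'} f_d + f_{d'} + \mu_{d,d'} = 0$, verified in constant time per ordered pair once the affine coefficients are known. This is exactly the analogue of the criterion $f_d + f_{d'} = 0$ used in \Cref{teo:complexity_demand_sets_zero_sum_roommates_games}; since $d_2 \in P_{d_1}(f) \iff d_1 \in P_{d_2}(f)$ it suffices to test the unordered pairs, yielding $\mathcal{O}(N^2 L)$ with $N \geq |D|$ and $L$ the number of bits needed to encode the (transformed) data. Third, for the implementation of a realizable output $f$ by a matching $\mu$ I would treat each matched couple $(d_1,d_2)\in\mu$ separately: translate the target payoff $f_{d_1}$ through the affine map of the $(d_1,d_2)$-game into a target value $c_{d_1,d_2}$ for the zero-sum matrix $A_{d_1,d_2}$ (so that achieving $x A_{d_1,d_2} y = c_{d_1,d_2}$ automatically gives $d_2$ exactly $f_{d_2}$), and then invoke \Cref{prop:the_opt_problem_has_always_a_pure_solution}, i.e.\ the argument of \Cref{teo:complexity_implementation_roommates_problem}, to produce $(x_{d_1},x_{d_2})\in X_{d_1}\times X_{d_2}$ realizing that value in $\mathcal{O}(|S_{d_1}||S_{d_2}| L)$ operations. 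Because $\mu$ is a matching, each doctor occurs in at most one couple, so $\sum_{(d_1,d_2)\in\mu}|S_{d_1}||S_{d_2}| \leq \tfrac{N}{2} k^2$; summing over $\mu$ gives $\mathcal{O}(Nk^2 L)$ with $k$ an upper bound on the number of pure strategies per doctor, and combining the two bounds yields the statement.

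The main obstacle is less a deep difficulty than a bookkeeping subtlety: the affine transformation is \emph{pair-dependent}, so different couples carry different maps $\lambda_{d,d'},\mu_{d,d'}$ and a single profile $f\in\mathbb{R}^{|D|}$ cannot be converted into one global ``zero-sum profile''; hence the zero-sum results of \Cref{sec:market_procedure_zero_sum} cannot be quoted verbatim. One must instead apply the transformation locally, pair by pair, and verify — using $\lambda_{d,d'} \bbi 0$ — that realizability of $f$ and the demand-set relation are preserved under each local affine change of coordinates. Once this is in place, the per-iteration demand-set count and the implementation count are identical to the zero-sum case.
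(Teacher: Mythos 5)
Your proposal is correct and takes essentially the same route as the paper, whose own proof is just a one-line appeal to \Cref{teo:affine_transformation_strictly_competitive_games} together with \Cref{teo:complexity_demand_sets_zero_sum_roommates_games,teo:complexity_implementation_roommates_problem}; your pair-by-pair elaboration (and the explicit caveat that the affine map is couple-dependent, so the zero-sum statements must be applied locally) is, if anything, more careful than the paper's. The only slip is a sign at the start: by strict competitiveness the second player's payoff is a \emph{negative} affine variant of the first player's (of the form $-\lambda A_{d,d'} - \mu U$, up to transposition), which is what your later partnership-function formula $u_{d,d'}(s) = -(s+\mu_{d,d'})/\lambda_{d,d'}$ and the demand-set identity correctly use.
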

 
\subsection{Renegotiation process}

As in the zero-sum case, we start with the complexity of computing a constrained Nash equilibrium. Let $G_{d,h} = (X_d,Y_h, A_{d,h}, -B_{d,h})$ be a bi-matrix strictly competitive game in mixed strategies and $(f_d,g_{h})$ be a pair of reservation payoffs. Let $(x,y)$ be an $\varepsilon$-$(f_d,g_{h})$-feasible strategy profile, that is,
\begin{align*}
    xA_{d,h}y + \varepsilon \geq f_d \text{ and } x(-B)_{d,h}y + \varepsilon \geq g_{h} \Longleftrightarrow xA_{d,h}y + \varepsilon \geq f_d \text{ and } xBy \leq - g_{h} - \varepsilon.
\end{align*}
It follows,
\begin{align*}
    xA_{d,h}y + \varepsilon \geq f_d &\Longleftrightarrow x \left(\alpha_{d,h} [B_{d,h} - b_{d,h}^{min}U] + a_{d,h}^{min}U\right) y + \varepsilon \geq f_d \\
    & \Longleftrightarrow \alpha_{d,h} xB_{d,h}y +  (a_{d,h}^{min} - b_{d,h}^{min}\alpha_{d,h})x U y + \varepsilon \geq f_d\\
    & \Longleftrightarrow \alpha_{d,h} xB_{d,h}y +  (a_{d,h}^{min} - b_{d,h}^{min}\alpha_{d,h}) + \varepsilon \geq f_d \\
    & \Longleftrightarrow xB_{d,h}y + \varepsilon \geq \frac{f_d - (a_{d,h}^{min} - b_{d,h}^{min}\alpha_{d,h})}{\alpha_{d,h}} - \varepsilon\cdot \frac{1-\alpha_{d,h}}{\alpha_{d,h}}.
\end{align*}
Recall we have taken $\alpha_{d,h} \in [0,1]$. Thus, in the zero-sum game $G'_{d,h} = (X_d, Y_h, B_{d,h})$, considering the pair $(f_d',g_{h}')$ of reservation payoffs given by,
\begin{align}\label{eq:outside_options_in_the_affine_game}
  f_d' := \frac{f_d -   (a_{d,h}^{min} - b_{d,h}^{min}\alpha_{d,h})}{\alpha_{d,h}}  - \varepsilon \cdot \frac{1-\alpha_{d,h}}{\alpha_{d,h}} \text{ and } g_{h}' := -g_{h},
\end{align}
the sets of feasible strategy profiles, as well as the sets of CNE of $G_{d,h}$ and $G'_{d,h}$, coincide. Therefore, to compute an $\varepsilon$-$(f_d,g_{h})$-constrained Nash equilibrium of the strictly competitive game, we can use the following scheme: 
\begin{itemize}
    \item[1.] Compute the transformation from $A_{d,h}$ to $B_{d,h}$ and define the zero-sum game $G'_{d,h}$.
    \item[2.] Consider the new reservation payoffs $(f_d', g_{h}')$ as in \Cref{eq:outside_options_in_the_affine_game}.
    \item[3.] Compute an $\varepsilon$-$(f_d', g_{h}')$-CNE for the zero-sum game, namely $(x',y')$.
\end{itemize}

\begin{proposition}\label{prop:affine_transformed_CNE_are_CNE}
The scheme above computes an $\varepsilon$-$(f_d',g_{h}')$-CNE of $G_{d,h}$.
\end{proposition}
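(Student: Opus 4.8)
The plan is to show that the three-step scheme transfers the constrained-Nash-equilibrium property back and forth between the strictly competitive game $G_{d,h}$ and its affine variant $G'_{d,h}$ without loss. The crucial observation, already carried out in the computation preceding \Cref{teo:affine_transformation_strictly_competitive_games} and in the display leading to \Cref{eq:outside_options_in_the_affine_game}, is that for any strategy profile $(x,y) \in X_d \times Y_h$ we have the chain of equivalences
\[
x A_{d,h} y + \varepsilon \geq f_d \iff x B_{d,h} y + \varepsilon \geq f_d' ,
\qquad
x(-B_{d,h})y + \varepsilon \geq g_h \iff x B_{d,h} y + \varepsilon \leq -g_h' ,
\]
using $xUy = 1$ and $\alpha_{d,h} \in (0,1]$. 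Hence the set of $\varepsilon$-$(f_d,g_h)$-feasible profiles of $G_{d,h}$ equals, as a subset of $X_d \times Y_h$, the set of $\varepsilon$-$(f_d',g_h')$-feasible profiles of $G'_{d,h}$.

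First I would make the feasibility statement precise: a profile is feasible in one game if and only if it is feasible in the other, with the reservation payoffs matched by \Cref{eq:outside_options_in_the_affine_game}. Next I would handle the two best-response inequalities in the definition of constrained Nash equilibrium. For the doctor's inequality, one must show that
\[
\max\{ x A_{d,h} y' : g_{d,h}\text{-feasibility}, x \in X_d \}
\]
attained in $G_{d,h}$ corresponds, under the affine change $A_{d,h} = \alpha_{d,h}[B_{d,h} - b^{\min}_{d,h}U] + a^{\min}_{d,h}U$, to the analogous maximization in $G'_{d,h}$: because $A_{d,h}$ is an increasing affine function of $B_{d,h}$ (the multiplier $\alpha_{d,h}$ is strictly positive) and the $U$-term contributes the constant $1$ against any profile, maximizing $xA_{d,h}y'$ over a set defined by a $B_{d,h}$-constraint is the same optimization, up to the monotone rescaling, as maximizing $xB_{d,h}y'$ over the transformed constraint. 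The same monotone correspondence turns the constrained $\varepsilon$-best-response inequality for $d$ in $G'_{d,h}$ — which $(x',y')$ satisfies by construction in Step 3 — into the corresponding inequality in $G_{d,h}$, once one tracks how the additive constant and the factor $\alpha_{d,h}$ interact with the $\varepsilon$ slack (exactly as in the displayed computation before \Cref{eq:outside_options_in_the_affine_game}). For the hospital, whose payoff matrix is $-B_{d,h}$ in both games, the correspondence is immediate since the hospital's game is literally unchanged; only its reservation payoff is relabeled $g_h' = -g_h$.

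The main obstacle I anticipate is bookkeeping the $\varepsilon$'s through the affine rescaling: the factor $1/\alpha_{d,h}$ multiplying $f_d$ also rescales the slack, which is why \Cref{eq:outside_options_in_the_affine_game} carries the correction term $-\varepsilon\cdot\frac{1-\alpha_{d,h}}{\alpha_{d,h}}$ and why it matters that $\alpha_{d,h}$ was chosen in $(0,1]$ (so that $1-\alpha_{d,h}\geq 0$ and the $\varepsilon$-relaxation in $G_{d,h}$ implies at least the $\varepsilon$-relaxation required in $G'_{d,h}$, never less). Once this is checked in both the doctor's feasibility inequality and her best-response inequality, the equivalence is complete, and since $(x',y')$ is an $\varepsilon$-$(f_d',g_h')$-CNE of the zero-sum game $G'_{d,h}$ by Step 3 (invoking \Cref{teo:zero_sum_games_are_eps_feasible_and_polynomial}), it is an $\varepsilon$-$(f_d,g_h)$-CNE of $G_{d,h}$, which is the claim.
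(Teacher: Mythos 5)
Your route is the same as the paper's: translate the feasibility constraints via \Cref{eq:outside_options_in_the_affine_game}, transfer the two best-response inequalities through the affine identity of \Cref{teo:affine_transformation_strictly_competitive_games}, and invoke \Cref{teo:zero_sum_games_are_eps_feasible_and_polynomial} for the zero-sum CNE. Two things need repair, one cosmetic and one substantive. Cosmetic: your displayed translation of the hospital's constraint has the wrong signs. Since the hospital's payoff is $-xB_{d,h}y$ and $g_h'=-g_h$ is used throughout the zero-sum machinery as an \emph{upper bound} on $xB_{d,h}y$, the correct equivalence is $x(-B_{d,h})y+\varepsilon\ge g_h \iff xB_{d,h}y\le g_h'+\varepsilon$, not $xB_{d,h}y+\varepsilon\le -g_h'$ (which would read $xB_{d,h}y\le g_h-\varepsilon$, a different and tighter set).

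Substantive: the one place where $\alpha_{d,h}\le 1$ is genuinely needed is not where you put it. The doctor's feasibility translation is an \emph{exact} equivalence by the very definition of $f_d'$, including the correction term $-\varepsilon\frac{1-\alpha_{d,h}}{\alpha_{d,h}}$, regardless of the sign of $1-\alpha_{d,h}$; so nothing about ``relaxing at least as much'' is at stake there. Where $\alpha_{d,h}\le 1$ does the work is the doctor's best-response transfer, which you leave as ``tracking the slack'': if $x\in X_d$ is a deviation with $x(-B_{d,h})y'+\varepsilon\ge g_h$, then $xB_{d,h}y'\le g_h'+\varepsilon$, the zero-sum CNE property gives $(x-x')B_{d,h}y'\le\varepsilon$, and only then does $(x-x')A_{d,h}y'=\alpha_{d,h}(x-x')B_{d,h}y'\le\alpha_{d,h}\varepsilon\le\varepsilon$ use $\alpha_{d,h}\le 1$ --- the $\varepsilon$-slack is measured in $A$-units in $G_{d,h}$ but in $B$-units in $G'_{d,h}$, and the two scales differ exactly by the factor $\alpha_{d,h}$, so the inequality would fail in general if $\alpha_{d,h}>1$. (The hospital's inequality transfers with no rescaling, as you say, because its payoff is $-xB_{d,h}y$ in both games and the constraint set $\{y: x'A_{d,h}y+\varepsilon\ge f_d\}=\{y: x'B_{d,h}y+\varepsilon\ge f_d'\}$ is exact.) Once this step is written out, your argument coincides with the paper's proof.
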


\begin{proof}
Let $(x',y')$ be an $\varepsilon$-$(f_d', g_{h}')$-CNE of the zero-sum game $G'_{d,h}$. It holds,
\begin{itemize}
    \item[1.] $g_{h}' + \varepsilon \geq x'B_{d,h}y' \geq f_d' - \varepsilon$,
    \item[2.] For any $x \in X_d$ such that $xB_{d,h}y'\leq g_{h} + \varepsilon$, $(x-x')B_{d,h}y' \leq \varepsilon$,
    \item[3.] For any $y \in Y_h$ such that $x'B_{d,h}y + \varepsilon \geq f_d$, $x'B_{d,h}(y'-y) \leq \varepsilon$.
\end{itemize}
From (1) we obtain that $x'(-B_{d,h})y' \geq -g_{h}' - \varepsilon = g_{h} - \varepsilon$, and $x'B_{d,h}y' \geq f_d' - \varepsilon$, which implies that $x'A_{d,h}y' \geq f_d - \varepsilon$, so $(x',y')$ is $(f_d,g_{h})$-feasible in the game $G_{d,h}$. Let $x \in X_d$ such that $x(-B_{d,h})y' + \varepsilon \geq g_{h}$. Then, $xB_{d,h}y' \leq g'_{d,h} - \varepsilon$. From (2), $(x-x')B_{d,h}y' + \varepsilon$. Noticing that $\alpha_{d,h}(x-x')B_{d,h}y' = (x-x')A_{d,h}y'$, we obtain that $ (x-x')A_{d,h}y' \leq \alpha_{d,h} \varepsilon \leq \varepsilon$, as $\alpha_{d,h}$ was taken lower of equal than $1$. Analogously, suppose there is $y \in Y_h$ such that $x'A_{d,h}y + \varepsilon \geq f_d$. Then, $x'B_{d,h}y + \varepsilon \geq f_d'$. From (3), $x'(-B_{d,h})(y-y') \leq \varepsilon$. Therefore, $(x',y')$ is an $\varepsilon$-CNE of $G_{d,h}$.
\end{proof}

From  \Cref{prop:affine_transformed_CNE_are_CNE} and the complexity of computing a constrained Nash equilibrium of a zero-sum game  (\Cref{teo:zero_sum_games_are_eps_feasible_and_polynomial}), we obtain the following result.

\begin{corollary}
Let $G_{d,h} = (S_d,T_h,A_{d,h},-B_{d,h})$ be a bi-matrix strictly competitive game and $(f_d,g_{h})$ be a pair of reservation payoffs. The complexity of computing an $\varepsilon$-$(f_d,g_{h})$-constrained Nash equilibrium is
$$ \mathcal{O}\left( \max\{|S_d|,|T_h|\}^{2.5} \cdot \min\{|S_d|,|T_h|\} \cdot L_{d,h} \right),$$
with $L_{d,h}$ the number of bits required to encode the payoff matrices.
\end{corollary}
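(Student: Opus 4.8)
The plan is to read this statement as a direct corollary of the reduction scheme established just before \Cref{prop:affine_transformed_CNE_are_CNE}, combined with the zero-sum complexity bound of \Cref{teo:zero_sum_games_are_eps_feasible_and_polynomial}. First I would invoke \Cref{teo:affine_transformation_strictly_competitive_games} to extract, from the strictly competitive game $G_{d,h} = (S_d,T_h,A_{d,h},-B_{d,h})$, the four extreme entries $a_{d,h}^{max},a_{d,h}^{min},b_{d,h}^{max},b_{d,h}^{min}$ and the scaling factor $\alpha_{d,h} = (a_{d,h}^{max}-a_{d,h}^{min})/(b_{d,h}^{max}-b_{d,h}^{min})$, choosing the normalization so that $\alpha_{d,h}\in[0,1]$ (as already done in the discussion preceding the corollary). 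Locating these extreme entries is a single sweep over $A_{d,h}$ and $B_{d,h}$, hence $\mathcal{O}(|S_d|\cdot|T_h|)$ elementary operations, each costing a factor $L_{d,h}$ for the bit-length; this yields the auxiliary zero-sum game $G'_{d,h} = (X_d,Y_h,B_{d,h})$.

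Next I would apply \Cref{eq:outside_options_in_the_affine_game} to transport the reservation payoffs $(f_d,g_h)$ to the pair $(f_d',g_h')$ relevant for $G'_{d,h}$; this is a constant number of rational operations on the constants computed in the previous step. By the computation establishing \Cref{prop:affine_transformed_CNE_are_CNE}, the set of $\varepsilon$-feasible strategy profiles and the set of $\varepsilon$-constrained Nash equilibria of $G_{d,h}$ for $(f_d,g_h)$ coincide, respectively, with those of $G'_{d,h}$ for $(f_d',g_h')$. In particular, feasibility for $(f_d,g_h)$ in $G_{d,h}$ transfers to feasibility for $(f_d',g_h')$ in $G'_{d,h}$, so \Cref{teo:zero_sum_games_are_eps_feasible_and_polynomial} applies and produces an $\varepsilon$-$(f_d',g_h')$-CNE $(x',y')$ of $G'_{d,h}$ in $\mathcal{O}\!\left(\max\{|S_d|,|T_h|\}^{2.5}\cdot\min\{|S_d|,|T_h|\}\cdot L_{d,h}\right)$ elementary operations.

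It then remains to observe, again by \Cref{prop:affine_transformed_CNE_are_CNE}, that the returned profile $(x',y')$ is an $\varepsilon$-$(f_d,g_h)$-CNE of the original game $G_{d,h}$, which costs nothing further. Adding the three contributions — $\mathcal{O}(|S_d|\cdot|T_h|)$ for the affine transformation, $\mathcal{O}(1)$ for transporting the reservation payoffs, and $\mathcal{O}\!\left(\max\{|S_d|,|T_h|\}^{2.5}\cdot\min\{|S_d|,|T_h|\}\right)$ for the zero-sum CNE computation, each weighted by $L_{d,h}$ — and using $\max\{|S_d|,|T_h|\}^{2.5}\cdot\min\{|S_d|,|T_h|\}\ge |S_d|\cdot|T_h|$, the third term dominates and the claimed bound follows. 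There is no genuine obstacle here; the only point needing care is bookkeeping the encoding size, namely checking that $L_{d,h}$, defined as the bits needed for the payoff matrices, also absorbs (up to a constant factor) the encoding of $\alpha_{d,h}$ and of $(f_d',g_h')$, which it does since these arise from the matrix entries and the input reservation payoffs by finitely many rational operations.
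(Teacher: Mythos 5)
Your proposal is correct and follows essentially the same route as the paper: the paper also obtains this corollary directly by combining the affine-transformation reduction and the transported reservation payoffs of \Cref{eq:outside_options_in_the_affine_game} with \Cref{prop:affine_transformed_CNE_are_CNE} and the zero-sum CNE complexity bound of \Cref{teo:zero_sum_games_are_eps_feasible_and_polynomial}. Your explicit bookkeeping of the $\mathcal{O}(|S_d|\cdot|T_h|)$ transformation cost and the encoding size is a minor elaboration the paper leaves implicit, but the argument is the same.
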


Finally, from the bounded number of iterations of the renegotiation process for zero-sum games (\Cref{teo:algorithm_2_ends_in_finite_time_for_zero_sum_games}) we deduce the following.

\begin{corollary}
The $\varepsilon$-renegotiation process (\ref{Algo:strategy_profiles_modification_eps_version}) ends in a finite number of iterations $T\propto \frac{1}{\varepsilon}$ in bi-matrix strictly competitive games. 
\end{corollary}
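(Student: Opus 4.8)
The final corollary asserts that the $\varepsilon$-renegotiation process (Algorithm~\ref{Algo:strategy_profiles_modification_eps_version}) terminates in $T\propto\frac1\varepsilon$ iterations when every pairwise game $G_{d,h}=(S_d,T_h,A_{d,h},-B_{d,h})$ is a bi-matrix strictly competitive game.

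\medskip

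\textbf{Approach.} The plan is to reduce the strictly competitive case to the zero-sum case already settled in \Cref{teo:algorithm_2_ends_in_finite_time_for_zero_sum_games}, using the affine-variant machinery developed in \Cref{teo:affine_transformation_strictly_competitive_games,teo:complexity_of_affine_transformation} and \Cref{prop:affine_transformed_CNE_are_CNE}. The key observation is that the renegotiation process is entirely driven by two objects per couple: the sequence of reservation payoffs $(f_d^{\pi(t)}(\varepsilon),g_h^{\pi(t)}(\varepsilon))_t$, and the constrained Nash equilibria chosen at each step. By \Cref{prop:affine_transformed_CNE_are_CNE}, running the process on $G_{d,h}$ with reservation payoffs $(f_d,g_h)$ is equivalent, step by step, to running it on the affine-transformed zero-sum game $G'_{d,h}=(X_d,Y_h,B_{d,h})$ with the transformed reservation payoffs $(f_d',g_h')$ of \Cref{eq:outside_options_in_the_affine_game}, because the feasible sets and the CNE sets coincide under the transformation. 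Hence the entire trajectory of Algorithm~\ref{Algo:strategy_profiles_modification_eps_version} on $\Gamma$ is mirrored, pair by pair, by its trajectory on the affine-transformed zero-sum matching game $\Gamma'$.

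\medskip

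\textbf{Key steps, in order.} First I would record that each game $G_{d,h}$ is, by \Cref{teo:affine_transformation_strictly_competitive_games}, an affine variant of a zero-sum game $G'_{d,h}$ with a value $w'_{d,h}$; the transformation is $A_{d,h}=\alpha_{d,h}[B_{d,h}-b^{\min}_{d,h}U]+a^{\min}_{d,h}U$ with $\alpha_{d,h}\in(0,1]$. Second, I would invoke \Cref{prop:affine_transformed_CNE_are_CNE} (together with the feasibility computation preceding it) to conclude that, for any pair of reservation payoffs, the $\varepsilon$-feasible strategy profiles and the $\varepsilon$-constrained Nash equilibria of $G_{d,h}$ are exactly those of $G'_{d,h}$ under the correspondence \Cref{eq:outside_options_in_the_affine_game}; in particular an oracle for $\varepsilon$-CNE of $\Gamma'$ yields one for $\Gamma$. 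Third, I would verify that the reservation-payoff update map commutes with the affine transformation: computing $f_d^{\pi(t+1)}(\varepsilon)$ from $\pi(t)$ in $\Gamma$ and then transforming via \Cref{eq:outside_options_in_the_affine_game} gives the same value as transforming first and then updating in $\Gamma'$ — this holds because the transformation \Cref{eq:outside_options_in_the_affine_game} is itself an increasing affine map, so maxima and the $\min_{d'\in\mu(\cdot)}$ operators in \Cref{eq:eps_outside_options_d} are preserved up to the same affine rescaling. Fourth, with this commutation in hand, a run of Algorithm~\ref{Algo:strategy_profiles_modification_eps_version} on $\Gamma$ induces a run on $\Gamma'$ that terminates at exactly the same iteration; applying \Cref{teo:algorithm_2_ends_in_finite_time_for_zero_sum_games} to $\Gamma'$ gives the bound
\[
\max_{d\in I,\,(I,h)\in\mu}\frac{\max\{f_d'^{\,\pi}(\varepsilon)-w'_{d,h},\,w'_{d,h}-g_h'^{\,\pi}(\varepsilon)\}}{\varepsilon},
\]
which is finite and of order $\frac1\varepsilon$, independent of the problem size; translating the primed quantities back through \Cref{eq:outside_options_in_the_affine_game} expresses the bound in terms of the original data $A_{d,h},B_{d,h}$ and the original reservation payoffs.

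\medskip

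\textbf{Main obstacle.} The delicate point is the third step: confirming that the nonlinear-looking $\varepsilon$-shift $-\varepsilon\cdot\frac{1-\alpha_{d,h}}{\alpha_{d,h}}$ appearing in \Cref{eq:outside_options_in_the_affine_game} does not accumulate or drift as the reservation payoffs evolve over iterations. One has to check that this shift depends only on the fixed couple $(d,h)$ (through $\alpha_{d,h}$) and not on the iteration index, so that the correspondence between the two trajectories is a genuine time-independent conjugacy; once that is pinned down, the monotonicity lemma \Cref{lemma:eps_outside_options_are_monotone_in_zero_sum_games} transfers verbatim and the $\frac1\varepsilon$ bound follows. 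Everything else is bookkeeping, which is why the formal proof can reasonably be stated as a corollary.
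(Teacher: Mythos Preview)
Your proposal is correct and follows the same route as the paper: the corollary is stated immediately after \Cref{prop:affine_transformed_CNE_are_CNE} with the single sentence ``from the bounded number of iterations of the renegotiation process for zero-sum games (\Cref{teo:algorithm_2_ends_in_finite_time_for_zero_sum_games}) we deduce the following,'' and no further argument is given. Your write-up is considerably more explicit than what the paper provides --- in particular, your identification of the commutation check as the only nontrivial step, and your observation that the $\varepsilon$-shift in \Cref{eq:outside_options_in_the_affine_game} depends only on the fixed couple through $\alpha_{d,h}$ and not on the iteration index, are exactly the points one would need to verify to make the one-line deduction rigorous.
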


Let $T := \max\{\max A_{d,h} - \min A_{d,h} : (d,h) \in D \times H\}$. The following table summarizes the complexity results found. 

\begin{table}[H]
    \centering
    \begin{tabular}{c|c|c}
    \toprule
    Algorithms & Complexity per It & Max Number It\\
    \toprule
    DAC  & $\mathcal{O}((N^2+k^2)L)$ & $T/\varepsilon$\\
    \hline
    Market procedure:  & - & $p(N)$\\ 
    Demand sets & \multirow{1}{*}{$\mathcal{O}(N^2L)$} & \multirow{1}{*}{-} \\
    Implementation & $\mathcal{O}(Nk^2L)$ & -\\
    \hline
    Renegotiation process & $\mathcal{O}(N^4k^{3.5})L)$ & $T/\varepsilon$\\
    \hline
    Affine & \multirow{2}{*}{ $\mathcal{O}(N^2k^2)$ } & \multirow{2}{*}{-} \\
    Transformation & &\\
    \hline
    \end{tabular}
    \caption{Complexity strictly competitive matching games: $N$ players per side, $k$ strategies per player, $L$ bits to encode the data, and $p(N)$ a polynomial on $N$.}
    \label{tab:complexity_str_competitive_games}
\end{table}





\end{document}